\newcommand{\me}{iddo}
\newcommand{\demph}[1]{\textbf{\textit{#1}}}
        \global\let\tikz@ensure@dollar@catcode=\relax
        \newcommand{\ignore}[1]{}
        \newcommand*{\subproofname}{Sub-Proof:}
        \newcommand{\ps}[1]{{\llb #1\rrb}}
        \renewcommand{\vec}[1]{\overline{#1}}
        \newcommand{\va}{{\vec{a}}\@ifnextchar{^}{\!\:}{}}
        \newcommand{\vc}{{\vec{c}}\@ifnextchar{^}{\!\:}{}}
        \newcommand{\vd}{{\vec{d}}\@ifnextchar{^}{\!\:}{}}
        \newcommand{\ve}{{\vec{e}}\@ifnextchar{^}{\!\:}{}}
        \newcommand{\vg}{{\vec{g}}\@ifnextchar{^}{\!\:}{}}
        \newcommand{\vh}{{\vec{h}}\@ifnextchar{^}{\!\:}{}}
        \newcommand{\vi}{{\vec{i}}\@ifnextchar{^}{\!\:}{}}
        \newcommand{\vj}{{\vec{j}}\@ifnextchar{^}{\!\:}{}}
        \newcommand{\vk}{{\vec{k}}\@ifnextchar{^}{\!\:}{}}
        \newcommand{\vl}{{\vec{\ell}}\@ifnextchar{^}{\!\:}{}}
        \newcommand{\vm}{{\vec{m}}\@ifnextchar{^}{\!\:}{}}
        \newcommand{\vn}{{\vec{n}}\@ifnextchar{^}{\!\:}{}}
        \newcommand{\vo}{{\vec{o}}\@ifnextchar{^}{\!\:}{}}
        \newcommand{\vp}{{\vec{p}}\@ifnextchar{^}{\!\:}{}}
        \newcommand{\vq}{{\vec{q}}\@ifnextchar{^}{\!\:}{}}
        \newcommand{\vr}{{\vec{r}}\@ifnextchar{^}{\!\:}{}}
        \newcommand{\vs}{{\vec{s}}\@ifnextchar{^}{\!\:}{}}
        \newcommand{\vt}{{\vec{t}}\@ifnextchar{^}{\!\:}{}}
        \newcommand{\vu}{{\vec{u}}\@ifnextchar{^}{\!\:}{}}
        \newcommand{\vv}{{\vec{v}}\@ifnextchar{^}{\!\:}{}}
        \newcommand{\vw}{{\vec{w}}\@ifnextchar{^}{\!\:}{}}
        \newcommand{\vy}{{\vec{y}}\@ifnextchar{^}{\!\:}{}}
        \newcommand{\vz}{{\vec{z}}\@ifnextchar{^}{\!\:}{}}
        \newcommand{\vA}{{\vec{A}}\@ifnextchar{^}{\!\:}{}}
        \newcommand{\vB}{{\vec{B}}\@ifnextchar{^}{\!\:}{}}
        \newcommand{\vC}{{\vec{C}}\@ifnextchar{^}{\!\:}{}}
        \newcommand{\vD}{{\vec{D}}\@ifnextchar{^}{\!\:}{}}
        \newcommand{\vE}{{\vec{E}}\@ifnextchar{^}{\!\:}{}}
        \newcommand{\vF}{{\vec{F}}\@ifnextchar{^}{\!\:}{}}
        \newcommand{\vG}{{\vec{G}}\@ifnextchar{^}{\!\:}{}}
        \newcommand{\vH}{{\vec{H}}\@ifnextchar{^}{\!\:}{}}
        \newcommand{\vI}{{\vec{I}}\@ifnextchar{^}{\!\:}{}}
        \newcommand{\vJ}{{\vec{J}}\@ifnextchar{^}{\!\:}{}}
        \newcommand{\vK}{{\vec{K}}\@ifnextchar{^}{\!\:}{}}
        \newcommand{\vL}{{\vec{L}}\@ifnextchar{^}{\!\:}{}}
        \newcommand{\vM}{{\vec{M}}\@ifnextchar{^}{\!\:}{}}
        \newcommand{\vN}{{\vec{N}}\@ifnextchar{^}{\!\:}{}}
        \newcommand{\vO}{{\vec{O}}\@ifnextchar{^}{\!\:}{}}
        \newcommand{\vP}{{\vec{P}}\@ifnextchar{^}{\!\:}{}}
        \newcommand{\vQ}{{\vec{Q}}\@ifnextchar{^}{\!\:}{}}
        \newcommand{\vR}{{\vec{R}}\@ifnextchar{^}{\!\:}{}}
        \newcommand{\vS}{{\vec{S}}\@ifnextchar{^}{\!\:}{}}
        \newcommand{\vT}{{\vec{T}}\@ifnextchar{^}{\!\:}{}}
        \newcommand{\vU}{{\vec{U}}\@ifnextchar{^}{\!\:}{}}
        \newcommand{\vV}{{\vec{V}}\@ifnextchar{^}{\!\:}{}}
        \newcommand{\vW}{{\vec{W}}\@ifnextchar{^}{\!\:}{}}
        \newcommand{\vY}{{\vec{Y}}\@ifnextchar{^}{\!\:}{}}
        \newcommand{\vX}{{\vec{X}}\@ifnextchar{^}{}{}}          %\vec{x} seems fine already
        \newcommand{\vZ}{{\vec{Z}}\@ifnextchar{^}{\!\:}{}}
        \newcommand{\vnz}{{\vec{0}}}
         \newtheorem*{lemma*}{Lemma}
         \newtheorem{definition}{Definition}
         \newtheorem*{definition*}{Definition}
\newcommand{\shortECCC}[2]{\texttt{\href{http://eccc.hpi-web.de/report/\ifnumcomp{#1}{>}{93}{19}{20}#1/#2/}{eccc:TR#1-#2}}}
\newcommand{\parseECCC}[1]{% Takes a string of the form TRxx/xxx or
%                          % TRxx-xxx and returns short ECCC link
\StrSubstitute{#1}{TR}{}[\tmpstring]%
\IfSubStr{\tmpstring}{/}{ %assuming string is of the form TRxx/xxx
\StrBefore{\tmpstring}{/}[\ecccyear]%
\StrBehind{\tmpstring}{/}[\ecccreport]%
}{% assuming string is of the form TRxx-xxx
\StrBefore{\tmpstring}{-}[\ecccyear]%
\StrBehind{\tmpstring}{-}[\ecccreport]%
}%
\shortECCC{\ecccyear}{\ecccreport}}
\newlength{\defbaselineskip}
\newcommand{\sos}{{\rm SoS}}
\newcommand{\sosSig}{\ensuremath{\mathsf{SoS}_{{\rm \Sigma}}}}
\renewcommand{\ps}{Positivstellensatz}
\definecolor{darkgreen}{rgb}{0,0.7,0}
\newcommand{\hirsch}[1]{\textcolor{darkgreen}{[Edward: #1]}}
\newcommand{\car}{{\rm CARRY}}
\newcommand{\cari}[1]{\ensuremath{\car_{#1}}}
\newcommand{\add}{{\rm ADD}}
\newcommand{\addi}[1]{\ensuremath{\add_{#1}}}
\newcommand{\addv}{\ensuremath{\overline \add}}
\newcommand{\prd}{{\rm PROD}}
\newcommand{\prdv}{\ensuremath{\overline \prd}}
\newcommand{\prdvp}{\ensuremath{\overline \prd_+}}
\newcommand{\iabs}{{\rm ABS}}
\newcommand{\iabsv}{\ensuremath{\overline \iabs}}
\newcommand{\val}{{\rm VAL}}
\newcommand{\arit}[1]{\ensuremath{{\rm arit\!}\left(#1\right)}}
\renewcommand{\vy}{\ensuremath{\overline y}}
\renewcommand{\vz}{\ensuremath{\overline z}}
\renewcommand{\ve}{\ensuremath{{\mathbf e}}}
\renewcommand{\vm}{\ensuremath{\overline m}}
\renewcommand{\vs}{\ensuremath{\overline s}}
\renewcommand{\vb}{\ensuremath{\overline b}}
\newcommand{\valpha}{{\ensuremath{\overline \alpha}}}
\newcommand{\ipsprf}[1]{\ensuremath{\vdash^{#1}_{\rm IPS}}}
\newcommand{\vone}{\ensuremath{\mathbf 1}}
\newcommand{\IPS}{\rm IPS}
\newcommand{\tconj}{\ensuremath{\tau}-conjecture}
\newcommand{\C}{\ensuremath{\mathbb{C}}}
\newcommand{\zbit}{{\rm BIT}}
\newcommand{\biti}[1]{\ensuremath{\zbit_{#1}}}
\newcommand{\bitv}{\ensuremath{\overline \zbit}}
\newcommand{\assmp}{\ensuremath{\overline{\mathcal F}}}
\newcommand{\ineqassmp}{\ensuremath{\overline{\mathcal H}}}
\newcommand{\cone}{\ensuremath{{\rm cone}}}
\renewcommand{\tr}[1]{\ensuremath{\left\llbracket #1 \right\rrbracket }}
\newcommand{\monomsize}[1]{\ensuremath{\left|#1\right|_{\mathrm {_{\#monomials}} }}}
\newcommand{\LSInfty}{\ensuremath{{\textup{LS}}^\infty_{*,+}}}
\newcommand{\ipsz}{\textup{IPS}$_\Z$}
\newcommand{\ba}{\ensuremath{\overline x^2-\overline x}}
\newcommand{\bvpn}{\ensuremath{\textup{BVP}_n}}
\newcommand{\bvpnm}{\ensuremath{\textup{BVP}_{n,M}}}
\newcommand{\bvpt}{\ensuremath{\textup{BVP}_t}}
\newcommand{\bvptm}{\ensuremath{\textup{BVP}_{t,M}}}
\newcommand{\ipsq}{\ensuremath{\textup{IPS}_\Q}}
\newcommand{\cpsq}{\ensuremath{\textup{CPS}_\Q}}
\newcommand{\cpsz}{\ensuremath{\textup{CPS}_\Z}}
\newcommand{\cpsqs}{\ensuremath{\cpsq^\star}}
\newcommand{\cpszs}{\ensuremath{\cpsz^\star}}
\newcommand{\ipsqs}{\ensuremath{\textup{IPS}^{\star}_\Q}}
\newcommand{\ipszs}{\ensuremath{\textup{IPS}^{\star}_\Z}}
\renewcommand{\iddo}[1]{}
\renewcommand{\hirsch}[1]{}
\renewcommand{\mar}[1]{}
\hurl\url{http://logic.pdmi.ras.ru/~hirsch}
\date{}
\author{Yaroslav Alekseev\thanks{Steklov Institute of Mathematics at St.~Petersburg, St.~Petersburg, Russia, and Chebyshev Laboratory at St.~Petersburg State University} \and Dima Grigoriev\thanks{CNRS, Mathematiques, Universite de Lille, Villeneuve d'Ascq, 59655, France. \url{http://en.wikipedia.org/wiki/Dima_Grigoriev}}\and Edward A. Hirsch\thanks{Steklov Institute of Mathematics at St.~Petersburg, St.~Petersburg, Russia. \hurl} \and  Iddo Tzameret\thanks{Department of Computer Science, Royal Holloway, University of London.  Iddo.Tzameret@rhul.ac.uk.
\url{http://www.cs.rhul.ac.uk/home/tzameret}}}
\begin{document}

\renewcommand{\ref}[1]{\autoref{#1}} % for some reason only after begin document this macro works

\title{Semi-Algebraic Proofs, IPS Lower Bounds and the  $\tau$-Conjecture: Can a Natural Number be Negative?\thanks{A preliminary report on parts of this work was delivered at Dagstuhl Proof Complexity Workshop 2018. \url{https://materials.dagstuhl.de/files/18/18051/18051.IddoTzameret.Slides.pptx}} \thanks{The research presented in Section 3 is supported by Russian Science Foundation (project 16-11-10123).}}

\maketitle
\thispagestyle{empty}
%%%%%%%%%%%%%%%%%%%%%%%%
\begin{abstract}
%We consider the relative strength of algebraic and semi-algebraic proof systems when the complexity of proofs is measured by algebraic circuit size (in contrast to degree). We show that under this measure, Nullstellensatz simulates Sum-of-Squares proofs and Sherali-Adams. This contrasts known separations between the Nullstellensatz and Sum-of-Squares in the degree regime. In order to prove this, we show that the Ideal Proof System (equivalently, Nullstellensatz over algebraic circuits) admits short proofs of the polynomial identity testing axioms (PIT-axioms as in Grochow-Pitassi \cite{GP14} for polynomial-degree polynomials)\iddo{Actually, this is inaccurate, since it will work only for some rings. Maybe rationals}, which may be of independent interest.

We introduce the \emph{binary value principle} which is a  simple subset-sum instance expressing that a natural number written in binary cannot be negative, relating it to central  problems in  proof and algebraic complexity.  %
We prove conditional superpolynomial lower bounds on the Ideal Proof System (IPS) refutation size of this instance, based on a well-known hypothesis by Shub and Smale about the hardness of computing factorials, where IPS is the strong algebraic  proof system introduced by Grochow and Pitassi \cite{GP14}. Conversely, we show that short IPS refutations of this instance bridge the gap between sufficiently  strong algebraic and semi-algebraic proof systems. Our results extend to full-fledged  IPS  the paradigm introduced in Forbes et al.~\cite{FSTW16}, whereby lower bounds against subsystems of IPS were obtained using restricted algebraic circuit lower bounds, and  demonstrate that the binary value principle captures
%the strength of semi-algebraic reasoning for strong enough systems.
the advantage of semi-algebraic over algebraic reasoning, for sufficiently strong  systems. %
Specifically, we show the following:

\begin{description}[leftmargin=*]

\item[Conditional IPS lower bounds:]
The Shub-Smale hypothesis \cite{SS95} implies a superpolynomial lower bound on the size of IPS refutations of the binary value principle over the rationals defined as the unsatisfiable linear equation $\sum_{i=1}^{n} 2^{i-1}x_i = -1$, for boolean $x_i$'s. Further, the related \tconj\ \cite{SS95} implies a superpolynomial lower bound on the size of IPS refutations of a variant of the binary value principle over the ring of rational functions. %
%\footnote{But not necessarily the binary value principle itself.}%
No prior conditional lower bounds were known for IPS or  for apparently much weaker propositional proof systems such as Frege.\footnotemark
%We show that the Shub-Smale hypothesis \cite{SS95} and the \tconj\ imply superpolynomial lower bounds on the size of IPS refutations of the binary value principle: the existence of polynomial-size IPS proofs of the binary value principle over \Q\ implies the negation of the Shub-Smale hypothesis, and the existence of polynomial-size IPS proofs for the related principle over the ring of rational functions implies the negation of the \tconj.
\mar{check about condit lower bounds)}

%\item[Cone Proof System:]

\item[Algebraic vs.~semi-algebraic proofs:]
Admitting short refutations of the binary value principle is  necessary for any algebraic proof system to fully simulate any known semi-algebraic proof system, and for strong enough algebraic proof systems it is also \emph{sufficient}. In particular, we introduce a very strong proof system that simulates all known semi-algebraic proof systems (and  most other known concrete propositional proof systems), under the name Cone Proof System (CPS), as a semi-algebraic analogue of the ideal proof system: CPS establishes the unsatisfiability of collections of polynomial equalities and inequalities over the reals, by representing \emph{sum-of-squares} proofs (and extensions) as algebraic circuits. We prove that  IPS is polynomially equivalent to  CPS iff IPS admits polynomial-size refutations of the binary value principle (for the language of systems of equations that have no 0/1-solutions), over both \Z\ and \Q.

\footnotetext{Though  simple, the binary value principle  is not a direct translation of a boolean formula, hence, similar to \cite{FSTW16} and other results on algebraic proofs (e.g., Razborov \cite{Razb98}), IPS lower bounds on this principle do not necessarily entail lower bounds for Frege or its subsystems.}

%In the case of moderate (for example, logarithmic) degree size of coefficients \iddo{typo? What is "degree size of coefficients?"}, we show a sub-exponential simulation unconditionally. En route, we show that Frege simulates Sum-of-Squares when written as sum of monomials.% (which is the standard size-complexity measure).
\end{description}
\end{abstract}
%%%%%%%%%%%%%%%%%%%%%%%%

%%%%%% Start Paper %%%%%%%%%%%%%%%%%%%%
%\mediumspacing
%\doublespacing

%%%%%% ToC %%%%%%%%%%%%%%%%%%%%%%%%%%%%
{\small
\hypersetup{linkcolor=black}
\tableofcontents
}
\thispagestyle{empty}
\normalsize
%%%%% End ToC %%%%%%%%%%%%%%%%%%%

\clearpage

\pagenumbering{arabic}

\section{Introduction}\label{sec:intro}
This work connects three separate objects of study  in computational complexity: algebraic proof systems, semi-algebraic proof systems and algebraic circuit complexity. The connecting point is a  subset-sum instance  expressing that the value of a natural number given in binary is nonnegative.  \mar{\iddo{It is important to note that CR systems such as depth d PC cannot possibly simulate CPS of course or even IPS, because this would lead to PIT in P. So this rule out IMP19 techniques for strong systems.}
}
We will show that this instance  captures the advantage of semi-algebraic reasoning over algebraic reasoning in the regime of sufficiently strong proof systems, and is expected to be hard even for very strong algebraic proof systems. We begin with a general discussion about proof complexity, and then turn to  algebraic and  semi-algebraic proofs, their inter-relations, and the connection between circuit lower bounds and proof-size lower bounds. \iddo{maybe dont need this para on prf compl. Start with algeraic proof systesm}

Narrowly construed, proof complexity can be viewed as a stratification of the \NP\ vs.~\coNP\ question, whereby one aims to understand the complexity of stronger and stronger propositional proof systems as a gradual approach towards separating \NP\ from \coNP\ (and hence, also \P\ from \NP). This mirrors circuit complexity in which  different circuit classes are analyzed in the hope to provide general super-polynomial circuit lower bounds. Broadly understood however, proof complexity serves as a way to study  the computational resources required in different kind of reasoning, different algorithmic techniques and constraint solvers, as well as  providing propositional analogues to weak first-order theories of arithmetic.

Algebraic proof systems have attracted immense amount of work in proof complexity, due to their simple nature, being a way to study the complexity of computer-algebra procedures such as the Gr\"obner basis algorithm, and their connection to different fragments of  propositional proof systems  with counting gates. Beginning with the fairly  weak Nullstellensatz refutation system by Beame et al.~\cite{BeameIKPP96} and culminating in the very strong Ideal Proof System by Grochow and Pitassi \cite{GP14}, many algebraic proof systems and variants have been studied. In such systems one basically  operates with polynomial equations over a field using simple algebraic derivation rules such as additions of equations and multiplication of an equation by a variable, where variables are usually meant to range over \bits\ values.

Impagliazzo, Pudl\'ak and Sgall \cite{IPS99}, following Razborov \cite{Razb98}, showed that the polynomial calculus, which is the standard dynamic algebraic proof system introduced in \cite{CEI96}, requires exponential-size refutations (namely, those using an exponential number of monomials) for the simple symmetric unsatisfiable subset-sum instance $x_1+\dots+x_n=n+1$. Note that  refuting (that is, showing the unsatisfiability of) a linear equation $\sum_i\alpha_i x_i=\beta$ in which the variables $x_i$ are boolean, establishes that there is no subset of the $\alpha_i$ numbers that sums up to  $\beta$, and hence is considered to be a refutation of a subset-sum instance. Forbes et al.~\cite{FSTW16} showed that variants of this symmetric subset-sum instance are hard for different subsystems of the very strong  IPS algebraic proof system, that is, when IPS refutations are written using various restricted algebraic circuit classes. Loosely speaking, IPS is a static Nullstellensatz refutation in which proof-size is measured by algebraic circuit complexity instead of sparsity (that is, monomial size). In other words,  IPS proofs are written as algebraic circuits, and thus can tailor the advantage that algebraic circuits have over sparse polynomials (somewhat reminiscent to the way Extended Frege can tailor the full strength of boolean circuits in comparison to  resolution which operates merely with clauses).

The realm of semi-algebraic proof systems has emerged as an equally fruitful subject as algebraic proofs. Semi-algebraic proofs have been brought to  the attention of complexity theory from optimization \cite{LS91,Lov94}
by the works of Pudl\'ak \cite{Pud99} and Grigoriev and Vorobojov \cite{GV02} (cf.~\cite{GHP02}), and more recently, through their connection to approximation algorithms with the work of Barak et al.~\cite{BarakBHKSZ12} (see for example \cite{OZ13} and the new excellent survey by Fleming et al.~\cite{FKP19}).
%
 % Recently, a lot of attention from proof complexity, algorithms and approximation %theory has been directed to study different kind of systems: semi-algebraic proof
While algebraic proofs derive polynomials in the ideal of a given initial set of polynomials, semi-algebraic proofs extend it to allow deriving polynomials also in the cone of the initial polynomials (informally a cone is an ``ideal that preserves positive signs'')\iddo{convex cone? What's the right term here?}, hence potentially utilizing a stronger kind of reasoning.\iddo{Is it even true?}\iddo{need here explanations on semi-algebraic proofs/semi-definite algorithm}~In particular \cite{BarakBHKSZ12}  considered the \emph{sum-of-squares} (SoS) refutation system. What makes SoS  important, for example to polynomial optimization, is the fact that the existence of a degree-$d$ SoS certificate can be formulated as the feasibility of a semidefinite program (SDP), and hence can be solved in polynomial time. \mar{I don't understand precisely this sentence: "This is the degree $d$ SoS relaxation first introduced by Shor [Sho87], and expanded upon by later works of Nesterov [Nes00], Grigoriev and Vorobjov [GV02], Lasserre [Las00, Las01] and Parrilo [Par00]. (see, e.g., [Lau09, BS14] for many more details)", From Raghavendra-Weitz. Any suggestion on what shall we write?}%

Berkholz \cite{Ber18} showed interestingly that in the regime of \emph{weak} proof systems, even static semi-algebraic proofs, such as \sos, can simulate dynamic algebraic proof systems such as polynomial calculus. Grigoriev \cite{Gri01-CC} showed that in this weak regime semi-algebraic proofs are in fact strictly stronger (with respect to degrees and size) than algebraic proofs, where the separating instances are simple polynomials (for example,  symmetric subset sum instances). However, it was not known in general (e.g., for strong systems) whether semi-algebraic reasoning is  strictly stronger than algebraic reasoning.

%For algebraic proof systems, the subset-sum problem is a well known source for non-trivial upper and lower bounds. As an \NP-complete problem the \emph{subset-sum} asks to determine whether a set of integer numbers given in binary contains a subset whose sum equals a specific number. We can formalise this problem as a linear equation $\alpha_1x_1+\dots+\alpha_nx_n=\beta$, with $\alpha_i,\beta_i\in\Z$ and $x_i\in\bits$ (for $i\in[n]$), so that the equation is satisfiable if and only if there is a subset of  $\{\alpha_i\}_{i=1}^n$ that sums up to $\beta$.

%For algebraic proof systems, the subset-sum problem is a well known source for non-trivial upper and lower bounds. As an \NP-complete problem the \emph{subset-sum} asks to determine whether a set of integer numbers given in binary contains a subset whose sum equals a specific number. We can formalise this problem as a linear equation $\alpha_1x_1+\dots+\alpha_nx_n=\beta$, with $\alpha_i,\beta_i\in\Z$ and $x_i\in\bits$ (for $i\in[n]$), so that the equation is satisfiable if and only if there is a subset of  $\{\alpha_i\}_{i=1}^n$ that sums up to $\beta$. Formulated in such a way, the subset-sum problem  lends itself easily to the framework of proof complexity, in which one wishes to refute $\sum_{i=1}^n\alpha_ix_i=\beta$ with a short refutation, hence providing a small witness that the given subset-sum instance does \emph{not} have a solution (namely, no subset sums up to $\beta$).

%edor and Tzameret \cite{PT18} obtained a lower bound on subset-sum instances with large enough
\iddo{More lower bounds for subset sum? non trivial upper bound in algebraic proofs?}

Another established tradition in proof complexity is to seek synergies between proofs and circuit lower bounds. In particular, \emph{proofs-to-circuits} transformations in the form of feasible interpolation, and other close concepts have been pivotal in the search for proof complexity lower bounds, as well as in circuit lower bounds themselves (see  G{\"{o}}{\"{o}}s et al.~\cite{GKRS19} for a recent example). In fact, the conception of IPS itself was motivated by the attempt to show that very strong proof complexity lower bounds would result in algebraic complexity class separations such as $\VP\neq\VNP$ (see \cite{GP14} and the survey \cite{PT16}). Li et al.~\cite{LTW18} as well as Forbes et al.~\cite{FSTW16} went in the other direction and showed that certain restricted algebraic circuit lower bounds imply size lower bounds on subsystems of IPS. In particular, \cite{FSTW16} devised a simple framework by which lower bounds on (subsystems of) IPS refutations are reduced to algebraic circuit lower bounds. \cite{FSTW16} used this framework to establish lower bounds on subsystems of IPS refutations of  variants of  symmetric subset-sum instances when the IPS refutations are written  as read once algebraic branching programs and multilinear formulas. But lower bounds on the size of full IPS refutations were not known.

%\iddo{Some work has been done on  the relative strength of algebraic and semi-algebraic systems. In particular it is known by IPS99 that PC cannot simulate SoS .... subset sum... While, conversely it is known that (static) SoS can simulate (dynamic) PC (Berkholtz). \footnote{In Sec. we discuss some other work, including the work of IMP19 who was done indepdnedt from our work}.
\subsection{Background}\label{sec:background}

\para{Algebraic Circuits.}
Algebraic circuits over some fixed chosen field or ring $R$ compute formal polynomials in $R[x_1,\ldots,x_n] $ via addition and multiplication gates, starting from the input variables $\vx$ and constants from $R$. The size $|C|$ of an algebraic circuit $C$ is its number of nodes (see \ref{sec:algebraic_circuits} for formal definitions). 

Formally, an \emph{algebraic circuit} $C$ is a finite directed acyclic graph where edges are directed from leaves (in-degree 0 nodes) towards the output (out-degree 0 node). \emph{Input nodes}  are leaves  that are labeled with a variable from $x_1,\dots,x_n$; every other leaf  is labelled with a scalar  in $R$. All the other nodes have in-degree two  and  are labeled with either $+$ or $\times$. A leaf is said to compute the variable or scalar that labels itself. A $+$ (or $\times$) gate is said to compute the addition (product, resp.) of the polynomials computed by its incoming nodes. $C$ computes the polynomial computed by its output node.

%
%
% and that moreover the two edges going into a gate $v$   labeled with  $\times$ % or $+$ are
%  labeled with  \emph{left} and \emph{right}.  This
%is to determine the order of addition and multiplication\footnote{Although ultimately, %addition and multiplication are tive. }.
%
%
%

%An algebraic circuit is called a \emph{formula}, if the underlying directed acyclic graph  is a tree (that is, every node has at most one outgoing edge). %
%

The \demph{size} of an algebraic  circuit $C$ is the number of nodes in it denoted $|C|$, and the \emph{depth} of a circuit is the length of the longest directed path in it. 
Note that in the ``unit-cost'' model the size that a  ring constant  from $R$ contributes to the size of the circuit is  1 irrespective of the value of the constant. Sometimes it is important to consider the size of the coefficients and for this purpose we define a \demph{constant-free} algebraic circuit to be an algebraic circuit in which the only constants used
are $0, 1, -1$. Other constants must be built up using algebraic operations, which then count towards the size of the circuit. Constant-free algebraic circuit computes a polynomial over \Z, but when we allow for constant sub-circuits (and \emph{only} for constant sub-circuits) to contain division gates (in \ref{sec:lower-bounds}) we can also compute polynomials over \Q\ with constant-free circuits.

\para{The $\tau$-Conjecture and Shub-Smale Hypothesis.}
Here we explain  several important assumptions and conjectures that are known to lead to strong complexity lower bounds and complexity class separations, all of which will play a role in  our work (cf.~\cite{Sma98}).

%See for example Smale's list of ``mathematical problems for the next century" \cite{Sma98} for a short description and discussion about these problems.

\begin{definition*}[\autoref{def:tau-function}, $\tau$-function \cite{SS95}]
 Let $f\in\Z[\vx]$ be a multivariate polynomial over \Z.\mar{CHECK!} Then $\tau(f)$ is the minimal size of a constant-free  algebraic circuit that computes $f$ (that is, a circuit where the only possible constants that may appear on leaves are $1,0,-1$).
\end{definition*}

%For a constant number $n\in\Z$, denote by $\tau(n)$ the minimal-size algebraic circuit that computes the constant $n$, where the only possible constants that appear on leaves in the circuit are $1$ and $-1$.
When we focus on constant polynomials, that is, numbers $n\in\Z$, $\tau(n)$ is the minimal-size circuit that can construct $n$ from $1$ using additions, subtractions
and multiplications (but not divisions; subtraction of a term $A$ can be constructed by $(-1)\cd A$).
We say that a family of (possibly constant) polynomials $(f_n)_{n\in\N}$ is \emph{\textbf{easy}} if $\tau(f_n)=\log ^{O(1)}n$, for every $n>2$, and \emph{\textbf{hard}} otherwise.
\mar{Change back to the notice about defining tau with some more constant, because we still do not have 2 in it}

%Here are some known facts about $\tau$.%

A simple known upper bound on $\tau$ is this \cite{MS96}: for every integer $m>2$, $\tau(m)\le 2\log m$. %for some constant $c$ (usually there is no need to add the constant $c$, but we add it here because unlike the usual choice, we defined $\tau$ for circuits that do not have the constant $2$ nor  the minus gate).  %
 This is  shown by considering the binary expansion of $m$. For every integer $m$, the following lower bound is known $\tau(m)\ge \log \log m$ \cite{MS96}.
 It is not hard to show  that $(2^n)_{n\in\N}$ is \emph{easy}. For instance, if $n$ is a power of 2 then  $\tau(2^n)=\log n+3$, where $\log$ denotes the logarithm in the base 2. We start with 3 nodes to build $2=1+1$ and then by $\log n$ repeated squaring we arrive at $((2^2)^2)^2\dots)^2=2^{2^{\log n}}=2^n$. %Hence, in case $m=2^{2^n}$, $\tau(m)=\log\log m+3$, where $\log$ denotes the  logarithm over the base 2. On the other hand, %
 On the other hand, it is known that $(2^{2^n})_{n\in\N}$ is \emph{hard}. \mar{What is the proof for this??}%More precisely, $\tau(2^{2^n})=n+3$. We start with 3 nodes to build $2=1+1$ and then by $n$ repeated squaring we arrive at $((2^2)^2)^2\dots)^2=2^{2^n}$. %Hence, in case $m=2^{2^n}$, $\tau(m)=\log\log m+3$, where $\log$ denotes the  logarithm over the base 2.
% %
% %
% %A simple known upper bound on $\tau$ is this \cite{MS96}: for every integer $m$, $\tau(m)\le 2\log m+c$, for some constant $c$. This is  shown by considering the binary expansion of $m$.
% %For every integer $m$, the following lower bound is known $\tau(m)\ge \log \log m$ \cite{MS96}.
% %

While $(2^n)_{n\in\N}$ is easy and $(2^{2^n})_{n\in\N}$  is hard, it is not known whether $(n!)_{n\in\N}$ is easy or hard, and as seen below, showing the hardness of $\tau(m_n\cd n!)$, for every sequence $(m_n\cd n!)_{n\in\N}$ with $m_n\in\Z$ any nonzero integers, has very strong consequences.

It is known that for every integer $m>2$, $\log \log m\le \tau(m)\le 2\log m$ \cite{MS96}. It is not hard to show  that $(2^n)_{n\in\N}$ is \emph{easy} while  $(2^{2^n})_{n\in\N}$ is \emph{hard}. On the other hand, it is not known whether $(n!)_{n\in\N}$ is easy or hard, and as seen below, showing the hardness of $\tau(m_n\cd n!)$, for every sequence $(m_n\cd n!)_{n\in\N}$ with $m_n\in\Z$ some nonzero integers, has very strong consequences.

%  \mar{\iddo{something is wrong here: what is n if we are talking about polynomials? the %degree? Also ***: \textbf{we don't count the size of coefficients} --- but we should? I think we should \textbf{not} count the size of coefficients!! This is okay, because when we consider lower bound on the size of circuits for constructing numbers we simply assume tat coefficients are 1,0,-1.}\footnote{Do not confuse  hardness with respect to $\tau$ which is super-poly-logarithmic with the usual hardness of algebraic circuits computing polynomials, which is super-polynomial.}\hirsch{Why do we need to speak about the complexity of constants and polynomials in a single sentence?}\iddo{The definition is general to include both, and this is how SS94 and Burgisser define it--so I\ think it is adequate to keep it like this: general.}}

% \begin{note}
% If there is a $\Z$-algebraic circuit of size $M$, representing $F(x_1, \ldots, x_n)$, then for every integer vector $(y_1, \ldots, y_n)$, there is an arithmetic circuit of size $\poly(M + \sum_i \log y_i)$, representing the integer $F(y_1, \ldots, y_n)$.
% \end{note}

Let $\P_K$ and $\NP_K$ be the deterministic  and nondeterministic  versions of Turing machines in the Blum-Shub-Smale model \cite{BSS89}, respectively (see \ref{sec:tau-conj}). Further, let \VPZ\ be the ``Valiant'' class consisting of multivariate polynomial-families $(f_n(\vx))_{n\in\N}$ of polynomial degrees that can be computed by \emph{constant-free} (and division-free) polynomial-size circuits, and let \VNPZ\ be the constant-free (and division-free) ``Valiant NP class'' (see definitions in \ref{sec:Algebraic-Complexity-Classes}).
%
%Blum, Shub and Smale \cite{BSS89} introduced an algebraic version of  Turing machines that has access to a field $K$ In this model one can formalize and study a variant of the \P\ vs.~\NP\ problem for languages solvable by polynomial-time machines with access to  $K$, denoted $\P_K$, versus nondeterministic polynomial-time machines with access to $K$, denoted $\NP_K$. %Specifically, Shub and Smale showed the following conditional separation of $\P_\C\neq\NP_\C$.
%
The following is a condition put forth by Shub and Smale \cite{SS95} (cf.~\cite{Sma98}) towards separating $\P_\C$ from $\NP_\C$, for \C\ the complex numbers:

\newtheorem*{Shub-Smale-hypothesis}{Shub-Smale Hypothesis}

\begin{Shub-Smale-hypothesis}[\cite{SS95,Sma98}]
For every nonzero integer sequence $(m_n)_{n\in\N}$, the sequence $(m_n\cd n!)_{n\in\N}$ is hard.
\end{Shub-Smale-hypothesis}

Shub and Smale, and  B\"urgisser, showed the following consequences of the Shub-Smale hypothesis:  %(see \ref{sec:algebraic_circuits} for $\VP_{\C}$ and $\VNP_\C$):
%Shub and Smale proved that their condition implies a major complexity class separation:
\begin{theorem*}[\autoref{thm:shub-smale}, \cite{SS95,Bur09}]%\label{thm:intro:shub-smale}
\begin{enumerate}
\item If the Shub-Smale hypothesis holds then $\P_\C\neq\NP_\C$.

\item If  the Shub-Smale Hypothesis holds then $\VPZ\neq\VNPZ$. In other words, Shub-Smale Hypothesis implies that the permanent  does not have polynomial size constant-free algebraic circuits.
\end{enumerate}
\end{theorem*}

It is open whether the Shub-Smale hypothesis holds. What is known is that if Shub-Smale hypothesis does \emph{not} hold then factoring of integers can be done in (nonuniform) polynomial time  (cf.~Blum
et al.~\cite[p.126]{BCR+98} and \cite{Che04}). \mar{remember to write this interesting corollary in the intro or Corollaries..}%
 Another related important assumption in algebraic complexity is the \emph{$\tau$-conjecture}. Let $f\in\Z[x]$ be a univariate polynomial with integer coefficients, denote by $z(f)$ the number of distinct integer roots of $f$.\mar{check fields here}

\newtheorem*{tauconj}{$\tau$-Conjecture}

\begin{tauconj}[\cite{SS95,Sma98}]
There is a universal constant $c$, such that for every univariate polynomial $f\in\Z[x]$:
$ (1+\tau(f))^c \ge z(f)\,.$
\end{tauconj}

%B\"urgisser \cite{} showed further that Shub-Smale hypothesis implies also algebraic circuit lower bounds, namely the separation of Valiant's analogue of \P\ vs. \NP:

The consequences of the \tconj\ are similar to the Shub-Smale Hypothesis:\vspace{-5pt}
\begin{theorem}[\autoref{thm:tau-conj-consequences}, \cite{SS95,Bur09}]
If the \tconj\ holds then both
$\P_\C\neq\NP_\C$ and $\VPZ\neq\VNPZ$ hold.
 %In other words, Shub-Smale Condition implies that the permanent polynomial does not have polynomial size algebraic circuits over \C.\mar{Check if \C\ i correct in both cases} \end{enumerate}
\end{theorem}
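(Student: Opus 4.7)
The plan is to derive both separations from the \tconj\ through the intermediate implication that the \tconj\ entails the Shub-Smale Hypothesis. Once this implication is in hand, both $\P_\C \neq \NP_\C$ and $\VPZ \neq \VNPZ$ follow immediately from the theorem on Shub-Smale consequences stated just above.

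To establish that the \tconj\ implies the Shub-Smale Hypothesis, the central object is the polynomial $P_n(x) \eqdef \prod_{i=1}^n (x-i) \in \Z[x]$, which has exactly $n$ distinct integer roots. The \tconj\ then gives $(1+\tau(P_n))^c \ge n$, and hence $\tau(P_n) = \Omega(n^{1/c})$; in particular, $P_n$ is hard. I would then argue contrapositively: assuming some sequence $m_n \cd n!$ with $m_n \in \Z \setminus \{0\}$ is easy (i.e., $\tau(m_n \cd n!) = \log^{O(1)} n$), I would build a constant-free polylogarithmic-size circuit for $P_n$, contradicting the lower bound. A natural route uses the identity $P_n(0) = (-1)^n n!$ together with an efficient expression for the remaining coefficients (signed Stirling numbers of the first kind) in terms of factorials; alternatively, following B\"urgisser, one works with the rising factorial $(x)_n = x(x+1)\cdots(x+n-1)$ and applies division-elimination techniques to stay in the constant-free circuit model.

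The main obstacle is this last conversion step: Stirling coefficients, while expressible in terms of factorials, do not obviously inherit polylogarithmic $\tau$-complexity from a single efficient factorial circuit, so the argument requires carefully tracking how an efficient circuit for $m_n \cd n!$ propagates through the algebraic manipulations---including division-elimination on any intermediate rational subcircuits---to yield a polylogarithmic constant-free circuit for $P_n$ or an equally root-rich polynomial. Once this reduction is complete, the earlier Shub-Smale consequences theorem simultaneously gives both $\P_\C \neq \NP_\C$ and $\VPZ \neq \VNPZ$.
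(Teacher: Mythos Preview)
The paper does not prove this theorem at all: it is stated twice (once in the introduction, once in \autoref{sec:tau-conj}) purely as a citation to \cite{SS95,Bur09}, with no argument supplied. So there is no in-paper proof to compare your proposal against; what you are really trying to do is reconstruct the arguments of Shub--Smale and B\"urgisser.

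Your high-level plan---derive the Shub--Smale Hypothesis from the \tconj\ and then invoke the previously cited theorem---is a legitimate outline, and the polynomial $P_n(x)=\prod_{i=1}^n(x-i)$ together with the bound $\tau(P_n)\ge n^{1/c}-1$ is the right starting object. But the step you yourself label ``the main obstacle'' is a genuine gap, not a technicality. Knowing that some single integer $m_n\cdot n!$ has a $(\log n)^{O(1)}$-size constant-free circuit gives you essentially one number; it does not give you small circuits for the $n$ signed Stirling numbers $s(n,k)$ individually, nor any obvious way to assemble the \emph{polynomial} $P_n$ (as opposed to a single evaluation such as $P_n(0)=(-1)^n n!$). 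Your fallback of ``rising factorial plus division elimination following B\"urgisser'' is not fleshed out either and runs into the same issue: you need the full product $\prod_i(x-i)$ as a formal polynomial, and nothing in your sketch produces it from an easy $m_n\cdot n!$.

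In the actual references the route is different and more direct. Shub--Smale do not pass through Stirling numbers; for $\P_\C\neq\NP_\C$ they argue from a specific $\NP_\C$ problem (Hilbert's Nullstellensatz / twenty questions) that a polynomial-time BSS decision procedure would yield small circuits for a polynomial with many integer roots. B\"urgisser, for $\VPZ\neq\VNPZ$, exhibits a concrete $\VNPZ$ family related to the Pochhammer/Wilkinson polynomials and shows that if it were in $\VPZ$ one would again violate the \tconj. If you want to complete your route via the Shub--Smale Hypothesis, you would need to supply a real reduction building a many-rooted small-$\tau$ polynomial from an easy $m_n\cdot n!$; that reduction is precisely the nontrivial content of the cited works, and your Stirling-coefficient sketch does not provide it.
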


\para{Algebraic Proof Systems.}
A \emph{propositional proof system} (following \cite{CR79}) is a polynomial-time predicate $V(\pi,x)$
that verifies purported proofs $\pi$  (encoded naturally in binary) for propositional formulas $x$ (also encoded  in binary), such that $\exists\pi\ (V(\pi,x)=\textsf{true})$ iff $x$ is a tautology. %\footnote{Historically, Cook and Reckhow \cite{CR79} defined a propositional proof systems as a surjective polynomial-time computable mapping of bit strings onto the set of propositional tautologies.}
% Hence, a propositional proof system is a complete and sound proof system for propositional logic in which a proof can be checked for correctness in polynomial-time (though, note that a proof $\pi$ may be exponentially larger than the tautology $x$ it proves).%
In the setting of algebraic proof systems, one can use a broader definition of a proof system: instead of $V(\pi,x)$ being a polynomial-time predicate it is a \coRP\ predicate (since polynomial identity testing is in \coRP), and instead of providing proofs for propositional tautologies the system establishes proofs (in fact refutations) for sets of polynomial equations with no \bits\ solutions.

Grochow and Pitassi~\cite{GP14}, following \cite{Pit98}, suggested the following algebraic proof system which is essentially a Nullstellensatz proof system \cite{BeameIKPP96} written as an algebraic circuit (this was showed in \cite{FSTW16}). A proof in the  Ideal Proof System is given as  a \emph{single} polynomial. We provide below the \emph{boolean} version of  IPS (which includes the boolean axioms), namely the version that establishes the unsatisfiability over 0-1 of a set of polynomial equations.  (In what follows we follow the notation in \cite{FSTW16}):

%\iddo{shall we fix the field from now on to \reals?}\hirsch{No. %First of all, should we first formulate it without %$x^2-x$? (In this case we would need an algebraically %closed field.) Second, we will be talking about $\mathbb{Q}$ %at least.}

% This is potentially much more powerful as there are polynomials such as the determinant which are of high degree and involve exponentially many monomials and yet can be computed by small algebraic circuits. They named the resulting system the \emph{Ideal Proof System (IPS)} which we now define.

\begin{definition*}[\autoref{def:IPS}, (boolean) Ideal Proof System (IPS),
Grochow-Pitassi~\cite{GP14}] Let $f_1(\vx),\ldots,f_m(\vx),p(\vx)$ be a collection of polynomials in $\F[x_1,\ldots,x_n]$ over the field \F. An \demph{IPS proof of $p(\vx)=0$ from $\{f_j(\vx)=0\}_{j=1}^m$}, showing that $p(\vx)=0$ is semantically  implied from the assumptions $\{f_j(\vx)=0\}_{j=1}^m$ over $0$-$1$ assignments, is an algebraic circuit $C(\vx,\vy,\vz)\in\F[\vx,y_1,\ldots,y_m,z_1,\ldots,z_n]$ such that (the equalities in what follows stand for  formal polynomial identities\footnote{That is, $C(\vx,\vnz,\vnz)$ computes the zero polynomial and $C(\vx,f_1(\vx),\ldots,f_m(\vx),x_1^2-x_1,\ldots,x_n^2-x_n)$ computes the polynomial $p(\vx)$.}):
\vspace{-6pt} 
\begin{enumerate}
   \item $C(\vx,\vnz,\vnz) = 0$; and\vspace{-5pt}
                \item $C(\vx,f_1(\vx),\ldots,f_m(\vx),x_1^2-x_1,\ldots,x_n^2-x_n)=p(\vx)$.
        \end{enumerate}
        
        \vspace{-6pt} 
        The \demph{size of the IPS proof} is the size of the circuit $C$. If $C$ is assumed to be constant-free, we refer to the size of the proof as the \demph{size of the constant-free IPS proof}.
%If $C$ is of individual degree $\le 1$ in each $y_j$ and $z_i$, then this is a \demph{linear} IPS refutation (called basically \emph{Hilbert} IPS by Grochow-Pitassi~\cite{GrochowPitassi14}), which is  abbreviated as \lIPS. %
%If $C$ is of individual degree $\le 1$ only in the $y_j$ then we say this is %an \lbIPS refutation.
%
%If $C$ comes from a restricted class of algebraic circuits $\cC$, then this is called a $\cC$-IPS refutation, and further called a $\cC$-\lIPS refutation if $C$ is linear in $\vy,\vz$.%
The variables $\vy,\vz$ are  called the \emph{placeholder} \emph{variables} since they are used as placeholders for the axioms. An IPS proof  $C(\vx,\vy,\vz)$ of  $1=0$ from $\{f_j(\vx)=0\}_{j\in[m]}$ is called  an \demph{IPS refutation} of $\{f_j(\vx)=0\}_{j\in[m]}$ (note that in this case  it must hold that  $\{f_j(\vx)=0\}_{j=1}^m$ have no common solutions in $\bits^n$).
% and a $\cC$-\lbIPS refutation if $C$ is linear in $\vy$.
\end{definition*}

Notice that the definition above adds the equations $\{x_i^2-x_i=0\}_{i=1}^n$, called the set of \demph{boolean axioms} denoted $\vx^2-\vx$, to the system $\{f_j(\vx)=0\}_{j=1}^m$. This allows  to refute over $\bits^n$ unsatisfiable systems of equations.
Also, note that the first equality in the definition of IPS means that the polynomial computed by $C$ is in the ideal generated by $\overline y,\overline z$, which in turn, following the second equality, means that $C$ witnesses the fact that $1$ is in the ideal generated by $f_1(\vx),\ldots,f_m(\vx),x_1^2-x_1,\ldots,x_n^2-x_n$. (the existence of this witness, for unsatisfiable set of polynomials, stems from the Nullstellensatz theorem \cite{BeameIKPP96}).
In order to use IPS as a propositional proof system for refuting unsatisfiable CNF formulas we fix the usual encoding of clauses as algebraic circuits (\autoref{def:algebraic-transl-CNF}).
%\begin{definition}[algebraic translation of CNF formulas]\label{def:old-algebraic-transl-CNF}
%Given a CNF formula in the variables $\vx$,  every clause $\bigvee_{i\in P} x_i \lor \bigvee_{j\in N} \neg{x_j}$ is translated into  $\prod_{i\in P} (1-x_i)\cdot \prod_{j\in N} x_j=0$. (Note that these terms are    written as algebraic circuits as displayed, where products are not multiplied out.)
%\end{definition}
%Notice that in this way a 0-1 assignment to a CNF is satisfying iff the assignment is satisfying all the equations in the algebraic translation of the CNF.

%\iddo{I've changed slightly the terminology: it is somewhat confusing to distinguish between boolean IPS and propositional IPS: because at times we talk about *a proof* in the boolean/propositional IPS system, which may be a proof of a polynomial that is *not* a translation of a propositional formula!}
\mar{\iddo{Since I take out the propositional IPS terminology...needs to check if consistent}}
%Therefore, using \autoref{def:algebraic-transl-CNF} to encode CNF formulas, boolean IPS is considered as a propositional proof system for the language of unsatisfiable CNF formulas, sometimes called \demph{propositional IPS}. We say that an IPS proof is an \demph{algebraic IPS} proof, if we do not use the boolean axioms \ba\ in the proof. \emph{As a default when referring to IPS we mean the boolean IPS version}.

% which proofs are boolean IPS proofs and where propositional formulas are the boolean IPS together with the algebraic translation of CNF formulas.
%\end{definition}

% \begin{definition}[propositional and algebraic IPS] \emph{Propositional IPS} is a propositional proof system for the language of unsatisfiable  CNF formulas in which a refutation of a CNF formula is a boolean IPS refutation of the formula encoded according to Definition \autoref{def:algebraic-transl-CNF}. \emph{Algebraic IPS }is a proof system for sets of polynomial equations that have oolean IPS only that we do not have the boolean axioms \ba.
% % which proofs are boolean IPS proofs and where propositional formulas are the boolean IPS together with the algebraic translation of CNF formulas.
% \end{definition}

%The  algebraic of IPS, which establishes the unsatisfiability over the field elements of a set of polynomial equations, is similar to the boolean IPS except that we do not include the boolean axioms in the system.

\para{Semi-Algebraic Proofs.}
%The \emph{sum-of-squares} proof system (\sos\ for short)

The \textit{Positivstellensatz} proof system, as defined by Grigoriev and Vorobojov \cite{GV02}, is a static refutation system for establishing the unsatisfiability over the reals \R\ of a system consisting of both polynomial equations $\assmp=\{f_i(\vx)=0\}_{i\in I}$ and polynomial inequalities $\ineqassmp=\{h_j(\vx)\ge 0\}_{j\in J}$, respectively.
%It is based on a restricted version of the Positivstellensatz theorem \cite{Kri64,Ste74}.
In \ps\ one essentially derives a polynomial in the \emph{cone} of the initial equalities and inequalities, in contrast to algebraic proofs in which one derives polynomial in the ideal of the initial polynomial equations. Loosely speaking, the cone serves as a non-negative closure of a set of polynomials, or in other words as a ``positive ideal" (see \ref{def:cone} and discussion in \ref{sec:semi-algebraic-proofs-prelim}).
%Usually the cone  is defined as the set closed under non-negative linear combinations of polynomials (cf.~\cite{BPT13-SDP-book}), but following \cite{GV02} we are going to use a more general formulation which is sometimes called \emph{the sos cone}.

% To understand better the \ps\ we start by defining the cone of a set of polynomials, which expresses the notion of a non-negative  closure of a set of polynomials, or informally the notion of a ``positive ideal" of a set of polynomials. (Usually\hirsch{Should we speak the language of algebraic geometry here? I would avoid discussion and defined the ``sos cone'' directly.}, the cone of a set of polynomials is defined as the set closed under non-negative linear combinations of polynomials (cf.~\cite{BPT13-SDP-book}), but following \cite{GV02} we are going to use a more general formulation which is sometimes called \emph{the sos cone}).\iddo{check this note}.

%\iddo{This shows completeness but NOT implicational completeness!! Can we assume implicational completeness? Not sure.}\hirsch{I don't think we need impl.compl. at all. This is always a problem with s.a.systems.}

We will distinguish between the \emph{real} \ps\ in which variables are meant to range over the reals and \emph{boolean} \ps\ in which variables range over \bits.
\begin{definition*}[\autoref{def:PS}, real Positivstellensatz proof system (real PS) \cite{GV02}]
Let $\assmp:=\{f_i(\vx)=0\}_{i\in I}$ be a set of polynomial equations and let  $\ineqassmp:=\{h_j(\vx)\ge 0\}_{j\in J}$ be a set of polynomial inequalities, where all polynomials are from $\R[x_1,\ldots,x_n]$. Assume that \assmp, \ineqassmp\ have no common real solutions.
A \emph{Positivstellensatz} \emph{refutation} of \assmp, \ineqassmp\ is a collection of polynomials $\{p_i\}_{i\in I}$ and $\{s_{i,\zeta}\}_{i,\zeta}$ (for $i\in\N$, $\zeta\subseteq J$ and $I_\zeta\subseteq\N$) in $\R[x_1,\ldots,x_n]$ such that the following formal polynomial identity holds:
\begin{equation}\label{eq:intro:ps}
\sum_{i\in I} p_i\cd f_i + \sum_{\zeta\subseteq J} \left(\prod_{j\in \zeta} h_j \cd \left(\sum_{i\in I_\zeta} s_{i,\zeta}^2\right)\right)=- 1\,.
\end{equation}%
%EH% for some polynomials $g_i, t_i, h_j\in\F[x_1,\ldots,x_n]$.
 The \textbf{monomial size} of a \ps\ refutation is the combined total number of monomials in $\{p_i\}_{i\in I}$ and $\sum_{i\in I_\zeta} s_{i,\zeta}^2$, for all $\zeta\subseteq J$, that is, $\sum_{i\in I}\monomsize {p_i}+\sum_{\zeta\subseteq J}\monomsize {\sum_{i\in I_\zeta} s_{i,\zeta}^2}$.
%\iddo{We may want not to count the monomials in $s^2$ rather just in $s$?? See Atserias definition.}\hirsch{It absolutely does not matter.}%
%\footnote{The definition of size measure for \ps\ and \sos\ proofs is slightly less standard than degree measure (see discussion in \cite{AH19}). We define the monomial size measure of \ps\ proofs to count the monomials in $p_i$ and $s_{i,\zeta}^2$, while ignoring the monomials in the initial axioms in \assmp, \ineqassmp. This choice of definition is closer to the definition of size of IPS proofs, which ignores the size of the initial axioms.}

\mar{\iddo{Actually it's unclear whether size was even considered there as a measure of sos/ps. Also I chose to count only number of monomials in polynomial coefficients.}%
\hirsch{A lower bound on PS size is proved in GHP02 for the knapsack, and in Itsykson-Kojevnikov for Tseitin formulas. The size of coefficients is considered (but does not help for lower bounds).
The notion of size in propositional proof complexity is always the number of bits in the proof (we are talking about NP vs co-NP!).
The question is only how the proof is represented. In particular, unit-cost notion is not
a p.p.s. notion (you can't represent integers to make them O(1) bit-size).}\iddo{Made changes in the same text in the Prelim section.}}
\end{definition*}

In order to use \ps\ as a refutation  system for  collections of equations \assmp\ and inequalities \ineqassmp\ that are
unsatisfiable over 0-1 assignments, we need to include simple  so-called boolean axioms.
This is done in slightly  different ways in different works  (see for example \cite{GHP02,AH19}).
One way to do this, which is the way we follow,
%usually wishes  can consider the \ps\ as a \emph{propositional} proof system that %establishes the unsatisfiability of \assmp, \ineqassmp\ over 0-1 assignments simply%
is the following:

\begin{definition*}[\autoref{def:prop-PS}, (boolean) \ps\ proof system (boolean PS)]
A \demph{boolean \ps\ proof} from a set of polynomial equations \assmp, and polynomial inequalities \ineqassmp, is an algebraic  \ps\ proof in which the following  \demph{boolean axioms} are part of the axioms: the polynomial equations $x^2_i-x_i=0$ (for all $i\in[n]$) are included in \assmp, and the polynomial inequalities $x_i\ge 0,~ 1-x_i\ge 0$ (for all $i\in[n]$) are included in \ineqassmp. \end{definition*}
In this way, \assmp, \ineqassmp\ have no common 0-1 solutions iff there exists a boolean \ps\ refutation of \assmp, \ineqassmp.
%The boolean axioms $x_i\ge 0,~ 1-x_i\ge 0$ in propositional \ps\ have been used already in the early work on semi-algebraic proof systems by Grigoriev, Hirsch and Pasechnik \cite{GHP02}. %\iddo{check this!}\hirsch{I don't follow the discussion: I think boolean axioms in PS have been already in earlier Grigoriev's works...}\iddo{In Grigoriev/Vorobojov there are no boolean axioms as far as I\ recall. ie. the lower bound is without the boolean axioms, no? But maybe I'm wrong}\smallskip
 %
%\iddo{Need to show here the propositional sos version. The use of $x_i\ge 0, 1-x_i\ge 0$ is absent.  }
%
Eventually, to define the boolean \ps\ as a propositional proof system for the unsatisfiable CNF formula we consider CNF formulas to be encoded as polynomial equalities according to  \autoref{def:algebraic-transl-CNF}. This version is sometimes called \demph{propositional Positivstellensatz}.
\emph{As a default when referring to \ps\ we mean the boolean \ps\ version}.

In recent years, starting mainly with the work of Barak, Brandao, Harrow, Kelner, Steurer and Zhou \cite{BarakBHKSZ12}, a special case of the \ps\ proof system has gained much interest due to its application in complexity and algorithms (cf.~\cite{OZ13}). This is the \demph{sum-of-squares} proof system (\textbf{SoS}), which is defined as follows:

\begin{definition*}[\autoref{def:sos-proof-system}, sum-of-squares proof system (\sos)]
A \demph{sum-of-squares proof} (SoS for short) is a \ps\ proof in which in \ref{eq:intro:ps} above  we restrict the index sets $\zeta\subseteq J$ to be \emph{singletons}, namely $|\zeta|=1$, hence, disallowing arbitrary products of inequalities within themselves. The real, boolean and propositional versions of \sos\ are defined similar to \ps.
\end{definition*}

\subsection{Our Results and Techniques}\label{sec:our-results}

 We consider the following subset-sum instance written as an unsatisfiable linear equation with large coefficients, expressing the fact that natural numbers written in binary cannot be negative:

%which loosely speaking is a form of polynomial calculus (in fact, Nullstellensatz as in Beame et al. XXX) operating with restricted algebraic circuits....

\bigskip

\begin{boxedminipage}[c]{0.95\textwidth}
\begin{definition}[Binary Value Principle \bvpn]\label{def:knapsack}
The \emph{binary value principle} over the variables $x_1,\dots, x_n$, \bvpn\  for short, is the following unsatisfiable (over \bits\ assignments) linear equation:
\[
x_1+2x_2+4x_3+\dots+2^{n-1} x_n = -1\,.
%1+\sum_{i=1}^{n} 2^{i-1}x_{i} = 0 .
\]
\end{definition}\end{boxedminipage}
\bigskip

At times we use   a more general principle denoted \bvpnm, which we call the \emph{generalized binary value principle}:
$x_1+2x_2+4x_3+\dots+2^{n-1} x_n = -M, \text{~~for a positive integer $M$}$.

\subsubsection{Lower Bounds}

We prove two kinds of conditional super-polynomial  lower bounds against IPS refutations.
The first is over \Q\ and \Z\ and the second is over the field  $\Q(y)$ of rational functions of univariate polynomials in the indeterminate  $y$denoted $\Q[y]$ (see \ref{def:Q[y]}).
We start with the first lower bound.
%Note that BVP captures the advantage we need to show also that if semi-algebraic are stronger than algebraic proofs, then BVP has no IPS proof. But this is correct: if semi-algebraic proofs is stronger than algebraic proofs, then algebraic proofs can't prove BVP. On the other hand, if semi-algebraic and algebraic proofs are equal, then by the semi-algebraic upper bound for the BVP we know also that IPS have a proof of BVP, hence semi-algebraic = algebraic proofs.

%\para{Conditional IPS lower bounds.} We provide an exponential-size lower bound on IPS refutations of the binary value principle, assuming the Shub-Smale hypothesis.

\begin{theorem*}[\autoref{thm:first-cond-lower-bound-Q}]
Under the Shub and Smale hypothesis, there are no $\poly(n)$-size constant-free (boolean) IPS refutations of the binary value principle \bvpn\ over \Q.
\end{theorem*}

This result can be viewed as pushing forward to full  IPS  the paradigm initiated by Forbes et al.~\cite{FSTW16} wherein proof complexity lower bound questions are reduced to algebraic circuit size lower bound questions: an IPS proof written as a circuit from a class $\mathcal C$ is obtained by showing that there are no small $\mathcal C$-circuits computing  certain polynomials. %In \cite{FSTW16} the IPS lower bounds were unconditional. 

Therefore, we stress that \emph{this approach can only lead to conditional lower bounds for full unrestricted IPS}, as long as we do not have (explicit) super-polynomial lower bounds against  general algebraic circuits, namely as long as we do not prove $\VP\neq\VNP$.\footnote{Though, it should be mentioned that in proof complexity even non-explicit lower bounds are not  known, and will constitute a breakthrough in the field; hence moving from non-explicit (and thus \emph{known}) circuit lower bounds to (possibly also non-explicit) proof complexity lower bounds cannot be ruled out entirely.}

%For unconditional lower bounds, using this approach we can only hope  to be able to lower bound IPS refutations that are represented with a \emph{restricted} \emph{circuit class for which we already know lower bounds}. In other words, in this approach we cannot hope to unconditionally prove full IPS lower bounds without first solving the corresponding circuit lower bound question, namely without providing (explicit) algebraic circuit lower bounds such as $\VP\neq\VNP$.

\iddo{Motivation. Why is it important? Take from FSTW16...}
\iddo{How the technique extends FSTW16?}

%To get our full IPS exponential lower bound (in which the IPS proof is represented as an unrestricted circuit), we  base it on the Shub-Smale hypothesis (see \autoref{sec:background})

%  on an assumption from algebraic complexity. The assumption we use is the Shub-Smale Hypothesis, that states that the sequence $\{c_m\cd m!\}_{m=1}^\infty$ cannot be computed by a poly-logarithmic family of constant-free algebraic circuits, that is a family  of size bounded by $\log^{O(1)} m$.
\medskip

\nind\emph{Proof sketch of }\autoref{thm:first-cond-lower-bound-Q}:  First, we show in \autoref{cor:from-Q-IPS-to-Z-IPS} that it is enough to consider IPS refutations over \Z\ instead of \Q. An IPS refutation over \Z\ is a proof of a nonzero integer $M$ instead of $-1$. \iddo{(***show here the IPS condition)}  Let $S_n:=\sum_{i=1}^n 2^{i-1}x_i$~ so that BVP is $S_n+1=0$, and assume that the IPS refutation of BVP is written as follows (this can be assumed without loss of generality by a result of \cite{FSTW16}):

\begin{equation}\label{eq:intro:BVP-IPS-ref}
Q(\vx)\cd (S_n + 1) + \sum_{i=1}^n H(\vx)\cd(x_i^2-x_i) = M\,.
\end{equation}   \vspace{-5pt}

\nind Since the IPS refutation is over \Z\ we know in particular that $Q(\vx)$ is an integer polynomial. Let us consider now only \bits\ assignments to \autoref{eq:intro:BVP-IPS-ref}. Since under \bits\ assignments the boolean axioms $x_i^2-x_i$ vanish we get from \autoref{eq:intro:BVP-IPS-ref}:
\vspace{-9pt}

\begin{equation}\label{eq:intro:E2}
Q(\vx)\cd (S_n + 1) =M\,.
\end{equation}

\nind Observe that the image of $S_n+1$ under boolean assignments is the set of  all possible natural numbers between $1$ to $2^n$. In other words, for every number $b\in[2^n]$, there exists an assignment $\valpha\in\bits^n$, such that $(S_n+1)(\valpha)=b$. Since $Q(\vx)$ is an integer polynomial, it evaluates to an integer under every \bits\ assignment.  Therefore, by \autoref{eq:intro:E2} $M$ is a product of every natural number between $1$ to $2^n$.
This already brings us close to the conditional lower bound: we assume contra-positively that there is a polynomial-size constant-free circuit that computes $Q(\vx)$, which  implies that there exists a polynomial-size constant-free and variable-free circuit that computes $M$ (because fixing any boolean assignment to the variables we get such a circuit over over \Z\ computing $M$). We then show that if there exists a $\poly(n)$-size constant-free circuit for $M\in\Z$, such that $M$ is divisible by every number in $[2^n]$, then there exists a $\poly(n)$-size circuit that computes $(2^n)!$.

Consider the $\poly(n)$-size circuit for $M^{2^n}$ that is obtained by $n$ repeated squaring of $M$. Since $M$ is divided by every natural number in $[2^n]$ it is in particular divisible by every \emph{prime} number in $[2^n]$. It is possible to show that the power of every prime number in the prime factorisation of $(2^n)!$ is at most $2^n$, from which we can conclude that $M^{2^n}$ is an integer product of $(2^n)!$. We thus obtain a constant-free  $\poly(n)$-size circuit for a nonzero integer product of $(2^n)!$. From this it is easy to show that for every $m$ with $2^{n-1}\le m\le 2^n$ there is a $\poly(n)$-size constant-free circuit computing a nonzero integer product of $m!$, hence that sequence $(c_m\cd m!)_{m=1}^\infty$ admits a $\log^{O(1)}m$-size family of constant-free circuits, in contrast to the Shub-Smale hypothesis. \qed
\vspace{-5pt}

%The approach there can .....only work where obtained only for restricted circuit classes. obtained e proof idea has some similarities to the work of Forbes et al.~\cite{FSTW16}. In \cite{FSTW16} super-polynomial-size IPS lower bounds were obtain by reducing the question to an algebraic circuit lower bound question. The lower bounds were mostly

\para{Rational field lower bounds.}
We now consider IPS operating over the field of rational functions in the (new) indeterminate $y$, denoted $\Q(y)$ (\ref{def:Q[y]}). This  allows us to formulate a very interesting  version of the binary value principle. Roughly speaking, this version expresses the fact that the BVP is ``almost always'' unsatisfiable.  More precisely, consider the linear  equation  $\sum_{i=1}^n a_ix_i = y$, for integer $a_i$'s, and $y$ the new indeterminate. This equation is unsatisfiable for most $y$'s, when $y$ is substituted by an  element from \Q. 
We show that once we have an IPS refutation over $\Q(y)$ of this equation  we can substitute  $y$ by any constant but a finite number of rational numbers and get a valid IPS refutation over \Q\ of the original BVP. Thus \emph{an
IPS refutation over $\Q(y)$ of $\sum_{i=1}^n a_ix_i = y$ can be viewed as a single refutation  for all but finitely many values of $y\in\Q$.}

We show that while for polynomially bounded coefficients $a_i$ there are small $\Q(y)$-IPS refutations of 
 $\sum_{i=1}^n a_ix_i = y$, for  $\sum_{i=1}^n 2^{i-1}x_i = y$, there are no small refutations, assuming the \tconj: 

% In the setting of IPS refutations over $\Q(y)$ refuting $\sum_{i=1}^n 2^{i-1}x_i = y$  corresponds to refuting $\sum_{i=1}^n 2^{i-1}x_i = M$, for all $M\in\Q$ but a \emph{finite} set of  numbers from \Q. We establish a super-polynomial lower bound on   $\sum_{i=1}^n 2^{i-1}x_i = y$, over $\Q(y)$ subject to the $\tau$-conjecture:

%(Note that for \emph{inequalities}, in contrast to equalities, the two inequalities $\sum_{i=1}^n 2^{i-1}x_i \le -1$ and $\sum_{i=1}^n 2^{i-1}x_i \ge 2^n$ cover all contradictory cases, in the sense that these inequalities express the fact that the BVP is indeed almost always unsatisfiable, namely is unsatisfiable for most elements in \Q.) 

% Here we prove an IPS lower bound based on the \tconj\ and over rational fields (in which the field elements are $p(y)/q(y)$, for $p(y),q(y)\in\Q[y]$ and $q(y)\neq 0$ and $y$ a new indeterminate; see \ref{def:Q[y]}). Our lower bound is  proved for IPS refutations in which the placeholder variables have individual degrees at most 1, namely the IPS certificate is multilinear in the \vy-variables (these variables are different from the new indeterminate $y$). This variant is denoted IPS-LIN (\autoref{def:IPS-LIN})  in Forbes et al.~\cite{FSTW16}, and  was proved to be polynomially equivalent to IPS (\cite[Theorem 4.4]{FSTW16}). Nonetheless, we state the lower bound for IPS-LIN and not IPS because our IPS refutations are constant-free, and we have not verified that the equivalence of IPS with IPS-LIN carries over to the model of constant-free refutations (though we believe it does).
\begin{theorem*}[\ref{thm:second-lower bound}]\iddo{maybe make it less formal here...}
Suppose a system of polynomial equations $F_0(\vec{x}) = F_1(\vec{x}) = F_2(\vec{x}) =  \cdots = F_n(\vec{x})= 0$, $F_i \in \mathbb{Q}(y)[x_1, \ldots, x_n]$, where $F_0(\vec x) = y + \sum_{i = 1}^{i = n} 2^{i - 1} x_i$ and $F_i(\vec x) = x_i^2 - x_i$, has an IPS-LIN$_{\mathbb{Q}(y)}$ certificate $H_0(\vec{x}), \ldots, H_{n}(\vec x)$, where each $H_i(\vec x)$ can be computed by a $\mathbb{Q}(y)[x_1, \ldots, x_n]$-algebraic circuit of size $\poly(n)$ \iddo{Is it constant free circuit?}. Then, the $\tau$-conjecture is false.
\end{theorem*}

\hirsch{re is a draft for the introduction. It assumes that the simulation of CPS in IPS is mentioned before lower bounds. If it is not, we'll have to make a workaround by promising it first.}\iddo{Okay, I see. Indeed, it comes after. So just mention it comes after.}

% We have seen that refuting $\sum_{i=1}^n x_i2^{i-1}+1=0$ is possibly hard for IPS.
% Moreover, when simulating CPS over $\mathbb{Q}$ we may come to any similar inequality
% $\sum_{i=1}^n x_i2^{i-1}+M=0$ for a positive integer $M$.
% Can we in principle refute them all in a uniform manner?

% Actually, we cannot even formulate such a statement over $\mathbb{Q}$,
% as expressing an inequality of unbounded range using equalities is hard.
% (Although we could formulate the negation as $\prod_{N=0}^{2^n-1} (\sum_{i=1}^n x_i2^{i-1}-N) = 0$.)
% We introduce a system where we can formulate things that are close to this statement.
% Namely, this is IPS over $\mathbb{Q}(y)$, the field of rational functions of a single variable $y$.

% Once we refute $\sum_{i=1}^n x_i2^{i-1}+y=0$ in this system, we can substitute $y$ in this refutation by any constant that does not appear in the denominators in the (finite!) proof thus getting a refutation for all but
% a finite number of integers. Indeed, this is possible and it would be even efficient
% for small coefficients instead of $2^{i-1}$. However, the case of exponential
% coefficients remains hard, now under the $\tau$-conjecture.
% We prove this conditional lower bound in \autoref{sec:rational-field-lower-bounds}.

Roughly speaking, the lower bound proof extracts denominators from the refutation and obtains a small circuit that has all $n$-bit nonnegative integers as its roots and thus cannot exist under the $\tau$-conjecture.

\subsubsection{Algebraic versus Semi-Algebraic Proofs}\label{sec:Algebraic-versus-Semi-Algebraic-Proofs}
%The results above showed that the binary value principle is possibly hard for IPS

We  exhibit the importance of the binary value principle by showing that it captures in a manner made precise the strength of semi-algebraic reasoning in the regime of strong (to very strong) proof systems, and formally those systems that can efficiently reason about bit arithmetic.
Note that already Frege system can reason about bit arithmetic
(see \cite{Goe90} following \cite{Bus87}); however, this alone is not sufficient to simulate semi-algebraic systems.
Specifically, we show that short refutations of the binary value principle would bridge the gap between very strong algebraic reasoning captured by
the ideal proof system and its semi-algebraic analogue
that we introduce in this work, which we call the Cone Proof System (CPS for short).

%Specifically, we introduce a very strong semi-algebraic proof system under the name \emph{Cone Proof System} which serves as a semi-algebraic analogue of the algebraic proof systems IPS, and show that short IPS refutations of the binary value principle would bridge the gap between IPS and CPS. very strong algebraic reasoning captured by IPS its semi-algebraic analogue CPS.

Whereas IPS is devised to capture derivations in the \emph{ideal} of initial given polynomials, CPS is defined so  to exhibit derivations in the \emph{cone} (\ref{def:cone})  of these polynomials. The cone proof system  establishes that a collection of polynomial equations  $\assmp:=\{f_i=0\}_i$ and polynomial inequalities $\ineqassmp:=\{h_i\ge 0\}_i$ are unsatisfiable over 0-1 assignments (or over real-valued assignments, when desired). In the spirit of  IPS \cite{GP14} we define  a refutation in CPS as a \emph{single} algebraic circuit. This circuit computes a polynomial that results from positive-preserving operations such as addition and product applied between the inequalities \ineqassmp\ and themselves, as well as the use of nonnegative scalars and arbitrary squared polynomials. In order to simulate in CPS the free use of equations from \assmp\ we incorporate in the set of inequalities \ineqassmp\ the inequalities $f_i\ge 0$ and $-f_i\ge 0$ for each $f_i=0$ in \assmp\ (we show that this enables one to add freely products of the polynomial $f_i$ in CPS proofs, namely working in the ideal of \assmp\ (in addition to working in the cone of \ineqassmp); see \ref{sec:Boolean-CPS-Simulates-Boolean-IPS}).
%\hirsch{Where do we show it? I did not find.}\iddo{Added ref.}

We first formalize the concept of a cone as an algebraic circuit. Let $C$ be a circuit and $v$ be a node in $C$. We call $v$ a \demph{squaring gate} if $v$ is a product gate of which two incoming edges are emanating from the \emph{same} node $z$, that is $v=z^2$.
% Therefore, if we denote by $w$ the single node that has two outgoing edges to the squaring gate $v$, then $v$ computes $w^2$ (that is, the square of the polynomial computed at node $w$).
%
%The following is a definition of a circuit computing polynomials in the cone of the \vy\  variables:

%\begin{definition}[Conic circuit] \label{def:old-conic-circuit} We say that an algebraic circuit $C$ computing a polynomial over $\R[\vx]$ is a \demph{conic circuit} if for every negative constant that appears as a leaf $u$ in $C$ the following holds: every path $p$ from $u$ to the output gate of $C$ contains a squaring gate.
%\end{definition}

\begin{definition*}[\autoref{def:conic-circuit-for-vx}, \vy-conic circuit] Let $R$ be an ordered ring. We say that an algebraic circuit $C$ computing a polynomial over $R[\vx,\vy]$ is a \demph{conic circuit with respect to \vy}, or \demph{$\vy$-conic}  for short,  if for every negative constant or a variable $x_i\in\vx$, that appears as a leaf $u$ in $C$, the following holds: every path $p$ from $u$ to the output gate of $C$ contains a squaring gate.
\end{definition*}

Informally, a \vy-conic circuit is a circuit in which we assume that the \vy-variables are nonnegative, and any other input that may be negative (that is, a negative constant or an \vx-variable) must be part of a squared sub-circuit.
CPS is defined roughly in the same way as   IPS only that instead of circuits we use conic circuits:

\begin{definition*}[\autoref{def:cps}, (boolean)\ Cone Proof System (CPS)]
Consider a collection of polynomial equations $\assmp:=\{f_i(\vx)=0\}_{i=1}^m$, and a collection of polynomial inequalities $\ineqassmp:=\{h_i(\vx)\ge  0\}_{i=1}^\ell$, where all polynomials are from $\R[x_1,\ldots,x_n]$. Assume that the following \demph{boolean axioms} are included in the assumptions: \assmp\ includes $x_i^2-x_i=0$, and  $\ineqassmp$ includes the inequalities $x_i\ge0$ and $1-x_i\ge 0$, for every variable $x_i\in\vx$. Suppose further that $\ineqassmp$ includes (among possibly other inequalities) the two inequalities $f_i(\vx)\ge0$ and $-f_i(\vx)\ge0$ for every equation $f_i(\vx)=0$ in $\assmp$ (including the equations $x_i^2-x_i=0$).
A \demph{CPS proof of $p(\vx)$ from $ \assmp$ and $\ineqassmp$}, showing    that $\assmp,\ineqassmp$ semantically imply the polynomial inequality $p(\vx)\ge 0$
%\iddo{Fix the mistake in IPS definition where C is in a polynomial ring!!}\hirsch{don't follow the remark now...}\iddo{Hmm...don't rememebr what I\ meant now}%
over $0$-$1$ assignments, is an algebraic circuit $C(\vx,\vy)$ computing a polynomial in $\R[\vx,y_1,\dots,y_\ell]$, such that:\footnote{Note that formally we do not make use of the assumptions \assmp\ in CPS, as we assume always that the inequalities that correspond to the equalities in \assmp\ are present in \ineqassmp. Thus, the indication of \assmp\ is done merely to maintain clarity and distinguish (semantically) between two kinds of assumptions: equalities and inequalities.}\vspace{-5pt} \begin{enumerate}
 \item $C(\vx,\vy)$ is a \vy-conic circuit;
and\vspace{-5pt}
\item $C(\vx,\ineqassmp)=p(\vx)$,
\end{enumerate}\vspace{-5pt}
where equality 2 above is a formal polynomial identity (that is, $C(\vx,\ineqassmp)$ computes the polynomial $p(\vx)$) in which the left hand side means that we substitute $h_i(\vx)$ for $y_i$, for all $i=0,\dots,\ell$.
The \demph{size} of a CPS proof is the size of the circuit $C$. The variables $\vy$ are the \emph{placeholder} \emph{variables} since they are used as a placeholder for the axioms. A CPS proof of $-1$ from $\assmp,\ineqassmp$ is called a \demph{CPS refutation  of \assmp, \ineqassmp}.% (note that in this case  it must hold that $\assmp,\ineqassmp$ have no common $0$-$1$ solutions).
\end{definition*}
\mar{ Chec if needed: In what follows, we will write ``conic'' instead of ``\vy-conic'' where the meaning of \vy\ is clear from the context.}

To refute CNF formulas  in CPS we use
the algebraic translation of CNFs (\autoref{def:algebraic-transl-CNF}) into a set of polynomial equalities (we can equally express CNFs as  inequalities; see \autoref{prop:CPS-from-equtional-CNF-to-inequalities-CNF}). %We show in \ref{prop:CPS-from-equtional-CNF-to-inequalities-CNF} that CPS can efficiently translate CNF formulas written as polynomial equalities to the standard way in which CNF formulas are written as polynomial \emph{inequalities}.
The real version of CPS, called \demph{real CPS}, is defined  similar to CPS only without the boolean axioms.

\mar{\iddo{Make sure you reference the propositional version of SoS. As otherwise, the general algebraic version of SoS does not have the $x_i\ge 0 $ assumptions in the inequalities; hence, it seems slight weaker or different than CPS}
\hirsch{SoS is not referenced right here. The simulation does mention boolean (not propositional) version.}}
%\item We add the boolean axioms $x_i^2-x_i\ge 0$, $x_i-x_i^2\ge 0$, $x_i\ge0$ and $1-x_i\ge 0$ to \ineqassmp\ as a default. Hence, the system can refute any set of inequalities (and equalities) that is unsatisfiable over 0-1 assignments.
\begin{remark}
Formally, CPS proves only consequences from an initial set of inequalities \ineqassmp\ and not equalities \assmp. However, we are not losing any power doing this. First, observe that
an assignment satisfies  \assmp, \ineqassmp\ iff it satisfies \ineqassmp\ (in the case of boolean CPS an assignment that satisfies either $\assmp$ or $\ineqassmp$ must be a \bits\ assignment).
Second, we encode \emph{equalities} $f_i(\vx)=0\in\assmp$ using the two inequalities $f_i(\vx)\ge 0$ and $-f_i(\vx)\ge 0$ in \ineqassmp. As shown in \autoref{thm:CPS-sim-IPS}  \emph{this way we can derive any polynomial in the \emph{ideal} of \assmp, and not merely in the cone of \assmp,} as is required for equations (and similar to the definition of SoS), with at most a polynomial increase in size (when compared to IPS).
% proof systems deriving polynomials in the ideal of \assmp\ written as algebraic circuits like IPS).%

%To derive polynomials in the ideal of \assmp\ we need to be able to to multiply  $f_i$ and $-f_i$ (from \ineqassmp) by \emph{any} (positive) polynomial in the \vx\ variables. There are two ways to achieve this in boolean CPS: the first, is to use the boolean axiom $x_i\ge 0$ in \ineqassmp. This allows to product $f_i$ and $-f_i$ by any polynomial in the \vx-variables. The second way, the one we use in \ref{prop:single-minus-gate-at-top} to show that CPS simulates IPS in \ref{thm:CPS-sim-IPS}, is to different and does not necessitate the addition of the axiom $x_i\ge 0$ to \ineqassmp. Since the second way does not use the boolean axiom  $x_i\ge 0$ in \ineqassmp\ we can use it in the real CPS, hence allowing the derivation of polynomials in the ideal of \assmp\ within real CPS.
\end{remark}
In contrast to IPS where a short refutation for \bvpn\ would imply strong computational consequences, the binary value principle is trivially refutable in CPS (as well as  in \sos):

\begin{proposition*}[\autoref{prop:CPS-proof-of-BVP}]
CPS admits a linear size refutation of the binary value principle \bvpn.
\end{proposition*}

We show that IPS and  CPS simulate each other if there  exist small IPS refutations of  the binary value principle. This provides a characterisation of semi-algebraic reasoning in terms of the binary value principle. In what follows, \ipszs\ and \cpszs\ stand for \emph{boolean} versions of IPS and CPS, where both are proof systems for refuting unsatisfiable sets of polynomial equalities (not necessarily CNFs) and where  the `$^\star$' superscript means  that  possible values that are computed along the IPS or CPS proofs (as  circuits) are not super-exponential (when the input variables range over \bits), namely, that the bit-size of these values are polynomial in the proof size (see \autoref{sec:Algebraic-versus-Semi-Algebraic-Proof-Systems}).
%Observing the CPS refutation  in \ref{eq:CPS-bvp-proof} we see that it is in fact already an \sos\ refutation:
%\begin{corollary}\label{old:cor:sos-short-refutation-of-BVP}
% \sos\ admits a linear monomial size refutation of the binary value principle \bvpn.  \end{corollary}

% \begin{definition}  [Propositional Cone Proof System]
% The \emph{Propositional Cone Proof System} is a proof system for unsatisfiable CNF formulas written as in \autoref{def:algebraic-transl-CNF}, that is identical  to the general cone proof system (\autoref{def:gencps}), except that the following  axioms are added to $\assmp,\ineqassmp$:\vspace{-5pt}
% \begin{description}
% \item[~~~~~~boolean axioms:]
% For every propositional variable $x_i\in\vx$, include the equation $x_i^2-x_i=0$ in $\assmp$
% and the inequalities $x_i\ge0$ and $1-x_i\ge0$ in $\ineqassmp$.
% \end{description}
% \end{definition}

%Note that adding the boolean axiom $x_i\ge 0$ to \ineqassmp\ is unnecessary, because the definition of CPS allows the occurrence of $x_i$ leaves in the conic circuit $C_1(\vx,\vw)$ anyway. On the other hand, $1-x_i$ cannot occur freely in the conic circuit, unless it is part of a sum of squares.

\mar{Explain that BVP is then a good candidate to separate semia and algebraic proof systems in the strong(est) proof systems known..."We identify a simple instance that captures the hardness of semi-algebraic proofs over algebraic ones in the very strong regiem of IPS, etc..." ...\\
%---- Remember to include: IPS simulates CPS implies that IPS can prove BVP. Check this: the CPS proof of BVP%
}

\begin{corollary*}[\autoref{cor:IPS-cond-equiv-CPS}, BVP characterizes the strength of boolean CPS]
\begin{enumerate}[leftmargin=*]

\item %\label{it:cond-sim-corol-z}
Constant-free \ipszs\ is polynomially equivalent to constant-free \cpszs\ iff constant-free \ipszs\ admits $\poly(t)$-size refutations of \bvpt.\vspace{-6pt}

%\item \label{it:cond-sim-corol-q}
%\ipsq\ simulates  \cpsqs\ iff \ipsq\ admits $\poly(t)$-size refutations of \bvptm\ for every positive integer $M$.
\item %\label{it:cond-sim-corol-q-cf}
Constant-free \ipsqs\ is polynomially equivalent to  constant-free \cpsqs\ iff for every positive integer $M$ constant-free \ipsqs\ admits $\poly(t,\tau(M))$-size refutations of \bvptm.
\end{enumerate}
%Moreover, the same three statements above hold when \ipsq\ is replaced by \ipsz.
\end{corollary*}

\begin{proof}[Proof idea for part 1]
% Let \assmp\ be an unsatisfiable set of polynomial equations over \Z. The fact that \cpszs\ simulates \ipszs\  is proved by showing that given \assmp\ as axioms CPS proofs in general can derive polynomials in the ideal of \assmp\ with only a polynomial increase in size: assume that $f_i\in \assmp$ and we wish to compute the  product $g\cd f_i$, for some polynomial $g$ (that may compute negative monomials, namely, cannot be computed by a conic circuit). Then we show how to write  $g$ as  $g_P-g_N$, such that $g_P$ computes the positive monomials of $g$ and $g_N$ its negative monomials and both $g_P,g_N$ do not use negative coefficients, with only a polynomial increase in size. Hence, the  circuit  $C(y_0,y_1)=g_P\cd y_0+g_N\cd y_1$ is  $(y_0,y_1)$-conic, and $C(f_i,-f_i)= g\cd f_i$, meaning that using the  axioms $f_i\ge 0$ and $-f_i\ge 0$ in \ineqassmp\ (\ref{def:cps}) we can compute $g\cd f_i$ as a substitution instance of a $\vy$-conic circuit.
% %for any polynomial $f_i\in\assmp$ and showing how to break any polynomial $g$ to its the initial set of polynomial ion of any polynomial in the ideal of the initial polynomial axioms \assmp\ can be simulated without much increase in size in CPS, using the fact that
%
%
% ($\Rightarrow$) This stems from the fact that \cpszs\ admit linear size refutations of \bvpt.

($\Leftarrow$) To show that \ipszs\ simulates \cpszs\ assuming short refutations of \bvpt\ we proceed as follows: let $C(\vx,\assmp)=-1$ be the \cpszs\ refutation of \assmp. Then, as a polynomial identity  $C(\vx,\assmp)=-1$ is basically freely provable in \ipszs. We now use  the ability of IPS to do efficient bit arithmetic, that we demonstrate formally as follows. Define $\val(\vw)=w_1+2w_2+\dots 2^{n-2}w_{n-1}-2^{n-1}w_n$ to be  the value of an integer number given by the $n$ boolean bits $\vw$ in the two's complement scheme (where $w_n$ is the sign bit). 
Our \emph{main novel technical contribution  here} is the following  result connecting  the value of a polynomial to its bit vector expressed as a function of the variables: \vspace{-2pt} 
\begin{lemma*}[\ref{lem:main-binary-value-lemma}; informal] For any circuit $f$, IPS has a $\poly(|f|)$-size proof of\vspace{-5pt} 
\begin{equation}\label{eq:intro:main-binary-value-lemma}
\val\left(\biti 1(f)\cdots \biti n(f)\right)=f,
\end{equation}
where $\biti i (f)$ is the polynomial that computes the $i$th bit of the number computed by $f$ as a function of the variables $\vx$ to $f$ that range over \bits\ values. 
\end{lemma*}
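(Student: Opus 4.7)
The plan is to proceed by induction on the structure of the algebraic circuit $f$, establishing simultaneously at each node $v$ (computing subcircuit $f_v$) that (i) there are polynomial-size circuits $\biti 1(f_v),\dots,\biti n(f_v)$ computing the bits of the value of $f_v$ as polynomials in the boolean inputs $\vx$, and (ii) IPS admits a polynomial-size proof of the identity $\val(\biti 1(f_v)\cdots\biti n(f_v))=f_v$ modulo the boolean axioms $x_j^2-x_j=0$. Base cases are immediate: for a variable $x_j$ we take $\biti 1(x_j)=x_j$ and $\biti i(x_j)=0$ otherwise, while for a constant $c\in\Z$ we take $\biti i(c)$ to be the $i$th bit of the two's-complement representation of $c$, so the identity becomes a polynomial tautology that IPS derives for free.

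For the inductive step on addition $f=g\spaceplus h$, I would use a ripple-carry adder: define the carry polynomial $\cari i$ as the majority of $\biti i(g)$, $\biti i(h)$, and $\cari{i-1}$, and set $\biti i(f)=\biti i(g)+\biti i(h)+\cari{i-1}-2\cari i$. The value identity then reduces to a telescoping sum in which the carry contributions $-2\cari i$ at position $i$ cancel the $+2^{i}\cari i$ contribution at position $i+1$, which IPS verifies in a polynomial number of lines by composing with the two inductive IPS proofs for $g$ and $h$. For multiplication $f=g\spacetimes h$, I would form the $n$ shifted partial products $\biti k(h)\cdot\biti i(g)$ and accumulate them by iterated ripple-carry addition, reducing to $O(n)$ invocations of the addition case; under the starred bit-size condition $n=\poly(|f|)$, the resulting bit circuits and IPS proofs remain polynomial-size overall.

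The main technical obstacle will be that the ripple-carry identity is correct only when the inputs to each full adder actually take values in $\{0,1\}$, and the carry polynomial $\cari i$ acts as a single bit only under that hypothesis. So the induction must be strengthened with a parallel ``bit-is-boolean'' sub-lemma: IPS efficiently derives $\biti i(f_v)^2-\biti i(f_v)=0$ and $\cari i^2-\cari i=0$ from the boolean axioms $\{x_j^2-x_j=0\}_j$, with polynomial-size proofs, for every node $v$ and every bit index $i$. This sub-lemma is itself proved by induction on the circuit structure, using the observation that each $\biti i(f_v)$ is built syntactically from the bits of subcircuits by polynomial combinations whose booleanness is inherited from the subcircuits, and that IPS can multiply out and collect like terms efficiently using the inductive booleanness identities. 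Combining this sub-lemma with the value identities above then yields the claimed polynomial-size IPS proof of \autoref{eq:intro:main-binary-value-lemma}.
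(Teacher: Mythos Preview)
Your high-level plan---structural induction on the circuit, adder/multiplier correctness for the $+$ and $\times$ gates, and a parallel ``bit-is-boolean'' sub-lemma so that the full-adder identities are valid---is exactly the paper's approach; the booleanness sub-lemma appears there as \autoref{cor:bitisbit} (via \autoref{lem:F-squared-F-equals-zero}), and the addition step is likewise a short IPS argument that the carries telescope.

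The genuine gap is in your multiplication step. Since internal nodes can compute negative integers, the $\biti i$'s live in two's complement and the top bit carries weight $-2^{n-1}$ in $\val$. Summing the shifted partial products $\biti k(h)\cdot\bitv(g)$ with your ripple-carry adder computes the \emph{unsigned} product of the two bit-patterns, not $\val(g)\cdot\val(h)$: the sign-bit row and the sign-bit column contribute with the wrong sign, and you have said nothing about correcting for this. The paper avoids the issue by defining the multiplier indirectly: it first proves in IPS that $\val(\vx)=(1-2s)\cdot\val(\iabsv(\vx))$ for the sign bit $s$ (\autoref{cla:neg}), then multiplies the nonnegative absolute values by exactly your unsigned partial-product scheme (this is $\prdvp$, handled in \autoref{cla:prdvp}), and finally reattaches the sign via the polynomial identity $(1-2(y_{n-1}\oplus z_{n-1}))=(1-2y_{n-1})(1-2z_{n-1})$. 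Your telescoping treatment of addition is a clean alternative to the paper's length-induction, but note that there too the boundary term at the sign bit only cancels correctly once you adopt the padding convention $y_r=y_{r-1}$, $z_r=z_{r-1}$, which you should make explicit.
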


Denote $C(\vx,\assmp)$ by $C$ for short. By \ref{eq:intro:main-binary-value-lemma} we have $C=\val\left(\biti 1(C)\cdots \biti n(C)\right)=-1 $. Since $C$ is a conic circuit and thus preserves positive signs we can prove that the sign bit $\biti n(C)=0$. We are thus left with the need to refute that the value of a positive number written in binary $\biti 1(C)\cdots \biti {n-1}(C)$ is non-negative, which is  efficiently provable in  \ipszs\ by assumption.
\end{proof}

\vspace{-10pt}

\para{The relative strength of proof systems.} \autoref{fig:ps-diagram} provides an illustrative picture of the relative strength of algebraic and semi-algebraic proof systems, which gives  context to our results. Note that CPS is among the strongest concrete  proof systems for boolean tautologies to be formalized to date:  it simulates IPS (\ref{thm:CPS-sim-IPS}) which is already very strong. Like IPS it can prove freely polynomial identities (\ref{fact:zero-poly-ips-proof}), and so it ``subsumes'' in this sense such identities (accordingly, CPS proofs needs the full power of \coRP\ to be verified). It is unclear whether even ZFC (as a proof system for propositional logic) can simulate CPS (it is not hard to show that this would imply that polynomial identity testing is in \NP).
Indeed, we are unaware of any concrete propositional proof system (even those that are merely  \coRP-verifiable) that can simulate CPS.

%As mentioned before, Berkholz \cite{Ber18} showed that  somewhat surprisingly the \emph{static} \sos\ \iddo{static-- explain} system can simulate the \emph{dynamic} PC system. On the other hand Grigoriev \cite{Gri01-CC} showed that algebraic proofs like PC cannot simulate semi-algebraic proofs like \sos\ because symmetric subset-sum instances such as  $x_1+\dots+x_n=-1$ require linear degrees (and exponential monomial-size) \cite{IPS99}\iddo{anything other paper?}.  Forbes et al.~\cite{FSTW16} extended these lower bounds on symmetric subset-sum instances to  stronger algebraic proof systems, namely to subsystems of IPS.

Grigoriev \cite{Gri01-CC} showed that algebraic proofs like PC cannot simulate semi-algebraic proofs like \sos\ because symmetric subset-sum instances such as  $x_1+\dots+x_n=-1$ require linear degrees (and exponential monomial size) \cite{IPS99}\iddo{anything other paper?}, and Forbes et al.~\cite{FSTW16} extended these lower bounds on symmetric subset-sum instances to  stronger algebraic proof systems, namely to subsystems of IPS. Our work (\ref{thm:first-cond-lower-bound-Q}) extends this gap  further, showing that even the strongest algebraic proof system known to date IPS cannot fully simulate even a weak proof system like \sos, assuming Shub-Smale hypothesis. \iddo{It's unclear where to put that is statement, that follows from simple linear upper bound on BVP in SoS. In previous subsection?}

%\footnote{We are talking here not just about Cook-Reckhow proof systems, but also about proof systems with polynomial-time randomized verification as well; CPS and IPS are both of this kind.}

Exponential size lower bounds for semi-algebraic proof systems are known since \cite{GHP02}, and such bounds for propositional versions of static Lovasz-Schrijver and constant degree Positivstellensatz systems were  proved in \cite{IK06}. Beame, Pitassi and Segerlind \cite{BPS07} started the study of lower bounds for
semantic threshold systems, that include in particular
tree-like Lov\'asz-Schrijver systems. This line of research culminated in \cite{GP18}, where strong lower bounds were  proved using critical block sensitivity, a notion introduced in \cite{HN12}.

\begin{SCfigure}[1]
\begin{centering}
\captionsetup{width=1.\linewidth}
%\def\svgscale{1.3}
%\fontsize{10pt}{10pt}\selectfont% or whatever fontsize you like
   %\def\svgwidth{3.333in}
\includegraphics[width=0.5\textwidth,height=0.42\textheight]{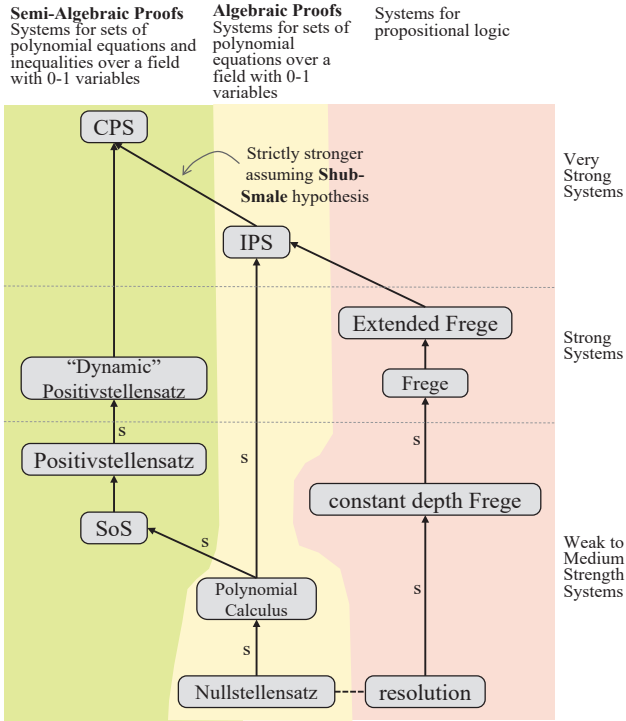}
\end{centering}
\caption{\small Relative strength of propositional proof systems (partial). An arrow $Q\to P$ means that $P$ simulates $Q$. While $Q\stackrel{s}{\rightarrow}P$ means ``strictly stronger'', i.e., $P$ simulates $Q$ but $Q$ does not simulate $P$. Dashed line $Q - - - P $ means that $Q$ and $P$ are incomparable: $P$ cannot simulate $Q$ and $Q$ cannot simulate $P$. %
%A dashed arrow  $Q \dashrightarrow  P$ means that $Q$ cannot simulate $P$ but it is open if $P$ simulates $Q$.%
The three colored-shaded vertical blocks indicate proof systems for languages of increasing expressiveness (from right to left): systems for propositional logic, for polynomial equations with 0/1 variables  (including encodings of propositional logic) and both polynomial equations and inequalities with 0/1 variables. The \emph{informal} qualifications of strength mean roughly  the following:\ weak systems are those we know super-polynomial lower bounds against, and their strength and limitations are quite well understood via feasible interpolation results and random CNFs lower bounds. Medium strength systems are those with some known lower bounds, but their strength is less well understood; e.g., feasible interpolation is not known for them. \iddo{sos is weak then!!!} Strong systems are those with no known lower bounds. Very strong proof systems are those strong systems whose verification is done in \coRP, and they can prove freely any polynomial identity
 (written as an algebraic circuit).}\label{fig:ps-diagram}
\end{SCfigure}
 \mar{There is a linear-size \sos\ refutation of \bvpn\ (see XXX\hirsch{?}) }

\iddo{Write: The main binary value lemma may be interesting by itself. \\ Lower bounds for the BVP in PC and C-IPS; for large coefficients: ResLin.}

\subsection{Conclusions}

 This work demonstrates that a simple subset-sum  principle, written as a linear equation,
captures, in the boolean case (i.e., when variables range over \bits), the apparent advantage of semi-algebraic proofs over algebraic proofs in the following sense:
%while essentially any semi-algebraic proof system admits short refutations of this principle, for sufficiently strong algebraic proof systems efficient refutations of this principle would allow a full simulation of semi-algebraic proofs.%
% In other words,%
it is necessary for any boolean algebraic proof system that simulates full boolean semi-algebraic proofs to admit short refutations of the principle;
%for an algebraic proof system to simulate full semi-algebraic proofs
%as a refutation system for unsatisfiable sets of equations and inequalities over the reals, %
%it is necessary to admit polynomial-size refutations of the principle; %
and if the algebraic proof system is  strong enough to be able to efficiently perform basic bit arithmetic,
this condition is also \emph{sufficient} to achieve such a simulation. To formalize these results we introduce a very strong proof system CPS that derives polynomials in the cone of initial axioms instead of in the ideal.

We show that  CPS is expected to be stronger than even the very strong algebraic  Ideal Proof System  (IPS) formulated by Grochow and Pitassi in \cite{GP14}, since  our subset-sum principle is hard for IPS assuming the hardness of computing factorials \cite{SS95}. % and thus a full simulation of semi-algebraic proofs % by algebraic proofs is not expected to be Perhaps
We establish a related lower bound on IPS refutation-size based on the \tconj\ \cite{SS95}. These lower bounds  push forward  the paradigm introduced  by Forbes et al.~\cite{FSTW16}: whereas \cite{FSTW16} showed how to obtain restricted IPS lower bounds for certain subset-sum instances, based  on known lower bounds against restricted circuit classes, we show how to obtain \emph{general} IPS lower bounds based on specific hardness assumptions from algebraic complexity.\footnote{Note again that extending the paradigm in \cite{FSTW16} to IPS operating with general circuits must result in conditional lower bounds, as long as explicit  super-polynomial algebraic circuit lower bounds are  not known.}

%\iddo{It will be interesting to see if VP vs VNP can have new proof from SS hypothesis...TBC}

\subsection{Relation to other Work}\label{sec:relation-to-other-work}

% He showed that the static \sos\ proof system simulates the dynamic PC system. He also observed that semi-algebraic proofs are stronger than algebraic proof systems using lower bounds for subset sum instances (e.g., \cite{IPS99}). We show that even

\para{Bit arithmetic and semi-algebraic proofs.} In \ref{sec:Reasoning-about-Bits-within-Algebraic-Proofs} we show how to reason about the bits of polynomial expressions within algebraic proofs. Bit arithmetic in proof complexity was used before in Frege systems (see \cite{Goe90} following \cite{Bus87}). Independently of our work \cite{HT18-dagstuhl}, Impagliazzo et al.~\cite{IMP19} considered the possibility to \emph{effectively} simulate \emph{weak}   semi-algebraic proofs using medium-strength algebraic proofs. They have  considered expressing and reasoning with the bits of algebraic expressions, as we do in \ref{sec:Reasoning-about-Bits-within-Algebraic-Proofs}. However, their proof methods and results are fundamentally different from ours:
first, they work in the weak proof systems regime, while we work in the strong systems regime. I.e., they aim to effectively simulate \emph{weak}  proof systems like constant degree sum-of-squares (in which polynomials are written as sum of monomials), while we aim to simulate \emph{very strong} proof systems such as CPS (essentially, \ps\ written as algebraic circuits). Second, they use a different way to express bits in their work. This is done in order to be able to reason about bits with bounded-depth algebraic circuits, while we do not need this mechanism. Third, they  show only \emph{effective }simulation and not simulation (namely, before the algebraic proofs can simulate a system of polynomial equations  or inequalities, the equations and inequalities need to be pre-processed, that is, translated ``off-line`` to their bit-vector representation). Fourth, they do not consider the \val\ function nor the binary value principle, while our work shows that essentially this is a necessary ingredient in a full simulation of strong semi-algebraic proof systems. In fact, we have the following:

Assuming the Shub-Smale hypothesis, our results \emph{rule out the possibility that %the arguments in \cite{IMP19} can show%
even a very strong algebraic proof system such as IPS simulates (in contrast to the weaker notion of an effective simulation) even a weak semi-algebraic proof system like  constant degree \sos\ measured by monomial size}. In other words, assuming Shub-Smale hypothesis, we rule-out the possibility that the arguments  in \cite{IMP19} (or any other argument) can yield a simulation of constant degree \sos\ by algebraic proofs operating with constant depth algebraic circuits (depth-$d$ PC in \cite{IMP19}\footnote{Here we use the fact that IPS simulates depth-$d$ PC.}). % Unclear if this is true: The obstacle, as our work shows, is that given a polynomial equation $f=0$ it is not expected that there are short algebraic proofs of the basic properties of individual bits of $f$ (and specifically, from $f=0$ we cannot derive efficiently in general that the bit sign of $f$ is 0).%
 It remains however open whether depth-$d$ PC simulates constant degree \sos\ \emph{for the language of unsatisfiable CNF formulas} or for unsatisfiable sets of linear equations with small coefficients.

\para{Subset-sum lower bounds in  proofs complexity.} Different instances of the subset sum problem have been considered as hard instances for algebraic proof systems. For example, Impagliazzo et al.~\cite{IPS99} provided an exponential size lower bound on the symmetric subset sum instance $x_1+\dots+x_n=n+1$, for boolean $x_i$'s in the polynomial calculus refutation system.
Grigoriev \cite{Gri01-CC} proved that the version $\sum_{i=1}^n x_i=r$ for a non-integer $r\approx\frac{n}2$ requires linear degrees to refute in \ps,
and \cite{GHP02} later transformed this idea into an exponential-size lower bound for both Positivstellensatz and static high-degree Lovasz-Schrijver proof systems.
Moreover, as already mentioned, our lower bounds can be seen as an extension to the case of general IPS refutations of the approach introduced by Forbes et al.~\cite{FSTW16}. %That paper established lower bounds on different instances of subset-sum in subsystems of IPS, denoted $\mathcal C$-IPS for various restricted algebraic circuit classes $\mathcal C$, such as multilinear formulas and read once algebraic branching programs.

%Notice that in some parts Impagliazzo et al.~\cite{IMP19} speak about simulation but what they actually show is an effective simulation. It is possible that their argument can be tailored to show simulation of weak systems for the language of unsatisfiable CNFs (namely, unsatisfiable system of narrow linear equations, or linear equations with small coefficients); but assuming the Shub-Smale hypothesis our results rule out the ability to simulate the language of unsatisfiable systems of general linear equations (like BVP).

The work of Part and Tzameret \cite{PT18} established unconditional exponential lower bounds  on the size of resolution over linear equations refutations of the binary value principle, over any sufficiently large field \F, denoted Res(lin$_\F$).
The proof techniques in \cite{PT18} are completely different from the current work, but these results demonstrate  that using instances with large coefficients in proof complexity provides new insight into the complexity of proof systems.

 \iddo{Can we show that every subset-sum has a VNP polynomial proof? I\ think yes. Hence we get a new proof of shub-smale $==>$ VP vs VNP !!}

%The following follows from our results:
%\begin{corollary}
%Assuming the Shub and Smale hypothesis the approach laid out by Impagliazzo et al.~\cite{IMP19} cannot reach up to a simulation of constant-free CPS \iddo{of which degree?} by constant-depth dynamic algebraic proofs over \Q\ (PC$_d$).
%\end{corollary}

\iddo{SoS: Berkholtz?}

\iddo{Note: proving IPS lower bounds on BVP (binary Value Principle) will separate P from VP over C in the sense of Blum-Shub-Smale! Not only Valiant's mode.}

\mar{\textit{From Raghavendra-Weitz}: The main appeal of SoS certificates for polynomial optimization is that the existence of a degree $d$ SoS certificate can be formulated as the feasibility of a semidefinite program (SDP). This is the degree d SoS relaxation first introduced by Shor [Sho87], and expanded upon by later works of Nesterov [Nes00], Grigoriev and Vorobjov [GV02], Lasserre [Las00, Las01] and Parrilo [Par00]. (see, e.g., [Lau09, BS14] for many more details).}

\section{Preliminaries}\label{sec:prelims}

\iddo{Remember to add the new observations: if we dont use the VAL function (ie, use the Impagliazzo et al. technique we end up with a proof that PIT is in NP (or even P). }
\hirsch{I do not know about the connection to PIT...}
\iddo{I will write about it....maybe}

\subsection{Notation}\label{sec:notation}
For a natural number we let $[n]=\set{1,\ldots,n}$.
Let $R$ be a ring. Denote by $ R[x_1,\ldots,x_n] $ the ring of %(commutative)
% \hirsch{Why do we write commutative and commuting? It is the default...}\iddo{No reason, except that I saw this in Atserias' paper and thought it is good to make it explicit...} \hirsch{Still looks strange to me despite Albert's paper. Removed for now.}
multivariate polynomials with coefficients from $ R $ and  variables $ x_1,\ldots,x_n $. We usually denote by $\vx$ the vector of variables $ x_1,\ldots,x_n $.  We treat polynomials as \emph{formal }linear combination of monomials, where a monomial is a product of variables. Hence, when we talk about the \textit{zero polynomial} we mean the polynomial in which the coefficients of all monomials are zero.  Similarly, two polynomials are said to be \emph{identical} if their monomials have the same coefficients. The \emph{number of monomials} in a polynomial $f$ is the number of monomials with nonzero coefficients denoted $\monomsize f$. The \emph{degree} of a multivariate polynomial (or total degree) is the maximal sum of variable powers in a monomial with a nonzero coefficient in the polynomial.
% If the power of each variable in every monomial is at most 1 we say that the polynomial is \emph{multilinear}.\mar{needed?}
We write $\poly(n)$ to denote a polynomial growth in $n$, namely a function that is upper bounded by $cn^{c}$, for some constant $c$ independent of $n$. Similarly, $\poly(n_1,\dots,n_s)$ for some constant $s$, means a polynomial growth that is at most  $kn_1^{c_1} \cdots n_s^{c_s}$, % $k\sum_{j=1}^k \prod_{i=1}^s n_i^{c_{ji}}$,%
for $k$ and  $c_{ji}$'s that are constants independent of $n_1,\dots,n_s$.
% and $\qpoly(n)$ to denote a quasi-polynomial growth in $n$, that is, $n^{\log^{O(1)}{\!n}}$.

For $S$ a set of polynomials from $R[x_1,\ldots,x_n]$, we denote by $\langle S \rangle $ the \emph{ideal generated by $S$}, namely the minimal set containing $S$ such that
if $f,g\in\langle  S\rangle $ then also $\alpha f + \beta g \in \langle  S\rangle$, for any $\alpha,\beta\in R$.
%if $f\in\langle  S\rangle $ then also $fp \in \langle  S\rangle$, for any $p\in R[x_1,\ldots,x_n]$.
%\hirsch{I have changed back the definition of ideal generated by a set. Since it's a set, $fp$ is not enough, we need the sum.}

\subsection{Algebraic Circuits}\label{sec:algebraic_circuits}
Algebraic circuits  over some fixed chosen field or ring $R$ compute polynomials in $R[x_1,\ldots,x_n] $ via addition and multiplication gates, starting from the input variables $\vx$ and constants from the field.
More precisely, an \emph{algebraic circuit} $C$ is a finite directed acyclic graph
where edges are directed from leaves (that is,  in-degree 0 nodes) towards the output nodes (that is out-degree 0 nodes).
%\hirsch{I still think that this is non-standard to add extra output nodes instead of just marking some gates as outputs}). %Some nodes are labelled as \textit{input nodes} and some are labelled as \textit{output nodes}%
By default, there is a single output node.
\emph{Input nodes}  are in-degree 0 nodes that are labeled with a variable from $x_1,\dots,x_n$;
every other in-degree zero node is labelled with a scalar element in $R$.
All the other nodes have in-degree two (unless otherwise stated) and  are labeled with either $+$ or $\times$.
An in-degree 0 node is said to \emph{compute}  %
% and that moreover the two edges going into a gate $v$   labeled with  $\times$ % or $+$ are
%  labeled with  \emph{left} and \emph{right}.  This
%is to determine the order of addition and multiplication\footnote{Although ultimately, %addition and multiplication are commutative. }.
%
the variable or scalar that   labels  itself. A $+$ (or $\times$) gate is said to compute the addition (product, resp.) of the polynomials computed by its incoming nodes.
%An algebraic circuit is called a \emph{formula}, if the underlying directed acyclic graph  is a tree (that is, every node has at most one outgoing edge). %
The \demph{size} of an algebraic  circuit $C$ is the number of nodes in it denoted $|C|$,
and the \emph{depth} of a circuit is the length of the longest directed path in it. Note that the size of a field coefficient in this setting
\mar{\iddo{Do we use this setting?}\hirsch{I think no, I suggest to remove the 1-weight model.} \iddo{Update: we do use this setting in the simulation part. I specifically made statements about both constant-free and non constant free circuits. Also, this measure does apply anyhow to constant-free circuits, so we need to say this.}\hirsch{Should we give it an explicit name for further referral; for example, ``unit-cost'' similarly to as they speak about random-access machines.}\iddo{Yes unit-cost sounds good}} is  1 irrespective of the value of the coefficient.
Sometimes it is important to consider the size of the coefficients appearing in the circuit (for instance, when we are concerned with the computational complexity of problems pertaining to algebraic circuits we need to have an efficient way to represent the circuits as bit strings). For this purpose it is standard to define a \demph{constant-free} algebraic circuit to be an algebraic circuit in which the only constants used
are $0, 1, -1$. Other constants must be built up using algebraic operations, which
then count towards the size of the circuit.
%We note that over the integers, the size of constant-free circuits coincides with the bit-size of the circuit (that is, the number of nodes in the circuits with input coefficient in \Z\ written in binary \hirsch{Unsure what is meant here as the bit-size.}).

An algebraic circuit is said to be a \emph{multi-output} circuit if it has more than one output node, namely, more than one node of out-degree zero.
Given a single-output algebraic circuit $F(\vx)$ we denote by $\widehat F(\vx)\in R[\vx]$ the \emph{polynomial} computed by $F(\vx)$.
We define the \emph{degree} of a circuit $C$ (similarly a node) as the total degree of the polynomial $\hat C$  computed by $C$, denoted $\deg(C)$.

%A multivariate polynomial $f$ is said to be \emph{homogeneous} if every monomial in $f$ is of the same (total) degree. An algebraic circuit is said to be a \emph{homogeneous circuit} if every node in the circuit computes a homogeneous polynomial. Strassen \cite{Str73} showed that every circuit $C$ computing a polynomial of degree $d$ can be transformed with only a polynomial increase in size into a sum of homogeneous circuits $\sum_{i=0}^d C^{(i)}$, where $C^{(i)}$ denotes a homogeneous circuit of degree $i$ computing the degree-$i$ homogeneous part of $\hat C$.

%we mean their \emph{syntactic degree}, defined recursively by: $\syndeg(C+D):=\max(\syndeg(C),\syndeg(D))$, $\syndeg(C\cdot D):=\syndeg(C)+\syndeg(D)$, for a variable $x_i$ we define $\syndeg(x_i):=1$, for a scalar $c$, $\syndeg(c):=0$.

%When we talk about circuits, by the \emph{degree} of a circuit (similarly a node) we mean their \emph{syntactic degree}, defined recursively by: $\syndeg(C+D):=\max(\syndeg(C),\syndeg(D))$, $\syndeg(C\cdot D):=\syndeg(C)+\syndeg(D)$, for a variable $x_i$ we define $\syndeg(x_i):=1$, for a scalar $c$, $\syndeg(c):=0$.\mar{where do we use syntactic-degree?}

We will also use circuits that have division gates; when we need them, we  define them explicitly.

\para{Algebraic Complexity Classes.}\label{sec:Algebraic-Complexity-Classes}
We  now recall some  basic notions from algebraic
complexity (for more details see \cite[Sec.~1.2]{SY10}).
Over a ring $R$, $\cc{VP}_{R}$ (for
``Valiant's \P'') is the class of families $f=(f_n)_{n=1}^{\infty}$ of formal polynomials $f_n$ such that $f_n$ has $\poly(n)$ input variables, is of $\poly(n)$ degree, and can be computed by algebraic circuits over $R$ of $\poly(n)$ size. $\cc{VNP}_{R}$ (for ``Valiant's
\NP'') is the class of families $g$ of polynomials $g_n$ such that $g_n$ has $\poly(n)$ input variables and is of $\poly(n)$ degree, and can be written as
\[
g_n\left(x_1,\dotsc,x_{\poly(n)}\right) = \sum_{\vec{e} \in \{0,1\}^{\poly(n)}} f_n(\vec{e}, \vec{x})
\]
for some family $(f_n) \in \cc{VP}_{R}$.
A major question in algebraic complexity theory is whether the permanent polynomial can be computed by algebraic circuits of polynomial size. Since the permanent is complete for $\VNP$ (under a suitable concept of algebraic reductions that are called p-projections), showing that no polynomial-size circuit can compute the permanent amounts to showing \VP$\neq$\VNP\ (cf.~\cite{Val79:ComplClass,Val79-permanent,Val82}).

Similarly, we can consider the \emph{constant-free} versions of \VP\ and \VNP: we denote by \VPZ\ and \VNPZ\ the class of polynomial-size and polynomial-degree \emph{constant-free} algebraic circuits and the class of \VNP\ polynomials as above in which the family of polynomials $(f_n) \in \VPZ$. In these definitions of \VPZ\ and \VNPZ\ we assume also that no division gate occur in the circuits, hence \VPZ\ and \VNPZ\ compute polynomials over \Z. We shall also consider in \ref{sec:lower-bounds} constant-free circuits \emph{over \Q}: these will be constant-free circuits in which constant sub-circuits (and \emph{only} constant sub-circuits) may contain division gates.

% amounts to proving  that the permanent cannot be computed by  polynomial-size algebraic circuits. whether $\cc{VP}$ equals $\cc{VNP}$

\subsection{The $\tau$-Conjecture and Shub-Smale Hypothesis}\label{sec:tau-conj} %Separation of $\P_\C$ from $\NP_\C$}
Here we explain  several important assumptions and conjectures that are known to lead to strong complexity lower bounds and complexity class separations, all of which are  relevant to our work. See for example Smale's list of ``mathematical problems for the next century" \cite{Sma98} for a short description and discussion about these problems.
\begin{definition}[$\tau$-function \cite{SS95}]\label{def:tau-function}
 Let $f\in\Z[\vx]$ be a multivariate polynomial over \Z. \mar{CHECK!} Then $\tau(f)$ is the minimal size of a constant-free  algebraic circuit that computes $f$ (that is, a circuit where the only possible constants that may appear on leaves are $1,0,-1$).
\end{definition}

%For a constant number $n\in\Z$, denote by $\tau(n)$ the minimal-size algebraic circuit that computes the constant $n$, where the only possible constants that appear on leaves in the circuit are $1$ and $-1$.
When we focus on constant polynomials, that is, numbers $n\in\Z$, $\tau(n)$ is the minimal-size circuit that can construct $n$ from $1$ using additions, subtractions
and multiplications (but not divisions; note that subtraction of a term $A$ can be constructed by $-1\cd A$).

We say that a family of (possibly constant) polynomials $(f_n)_{n\in\N}$ is \emph{\textbf{easy}} if $\tau(f_n)=\log ^{O(1)}n$, for every $n>2$, and \emph{\textbf{hard}} otherwise.\footnote{We put the condition  $n>2$ instead of $n\ge 1$, because unlike \cite{SS95} we do not add the constant 2 to the constants available in the circuit, hence to keep the same known upper bounds of $\tau$ we skip the cases $n=1,2$.}

The following are some known facts about $\tau(\cd)$:
\begin{itemize}
\item $(2^n)_{n\in\N}$ is \emph{easy}. For instance, if $n$ is a power of 2 then  $\tau(2^n)=\log n+3$, where $\log$ denotes the logarithm in the base 2. We start with 3 nodes to build $2=1+1$ and then by $\log n$ repeated squaring we arrive at $((2^2)^2)^2\dots)^2=2^{2^{\log n}}=2^n$. %Hence, in case $m=2^{2^n}$, $\tau(m)=\log\log m+3$, where $\log$ denotes the  logarithm over the base 2.

\item $(2^{2^n})_{n\in\N}$ is \emph{hard}. \iddo{How to show this? Ref?}\hirsch{added:}\iddo{Actually, I'm not sure how it is proved precisely...I\ get the idea though... } This is clear from the straightforward upper bound on the largest integer that can be computed with $k$ multiplication/addition/subtraction gates. %More precisely, $\tau(2^{2^n})=n+3$. We start with 3 nodes to build $2=1+1$ and then by $n$ repeated squaring we arrive at $((2^2)^2)^2\dots)^2=2^{2^n}$. %Hence, in case $m=2^{2^n}$, $\tau(m)=\log\log m+3$, where $\log$ denotes the  logarithm over the base 2.

\item
A simple known upper bound on $\tau$ is this \cite{MS96}: for every integer $m>2$, $\tau(m)\le 2\log m$. %for some constant $c$ (usually there is no need to add the constant $c$, but we add it here because unlike the usual choice, we defined $\tau$ for circuits that do not have the constant $2$ nor  the minus gate).  %
This is  shown by considering the binary expansion of $m$.
\item For every integer $m$, the following lower bound is known $\tau(m)\ge \log \log m$ \cite{MS96}.
\end{itemize}

While $(2^n)_{n\in\N}$ is easy and $(2^{2^n})_{n\in\N}$  is hard, it is not known whether $(n!)_{n\in\N}$ is easy or hard, and as seen below, showing the hardness of $\tau(m_n\cd n!)$, for every sequence $(m_n\cd n!)_{n\in\N}$ with $m_n\in\Z$ any nonzero integers, has very strong consequences.

%  \mar{\iddo{something is wrong here: what is n if we are talking about polynomials? the %degree? Also ***: \textbf{we don't count the size of coefficients} --- but we should? I think we should \textbf{not} count the size of coefficients!! This is okay, because when we consider lower bound on the size of circuits for constructing numbers we simply assume tat coefficients are 1,0,-1.}\footnote{Do not confuse  hardness with respect to $\tau$ which is super-poly-logarithmic with the usual hardness of algebraic circuits computing polynomials, which is super-polynomial.}\hirsch{Why do we need to speak about the complexity of constants and polynomials in a single sentence?}\iddo{The definition is general to include both, and this is how SS94 and Burgisser define it--so I\ think it is adequate to keep it like this: general.}}

% \begin{note}
% If there is a $\Z$-algebraic circuit of size $M$, representing $F(x_1, \ldots, x_n)$, then for every integer vector $(y_1, \ldots, y_n)$, there is an arithmetic circuit of size $\poly(M + \sum_i \log y_i)$, representing the integer $F(y_1, \ldots, y_n)$.
% \end{note}

Blum, Shub and Smale \cite{BSS89} introduced an algebraic version of  Turing machines that has access to a field $K$ (Poizat observed that their model can be defined as  algebraic circuits in which \emph{selection} gates $s(z,x,y)$ can be used; where a selection gate outputs $x$ in case $z=0$ and $y$ in case $z=1$). In this model one can formalise and study a variant of the \P\ vs.~\NP\ problem for languages solvable by polynomial-time machines with access to  $K$, denoted $\P_K$, versus nondeterministic polynomial-time machines with access to $K$, denoted $\NP_K$. %Specifically, Shub and Smale showed the following conditional separation of $\P_\C\neq\NP_\C$.

The following is a condition put forth by Shub and Smale \cite{SS95} (cf.~\cite{Sma98}) towards separating $\P_\C$ from $\NP_\C$, for \C\ the complex numbers:

\begin{Shub-Smale-hypothesis}[\cite{SS95,Sma98}]
For every nonzero integer sequence $(m_n)_{n\in\N}$, the sequence $(m_n\cd n!)_{n\in\N}$ is hard.
\end{Shub-Smale-hypothesis}
Shub and Smale, as well as B\"urgisser, showed the following consequences of the Shub-Smale hypothesis:
%Shub and Smale proved that their condition implies a major complexity class separation:
\begin{theorem}[\cite{SS95,Bur09}]\label{thm:shub-smale}
\begin{enumerate}
\item If the Shub-Smale hypothesis holds then $\P_\C\neq\NP_\C$.

\item If  the Shub-Smale Hypothesis holds then $\VPZ\neq\VNPZ$. In other words, Shub-Smale Hypothesis implies that the permanent  does not have polynomial size constant-free algebraic circuits over \Z.
%\iddo{Check if \C\ is correct in both cases}\hirsch{1. Yes. 2. No: I see $\mathbb{Q}$ (in B\"urgisser).\iddo{Fixed. Though reading Burgisser again I\ don't follow why after Definition 2.8 in \cite{Bur09} he claims that VP$^0$ without constant-free equals $\VP_\Q$? It seems it is $\VP_\Z$..? }}\hirsch{I think [Bur] allows any rational constants, I would rather ask about why not reals. Anyway, let us leave it as it's written there.}
\end{enumerate}
\end{theorem}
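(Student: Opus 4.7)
The plan is to derive both parts from the Shub-Smale hypothesis, with part (2) being a direct algebraic consequence and part (1) reducing to it via an algebraic transfer theorem.

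For part (2), I would start from the classical observation that the permanent lies in $\VNPZ$: the expansion $\perm_n(X) = \sum_{\sigma \in \Sn_n} \prod_{i=1}^n x_{i,\sigma(i)}$ is manifestly constant-free and fits the $\VNP$ template. Assume for contradiction that $\perm \in \VPZ$, so that constant-free circuits $C_n(\vx)$ of size $\poly(n)$ compute $\perm_n$. Substituting each variable $x_{ij} \mapsto 1$---which preserves constant-freeness, since $1$ is an allowed constant---yields a constant-free variable-free circuit of the same size that computes $\perm_n(J_n) = n!$, where $J_n$ is the all-ones matrix. This would make $(n!)_{n \in \N}$ easy in the sense of $\tau$, directly contradicting the Shub-Smale hypothesis applied with $m_n \equiv 1$. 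Hence $\perm$ has no polynomial-size constant-free circuits and $\VPZ \neq \VNPZ$.

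For part (1), the plan is to invoke the algebraic transfer principle of B\"urgisser (building on Shub-Smale), namely that $\P_\C = \NP_\C$ forces $\VPZ = \VNPZ$. The idea is that a polynomial-time BSS recognizer for a sufficiently universal $\NP_\C$-complete problem (e.g., feasibility of systems of polynomial equations over $\C$) can be symbolically simulated to manufacture polynomial-size constant-free algebraic circuits for any $\VNPZ$ family; the branching structure induced by BSS sign tests is compressed into arithmetic by algebraic interpolation along each computation path, exploiting the fact that the witnesses for the complete problem are themselves algebraic objects. Once this implication is established, combining it with part (2) immediately delivers the contrapositive: assuming the hypothesis, $\VPZ \neq \VNPZ$, hence $\P_\C \neq \NP_\C$.

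The main obstacle is the BSS-to-arithmetic transfer step underlying part (1), because BSS polynomial-time machines can use sign tests (and, over $\C$, equality tests) that have no direct algebraic counterpart in a purely arithmetic circuit. The delicate compression of such branches into constant-free circuits of polynomial size is the technical heart of the argument and is executed in the cited works of Shub-Smale and B\"urgisser; everything else is a routine chaining of the transfer with the permanent substitution above.
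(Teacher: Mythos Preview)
The paper does not prove this theorem at all: it is stated in the preliminaries (and again in the introduction) as a known result attributed to \cite{SS95,Bur09}, with no accompanying argument. So there is no ``paper's own proof'' to compare against.

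That said, your sketch is a reasonable summary of how these results are obtained in the cited literature. Your argument for part~(2) is the standard one and is correct: substituting $x_{ij}\mapsto 1$ into a hypothetical $\poly(n)$-size constant-free circuit for $\perm_n$ yields $\tau(n!)=\poly(n)$, contradicting the hypothesis with $m_n\equiv 1$. For part~(1), routing through B\"urgisser's transfer ($\P_\C=\NP_\C \Rightarrow \VPZ=\VNPZ$) and then invoking part~(2) is a legitimate proof strategy, and it is what \cite{Bur09} does. It is worth noting, though, that Shub and Smale's original argument for part~(1) in \cite{SS95} is more direct and does not pass through the Valiant classes: they show that $\P_\C=\NP_\C$ would yield an efficient BSS algorithm for a ``twenty questions'' type problem, from which one can extract small constant-free circuits for multiples of $n!$ directly. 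Either route is fine; the paper simply quotes the conclusion.
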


It is open whether the Shub-Smale hypothesis holds. What is known is that if Shub-Smale hypothesis does \emph{not} hold then factoring of integers can be done in (nonuniform) polynomial time  (cf.~Blum
et al.~\cite[p.126]{BCR+98} and \cite{Che04}). \iddo{remember to write this interesting corollary in the intro or Corollaries..}
\bigskip

Another related important assumption in algebraic complexity is the \emph{$\tau$-conjecture}. Let $f\in\Z[x]$ be a univariate polynomial with integer coefficients, denote by $z(f)$ the number of distinct integer roots of $f$.
%\mar{check fields here}\hirsch{verified, it is Z}

%\newtheorem*{tauconj}{$\tau$-Conjecture}

\begin{tauconj}[\cite{SS95,Sma98}]
There is a universal constant $c$, such that for every univariate polynomial $f\in\Z[x]$:
$$ (1+\tau(f))^c \ge z(f)\,.$$
\end{tauconj}

%B\"urgisser \cite{} showed further that Shub-Smale hypothesis implies also algebraic circuit lower bounds, namely the separation of Valiant's analogue of \P\ vs. \NP:

The consequences of the \tconj\ are similar to the Shub-Smale Hypothesis:
\begin{theorem}[\cite{SS95,Bur09}]\label{thm:tau-conj-consequences}
If the \tconj\ holds then both
$\P_\C\neq\NP_\C$ and $\VPZ\neq\VNPZ$ hold.
 %In other words, Shub-Smale Condition implies that the permanent polynomial does not have polynomial size algebraic circuits over \C.\mar{Check if \C\ i correct in both cases} \end{enumerate}
\end{theorem}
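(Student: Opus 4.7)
The plan is to derive both separations by reducing the $\tau$-conjecture to the Shub-Smale hypothesis, and then invoking \ref{thm:shub-smale}. Concretely, I would prove the contrapositive implication: if the Shub-Smale hypothesis fails, then so does the $\tau$-conjecture. Suppose there exists a nonzero integer sequence $(m_n)_{n\in\N}$ with $\tau(m_n \cd n!) \le \log^c n$ for some constant $c$ and all $n > 2$, witnessed by constant-free circuits $(C_n)$ of size $\log^c n$. The aim is then to exhibit a family of univariate integer polynomials $(P_n(y))_{n\in\N}$ satisfying $\tau(P_n) \le \poly(n)$ and $z(P_n) \ge 2^{\Omega(n)}$, which refutes the $\tau$-conjecture since for any fixed constant $c'$, the inequality $(1 + \poly(n))^{c'} \ge 2^{\Omega(n)}$ fails for large $n$.

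The construction of $P_n$ exploits the fact that short circuits for $m_n \cd n!$ encode strong arithmetic information: $n!$ is divisible by every integer in $[n]$, and by an appropriate shifting/scaling construction one can obtain polynomials whose roots are spread across an exponentially large range. A natural candidate is a Pochhammer-type product $(y)_N = y(y-1)\cdots(y - N + 1)$ for $N = 2^{\Theta(n)}$, which has $N$ distinct integer roots $0, 1, \dots, N-1$. The subtlety is that the naive circuit for such a product has size $\Omega(N)$; to obtain a polynomial-size circuit one must use an algebraic identity that reuses the small circuit for $n!$ as a subroutine, for example via a balanced divide-and-conquer decomposition $(y)_N = (y)_{N/2} \cdot (y - N/2)_{N/2}$, combined with the efficient construction of the large integer $N/2$ enabled by the short representation of $m_n \cd n!$ (note that $\tau(2^n) = O(\log n)$ by repeated squaring, so arranging $N$ to be a power of two makes the shift cheap).

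Having reduced the $\tau$-conjecture to the Shub-Smale hypothesis, the conclusion follows immediately from \ref{thm:shub-smale}: the Shub-Smale hypothesis implies both $\P_\C \ne \NP_\C$ and $\VPZ \ne \VNPZ$, so these separations follow from the $\tau$-conjecture. The main obstacle is precisely the construction in the second paragraph --- converting an arbitrary small circuit for $m_n \cd n!$ into a polynomial-size circuit for a univariate polynomial with exponentially many distinct integer roots. This is the technical heart of the proof and is the content of B\"urgisser's argument in \cite{Bur09}; I anticipate that the divide-and-conquer approach above captures the right intuition, but the detailed bookkeeping (especially handling the multiplier $m_n$ across levels of recursion and maintaining the constant-free property throughout) would require care, and an alternative route might be to argue more directly that $\VPZ = \VNPZ$ implies $\tau(n!) = \log^{O(1)} n$ via a scaling/padding argument on the permanent of the all-ones matrix.
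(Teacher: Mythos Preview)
The paper does not supply its own proof of this theorem; it is quoted as a background result attributed to \cite{SS95,Bur09}, so there is nothing in the paper to compare against beyond the citation. Your high-level strategy---reduce the $\tau$-conjecture to the Shub--Smale hypothesis and then invoke \ref{thm:shub-smale}---is the natural route and is how the result is obtained in the literature (the implication ``$\tau$-conjecture $\Rightarrow$ Shub--Smale'' is already in \cite{SS95}).

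The genuine gap is in the construction you propose for that implication. The recursion $(y)_N=(y)_{N/2}\cdot(y-N/2)_{N/2}$ does \emph{not} produce a circuit of size $\poly(\log N)$: the two factors are the same polynomial evaluated at \emph{shifted} arguments, so their internal gates cannot be shared, and unrolling the recursion gives size $\Theta(N)$, not $\poly(\log N)$. The cheapness of the shift $N/2=2^{n-1}$ that you stress is beside the point---the obstruction is the doubling at each level, and nothing in your sketch explains how a small circuit for $m_n\cdot n!$ would collapse it. Your alternative route has an analogous gap: from $\VPZ=\VNPZ$ together with $n!=\perm(J_n)$ one gets only $\tau(n!)\le\poly(n)$, whereas falsifying Shub--Smale requires $\tau(m_n\cdot n!)\le(\log n)^{O(1)}$, an exponentially stronger bound that no amount of padding of the all-ones matrix will deliver. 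Bridging either gap is exactly the substantive content of the cited papers, and neither of your heuristics gets there.
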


\subsection{Basic Proof Complexity} In the standard setting  of propositional proof complexity, a \emph{propositional proof system} \cite{CR79} is  a polynomial-time predicate $V(\pi,x)$ that verifies purported proofs $\pi$  (encoded naturally in, say, binary)
for propositional formulas $x$ (also encoded naturally in binary),
such that $\exists\pi\ (V(\pi,x)=\textsf{true})$ iff $x$ is a tautology.\footnote{Historically, Cook and Reckhow \cite{CR79} defined a propositional proof systems as a  polynomial-time computable surjective mapping of bit strings (encoding purported proofs) \emph{onto} the set of propositional tautologies (encoded as bit-strings as well). This is equivalent to the definition of propositional proof systems we presented, up to polynomial factors.}
Hence, a propositional proof system is a complete and sound proof system for propositional logic in which a proof can be checked for correctness in polynomial time (though, note that a proof $\pi$ may be exponentially larger than the tautology $x$ it proves).
%\iddo{I suggest to keep the original text. The reasons are: The CR proof system as a function is non-intuitive, in the sense that it necessitate a person knowledgable in computational complexity to think about the definition; while a proof as a predicate will be immediately understood by anyone working in complexity. The referees, PC committee etc. may be such persons. So it is better to have a definition that is approachable to them, in my opinion. Also I didn't understand precisely the statement: "If the polynomiality of the proof length
%is the only thing we care of, one can equivalently define"---I think that the two definition are identical, no?}\hirsch{They might be not equivalent for sublinear proofs, and CR is \emph{the} standard definition (see eg Pitassi-Santhanam), so we need at least say some words. \iddo{original definition of CR is written in a footnote 9.)}}

%\mar{a \emph{propositional proof system} \cite{CR79}
%is an algorithm that verifies purported proofs of propositional tautologies in polynomial time.
%More formally, Cook and Reckhow define it as a surjective polynomial-time
%computable mapping of bit strings onto the set of propositional tautologies.
%Note that the proofs themselves can be potentially exponentially longer
%than the tautologies. If the polynomiality of the proof length
%is the only thing we care of, one can equivalently define
%a proof system as a polynomial-time computable
%binary predicate $V(\pi,x)$ that verifies that the bit string $\pi$ is a correct proof of the tautology $x$.
%}

When considering algebraic proof systems that operate with algebraic circuits, such as IPS,
it is common to relax the notion of a propositional proof system,
so to require that the relation $V(\pi,x)$ is in probabilistic polynomial time,
instead of deterministic polynomial time
(since polynomial identities can be verified in \coRP,
while not known to be verified in \P).

Furthermore, the language that a given proof system proves, namely the set of instances that the proof system proves to be tautological, or always satisfied, can be different from the set of propositional tautologies. First, we can consider a propositional proof system to be a \emph{refutation system} in which a proof establishes that the initial set of axioms (e.g., clauses) is \emph{unsatisfiable}, instead of always satisfied (i.e., tautological). For most cases, considering a propositional proof system to be a refutation system preserves all properties of the proof system, and thus the notions of refutation and proofs are used as synonyms. Second, we can define a proof system to be  complete and sound for languages different or larger than unsatisfiable propositional formulas. For instance, in algebraic proof systems we usually consider   proof systems that are sound and complete for the language  of  unsatisfiable sets of polynomial equations.

For the purpose of comparing the relative complexity of different proof systems we have the concept of  \emph{simulation}: given two proof systems $P,Q$ for the \emph{same} language,
we say that $P$ \demph{simulates} $Q$
if there is a function $f$ that maps $Q$-proofs to $P$-proofs of the same instances
with at most a polynomial blow-up in size.
If $f$ can be computed in polynomial time,
this is called a \demph{p-simulation}.
If $P$ and $Q$ simulate each other we say that $P$ and $Q$ are \emph{polynomially-equivalent}.
If   $P$ and $Q$ are two proof systems for \emph{different} languages, prima facie we cannot compare their strength via the notion of simulation. However, if both $P$ and $Q$ prove (or refute) propositional instances like formulas in conjunctive normal form, or boolean tautologies, while encoding them in different ways (namely, they use different representations for essentially the same propositional formulas), we can fix a polynomial-time computable translation from one representation to the other. Under this translation we can  consider $P$ and $Q$ to be proof systems for the \emph{same} language,  allowing us to use the notion of simulation between $P$ and $Q$.
\iddo{Then what is the difference between this an Effective Simulation by Pitassi Santhanam, actually?}
\hirsch{Difference 1: $p$-simulation is for a specific pre-agreed translation of the language of $P$ into the language of $Q$,
effective simulation may choose any transformation. \iddo{still, what's the difference? the p-time transformation  then can be said to be "the agreed one", no?\hirsch{Put another way: p-simulation translates proofs (for the same Boolean formula) only; effective simulation translates formulas as well. We write that we can agree on some translation and then consider the existence (or nonexistence) of p-simulation \textbf{for this particular translation}. P-S quantifies over such translations (in particular it is hard to imagine how to prove the nonexistence of effective simulation).}}Difference 2: effective simulation is parameterized. This is how I understand Sect.2 of Pitassi-Santhanam.\iddo{Okay... in this case the IMP19 is also a ``simulation".}}
%\mar{\iddo{changed a bit}It is not possible to compare proof systems for different, even $\coNP$-complete, languages:
%in order to show a simulation, it is necessary to redefine at least one of these systems
%for a different language through an appropriate reduction.
%By default, we assume to be proof systems for
%the language of propositional tautologies in disjunctive normal form,
%which is the the same as refutation systems for
%the language of contradictory propositional formulas in conjunctive normal form. mar: {or k-CNFs?}}

\subsection{Algebraic Proofs}

Grochow and Pitassi~\cite{GP14}  suggested the following algebraic proof system  which is essentially a Nullstellensatz proof system (\cite{BeameIKPP96}) written as an algebraic circuit. A proof in the  Ideal Proof System is given as  a \emph{single} polynomial. We provide below the \emph{boolean} version of  IPS (which includes the boolean axioms), namely the version that establishes the unsatisfiability over 0-1 of a set of polynomial equations.  In what follows we follow the notation in \cite{FSTW16}:

%\iddo{shall we fix the field from now on to \reals?}\hirsch{No. %First of all, should we first formulate it without %$x^2-x$? (In this case we would need an algebraically %closed field.) Second, we will be talking about $\mathbb{Q}$ %at least.}

% This is potentially much more powerful as there are polynomials such as the determinant which are of high degree and involve exponentially many monomials and yet can be computed by small algebraic circuits. They named the resulting system the \emph{Ideal Proof System (IPS)} which we now define.

\begin{definition}[(boolean) Ideal Proof System (IPS),
Grochow-Pitassi~\cite{GP14}]\label{def:IPS} Let $f_1(\vx),\ldots,f_m(\vx),p(\vx)$ be a collection of polynomials in $\F[x_1,\ldots,x_n]$ over the field \F. An \demph{IPS proof of $p(\vx)=0$ from $\{f_j(\vx)=0\}_{j=1}^m$}, showing that $p(\vx)=0$ is semantically  implied from the assumptions $\{f_j(\vx)=0\}_{j=1}^m$ over $0$-$1$ assignments, is an algebraic circuit $C(\vx,\vy,\vz)\in\F[\vx,y_1,\ldots,y_m,z_1,\ldots,z_n]$ such that (the equalities in what follows stand for  formal polynomial identities\footnote{That is, $C(\vx,\vnz,\vnz)$ computes the zero polynomial and $C(\vx,f_1(\vx),\ldots,f_m(\vx),x_1^2-x_1,\ldots,x_n^2-x_n)$ computes the polynomial $p(\vx)$.}):
        \begin{enumerate}
                \item $C(\vx,\vnz,\vnz) = 0$; and\vspace{-5pt}
                \item $C(\vx,f_1(\vx),\ldots,f_m(\vx),x_1^2-x_1,\ldots,x_n^2-x_n)=p(\vx)$.
        \end{enumerate}
        The \demph{size of the IPS proof} is the size of the circuit $C$. If $C$ is assumed to be constant-free, we refer to the size of the proof as the \demph{size of the constant-free IPS proof}.
%If $C$ is of individual degree $\le 1$ in each $y_j$ and $z_i$, then this is a \demph{linear} IPS refutation (called basically \emph{Hilbert} IPS by Grochow-Pitassi~\cite{GrochowPitassi14}), which is  abbreviated as \lIPS. %
%If $C$ is of individual degree $\le 1$ only in the $y_j$ then we say this is %an \lbIPS refutation.
%
%If $C$ comes from a restricted class of algebraic circuits $\cC$, then this is called a $\cC$-IPS refutation, and further called a $\cC$-\lIPS refutation if $C$ is linear in $\vy,\vz$.%
The variables $\vy,\vz$ are  called the \emph{placeholder} \emph{variables} since they are used as placeholders for the axioms. An IPS proof  $C(\vx,\vy,\vz)$ of  $1=0$ from $\{f_j(\vx)=0\}_{j\in[m]}$ is called  an \demph{IPS refutation} of $\{f_j(\vx)=0\}_{j\in[m]}$ (note that in this case  it must hold that  $\{f_j(\vx)=0\}_{j=1}^m$ have no common solutions in $\bits^n$).
% and a $\cC$-\lbIPS refutation if $C$ is linear in $\vy$.
\end{definition}

Notice that the definition above adds the  equations $\{x_i^2-x_i=0\}_{i=1}^n$, called the set of \demph{boolean axioms} denoted  $\vx^2-\vx$, to the system $\{f_j(\vx)=0\}_{j=1}^m$. This allows  to refute over $\bits^n$ unsatisfiable systems of equations.
%Given the variables $x_1,\dots,x_n$ we call the equations $\{x_i^2-x_i=0\}_{i=1}^n$ the \demph{boolean axioms} and denote them by $\vx^2-\vx$ for short. %
Also, note that the first equality in the definition of IPS means that the polynomial computed by $C$ is in the ideal generated by $\overline y,\overline z$, which in turn, following the second equality, means that $C$ witnesses the fact that $1$ is in the ideal generated by $f_1(\vx),\ldots,f_m(\vx),x_1^2-x_1,\ldots,x_n^2-x_n$ (the existence of this witness, for unsatisfiable set of polynomials, stems from the Nullstellensatz theorem \cite{BeameIKPP96}).

In order to use IPS as a propositional proof system (namely, a proof system for propositional tautologies), we need to
fix the  encoding of  clauses as algebraic circuits.
\begin{definition}[algebraic translation of CNF formulas]\label{def:algebraic-transl-CNF}
Given a CNF formula in the variables $\vx$,  every clause $\bigvee_{i\in P} x_i \lor \bigvee_{j\in N} \neg{x_j}$ is translated into  $\prod_{i\in P} (1-x_i)\cdot \prod_{j\in N} x_j=0$. (Note that these terms are    written as algebraic circuits as displayed, where products are not multiplied out.)
\end{definition}
Notice that in this way a 0-1 assignment to a CNF is satisfying iff the assignment is satisfying all the equations in the algebraic translation of the CNF.

%\iddo{I've changed slightly the terminology: it is somewhat confusing to distinguish between boolean IPS and propositional IPS: because at times we talk about *a proof* in the boolean/propositional IPS system, which may be a proof of a polynomial that is *not* a translation of a propositional formula!}

Therefore, using \autoref{def:algebraic-transl-CNF} to encode CNF formulas, boolean IPS is considered as a propositional proof system for the language of unsatisfiable CNF formulas, sometimes called \demph{propositional IPS}. We say that an IPS proof is an \demph{algebraic IPS} proof, if we do not use the boolean axioms \ba\ in the proof. \emph{As a default when referring to IPS we mean the boolean IPS version}.

% which proofs are boolean IPS proofs and where propositional formulas are the boolean IPS together with the algebraic translation of CNF formulas.
%\end{definition}

% \begin{definition}[propositional and algebraic IPS] \emph{Propositional IPS} is a propositional proof system for the language of unsatisfiable  CNF formulas in which a refutation of a CNF formula is a boolean IPS refutation of the formula encoded according to Definition \autoref{def:algebraic-transl-CNF}. \emph{Algebraic IPS }is a proof system for sets of polynomial equations that have oolean IPS only that we do not have the boolean axioms \ba.
% % which proofs are boolean IPS proofs and where propositional formulas are the boolean IPS together with the algebraic translation of CNF formulas.
% \end{definition}

%The  algebraic of IPS, which establishes the unsatisfiability over the field elements of a set of polynomial equations, is similar to the boolean IPS except that we do not include the boolean axioms in the system.

\subsubsection{Conventions and Notations for IPS Proofs}\label{sec:IPS-conventions} An IPS proof over a specific field or ring is sometimes denoted IPS$_\F$ noting it is over \F. For two algebraic circuits $F,G$, we define the \textit{size of the equation $F=G$} to be the total circuit size of $F$ and $G$, namely, $|F|+|G|$.

Let $\overline {\mathcal F}$ denote a set of polynomial equations $\{f_i(\vx)=0\}_{i=1}^m$, and let $C(\vx,\vy,\vz)\in\F[\vx,\vy,\vz]$ be an IPS proof of $f(\vx)$ from $\overline {\mathcal F}$ as in \autoref{def:IPS}. Then we write $C(\vx,\overline {\mathcal F},\ba)$ to denote the circuit $C$ in which $y_i$ is substituted by $f_i(\vx)$ and $z_i$ is substituted by the boolean axiom $x_i^2-x_i$. By a slight abuse of notation we also call  $C(\vx,\overline {\mathcal F},\ba)=f(\vx)$ an IPS proof of $f(\vx)$ from $\overline {\mathcal F}$ and $\ba$ (that is, displaying $C(\vx,\vy,\vz)$ \emph{after} the substitution of the placeholder variables $\vy,\vz$ by the axioms in $\overline {\mathcal F}$ and $\ba$, respectively).

For two polynomials $f(\vx),g(\vx)$, \emph{an IPS proof of $f(\vx)=g(\vx)$} from the assumptions $\overline {\mathcal F}$ is an IPS proof of $f(\vx)-g(\vx)=0$ (note that in case $f(\vx)$ and $g(\vx)$ are identical as polynomials this is trivial to prove; see \autoref{fact:zero-poly-ips-proof}).

We denote by $C: \overline {\mathcal F}\ipsprf s p=0$ (resp.~$C: \overline {\mathcal F}\ipsprf s p=g$) the fact that $p=0$ (resp.~$p=g$) has an IPS proof $C(\vx,\vy,\vz)$ of size $s$ from assumptions $\overline {\mathcal F}$. We  may also suppress   ``$=0$" and write simply $C:\overline {\mathcal F}\ipsprf s p$ for $C:\overline {\mathcal F}\ipsprf s p=0$. Whenever
we are only interested in claiming the existence of an IPS proof of size $s$ of $p=0$ from $\overline {\mathcal F}$ we suppress the $C$ from the notation. Similarly, we can suppress the size parameter $s$ from the notation.  If $F$ is a circuit computing a polynomial $\hat F\in\F[\vx]$, then we can talk about \emph{an IPS proof $C$ of $F$ from assumptions $\overline {\mathcal F}$}, in symbols $C:\overline {\mathcal F}\ipsprf {} F$, meaning an IPS proof of $\hat F$. Accordingly, for two circuits $F,F'$ such that $\hat F=\hat F'$, we may speak about an \emph{an IPS proof $C$ of $F$ from assumptions $\overline {\mathcal F}$} to refer to an IPS proof of $F'$ from assumptions \assmp.
%, note that in this case  we may choose to use a %different circuit computing $\hat F$  from $F$, namely, %$C(\vx, f_1,\dots,f_m,\overline  x^2-\overline x)=F'$, %for $\hat {F'}=\hat F$.
%\hirsch{did not understand the last sentence: is %there a typo (missing hat)?}

\subsection{Semi-Algebraic Proofs}\label{sec:semi-algebraic-proofs-prelim}
%The \emph{sum-of-squares} proof system (\sos\ for short)

The \emph{Positivstellensatz} proof system, as defined by Grigoriev and Vorobojov \cite{GV02}, is a refutation system for establishing the unsatisfiability over the reals \R\ of a system consisting of both polynomial equations $\assmp=\{f_i(\vx)=0\}_{i\in I}$ and polynomial inequalities $\ineqassmp=\{h_j(\vx)\ge 0\}_{j\in J}$, respectively.
%\mar{You may wish to reference Grigoriev/Hirsch/Pasechnik---I\ don't know precisely in which context we can reference it here.}
It is based on a restricted version of the Positivstellensatz theorem \cite{Kri64,Ste74}. In order to formulate it, we need to define the notion of a cone, as in \cite{GV02}, which serves as a non-negative closure of a set of polynomials, or informally the notion of a ``positive ideal". Usually the cone  is defined as the set closed under non-negative linear combinations of polynomials (cf.~\cite{BPT13-SDP-book}), but following \cite{GV02} we are going to use a more general formulation which is sometimes called \emph{the sos cone}.

%\hirsch{Commented out the cone vs sos cone discussion, but please feel free to bring it back.}\iddo{okay--I shortened the note and made it into one sentence. First, it seems always good to add intuitions. I dont see the harm in this. Second, I\ had trouble finding out about the defn of a cone in the literature; Grigoriev defines it differently than others, and it is not really standard, so I\ added a note about this in footnote.}
% To understand better the \ps\ we start by defining the cone of a set of polynomials, which expresses the notion of a non-negative  closure of a set of polynomials, or informally the notion of a ``positive ideal" of a set of polynomials. (Usually\hirsch{Should we speak the language of algebraic geometry here? I would avoid discussion and defined the ``sos cone'' directly.}, the cone of a set of polynomials is defined as the set closed under non-negative linear combinations of polynomials (cf.~\cite{BPT13-SDP-book}), but following \cite{GV02} we are going to use a more general formulation which is sometimes called \emph{the sos cone}).\iddo{check this note}.

\begin{definition}[cone]\label{def:cone}
Let $\ineqassmp\subseteq R[\vx]$ be a set of polynomials over an ordered ring $R$.
Then \emph{the cone of \ineqassmp}, denoted $\cone(\ineqassmp)$,
is defined to be the smallest set $S\subseteq R[\vx]$ such that:\vspace{-4pt}
\begin{enumerate}
\item
$\ineqassmp \subseteq S$;\vspace{-5pt}

%\item if $f,g\in S$ and $\alpha,\beta$ are two non-negative real numbers, then $\alpha\cd f+\beta\cd g \in S$;\vspace{-5pt}
\item for any
%(possibly constant\hirsch{I hope we do not use it for \Z, or we are in trouble of not having the number 2, for example (it's not a square).} \iddo{You may be right--so we simply need to change the definition of cone to add also nonnegative constants, when using \Z\ \ or ordered rings. Need to think about \Z\ and \Q\ anyway...unclear what we simulate...})
polynomial $s\in R[\vx]$, $s^2\in S$; \vspace{-5pt}
\item\label{it:const} for any positive constant $c>0$, $c\in S$;\vspace{-5pt}
\item if $f,g\in S$, then both $f+g \in S$ and $f\cd g\in S$.
\end{enumerate}
\end{definition}

Note that we have formulated the cone for any ordered ring (item~\autoref{it:const} would be superfluous for reals). This is because we are going to use this notion in the context of $\Z$ and $\Q$
(although the Positivstellensatz theorem does not hold for these rings,
it is still possible to use Positivstellensatz refutations in the presence of  the boolean axioms, namely as a refutation system for instances unsatisfiable over 0-1 value).

Note also that every sum of squares (that is, every sum of squared polynomials $\sum_i s_i^2$) is contained in every cone. Specifically, $\cone(\emptyset)$ contains every  sum of squares.

Similar to the way
\mar{\iddo{It's unclear whether the algebraic NS proof system is by Beame, Impagliazzo, Kraj\'{i}\v{c}ek, Pitassi and Pudl{\'a}k---it is?}\hirsch{At least they claim so in the paper about the system for propositional logic; I guess noone talked about polynomials as ``proofs'' before.}}
the Nullstellensatz proof system \cite{BeameIKPP96} establishes the  unsatisfiability of sets of polynomial equations based on the Hilbert's Nullstellensatz theorem \cite{Hil78-translated-work} from algebraic geometry, the \ps\ proof system\ is based on the  Positivstellensatz theorem from semi-algebraic geometry:

\begin{theorem}[Positivstellensatz theorem \cite{Kri64,Ste74}, restricted version]
\mar{No module part ($g_i\neq 0$), thus restricted}
Let $\assmp:=\{f_i(\vx)=0\}_{i\in I}$ be a set of polynomial equations and let  $\ineqassmp:=\{h_j(\vx)\ge 0\}_{j\in J}$ be a set of polynomial inequalities, where all polynomials are from $\R[x_1,\ldots,x_n]$. There exists  a pair of polynomials $f\in \langle \{f_i(\vx)\}_{i\in I} \rangle $ and $h\in \cone(\{h_j(\vx)\}_{j\in J})$ such that $f+h=-1$  if and only if there is no assignment that satisfies both $\assmp$ and $\ineqassmp$.
\end{theorem}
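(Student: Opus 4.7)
The plan is to prove the two directions separately, with the reverse (``easy'') direction being essentially a direct calculation, and the forward (``hard'') direction being the Positivstellensatz proper, for which I would adapt the classical Krivine--Stengle argument via maximal preorderings and Tarski's transfer principle.

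For the easy direction, suppose $f \in \langle \{f_i(\vx)\}_{i\in I}\rangle$ and $h\in\cone(\{h_j(\vx)\}_{j\in J})$ satisfy $f+h=-1$, and assume for contradiction that some $\vaa\in\R^n$ satisfies $f_i(\vaa)=0$ for all $i\in I$ and $h_j(\vaa)\ge 0$ for all $j\in J$. Since $f$ is an $\R[\vx]$-linear combination of the $f_i$'s, we get $f(\vaa)=0$. The key observation is that $h(\vaa)\ge 0$; I would prove this by induction on the derivation of $h\in\cone(\{h_j\}_{j\in J})$, using the closure properties in \ref{def:cone}: the generators $h_j(\vaa)$ are nonnegative by assumption, squares $s(\vaa)^2$ are nonnegative, positive constants are positive, and $\{x\in\R:x\ge 0\}$ is closed under addition and multiplication. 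Hence $0 \le f(\vaa)+h(\vaa)=-1$, a contradiction.

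For the hard direction, suppose there is no common real assignment; I would construct the certificate $f+h=-1$ by a Zorn's lemma / real closure argument. Let $I=\langle\{f_i\}_{i\in I}\rangle$ and let $T$ be the image in $R\eqdef\R[\vx]/I$ of $\cone(\{h_j\}_{j\in J})$. The goal is to show $-1\in T$. Assume for contradiction that $-1\notin T$. Apply Zorn's lemma to the family of preorderings of $R$ containing $T$ and not containing $-1$, obtaining a maximal such preordering $T^\star$. A standard lemma in real algebra (originally due to Serre) then yields that $T^\star$ is in fact an \emph{ordering} of $R$, namely $T^\star\cup(-T^\star)=R$ and $P\eqdef T^\star\cap(-T^\star)$ is a prime ideal. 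Thus $R/P$ is an ordered integral domain, its field of fractions $K$ inherits a total order, and we may pass to the real closure $\widetilde K$.

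Inside $\widetilde K$, the images $\bar x_1,\ldots,\bar x_n$ of the variables form a tuple $\vaa\in\widetilde K^n$ satisfying $f_i(\vaa)=0$ (since $f_i\in I\subseteq P$) and $h_j(\vaa)\ge 0$ (since $h_j\in T\subseteq T^\star$, which corresponds to the nonnegative cone of the ordering). Thus the first-order sentence
\[
\exists x_1\cdots\exists x_n\;\bigwedge_{i\in I} f_i(\vx)=0\;\wedge\;\bigwedge_{j\in J} h_j(\vx)\ge 0
\]
(which has its parameters among the real coefficients of $f_i, h_j$) holds in the real closed field $\widetilde K$. By Tarski's transfer principle for real closed fields (model completeness of the theory of real closed fields), this existential sentence holds in $\R$ as well, producing a common real solution and contradicting our hypothesis. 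Therefore $-1\in T$, which lifts back to an identity $f+h=-1$ in $\R[\vx]$ with $f\in\langle\{f_i\}\rangle$ and $h\in\cone(\{h_j\}\)$, as required.

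The main obstacle is the real-algebra step that promotes a maximal proper preordering to a genuine ordering with prime support ideal; this is nontrivial but is the standard ingredient in any proof of the Positivstellensatz, and once it is invoked the remaining passage through the real closure and transfer principle is routine. An alternative (but essentially equivalent) strategy would be to avoid real closures explicitly and invoke the Tarski--Seidenberg theorem on semialgebraic set emptiness directly, but the ordered-field approach above is the most self-contained and matches the restricted formulation (without the ``module part'' $g_i\ne 0$) used here.
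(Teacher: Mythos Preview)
The paper does not give its own proof of this theorem: it is stated in the preliminaries with a citation to Krivine and Stengle and then used only as background motivation for defining the Positivstellensatz proof system. So there is no ``paper's proof'' to compare against.

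That said, your outline is the standard Krivine--Stengle argument and is essentially correct. The easy direction is fine. For the hard direction, the route through maximal preorderings, the Serre-type lemma that such a maximal proper preordering is an ordering with prime support, passage to the real closure, and then Tarski transfer is exactly the classical proof. One point worth making explicit: the Tarski transfer step requires the displayed existential statement to be a genuine first-order sentence, hence you are implicitly assuming $I$ and $J$ are finite. In the context of this paper (where $\assmp$ and $\ineqassmp$ are finite axiom sets for a proof system) that is automatic, but in a self-contained write-up you should either state the finiteness hypothesis or note that the general case reduces to it by compactness.
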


%\iddo{This shows completeness but NOT implicational completeness!! Can we assume implicational completeness? Not sure.}\hirsch{I don't think we need impl.compl. at all. This is always a problem with s.a.systems.}

The \ps\ proof system is now natural to define. We shall distinguish between the \emph{real} \ps\ in which variables are meant to range over the reals and \emph{boolean} \ps\ in which variables range over \bits.
\begin{definition}[real Positivstellensatz proof system (real PS) \cite{GV02}]\label{def:PS}
Let $\assmp:=\{f_i(\vx)=0\}_{i\in I}$ be a set of polynomial equations and let  $\ineqassmp:=\{h_j(\vx)\ge 0\}_{j\in J}$ be a set of polynomial inequalities, where all polynomials are from $\R[x_1,\ldots,x_n]$. Assume that \assmp, \ineqassmp\ have no common real solutions.
A \emph{Positivstellensatz} \emph{refutation} of \assmp, \ineqassmp\ is a collection of polynomials $\{p_i\}_{i\in I}$ and $\{s_{i,\zeta}\}_{i,\zeta}$ (for $i\in\N$, $\zeta\subseteq J$ and $I_\zeta\subseteq\N$) in $\R[x_1,\ldots,x_n]$ such that the following formal polynomial identity holds:
\begin{equation}\label{eq:ps}
\sum_{i\in I} p_i\cd f_i + \sum_{\zeta\subseteq J} \left(\prod_{j\in \zeta} h_j \cd \left(\sum_{i\in I_\zeta} s_{i,\zeta}^2\right)\right)=- 1\,.
\end{equation}%
%EH% for some polynomials $g_i, t_i, h_j\in\F[x_1,\ldots,x_n]$.
 The \textbf{monomial size} of a \ps\ refutation is the combined total number of monomials in $\{p_i\}_{i\in I}$ and $\sum_{i\in I_\zeta} s_{i,\zeta}^2$, for all $\zeta\subseteq J$, that is, $\sum_{i\in I}\monomsize {p_i}+\sum_{\zeta\subseteq J}\monomsize {\sum_{i\in I_\zeta} s_{i,\zeta}^2}$.
%\iddo{We may want not to count the monomials in $s^2$ rather just in $s$?? See Atserias definition.}\hirsch{It absolutely does not matter.}%
\footnote{The definition of size measure for \ps\ and \sos\ proofs is slightly less standard than degree measure (see discussion in \cite{AH19}). We define the monomial size measure of \ps\ proofs to count the monomials in $p_i$ and $s_{i,\zeta}^2$, while ignoring the monomials in the initial axioms in \assmp, \ineqassmp. This choice of definition is closer to the definition of size of IPS proofs, which ignores the size of the initial axioms. }
%\iddo{Actually it's unclear whether size was even considered there as a measure of sos/ps. Also I chose to count only number of monomials in polynomial coefficients.}
\end{definition}

Note that Grigoriev et al.~\cite{GHP02} defined the size of \ps\ proofs slightly differently: they included in the size of proofs both the number of monomials and  the size of the coefficients of monomials written in binary
(while this does not matter for their lower bounds).
This is more natural when considering \ps\ as a propositional proof system (which is polynomially verifiable).

%\hirsch{A lower bound on PS size is proved in GHP02 for the knapsack, and in Itsykson-Kojevnikov for Tseitin formulas. The size of coefficients is considered (but does not help for lower bounds). The notion of size in propositional proof complexity is always the number of bits in the proof (we are talking about NP vs co-NP!).
%The question is only how the proof is represented. In particular, unit-cost notion is not a p.p.s. notion (you can't represent integers to make them O(1) bit-size).}
%\iddo{\textbf{I added the paragraph above.} But note that it is true that most work in proof complexity define the size of PS as monomial size, without considering the size of coefficients. See Lauria/Nordstrom CCC15, and Asterias/Hir* CCC19. They don't consider coefficient size. }

In order to use \ps\ as a refutation  system for  collections of equations \assmp\ and inequalities \ineqassmp\ that are
unsatisfiable over 0-1 assignments, we need to include simple  so-called boolean axioms.
This is done in slightly  different ways in different works  (see for example \cite{GHP02,AH19}).
One way to do this, which is the way we follow,
%usually wishes  can consider the \ps\ as a \emph{propositional} proof system that %establishes the unsatisfiability of \assmp, \ineqassmp\ over 0-1 assignments simply%
is the following:

\begin{definition}[(boolean) \ps\ proof system (boolean PS)]\label{def:prop-PS}
A \demph{boolean \ps\ proof} from a set of polynomial equations \assmp, and polynomial inequalities \ineqassmp, is an algebraic  \ps\ proof in which the following  \demph{boolean axioms} are part of the axioms: the polynomial equations $x^2_i-x_i=0$ (for all $i\in[n]$) are included in \assmp, and the polynomial inequalities $x_i\ge 0,~ 1-x_i\ge 0$ (for all $i\in[n]$) are included in \ineqassmp. \end{definition}
In this way, \assmp, \ineqassmp\ have no common 0-1 solutions iff there exists a boolean \ps\ refutation of \assmp, \ineqassmp.
%The boolean axioms $x_i\ge 0,~ 1-x_i\ge 0$ in propositional \ps\ have been used already in the early work on semi-algebraic proof systems by Grigoriev, Hirsch and Pasechnik \cite{GHP02}. %\iddo{check this!}\hirsch{I don't follow the discussion: I think boolean axioms in PS have been already in earlier Grigoriev's works...}\iddo{In Grigoriev/Vorobojov there are no boolean axioms as far as I\ recall. ie. the lower bound is without the boolean axioms, no? But maybe I'm wrong}\smallskip
 %
%\iddo{Need to show here the propositional sos version. The use of $x_i\ge 0, 1-x_i\ge 0$ is absent.  }
%
Eventually, to define the boolean \ps\ as a propositional proof system for the unsatisfiable CNF formula we consider CNF formulas to be encoded as polynomial equalities according to  \autoref{def:algebraic-transl-CNF}. This version is sometimes called \demph{propositional Positivstellensatz}.
\emph{As a default when referring to \ps\ we mean the boolean \ps\ version}.
\medskip

In recent years, starting mainly with the work of Barak, Brandao, Harrow, Kelner, Steurer and Zhou \cite{BarakBHKSZ12}, a special case of the \ps\ proof system has gained much interest due to its application in complexity and algorithms (cf.~\cite{OZ13}). This is the \demph{sum-of-squares} proof system (\textbf{SoS}), which is defined as follows:

\begin{definition}[sum-of-squares proof system]\label{def:sos-proof-system} A \demph{sum-of-squares proof} (SoS for short) is a \ps\ proof in which in \ref{eq:ps} in \autoref{def:PS} we restrict the index sets $\zeta\subseteq J$ to be \emph{singletons}, namely $|\zeta|=1$, hence, disallowing arbitrary products of inequalities within themselves. The real, boolean and propositional versions of \sos\ are defined similar to \ps.
\end{definition}
For most interesting cases \sos\ is also complete (and sound) by a result of  Putinar \cite{Put93}.

\subsubsection{Dynamic \ps}\label{sec:Stronger-Semi-Algebraic-Proof-Systems}
Here we  follow Grigoriev, Hirsch and Pasechnik  \cite{GHP02} to define what is, to the best of our knowledge, the most general propositional Positivstellensatz- (or SoS-) based semi-algebraic proof system defined to date.
It can be viewed as the generalization of (dynamic) Lovasz-Schrijver proof systems \cite{LS91,Lov94}
that have been put in the context of propositional proof complexity by Pudl\'ak \cite{Pud99},
and constitutes essentially a dynamic version of propositional \ps\
(the proof size is measured by the total number of monomials appearing in the proof).
%\iddo{Check this, because  Atserias says slightly differently.}

%Different variants of the Lovasz-Schrijver (LS) proof systems have been studied since \cite{LS91,Lov94} (cf.~also \cite{Pud99} in which they have been put into the context of propositional proof complexity). In particular, Grigoriev et al.~\cite{GHP02} studies thoroughly the relative strength of different LS systems. The  strongest variant of LS they introduces is the following, which is essentially a dynamic version of propositional \ps\  (\autoref{def:prop-PS})\footnote{Grigoriev and Vorobojov \cite{GV02} defined a dynamic version of \ps, under the name \emph{\ps\ calculus}, which is different than \LSInfty. \ps\ calculus is essentially a static \ps\ \iddo{TBC} } that allows deriving inequalities of any degree by taking squares, addition and multiplication:

The translation of propositional formulas here is different from the algebraic translation (\autoref{def:algebraic-transl-CNF}). For higher degree proof systems, \autoref{def:algebraic-transl-CNF} and the definition that follows can be reduced to one another (within the proof system, as long as both translations can be written down efficiently); however, we provide \autoref{def:semi-algebraic-transl-CNF} for the sake of consistency with earlier work.

\begin{definition}[semi-algebraic translation of CNF formulas]\label{def:semi-algebraic-transl-CNF}
Given a CNF formula in the variables $\vx$,  every clause $\bigvee_{i\in P} x_i \lor \bigvee_{j\in N} \neg{x_j}$ is translated into  $\sum_{i\in P} x_i+ \sum_{j\in N} (1-x_j)\ge 1$.
\end{definition}
Notice that in this way a 0-1 assignment to a CNF formula is satisfying iff the assignment  satisfies all the inequalities in the semi-algebraic translation of the CNF formula.

\begin{definition}[\LSInfty~\cite{GHP02}]\label{def:LSInfty}
Consider a boolean formula in conjunctive normal form and translate it
into inequalities as in \autoref{def:semi-algebraic-transl-CNF}.
Take these inequalities as axioms,
add the axioms $x\ge0$, $1-x\ge0$, $x^2-x\ge0$, $x-x^2\ge0$
for each variable $x$. Allow also $h^2\ge0$
as an axiom, for any polynomial $h$ of degree at most $d$.
An {\rm LS}$^d_{*,+}$ proof
of the original formula is a derivation of $-1\ge 0$
from these axioms using the following rules:
\[{{f\ge0,\quad g\ge0}\over{f+g\ge0}}~~~~~~~~~~~{{f\ge0}\over{\alpha f\ge0}}~\textup{(for $\alpha$ a nonnegative integer)} ~~~~~~~~~{{f\ge0,\quad g\ge0}\over{f\cdot g\ge0}}.\]
In particular, {\rm LS}$^\infty_{*,+}$ is such a proof
without the restriction on the degree.
Note that we have to write polynomials as sums of monomials
(and not as circuits or formulas), so the verification
of such proof is doable in deterministic polynomial-time.
\end{definition}

The proof of the following simulation follows by definition and we omit the details:
\begin{proposition}\label{prop:LSInfty-simulates-PS}
\LSInfty\ simulates boolean \ps.
\end{proposition}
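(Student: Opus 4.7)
The plan is to translate a boolean \ps\ refutation
$$\sum_{i\in I} p_i f_i \;+\; \sum_{\zeta\subseteq J}\Bigl(\prod_{j\in\zeta} h_j\cdot\sum_{k\in I_\zeta} s_{k,\zeta}^2\Bigr) \;=\; -1$$
piece by piece into an \LSInfty\ derivation of $-1\ge 0$. First I would handle the cone part. Every $h_j\ge 0$ is already an \LSInfty\ axiom since it appears in \ineqassmp, and every $s_{k,\zeta}^2\ge 0$ is an instance of the axiom scheme $h^2\ge 0$ of \ref{def:LSInfty}. Iterating the product rule gives $\prod_{j\in\zeta}h_j\ge 0$; the addition rule gives $\sum_{k}s_{k,\zeta}^2\ge 0$; a further product yields $\prod_{j\in\zeta}h_j\cdot\sum_{k}s_{k,\zeta}^2\ge 0$; and finally the addition rule over all $\zeta\subseteq J$ derives the whole cone term as a single \LSInfty-inequality.

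Second, I would handle the ideal part $\sum_i p_if_i$. The key point is that every equation $f_i=0$ in \assmp\ is represented in \LSInfty\ by the two inequalities $f_i\ge 0$ and $-f_i\ge 0$, both of which are \LSInfty\ axioms; the boolean axioms $x_k^2-x_k=0$ already appear in this form by \ref{def:LSInfty}. Given both, I split each $p_i = p_i^+ - p_i^-$ into its positive- and negative-coefficient monomial parts. Each $p_i^\pm$ has nonnegative coefficients, so $p_i^\pm\ge 0$ is derivable monomial-by-monomial from the variable axioms $x_k\ge 0$ using the product rule (for the monomial $x_{k_1}\cdots x_{k_\ell}\ge 0$), the scalar-multiplication rule (for nonnegative integer coefficients), and an invocation of $h^2\ge 0$ with $h=\sqrt{\alpha}$ to realize nonnegative non-integer rational or real scalars $\alpha$ as $(\sqrt{\alpha})^2\ge 0$. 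Applying the product rule to $p_i^+\ge 0$ and $f_i\ge 0$ yields $p_i^+f_i\ge 0$, and similarly $p_i^-(-f_i)\ge 0$; adding gives $p_if_i\ge 0$, and summing over $i\in I$ gives $\sum_i p_if_i\ge 0$.

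Combining the cone inequality and the ideal inequality by one more addition yields an \LSInfty-derivation of a polynomial that, as a formal polynomial identity, equals $-1$; this is the desired refutation. The monomial count at each step grows by at most a constant factor relative to the corresponding piece of the original \ps\ refutation, so the simulation is linear in the monomial size of that refutation. The main obstacle is really just careful bookkeeping rather than any conceptual hurdle: one must verify that equation axioms from \assmp\ are available as $\pm$ inequalities in \LSInfty, and one must express non-integer nonnegative scalars as $(\sqrt{\alpha})^2$ via the $h^2\ge 0$ scheme, since the scalar-multiplication rule of \ref{def:LSInfty} only allows nonnegative integers. Both are routine, matching the authors' remark that the proof ``follows by definition''.
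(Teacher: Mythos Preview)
Your argument is correct and is exactly the kind of routine unfolding the paper has in mind when it writes that the simulation ``follows by definition'' and omits the proof. The paper gives no details whatsoever, so there is nothing to compare against beyond the fact that the natural translation you describe---building up the cone part via the product and addition rules, and the ideal part by splitting each $p_i$ into nonnegative-coefficient pieces and using $f_i\ge 0$, $-f_i\ge 0$---is the intended one. Your observation that non-integer nonnegative scalars $\alpha$ must be introduced as $(\sqrt{\alpha})^2\ge 0$ (since the scalar rule in \autoref{def:LSInfty} is integer-only) is a nice detail the paper does not mention.

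One small caveat worth flagging: \autoref{def:LSInfty} literally defines $\textup{LS}^\infty_{*,+}$ only for CNF inputs via the semi-algebraic translation, whereas boolean \ps\ takes arbitrary equations and inequalities (and, in its propositional incarnation, uses the \emph{algebraic} translation of clauses). Your proof silently assumes that equation axioms $f_i=0$ are available in $\textup{LS}^\infty_{*,+}$ as the pair $f_i\ge 0$, $-f_i\ge 0$. That is the natural generalization and clearly what the authors intend (the boolean axioms $x^2-x\ge 0$, $x-x^2\ge 0$ are already of this form), but it is not spelled out in \autoref{def:LSInfty}. The paper itself glosses over this input-language mismatch, so you are not missing anything that the paper supplies.
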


\section{Conditional IPS Lower Bounds}\label{sec:lower-bounds}
\subsection{IPS Lower Bounds under Shub-Smale Hypothesis}\label{sec:Shub-Smale-lower-bound}
Here we provide a super-polynomial conditional lower bound on the size of (boolean) IPS refutations of the binary value principle  over the rationals  based on the Shub-Smale Hypothesis (\autoref{sec:tau-conj}).

%\iddo{Here we use a definition of IPS refutation over a ring so that instead of =1 we use =M for nonzero M. Is this necessary? I think so. Need to use the result of Buss about this.}

% \begin{definition}\label{def:ips-Z}
% Given an IPS proof of  $g(\vx)\in\Z[\vx]$ from a %set of assumptions \assmp$\subseteq\Z[\vx]$, we  %\end{definition}

The conditional lower bound is first established  for constant-free IPS proofs over \Z\ and then we extract a lower bound over \Q\ as a corollary using \autoref{cor:from-Q-IPS-to-Z-IPS} below.
Notice that we can consider IPS proofs also over rings, and not only fields. In the case of IPS over \Z\ we cannot anymore assume that \emph{refutations} are proofs of the polynomial $1$, rather we define refutations in IPS over \Z\ to be proofs of any nonzero constant polynomial (cf.~\cite[Definition 2.1]{BussIKPRS96}):
%\iddo{Check Buss et all for this.}\hirsch{Edward: did not understand the reference to Buss.}\iddo{they proved NS completeness over rings instead of fields, I think. I need to check.}:

\begin{definition}[\ipsz]
An \emph{\ipsz\ proof of $g(\vx)\in\Z[\vx]$ from a set of assumptions $\assmp\subseteq\Z[\vx]$} is an IPS proof of $g(\vx)$ from \assmp, as in \autoref{def:IPS}, where $\F=\Z$ and all the constants in the IPS proof are from \Z. An \emph{\ipsz\ refutation of \assmp} is a proof of $M$, for $M\in\Z\setminus\{0\}$. (The definition is similar for the  boolean and algebraic  IPS versions.)
\end{definition}

%\iddo{I retained the previous sentence; I think it is correct: There is no  necessarily refutations that result in $...=1$, rather a refutation is now defined as "...=M" for a nonzero M. Trivial example: the polynomial equation  2=0 is unsatisfiable but  no product of the axiom 2 equals 1 over \Z}

%we extract a lower bound over \Q as a corollary.
%Notice that we can consider IPS proofs also over rings, and not only fields.
%Specifically, an IPS$_\Z$ proof of  $g(\vx)\in\Z[\vx]$ from a set of assumptions $\assmp\subseteq\Z[\vx]$
%is an IPS proof of $g(\vx)$ from \assmp, as in \autoref{def:IPS}, where $\F=\Z$ and all the constants in the  IPS proof are from \Z.  \hirsch{Shortened the previous sentence. Did you mean this?}

We will need to define a constant-free circuit over \Q.

\begin{definition}\label{def:const-free-q}
A \emph{constant-free circuit over \Q}
is a constant-free algebraic circuit as in \autoref{sec:algebraic_circuits}
that has additionally
division gates $\div$, where $u\div v$ means that the polynomial computed by $u$ is divided by the polynomial computed by $v$, such that for every division gate $u\div v$ in $C$ the circuit $v$ contains \emph{no variables} and computes a nonzero constant.
 A \emph{constant-free IPS proof over \Q\ }is an IPS proof written with a constant-free circuit over \Q.
\end{definition}

The following proposition  is proved by a simple induction on the circuit size, using sufficient products to cancel out the denominators in the circuit over \Q, turning  it into a circuit over \Z.

\begin{proposition}[from \Q-circuit to \Z-circuit]
\label{prop:Q-circuit-to-Z-circuit}
Let $C$ be a size-$s$ constant-free circuit over \Q. Then there exists a size $\le 4s$
constant-free circuit computing $M\cd \widehat C$, for some $M\in \Z\setminus\{0\}$, with $\tau(M)\le 4s$.
\end{proposition}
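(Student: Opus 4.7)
The plan is to establish the proposition by structural induction on the sub-circuits of $C$. For every node $u$ of $C$, computing the polynomial $\widehat u$, I will simultaneously construct a constant-free $\mathbb{Z}$-circuit $\tilde C^u$ of size at most $4|u|$ together with two designated output gates $\alpha(u)$ and $\beta(u)$ satisfying the invariant: $\alpha(u)$ computes the polynomial $M_u\cdot\widehat u\in\mathbb Z[\vx]$, $\beta(u)$ computes the nonzero integer $M_u$, and the sub-circuit rooted at $\beta(u)$ is variable-free. The proposition then follows by taking $M:=M_{\text{root}}$, using the sub-circuit rooted at $\alpha(\text{root})$ as the required circuit for $M\widehat C$ (of size $\le 4s$), and using the sub-circuit rooted at $\beta(\text{root})$ to witness $\tau(M)\le 4s$.

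For the base case, a leaf $u$ labelled $0$, $\pm 1$, or a variable, we set $M_u:=1$, take $\alpha(u)$ to be the leaf itself, and attach a fresh $1$-leaf as $\beta(u)$, for a total size of $2\le 4$. For $u=a\star b$ with $\star\in\{+,\times\}$ we reuse the inductively built $\tilde C^a$ and $\tilde C^b$ as disjoint sub-DAGs and set $M_u:=M_aM_b$. In the multiplicative case it suffices to add the two gates $\alpha(u):=\alpha(a)\times\alpha(b)$ and $\beta(u):=\beta(a)\times\beta(b)$, which fits well within the budget $4|u|=4|a|+4|b|+4$. In the additive case we write $M_u\widehat u=M_b(M_a\widehat a)+M_a(M_b\widehat b)$ and realize this by $\alpha(u):=\beta(b)\cdot\alpha(a)+\beta(a)\cdot\alpha(b)$ (three new gates) together with $\beta(u):=\beta(a)\times\beta(b)$ (one new gate), again fitting into the $4|u|$ budget. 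In both cases all three invariants (integer output of $\alpha$, integer nonzero output of $\beta$, variable-freeness of the $\beta$-subcircuit) are preserved by inspection.

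The crucial case is a division gate $u=a\div b$. By \autoref{def:const-free-q}, the sub-circuit $b$ is variable-free and computes a nonzero rational $c_b$. The inductive invariant implies that $\tilde C^b$ is variable-free, so $\alpha(b)$, being a constant-free $\mathbb Z$-circuit without variable inputs, must compute an integer; this integer equals $M_b c_b$, which is nonzero since $M_b\ne 0$ and $c_b\ne 0$. We then set $M_u:=M_a\cdot(M_bc_b)\in\mathbb Z\setminus\{0\}$ and add two new gates: $\alpha(u):=\alpha(a)\cdot\beta(b)$, which computes $(M_a\widehat a)\cdot M_b=M_u\widehat u$, and $\beta(u):=\beta(a)\cdot\alpha(b)$, which computes $M_a\cdot(M_bc_b)=M_u$; the latter remains variable-free because the $\beta(a)$-subcircuit and the whole of $\tilde C^b$ are. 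Hence $|\tilde C^u|\le|\tilde C^a|+|\tilde C^b|+2\le 4|u|$. The main obstacle, concentrated at this last case, is to guarantee that what the construction places into the numerator position is really an integer so that $\tilde C^u$ stays a $\mathbb Z$-circuit; this is supplied precisely by the variable-free invariant on $\beta(\cdot)$ maintained throughout the induction.
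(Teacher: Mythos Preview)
Your construction is essentially the paper's: for every gate $u$ you build a ``numerator'' gate $\alpha(u)$ computing $M_u\widehat u$ and a ``denominator'' gate $\beta(u)$ computing $M_u$, keeping the $\beta$-part variable-free so that at a division gate $u=a\div b$ the value computed by $\alpha(b)$ is a nonzero integer that can be absorbed into $M_u$. The paper organizes this as an induction along a topological order of the gates; you organize it as a structural recursion on sub-circuits. The handling of the division case, which is the only nontrivial point, is the same in both.

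There is one genuine gap in your size analysis, though. You state that $\tilde C^a$ and $\tilde C^b$ are reused ``as disjoint sub-DAGs'' and then invoke $4|u|=4|a|+4|b|+4$, i.e.\ $|u|=|a|+|b|+1$. This identity holds only when $C$ is a \emph{formula}. In a circuit the two children $a,b$ of a gate may share descendants, so $|u|<|a|+|b|+1$; if you really made $\tilde C^a$ and $\tilde C^b$ disjoint copies, the output size could blow up exponentially (take $C$ to be repeated squaring). The fix is exactly what the paper does and what your construction already supports: process each original gate once, in topological order, and observe that at each gate you add at most four fresh gates (two at a leaf, two at $\times$, four at $+$, two at $\div$), all of which point only to previously built $\alpha(\cdot),\beta(\cdot)$ of the children. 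The global count $\sum_u(\text{new gates at }u)\le 4s$ then gives both $|\tilde C^{\text{root}}|\le 4s$ and $\tau(M)\le 4s$. Drop the word ``disjoint'' and replace the per-subcircuit inequality by this per-gate count, and your argument is complete.
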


% \begin{verbatim}
% The proof is like this:
%
% - By induction on size of C.
%
% - We simply say in the Statement inductive one, that:
% 1. every M in the right hand side ...=M\cd\ C... is bounded in magnitude by
% exponential in the max magnitude of a coeff in C: namely, if A is the max
% magnitude of a coefficient in C, then every coefficient
% appearing in C'  is of magnitude at most A^{s} for s the size of constant-free C.
% So we get overall:
% -- A is the magnitude of max coeff in C. Then its magnitude is at most
%
%\end{verbatim}

\begin{proof}
We choose any topological order $g_1,g_2,\ldots,g_i,\ldots,g_{|C|}$ on the gates of the constant-free circuit $C$ over \Q\ (that is, if $g_j$ has a directed path to $g_k$ in $C$ then $j<k$) and proceed by induction on $|C|$ to  eliminate rationals from the circuit (identifying the gate $g_i$ with the sub-circuit of $C$ for which $g_i$ is its root). \smallskip

\nind\textbf{Induction statement:} Let $g_1,\dots,g_s$ be the topologically ordered gates of  a constant-free circuit $C$  over \Q, where $s=|C|$.
% that terminates in the gate $g_i$. %
Then, there exists a division-free constant-free circuit consisting of the corresponding topologically ordered  gates $g_{11},\ldots,g_{1a_1},g_{21},\ldots,g_{2a_2},\dots,g_{s1},\dots,g_{s a_s}$, such that for \emph{every} $i\le s$:
\begin{enumerate}
\item $a_i\le 4$ and $g_{i a_i}$ computes the polynomial $M_i\cd g_i$, for some nonzero integer $M_i$ (again, identifying the gate $g_{i a_i}$ with the sub-circuit for which it is a root);
%\hirsch{I think this is not enough: we need counterparts for each and every $g_j$ and $M_j$ for %$j\le i$.}
\label{it:IPSQ-main-cond}
\item
The integer $M_i$ is  constructed as a part of the circuit (except for the trivial case $M_i=1$). More precisely, there exists a division-free constant-free and variable-free (sub-)circuit $g_{j,\ell}$, for $j\le i,\ell\le 4$ that computes $M_i$. In particular $\tau(M_i)\le 4i$;
\label{it:M-constructible-already}
\item Sub-circuits with no division gates remain intact: if $g_i$ is a division-free constant-free circuit  then $M_i=1$ and $g_i$ is a sub-circuit of the new circuit. That is, $g_{i1}=g_i$, $a_i=1$, and all gates in $g_1,\dots,g_i$ that are part of the  sub-circuit $g_i$ in $C$ occur also as  gates $g_{j\ell}$ (for some $j\le i,\ell\le 4$) in the new circuit $g_{i1}$.\label{it:weird-condition}
\end{enumerate}
%%%% Iddo: explanation on item 3:
%?Actually, I recalled now better why I added item 3 in the induction statement: it is needed for the case gi=gj/gl to make sure indeed that gl is variable-free: we need to construct the sub-circuit that computes Mi here. But Mi in this case is precisely Mj*Ml*gl. But by induction hypothesis (without item 3) we only know that there are three gates that compute Mj, Ml*gl and Ml., and that Mj and Ml are constant-free variable-free circuits computing integers. But we don't know this about gl.
%    Now we need to say that indeed this Mi = Mj*Ml*gl is a constant-free variable-free circuit. But we don't know that it is  variable-free circuit because gl may contain variables (we know gl it is constant-free and division-free circuit because in our construction we never add division gates nor constants different than 0,1,-1,2. But we may have added/connected some variables).
% Corollary: Item 3 is needed to make sure that gl is variable-free. I will try to simplify it a bit.

\smallskip

\Base  $g_i$ is a variable or a constant in $\{-1,0,1\}$. Hence, we put $g_{i1} := g_i$, $a_i=1$, and $M_i=1$.

\induction In the case of a binary gate $g_i = g_j\circ g_\ell$, for $\circ\in\{\times,+, \div \}$ (where $j,\ell<i$),  by induction hypothesis we already  have division-free constant-free circuits  $g_{ja_j}$ and $g_{\ell a_\ell}$ computing the polynomials  $M_j g_j$ and $M_\ell g_\ell$, respectively, for some integers $M_j,M_\ell$ that are also computed as part of the circuit.

\smallskip

\case 1 $g_i$ is a division gate computing $g_j / g_\ell$, where, by definition of circuits over \Q,  $g_\ell$ is a division-free constant-free circuit computing a nonzero constant that contains no variables.

By induction hypothesis \autoref{it:IPSQ-main-cond} we have already constructed the two gates $g_{ja_j}$ and $g_{\ell a_\ell}$, where $g_{ja_j}$ computes the polynomial $M_j g_j$ for some nonzero integer $M_j$. By \autoref{it:M-constructible-already}, $M_j$ is already computed by one of the gates in the circuit. Finally, since $g_\ell$ does not have division gates by definition,   \autoref{it:weird-condition} means that  $g_{\ell a_\ell}=g_\ell$ (and $a_\ell=a_1$ and $M_\ell=1$), and specifically $g_\ell$ is a constant-free variable-free circuit.

We put $g_{i2}:=g_{ja_j}$ and $g_{i1}:= M_j\cd g_{\ell a_\ell}=M_j\cd g_\ell$ (that is, $g_{i1}$ is a product gate that connects to the two previously constructed gates that compute the two integers $M_j$ and $g_\ell$),  $a_i=2$ and $M_i= M_j g_\ell$ is an integer. Hence, $g_{i2}$ computes the polynomial $g_{ja_j}=M_j  g_j=g_\ell\cd ((M_jg_j)/g_\ell)=g_\ell\cd M_j \cd(g_j/ g_\ell)=(g_\ell\cd M_j) \cd g_i=M_i\cd g_i$ and  $M_i= M_j g_\ell$ is an integer that is computed (as a constant-free variable-free circuit) by the gate $g_{i1}$ as required.

%Denote by $M'_j$ the integer $M_j g_j$. Construct the gates $g_{k1},\ldots,g_{ka_k}$ in order to compute the function $M_k g_k$, were $M_k = M_i M'_j$. Namely, just multiply $g_{ia_i}$ by $M_j$. We thus need here two gates: one to compute $M_k$ from $M_i$ and $g_{ja_j}$, and another one to multiply $g_{ia_i}$ by $M_j$.

\case 2 $g_i= g_j\cd g_\ell$.
In this case $a_i=2$ and  $M_i = M_j M_\ell$, and we put $g_{i2}:=g_{ja_j}\cd g_{\ell a_\ell}$ and $g_{i1}:=M_i\cd M_j$, where $M_i, M_j$ are two integers that are computed already in the circuit (with a constant-free division-free and variable-free sub-circuit).

\case 3 $g_i= g_j+g_\ell$.
In this case $a_i=4$,  $M_i = M_j M_\ell$, and we put $g_{i4}:=M_\ell\cd g_{ja_j}+M_j\cd g_{\ell a_\ell}$, namely, we add three gates $g_{i2},g_{i3},g_{i4}$ (two products, both of which connects to previous gates, and one addition to add these two products). Finally, we put $ g_{i1}:=M_i\cd M_j$, where $M_i, M_j$ are two integers that are computed already in the circuit (with a constant-free division-free and variable-free sub-circuit).
\end{proof}

An immediate corollary of \autoref{prop:Q-circuit-to-Z-circuit} is:

\begin{corollary}[from \ipsq\ to \ipsz]
\label{cor:from-Q-IPS-to-Z-IPS}
Boolean IPS$_\Z$ simulates boolean IPS$_\Q$, in the following sense: if there exists a size-$s$ constant-free boolean IPS proof over \Q\ from \assmp\ of $H$,
for   $\assmp$  a set of assumptions written as constant-free algebraic circuits over \Z\ and
$H$  a constant-free algebraic circuit over \Z,  then there exists a size $\le 4s$
constant-free boolean IPS$_\Z$ proof of $M\cd H$, for some $M\in \Z\setminus\{0\}$, such that  $\tau(M)\le 4s$.
\end{corollary}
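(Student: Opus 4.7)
The plan is to apply \autoref{prop:Q-circuit-to-Z-circuit} directly to the IPS proof, viewed as a constant-free circuit over $\Q$ in the extended variable set $(\vx,\vy,\vz)$. Let $C(\vx,\vy,\vz)$ be a size-$s$ constant-free \ipsq\ proof of $H$ from $\assmp=\{f_1,\dots,f_m\}$, so that by \autoref{def:IPS} one has the two polynomial identities $C(\vx,\vnz,\vnz)=0$ and $C(\vx,f_1(\vx),\dots,f_m(\vx),x_1^2-x_1,\dots,x_n^2-x_n)=H(\vx)$. Treat $\vy$ and $\vz$ as additional formal variables; then $C$ is a constant-free circuit over $\Q$ of size $s$, to which \autoref{prop:Q-circuit-to-Z-circuit} applies.

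The proposition yields a division-free constant-free circuit $C'(\vx,\vy,\vz)$ of size at most $4s$ that computes the polynomial $M\cdot\widehat{C}(\vx,\vy,\vz)$ for some nonzero integer $M$ with $\tau(M)\le 4s$. Being division-free and constant-free, $C'$ is a legitimate circuit over $\Z$. It remains to verify that $C'$ is an \ipsz\ proof of $M\cdot H$ from $\assmp$. For condition (1) of \autoref{def:IPS}, substituting $\vy\leftarrow\vnz$ and $\vz\leftarrow\vnz$ in the identity $C'=M\cdot C$ gives $C'(\vx,\vnz,\vnz)=M\cdot C(\vx,\vnz,\vnz)=0$. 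For condition (2), the same identity under the substitution $\vy\leftarrow (f_1(\vx),\dots,f_m(\vx))$, $\vz\leftarrow (x_1^2-x_1,\dots,x_n^2-x_n)$ yields $C'(\vx,\vf,\vx^2-\vx)=M\cdot H(\vx)$. Thus $C'$ is a constant-free \ipsz\ proof of $M\cdot H$ from $\assmp$ of size at most $4s$, and $\tau(M)\le 4s$, as required.

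Since the statement is flagged as an \emph{immediate} corollary, there is really no obstacle: the entire argument consists of pulling back the $Q$-to-$Z$ transformation from the level of individual circuits to the level of IPS proofs. The only minor point to keep in mind is that the transformation in \autoref{prop:Q-circuit-to-Z-circuit} multiplies the computed polynomial by the same integer $M$ across \emph{all} substitutions for the variables (here, including the placeholder variables $\vy,\vz$), which is exactly why both defining identities of the IPS proof scale uniformly by $M$. No additional steps beyond this observation are needed.
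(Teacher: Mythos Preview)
Your proof is correct and is exactly the argument the paper has in mind: the corollary is stated there as an ``immediate corollary'' of \autoref{prop:Q-circuit-to-Z-circuit} with no further proof, and the only step needed is to apply that proposition to the IPS certificate $C(\vx,\vy,\vz)$ viewed as a constant-free $\Q$-circuit in the extended variable set, then observe that both IPS conditions scale by the same integer $M$.
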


 \begin{theorem}\label{thm:first-cond-lower-bound-Q}
Under the Shub and Smale Hypothesis, there are no $\poly(n)$-size constant-free (boolean) IPS refutations of the binary value principle \bvpn\ over \Q.
\end{theorem}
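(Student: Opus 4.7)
The plan is to reduce the refutation-size question to a constant-free circuit complexity lower bound and then contradict the Shub-Smale hypothesis by producing small circuits for nonzero integer multiples of factorials.

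First I would apply \autoref{cor:from-Q-IPS-to-Z-IPS} to reduce to the integer case: any size-$s$ constant-free \Q-IPS refutation of \bvpn\ yields a size $\le 4s$ constant-free \ipsz-proof of some nonzero $M \in \Z$ with $\tau(M) \le 4s$. Next, invoking the normal-form result of \cite{FSTW16}, I would assume without loss of generality that the refutation takes the linear IPS form
\[
Q(\vx) \cdot (S_n + 1) + \sum_{i=1}^n H_i(\vx) \cdot (x_i^2 - x_i) = M,
\]
where $S_n := \sum_{i=1}^n 2^{i-1} x_i$ and $Q, H_1, \ldots, H_n$ are constant-free circuits of polynomial size. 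Evaluating on any $\valpha \in \{0,1\}^n$ annihilates the boolean axioms and leaves the identity $Q(\valpha) \cdot (S_n(\valpha) + 1) = M$ over \Z. As $\valpha$ ranges over $\{0,1\}^n$, the quantity $S_n(\valpha) + 1$ ranges over all of $\{1, 2, \ldots, 2^n\}$; since each $Q(\valpha)$ is an integer, this forces $M$ to be divisible by every element of $[2^n]$.

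Now I would fix an arbitrary boolean assignment $\valpha_0$. The specialization $Q(\valpha_0)$ is a $\poly(n)$-size constant-free variable-free circuit computing the integer $M/(S_n(\valpha_0)+1)$; multiplying by the $O(n)$-size constant-free circuit computing $S_n(\valpha_0)+1$ gives a $\poly(n)$-size constant-free circuit for $M$ itself. Then $n$ rounds of repeated squaring produce a $\poly(n)$-size constant-free circuit computing $M^{2^n}$. The key arithmetic step is to show that $(2^n)!$ divides $M^{2^n}$: for every prime $p \le 2^n$, Legendre's formula gives $v_p((2^n)!) = \sum_{k \ge 1} \lfloor 2^n / p^k \rfloor \le 2^n/(p-1) \le 2^n$, while $v_p(M^{2^n}) \ge 2^n$ because $p \in [2^n]$ divides $M$; primes exceeding $2^n$ do not occur in $(2^n)!$. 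Hence $M^{2^n} = c \cdot (2^n)!$ for some nonzero integer $c$.

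Finally, for every $m$ with $2^{n-1} \le m \le 2^n$ we have $m! \mid (2^n)!$, so $M^{2^n} = c_m \cdot m!$ for some nonzero $c_m \in \Z$, computed by a constant-free circuit of size $\poly(n) = \log^{O(1)} m$ since $n = \Theta(\log m)$ in this range. Choosing $n := \lceil \log_2 m \rceil$ for each $m$ therefore yields a sequence $(c_m \cdot m!)_{m \in \N}$ of nonzero integer multiples of factorials admitting $\log^{O(1)} m$-size constant-free circuits, contradicting the Shub-Smale hypothesis. The main obstacle I expect is cleanly justifying the linear IPS normal form in the constant-free \Z-setting (since the \cite{FSTW16} normal form is naturally phrased over fields); the remaining steps, namely the prime-power count via Legendre's formula and the sequence interpolation between $2^{n-1}$ and $2^n$, are conceptually straightforward but must be executed with care so that all blowups stay $\poly(n)$ and the constant-free property is preserved at every stage.
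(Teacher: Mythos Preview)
Your proposal is correct and matches the paper's strategy almost exactly (reduce to $\Z$ via \autoref{cor:from-Q-IPS-to-Z-IPS}, show every $k\in[2^n]$ divides $M$, build a $\poly(n)$-size circuit for $M^{2^n}$ by repeated squaring, use Legendre's formula to get $(2^n)!\mid M^{2^n}$, interpolate to all $m\in[2^{n-1},2^n]$). The one place the paper's actual proof differs is precisely the obstacle you flag: rather than appealing to the \cite{FSTW16} linear normal form over constant-free $\Z$, the paper works directly with the general IPS certificate $C(\vx,y,\vz)$. Fixing a boolean point $\vb_k$, it considers the univariate polynomial $g(y):=C(\vb_k,y,\vec 0)\in\Z[y]$; the first IPS condition $C(\vx,0,0)=0$ gives $g(0)=0$, so $g(y)=y\cdot g^\star(y)$ with $g^\star\in\Z[y]$, and the second IPS condition (using that the boolean axioms vanish at $\vb_k$ and $G(\vb_k)=1+k$) gives $g(1+k)=M$, whence $(1+k)\mid M$ immediately. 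This sidesteps the normal-form issue entirely. (The paper's own introductory proof sketch in fact uses the linear form exactly as you do; the body proof replaces it with this univariate trick.) The paper also obtains the circuit for $M$ more directly, by substituting $\vec 0$ for $\vx$ in $C(\vx,G(\vx),\overline F(\vx))=M$ rather than going through $Q(\valpha_0)$.
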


\begin{proof}
Given \autoref{cor:from-Q-IPS-to-Z-IPS}, it suffices to prove the statement for constant-free (boolean) IPS$_\Z$.

We proceed to prove the contrapositive. Suppose that the binary value principle $1 + \sum_{i = 1}^{i = n} 2^{i - 1} x_i=0$ has a constant-free  IPS$_\Z$ refutation (using only the boolean axioms)  of size $\poly(n)$. We will show that there is a  sequence of nonzero natural numbers $c_m$ such that $\tau(c_m m!) \le (\log m) ^ {c}$, for all $m\ge2$, where $c$ is a constant independent of $m$. In other words, we will show that $(c_m m!)_{m=1}^\infty$ is easy.

Assume that $C(\vx,y,\vz)$ is the polynomial-size constant-free boolean IPS$_\Z$ refutation of $1 + \sum_{i = 1}^{i = n} 2^{i - 1} x_i=0$ (here we  only have a single placeholder variable $y$ for the single non-boolean axiom, that is,  the binary value principle). For simplicity, denote  $G(\vx)=1 + \sum_{i = 1}^{i = n} 2^{i - 1} x_i$, $F_i(\vx)=x_i^2-x_i$, and $\overline {F}(\vx)=\ba$.

We know that there exists an integer constant $M \neq 0$ such that

\begin{equation}\label{eq:C(.)=M}
C\left(\vx, G(\vec x),\overline F(\vx)\right) = M\,.
\end{equation}

For every integer $0 \le k < 2^n$ we denote by $\overline b_k:=(b_{k 1}, \ldots, b_{k n})\in\bits^n$ its (positive, standard) binary representation, that is,   $k = \sum_{i = 1}^{i = n}b_{k i} 2^{i - 1}$. Then, $F_i(\vec b_k) = 0$ and $G(\vec b_k) = 1 + k$, for all $1\le i \le n$, $0 \le k < 2^n$. Hence, by \ref{eq:C(.)=M}:

\begin{equation}\label{eq:lb1}
C(b_{k1}, \ldots, b_{kn}, 1+k, \vec 0) = M,\text{~~for every integer $0 \le k < 2^n$}.
\end{equation}
\begin{claim}\label{cla:M-div-every-prime}
 $M$ is divisible by every prime number less than $2^n$.
\end{claim}
\begin{proofclaim}
For a fixed $0\le k<2^{n}$ and its binary representation $ b_{k1}, \ldots, b_{kn}$, consider $g(y) = C(b_{k1}, \ldots, b_{kn}, y, \vec 0)$ as  a univariate polynomial in $\Z[y]$. Then, $g(1+k) = M$ by \ref{eq:lb1}, and $g(0) = 0$ holds since  $C(b_{k1}, \ldots, b_{kn}, 0, \vec 0) = 0$, by the definition of IPS. Because $g(0)=0$, we know that $g(y)=y\cd g^\star(y)$, for some $g^\star(y)\in\Z[y]$, meaning that $g(1+k)=(1+k)\cd g^\star(1+k)=M$. Since $g^\star(y)$ is an integer polynomial, this implies that $M$ is a product of $1+k$.

Overall, this argument shows that for every $1\le p\le 2^n$, $M$ is divisible by  $p$, and in particular  $M$ is divisible by every prime number less than $2^n$.
\end{proofclaim}

%It's obvious that $M$ can be represented as $P \cdot G + Q_1 \cdot F_1 + \ldots + Q_n \cdot F_n$, where $P, Q_i \in \Z[x_1, \ldots, x_n]$. For every integer $0 \le k < 2^n$ there is a bit representation vector $(b_{k 1}, \ldots, b_{k n})$ such as $k = \sum_{i = 1}^{i = n}b_{k i} 2^{i - 1}$. Then $F_i(\vec b_k) = 0$, $G(\vec b_k) = 1 + k$ for every $1\le i \le n$, $0 \le k < 2^n$. It means that for every integer $1 \le k \le 2^n$ M is divisible by k. Then M is divisible by every prime number less than $2^n$.

Note that once we substitute the all-zero assignment $\overline 0$ into \ref{eq:C(.)=M},
we obtain a constant-free algebraic circuit of size $\poly(n)$ with no variables computing $M$,
thus $\tau(M)=\poly(n)$.
Then we can compute $M^{2^n}$ using a constant-free algebraic circuit of size $\poly(n)$ by taking $M$ to the power $2$, $n$ many times
(that is, using $n$ repeated squaring),
yielding $\tau(M^{2^n})=\poly(n)$.
% It remains to show that $M^{2^n}$ is a product of $2^n!$ to get a contradiction with the Shub and Smale hypothesis.
% We use the following claim to prove this:

\begin{claim}\label{cla:prime-factorisation}
The power of every prime factor in $(2^n)!$ is at most $2^n$.
\end{claim}
\begin{proofclaim}
We show that for every number $k\in\N$, the power of every prime factor of $k!$ is at most $k$. Let $ p_1^{t_1} \cdots p_r^{t_r}$ be the prime factorisation of $k!$, namely $k!= p_1^{t_1} \cdots p_r^{t_r}$ where each $p_i$ is a prime number and $p_i \neq p_j$, for all $i\neq j$. To compute $t_i$ we consider the $k$ products $k, (k-1),\dots, 1$, in $k!=k\cd(k-1)\cdots 1$, out of which only each $p_i$th number is divisible by $p_i$, hence only $\lfloor \frac{k}{p_i} \rfloor$ numbers are divisible by $p_i$. Consider now only these $\lfloor \frac{k}{p_i} \rfloor$ numbers in $k!$ which are divisible by $p_i$, and write them as  $p_i\cd \lfloor \frac{k}{p_i} \rfloor ,p_i\cd (\lfloor \frac{k}{p_i} \rfloor-1) ,\dots, p_i\cd 1$. Now we need once again to factor out the $p_i$ products in $\lfloor \frac{k}{p_i} \rfloor , \lfloor \frac{k}{p_i} \rfloor-1 ,\dots, 1$. Hence,  as before, we conclude that in these  ${\lfloor \frac{k}{p_i} \rfloor}$ numbers only $\lfloor \frac{{\lfloor \frac{k}{p_i} \rfloor}}{p_i}\rfloor =  {\lfloor \frac{k}{p_i^2} \rfloor}$ are divisible by $p_i$. Continuing in a similar fashion we obtain the equation  $t_i= \lfloor \frac{k}{p_i} \rfloor + \lfloor \frac{k}{p_i ^ 2} \rfloor + \lfloor \frac{k}{p_i ^ 3} \rfloor + \dots \le \frac{k}{p-1}$.
\end{proofclaim}

Consider the $\poly(n)$-size circuit for $M^{2^n}$ that exists by assumption. Since $M$ is divisible by every prime number between 1 and $2^n$, and since every prime factor of $(2^n)!$ is clearly at most $2^n$, we get that $M^{2^n}$ is divisible by the $2^n$-th power of \emph{each} prime factor of $(2^n)!$. By \autoref{cla:prime-factorisation} the power of every prime factor of $(2^n)!$ is at most $2^n$, and so  $M^{2^n}$ is divisible by $(2^n)!$. We conclude that there are nonzero numbers $c_n\in\N$  such that the sequence $\{c_n\cd(2^n)!\}_{n=1}^\infty$ is computable by a sequence of constant-free algebraic circuits of size $\poly(n)$, that is,   $\tau(c_{n}\cd(2^n)!) \le n^c$ for some constant $c$ independent of $n$.
It remains to show that not only the  multiples  of factorials \emph{of powers of $2$ }are easy, but also the  multiples  of factorials of \emph{all} natural numbers are easy.

%It is obvious that the power of each prime in the prime factorization of $(2^n)!$ is not greater than $2^n$. Then, because there is an algebraic circuit of size $poly(n)$, computing $M ^ {2^n}$, there is an algebraic circuit of size $poly(n)$, computing $c_{2^n} (2^n)!$. So, we know that $\tau(c_{2^n}(2^n)!) \le (n + a)^b$ for some constants $a$ and $b$.

 For every  natural number $m$, let $n\in\N$ be such that  $2^{n - 1} \le m \le 2^n$. Because $(2^n)!$ is clearly divisible by $m!$, there exists some $c_m\in\N$, such that $c_{n} \cd(2^n)!=c_m\cd m!$, where $c_n$ is the natural number for which we have showed the existence of $\poly(n)$-size constant-free circuit computing $c_{n}\cd(2^n)!$. Hence, this same circuit also computes  $c_m \cd m!$, meaning  that $\tau(c_m \cd m!) \le n^b \le (\log (2 m))^b \le (\log m)^c$, for some constants $b$ and $c$ independent of $m$.
%
%We  now conclude that there are nonzero natural numbers $c_m$, for every $m\in\N$, such that $\tau(c_m\cd m!) \le (\log m + a)^c$, for some constants $a,c$ independent of $m$.
\end{proof}

\para{Why does  an IPS lower bound on BVP not lead to Extended Frege  lower bounds?}
Given that IPS (of possibly exponential degree) simulates Extended Frege (EF) \cite{GP14,PT16}, it is interesting to consider why our conditional IPS lower bound for the BVP does not imply a conditional EF lower bound. Simply put, the answer is that the BVP is not a propositional tautology (or a direct translation of one). More precisely, there is no apparent way to translate the BVP into a propositional tautology for which a short EF proof translates into a short IPS refutation of the BVP.
In other words, we can encode the BVP as a propositional tautology stating that the carry-save addition of the $n$ numbers in the BVP has sign-bit 0 (and hence the addition is positive), but the problem is that there is no apparent way to efficiently derive in IPS this encoding from the BVP principle itself (!), \emph{because from an  equation like $f=0$ we cannot in general efficiently derive in IPS  that the sign-bit of $f$ is zero}, as we now explain.

One can think of the following translation of the BVP into a propositional tautology: we consider the addition of $n$ numbers $2^{i-1}x_i$, for $x_i\in\bits$ and $i=1,\dots,n$. Each $2^{i-1}x_i$ is written as a bit vector $\mathbf v_i$ of at most $n$ bits, in the two's complement notation. Each bit in $\mathbf v_i$ can be written as a polynomial-size boolean circuit in the single boolean variable $x_i$.  Using carry-save addition we can construct a polynomial-size in $n$ boolean circuit  $C$ computing the sign-bit  of the addition of these $n$ bit-vectors $\sum_{i=1}^n \mathbf v_i$ (this is done as in \autoref{sec:Reasoning-about-Bits-within-Algebraic-Proofs}). Now, the BVP can be  encoded as the tautology $C\equiv \false$ (namely, the sign bit of the addition is logically equivalent to \emph{false} (equivalently, 0); note that since $2^{i-1}\ge 0$, for all $i$, this is indeed a tautology).

Apparently, there is a polynomial-size in $n$ EF proof of $C\equiv\false$ (using  basically the same ideas as in \autoref{sec:Reasoning-about-Bits-within-Algebraic-Proofs}). The question is whether we can turn this short EF proof into a short IPS refutation of the BVP. And apparently the answer   is ``no!". The reason is that for this to work we first need to derive in IPS from the BVP the (arithmetization of the boolean circuit) $C\equiv\false$. But such a  derivation is already morally equivalent to refuting the BVP itself. In other words, there is no apparent way to efficiently derive  from the given BVP equation  $\sum_{i=1}^n 2^{i-1}x_i+1=0$ any statement expressing a specific property pertaining to a  \emph{single} bit in the bit vector representation of $\sum_{i=1}^n 2^{i-1}x_i+1$ (as a function of the input boolean variables \vx), and specifically no apparent way to derive $C\equiv\false$. Our main technical \autoref{lem:main-binary-value-lemma} in \autoref{sec:Reasoning-about-Bits-within-Algebraic-Proofs}, shows only that we can efficiently derive in IPS from the BVP equation a statement  about the \emph{collective value of the bits}, namely that if  $\sum_{i=1}^n 2^{i-1}x_i+1$ is denoted by $f$ we have the following:
\medskip

~~\begin{minipage}[c]{0.9\textwidth}
\textbf{if}
~~(i) the bits of $f$ (computed as polynomial-size circuits of the input variables $\vx$) are $z_1,\dots,z_m$; and
(ii) we know that $f=0$;

\textbf{then}
~~(iii) $\sum_{i=1}^{m-1} 2^{i-1}z_i-2^{m-1}z_m=0$ (where the left hand side is the value of the bit-vector $z_1,\dots,z_m$ that represents an integer in the two's complement scheme).
\end{minipage}
\medskip

\nind\emph{But from the equation  in (iii) we apparently cannot conclude anything about the \emph{individual} sign-bit $z_m$}.
%\hirsch{We can! For $f$ with small coefficients, by considering all possible cases.}

%===

\subsection{IPS over Rational Functions and the $\tau$-Conjecture}\label{sec:rational-field-lower-bounds}

Here we deal with IPS operating over the field of rational functions in the (new) indeterminate $y$. This will allow us to formulate an interesting  version of the binary value principle. Roughly speaking, this version expresses the fact that the BVP is ``almost always'' unsatisfiable.  More precisely, consider the equation  $\sum_{i=1}^n 2^{i-1}x_i = y$. This equation is unsatisfiable for most $y$'s, when $y$ is substituted by an  element from \Q. In the setting of IPS refutations over the field of rational functions in the indeterminate $y$, refuting $\sum_{i=1}^n 2^{i-1}x_i = y$ would correspond to refuting $\sum_{i=1}^n 2^{i-1}x_i = M$, for all $M\in\Q$ but a finite set of  numbers from \Q\ (see below).

%(Note that for \emph{inequalities}, in contrast to equalities, the two inequalities $\sum_{i=1}^n 2^{i-1}x_i \le -1$ and $\sum_{i=1}^n 2^{i-1}x_i \ge 2^n$ cover all contradictory cases, in the sense that these inequalities express the fact that the BVP is indeed almost always unsatisfiable, namely is unsatisfiable for most elements in \Q.) 

We shall  prove a super-polynomial lower bound on   $\sum_{i=1}^n 2^{i-1}x_i = y$, over the fields of rational functions in the indeterminate $y$,  subject to the $\tau$-conjecture. 

%The system is IPS with rational functions in the ``variable'' $y$ as its basic field, and we will use this ``variable'' in the initial unsatisfiable equations as well. (We refer to a ``variable'' in parentheses since it is a different object from the input variables $\vx$.)

\begin{definition}[$\mathbb{Q}$-rational functions]\label{def:Q[y]}
Denote by $\mathbb{Q}(y)$ the field of \emph{$\mathbb{Q}$-rational functions in $y$}, that is, all functions $f(y):\Q\to\Q$ such that there exist $P(y)\in\Q[y]$ and nonzero $Q(y)\in\Q[y]$
with $f(y) = \frac{P(y)}{Q(y)}$.
\end{definition}

In particular, in this system one can consider refutations of $\sum_{i=1}^n 2^{i-1}x_i+y=0$, where $x_i$ are boolean variables (the boolean axioms $x_i^2-x_i=0$ are included in the initial axioms). In this section we will be using the concept of a \emph{linear} IPS refutation (proved to be polynomially equivalent to general IPS, at least in the unit-cost model where each coefficient appearing  in an algebraic circuit is considered to contribute only $1$ to the overall size of the  circuit), defined in Forbes et al.~\cite{FSTW16}:
%\begin{definition}[Ideal Proof System over $\mathbb{Q}(y)$]
%Suppose we have some fixed (non-boolean) variable $y$ and
%some polynomials $F_1(\vec{x}), F_2(\vec{x}), \ldots, F_m(\vec{x}) \in \mathbb{Q}(y)[x_1, \ldots, x_n]$
%(which means that each $F_i(x_1, \ldots, x_n) = \sum c_k x_1^{\alpha_{1 k}}  x_2^{\alpha_{2 k}}\ldots x_n^{\alpha_{n k}}$ where $c_k \in \mathbb{Q}(y)$) where each $x_i$ is not equal to $y$.
\begin{definition}[\cite{FSTW16}]\label{def:IPS-LIN}
An \emph{IPS-LIN${}_{\mathbb{Q}(y)}$-certificate} of the unsatisfiability of
a system of polynomial equations $F_1(\vec{x}) = F_2(\vec{x}) =  \cdots = F_m(\vec{x})= 0$
is a set of polynomials $(H_1(\vec x), \ldots H_m(\vec x))$, where each $H_i(\vec x) \in \mathbb{Q}(y)[x_1, \ldots, x_n]$, such that $F_1(\vec{x}) \cdot H_1(\vec{x}) + \dots + F_m(\vec{x}) \cdot H_m(\vec{x}) = 1$ (as a formal polynomial equation).
\end{definition}

%Given that the algebraic closure of $\mathbb{Q}(y)$ involves infinite series, we restrict ourselves to a computationally meaningful case of boolean $x_i$'s, that is, 

We assume that the  $F_j$'s include the boolean axioms $x_i^2-x_i$
for every variable $x_i$.
The system is complete for this case, as discussed in the next subsection.

Note that once we have an IPS-LIN${}_{\mathbb{Q}(y)}$-certificate of a system of equations that include the boolean axioms and the equation  $\sum_i x_ia_i=y$, we can substitute for $y$ any constant except for the finite number of roots of the denominators of $H_i$'s and get a valid IPS-LIN${}_{\mathbb{Q}}$ refutation.
\emph{Thus an IPS-LIN${}_{\mathbb{Q}(y)}$-certificate can be viewed as a single proof for all but finitely many values of $y$.}

To show this concept is meaningful, we first show a short IPS-LIN$_{\mathbb{Q}(y)}$ proof of
$\sum_{i=1}^n a_i x_i=y$ for small scalars $a_i$. Then we demonstrate
a lower bound for $a_i=2^{i-1}$ modulo the $\tau$-conjecture.

We start with precise definitions of the complexity of IPS-LIN${}_{\mathbb{Q}(y)}$-proofs and related completeness issues.

%===
\subsubsection{Complexity Considerations}

To compute elements of $\mathbb{Q}(y)$, we extend the definition of a constant-free circuit
by allowing the use of gates for $y$. The definition of a constant-free circuit over $\mathbb{Q}(y)$
thus mimics \autoref{def:const-free-q}, but we allow now the constant $y$ in addition to $-1,0,1$
(note that $y$ is indeed a constant in terms of polynomials in $\mathbb{Q}(y)[x_1,\ldots,x_n]$).

Note that the system we consider is complete for the boolean case,
that is, for every inconsistent (over $\{0,1\}$) system
of polynomial equations involving coefficients in $\mathbb{Q}(y)$
that contains the boolean equation $x^2-x=0$ for every variable $x$,
there is an IPS-LIN${}_{\mathbb{Q}(y)}$ certificate.
Indeed, the system remains inconsistent in the algebraic closure of $\mathbb{Q}(y)$
(as every solution must satisfy $x^2-x=0$),
and thus by Hilbert's Nullstellensatz
the linear system that has $H_i$'s coefficients as variables and expresses that $H_i$'s form a valid certificate,
has a solution. Since the coefficients of this linear system are in $\mathbb{Q}(y)$, so must be (some) solution.

\begin{remark}
In \cite{FSTW16} Forbes et al.~prove that IPS is polynomially equivalent to IPS-LIN when the scalars are given for free (that is, do not count towards the proof size).
We believe that a similar transformation can be  made for the constant-free model to establish the equivalence between IPS${}_{\mathbb{Q}(y)}$ and IPS-LIN${}_{\mathbb{Q}(y)}$;
however, we did not verify the details.
\end{remark}

%\iddo{I don't see where the constant-freeness of the proofs makes a difference. Since the lower bound below doesn't talk about constant-free IPS-LIN refutations...}
%===
\subsubsection{Upper Bound}
\begin{proposition}
Suppose we have  a system of polynomial equations $F_0(\vec{x}) = F_1(\vec{x}) = F_2(\vec{x}) =  \cdots = F_n(\vec{x})= 0$, $F_i \in \mathbb{Q}(y)[x_1, \ldots, x_n]$, where $F_0(\vec x) = y + \sum_{i = 1}^{i = n} a_i x_i$, $a_i \in \mathbb{N}$ and $F_i(\vec x) = x_i^2 - x_i$. Then there is an IPS-LIN$_{\mathbb{Q}(y)}$ certificate of this system consisting of $H_0(\vec{x}), \ldots, H_{n + 1}(\vec x)$, where each $H_i(\vec x)$ can be computed by a constant-free algebraic circuit over $\mathbb{Q}(y)$ of size $\poly(a_1 + \cdots + a_n)$.
\end{proposition}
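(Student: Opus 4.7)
The strategy is to first construct $H_0$ as a polynomial in $S := \sum_i a_i x_i$ that inverts $y + S$ on the integer values $\{0,1,\ldots,A\}$ (with $A := \sum_i a_i$) that $S$ can take on Boolean inputs, and then certify the resulting remainder by induction on $n$. Let $Q(T) \in \mathbb{Q}(y)[T]$ be the unique polynomial of degree $A$ with $Q(k)(y + k) = 1$ for each integer $k \in \{0,1,\ldots,A\}$, and set $H_0(\vec{x}) := Q(S)$. By a degree count, $(y + T)Q(T) - 1$ is a polynomial in $T$ of degree $A+1$ vanishing at $0, 1, \ldots, A$, so it equals $c(y) \cdot \Pi(T)$ where $\Pi(T) := \prod_{k=0}^A(T - k)$ and $c(y) \in \mathbb{Q}(y)$ is the leading coefficient of $Q$. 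A $\poly(A)$-size constant-free circuit over $\mathbb{Q}(y)$ for $Q(T)$ is provided by Lagrange's formula $Q(T) = \sum_{k=0}^A \frac{1}{y+k} \prod_{j \neq k} \frac{T-j}{k-j}$, whose $A+1$ summands each require size $O(A)$ (the integer constants $k-j$ and $k$ are built from $1$ in $O(\log A)$ steps, and the only divisions are by the constants $y+k$ and $k-j$). Substituting $T = S$ yields $H_0 F_0 = 1 + c(y)\,\Pi(S)$, so it remains only to write $\Pi(S) = \sum_{i=1}^n R_i^{(n)}(x_i^2 - x_i)$ with each $R_i^{(n)}$ of $\poly(A)$ size; then $H_i := -c(y)\, R_i^{(n)}$ completes the certificate.

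The $R_i^{(n)}$ are built inductively on $n$. In the base case $n=1$, the extremal factors of $\Pi(a_1 x_1) = \prod_{k=0}^{a_1}(a_1 x_1 - k)$ combine as $(a_1 x_1)(a_1 x_1 - a_1) = a_1^2(x_1^2 - x_1)$, giving $R_1^{(1)} := a_1^2 \prod_{k=1}^{a_1-1}(a_1 x_1 - k)$ of size $O(a_1)$. For the inductive step, let $S_{n-1} := \sum_{i<n} a_i x_i$ and $A_{n-1} := A - a_n$, and decompose $\Pi(S) = (1 - x_n)\Pi(S_{n-1}) + x_n \Pi(S_{n-1} + a_n) + (x_n^2 - x_n)\, g_n$, where $g_n$ is the polynomial quotient of the (vanishing at $x_n = 0, 1$) residue by $x_n(x_n-1)$. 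The key observation is that both restrictions factor through the inductive polynomial $\Pi_{n-1}(S_{n-1}) := \prod_{k=0}^{A_{n-1}}(S_{n-1} - k)$: directly, $\Pi(S_{n-1}) = \Pi_{n-1}(S_{n-1}) \cdot W_0$ with $W_0 := \prod_{k=A_{n-1}+1}^A(S_{n-1} - k)$, and after reindexing $k \mapsto k - a_n$ in the product, $\Pi(S_{n-1} + a_n) = \Pi_{n-1}(S_{n-1}) \cdot W_1$ with $W_1 := \prod_{j=1}^{a_n}(S_{n-1} + j)$. The inductive hypothesis gives $\Pi_{n-1}(S_{n-1}) = \sum_{i<n} R_i^{(n-1)}(x_i^2 - x_i)$, and since this factor is shared between both restrictions I can set $R_i^{(n)} := [(1 - x_n) W_0 + x_n W_1] \cdot R_i^{(n-1)}$ for $i < n$ and $R_n^{(n)} := g_n$. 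Each $R_i^{(n)}$ grows only additively in size (by $O(a_n + n)$ per inductive step), so telescoping yields $|R_i^{(n)}| = \poly(A)$.

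The main obstacle is bounding $|g_n|$: a naive multilinearization of the residue blows up exponentially, so an efficient route is needed. Since $g_n$ is a polynomial in $x_n$ of degree $A - 1$ (its numerator has degree $A + 1$ in $x_n$ and is divisible by $x_n(x_n-1)$), I compute it by Lagrange interpolation from evaluations at the $A$ constant points $x_n = 2, 3, \ldots, A + 1$. Each such evaluation $g_n(c, \vec{x}_{<n}) = \left[\Pi(S_{n-1} + a_n c) - (1-c)\Pi(S_{n-1}) - c\, \Pi(S_{n-1} + a_n)\right] / (c(c-1))$ has a constant-free circuit of size $O(A + n)$, and Lagrange interpolation over $A$ points contributes a $\poly(A)$ overhead, giving $|g_n| = \poly(A)$. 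Combining, $H_0 := Q(S)$ and $H_i := -c(y)\, R_i^{(n)}$ for $i \geq 1$ form a constant-free IPS-LIN$_{\mathbb{Q}(y)}$ certificate of total size $\poly(A)$, as required.
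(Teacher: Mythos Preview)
Your proof is correct and takes a genuinely different route from the paper's.

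The paper proceeds by a single induction that simultaneously builds certificates $H_{k,i,t}$ for all the shifted systems $y+t+\sum_{j\le k}a_jx_j=0$ (one for each admissible integer shift $0\le t\le\sum_{j>k}a_j$), using the case split $x_{k+1}\in\{0,1\}$ to combine the certificates for shifts $t$ and $t+a_{k+1}$; the base case is simply $H_{0,0,t}=1/(y+t)$. Your argument instead decouples the problem: you first pin down $H_0=Q(S)$ explicitly via Lagrange interpolation so that $(y+S)Q(S)-1=c(y)\Pi(S)$, and then separately certify that $\Pi(S)$ lies in the boolean ideal with small cofactors, again via a case split on $x_n$ but now exploiting the factorisations $\Pi(S_{n-1})=\Pi_{n-1}(S_{n-1})\cdot W_0$ and $\Pi(S_{n-1}+a_n)=\Pi_{n-1}(S_{n-1})\cdot W_1$ to reduce to the inductive hypothesis. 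The quotient $g_n$ is recovered by a second Lagrange interpolation in $x_n$.

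Both arguments are driven by the same boolean case analysis, but the paper's version is more elementary (no interpolation, no auxiliary polynomials $W_0,W_1,g_n$) at the cost of tracking a whole family of shifted certificates in parallel; your version gives a clean closed form for $H_0$ and isolates the ``boolean correction'' step, at the cost of invoking Lagrange twice and having to bound $|g_n|$ carefully. A minor remark: you should note explicitly that $c(y)$ is computed directly from the Lagrange formula as $\sum_k \frac{1}{(y+k)\prod_{j\ne k}(k-j)}$ (rather than extracted as a ``leading coefficient'' from a circuit for $Q$, which is not a circuit operation), but this is clearly $\poly(A)$-size as needed.
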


\begin{proof}
We will construct our proof by induction:

\bfseries Base case: \mdseries suppose $G_{0, t}(\vec x) = y + t$, $t \in \mathbb{N}, t \le \sum_{i = 1}^{n} a_i$. Then we can take $H_{0, 0, t}(\vec{x}) = \frac{1}{y + t}$ and $H_{0, i, t} = 0$ where $1 \le i \le n$, $i \in \mathbb{N}$ as an IPS-LIN${}_{\mathbb{Q}(y)}$ certificate for a system of polynomial equations $G_{0, t}(\vec{x}) = F_1(\vec{x}) = F_2(\vec{x}) =  \cdots = F_n(\vec{x})= 0$.

\bfseries Induction step: \mdseries suppose we have already built certificates $H_{k, 0, t}(\vec x), \ldots, H_{k, n, t}(\vec x)$ for the systems of polynomial equations $G_{k, t}(\vec{x}) = F_1(\vec{x}) = F_2(\vec{x}) =  \cdots = F_n(\vec{x})= 0$ where $G_{k, t}(\vec x) = y + t + a_1 x_1 + \ldots a_k x_k$, $t \in \mathbb{Z}$, $0\le t \le \sum_{i = k + 1}^{n} a_i$. Now we will build certificates $H_{k + 1, 0, t}(\vec x), \ldots, H_{k + 1, n, t}(\vec x)$ for the systems of polynomial equations $G_{k + 1, t}(\vec{x}) = F_1(\vec{x}) = F_2(\vec{x}) =  \cdots = F_n(\vec{x})= 0$ where $G_{k  + 1, t}(\vec x) = y + t + a_1 x_1 + \ldots a_{k + 1} x_{k + 1}$, $t \in \mathbb{Z}$, $0\le t \le \sum_{i = k + 2}^{n} a_i$. There are the following cases:
\begin{enumerate}
    \item If $i > k + 1$, then we will take $H_{k + 1, i, t}(\vec x) = 0$.
    \item If $i = k + 1$, then we will take $H_{k + 1, i, t}(\vec x) = a_{k + 1} (H_{k, 0, t}(\vec x) - H_{k, 0, t + a_{k + 1}}(\vec x))$.
    \item If $0 \le i < k + 1$, then we will take $H_{k + 1, i, t}(\vec x) = x_{k + 1} H_{k, i, t + a_{k + 1}}(\vec x) + (1 - x_{k + 1}) H_{k, i, t}(\vec x)$.
\end{enumerate}

The main idea of this construction is the case analysis for $x_{k+1}=0$, $x_{k+1}=1$, that is,
$$
(y + t + a_1 x_1 + \ldots + a_{k + 1} x_{k + 1}) x_{k + 1} - a_{k + 1} (x_{k + 1}^2 - x_{k + 1}) = (y + t + a_{k + 1} + a_1 x_1 + \ldots a_{k} x_{k}) x_{k + 1}
$$
and
$$
(y + t + a_1 x_1 + \ldots + a_{k + 1} x_{k + 1}) (1 - x_{k + 1}) + a_{k + 1} (x_{k + 1}^2 - x_{k + 1}) = (y + t + a_1 x_1 + \ldots a_{k} x_{k}) (1 - x_{k + 1}).
$$

which means that (using the induction hypothesis)
\begin{multline*}
((y + t + a_1 x_1 + \ldots a_{k + 1} x_{k + 1}) x_{k + 1} - a_{k + 1} (x_{k + 1}^2 - x_{k + 1}))H_{k, 0, t + a_{k +1}}(\vec x) +\\ +(x_1^2 - x_1)x_{k + 1}H_{k, 1, t + a_{k + 1}}(\vec x) + \ldots+
(x_k^2 - x_k)x_{k + 1}H_{k, k, t + a_{k + 1}}(\vec x) = x_{k + 1}
\end{multline*}
and
\begin{multline*}
((y + t + a_1 x_1 + \ldots a_{k + 1} x_{k + 1}) (1 - x_{k + 1}) + a_{k + 1} (x_{k + 1}^2 - x_{k + 1}))H_{k, 0, t}(\vec x) +\\ +(x_1^2 - x_1)(1 - x_{k + 1})H_{k, 1, t }(\vec x) + \ldots+
(x_k^2 - x_k)(1 - x_{k + 1})H_{k, k, t }(\vec x) = 1 - x_{k + 1}
\end{multline*}

Summing up the equations for both cases, due to the fact that $(1 - x_{k + 1}) + x_{k + 1} = 1$ we
get $G_{k+1,t} H_{k+1,0,t} + \sum_{i=1}^n F_i H_{k+1,i,t}=1$.

On each step of our induction we create no more than $\poly(a_1 + \cdots + a_n)$ new gates computing algebraic circuits for $\mathbb{Q}(y)[x_1, \ldots, x_n]$-polynomials $H_{k, 0, t}(\vec x), \ldots, H_{k, n, t}(\vec x)$. Thus, we can take $H_0(\vec{x}) = H_{n, 0, 0}(\vec x), \ldots, H_n(\vec{x}) = H_{n, n, 0}(\vec x)$ to conclude our proof.

\end{proof}

%==
\subsubsection{Lower Bound}

\begin{lemma}
Suppose we have a constant-free circuit $C$ over $\mathbb{Q}(y)$ of size $M$
computing a polynomial in $\mathbb{Q}(y)[x_1,\dots,x_n]$
that is a rational function
$f(y, x_1, \ldots, x_n)$.
Then there are two
constant-free circuits over $\mathbb{Z}$
of size less than $3M$ computing polynomial functions
$P(y, x_1, \ldots, x_n) \in \mathbb{Z}[y, x_1,\dots,x_n]$ and $Q(y) \in \mathbb{Z}[y]$
such that $f(y, x_1, \ldots, x_m) = \frac{P(y, x_1, \ldots, x_n)}{Q(y)}$.
\end{lemma}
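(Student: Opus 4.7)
The plan is to process the gates of $C$ in topological order and, for each gate $g$, produce two sub-circuits over $\mathbb{Z}$ computing polynomials $P_g(y,\vec x) \in \mathbb{Z}[y,\vec x]$ and $Q_g(y) \in \mathbb{Z}[y]$ satisfying $\widehat g = P_g/Q_g$ as rational functions. The key structural invariant I will maintain is that $Q_g$ is variable-free, i.e.\ $Q_g \in \mathbb{Z}[y]$. This is the natural analogue (now with the extra constant $y$) of the invariant used in the proof of \autoref{prop:Q-circuit-to-Z-circuit}, and it survives every gate type precisely because the definition of a constant-free circuit over $\mathbb{Q}(y)$ forces the denominator sub-circuit of every division gate to contain no $\vec x$-variables.

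The inductive construction is the usual ``clear the denominators'' procedure. For a leaf labelled by a constant in $\{-1,0,1,y\}$ or by a variable $x_i$, take $P_g$ to be the leaf itself and let $Q_g$ be a shared gate computing $1$. For an addition $g = g_1 + g_2$ set $P_g := P_{g_1} Q_{g_2} + P_{g_2} Q_{g_1}$ and $Q_g := Q_{g_1} Q_{g_2}$; for a multiplication $g = g_1 \cdot g_2$ set $P_g := P_{g_1} P_{g_2}$ and $Q_g := Q_{g_1} Q_{g_2}$; and for a division $g = g_1 \div g_2$, observe that since $g_2$'s sub-circuit contains no $\vec x$-variables, a trivial secondary induction shows $P_{g_2} \in \mathbb{Z}[y]$, so setting $P_g := P_{g_1} Q_{g_2}$ and $Q_g := Q_{g_1} P_{g_2}$ preserves the main invariant. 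At the output gate $g^\star$ of $C$, I take $(P,Q) := (P_{g^\star}, Q_{g^\star})$; the identity $\widehat C = P/Q$ then follows by a straightforward induction on the topological order.

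For the size bound, each addition gate of $C$ contributes at most three new gates to the sub-circuit for $P$ (two cross-products and a sum) and one new gate to the sub-circuit for $Q$; each multiplication or division gate of $C$ contributes one new gate to each of $P$ and $Q$; and the leaves of $C$ contribute one gate apiece to $P$, together with a single shared $1$-gate serving as $Q_g$ for every leaf. A direct count then bounds the size of each of the two output circuits by strictly less than $3M$. The only step that is not routine bookkeeping is the verification of the invariant at division gates, where the structural restriction from \autoref{def:const-free-q} (inherited by constant-free circuits over $\mathbb{Q}(y)$) that division denominators be variable-free is precisely what is needed to keep $Q_g$ inside $\mathbb{Z}[y]$; everything else is a standard structural induction combined with the indicated gate-count.
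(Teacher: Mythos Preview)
Your proof is correct and takes essentially the same approach as the paper: both process the gates in topological order, maintaining numerator/denominator pairs $(P_g,Q_g)$ with $Q_g\in\mathbb{Z}[y]$ via the standard cross-multiplication formulas for $+,\times,\div$, and both use the variable-free restriction on division denominators to keep $Q_g$ free of the $x$-variables. Your gate-by-gate size accounting also mirrors the paper's.
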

\begin{proof}
Consider any topological order $g_1, \ldots, g_M$ on the gates of $C$.
We will gradually rewrite our circuit starting from $g_1$.
Assume that we have already done the job for $g_1,\ldots,g_k$,
that is, for each $i \le k$ there are appropriate algebraic circuits
for polynomial functions $P_i(y, x_1, \ldots, x_n) \in \mathbb{Z}[y, x_1, \ldots, x_n]$ and $Q_i(y) \in \mathbb{Z}[y]$
such that $g_i = \frac{P_i}{Q_i}$. We now augment these circuits
to compute the polynomials for $g_k$.

Here are all possible cases:
\begin{enumerate}
    \item $g_1$ is a variable $x_j$, then $P = x_j$, $Q = 1$.
    \item $g_1$ is a constant from $\mathbb{Q}(y)$ (that is, $0,-1,1,y$), then $P$ computes this constant, and $Q=1$.
    \item $g_{k + 1} = \frac{g_i}{g_j}$, where $i, j \le k$. In this case $P_i \in \mathbb{Q}(y)$ because of the structure of our $\mathbb{Q}(y)[x_1,\dots,x_n]$-circuit.
    Then $Q_{k + 1} = Q_{i} P_{j}$ and $P_{k + 1} = P_{i} Q_{j}$ and sizes of the circuits for $P_{k + 1}$ and $Q_{k + 1}$ are less than $3 \cdot (k + 1)$.
    \item $g_{k + 1} = g_i \cdot g_j$, where $i, j \le k$.  Then $Q_{k + 1} = Q_{i} Q_{j}$ and $P_{k + 1} = P_{i} P_{j}$ and sizes of the circuits for $P_{k + 1}$ and $Q_{k + 1}$ are less than $3 \cdot (k + 1)$.
    \item $g_{k + 1} = g_{i} + g_{j}$. Then $P_{k + 1} = P_{i} Q_{j} + P_{j} Q_{i}$ and $Q_k = Q_i Q_j$ and sizes of the circuits for $P_{k + 1}$ and $Q_{k + 1}$ are less than $3 \cdot (k + 1)$.
\end{enumerate}
We can conclude our proof by taking $P_M$ and $Q_M$ as $P$ and $Q$, respectively.
\end{proof}

\begin{theorem}\label{thm:second-lower bound}
Suppose a system of polynomial equations $F_0(\vec{x}) = F_1(\vec{x}) = F_2(\vec{x}) =  \cdots = F_n(\vec{x})= 0$, $F_i \in \mathbb{Q}(y)[x_1, \ldots, x_n]$, where $F_0(\vec x) = y + \sum_{i = 1}^{i = n} 2^{i - 1} x_i$ and $F_i(\vec x) = x_i^2 - x_i$, has an IPS-LIN$_{\mathbb{Q}(y)}$ certificate $H_0(\vec{x}), \ldots, H_{n}(\vec x)$, where each $H_i(\vec x)$ can be computed by a $\poly(n)$-size constant-free algebraic circuit over $\mathbb{Q}(y)$. Then, the $\tau$-conjecture is false.
\end{theorem}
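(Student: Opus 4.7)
The plan is to extract a small constant-free univariate polynomial $Q(y)\in\mathbb{Z}[y]$ from the $\poly(n)$-size $\mathbb{Q}(y)$-certificate, and show that it must vanish on at least $2^n$ distinct integer points, contradicting the $\tau$-conjecture.

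First, I would apply the preceding lemma to each $H_i(\vec{x})\in\mathbb{Q}(y)[\vec{x}]$, producing constant-free $\mathbb{Z}$-circuits of size $\poly(n)$ for polynomials $P_i(y,\vec{x})\in\mathbb{Z}[y,\vec{x}]$ and nonzero $Q_i(y)\in\mathbb{Z}[y]$ such that $H_i=P_i/Q_i$. Let $Q(y):=\prod_{i=0}^{n}Q_i(y)$ and $P_i^{\star}(y,\vec{x}):=P_i(y,\vec{x})\cdot\prod_{j\neq i}Q_j(y)$; note that $Q\in\mathbb{Z}[y]$ is nonzero, has a constant-free circuit of size $\poly(n)$, and so $\tau(Q)\le\poly(n)$. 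Multiplying the identity $\sum_{i=0}^{n}F_i(\vec{x})H_i(\vec{x})=1$ by $Q(y)$ clears denominators, yielding the polynomial identity
\begin{equation*}
F_0(\vec{x})\cdot P_0^{\star}(y,\vec{x})\;+\;\sum_{i=1}^{n}F_i(\vec{x})\cdot P_i^{\star}(y,\vec{x})\;=\;Q(y),
\end{equation*}
now purely in $\mathbb{Z}[y,\vec{x}]$.

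Next, I would specialize $\vec{x}$ to each boolean assignment. For any $\vec{b}\in\{0,1\}^n$ the boolean axioms $F_i(\vec{b})=b_i^2-b_i=0$ kill every term with $i\ge 1$, leaving
\begin{equation*}
\bigl(y+k(\vec{b})\bigr)\cdot P_0^{\star}(y,\vec{b})\;=\;Q(y),\qquad\text{where }k(\vec{b})=\sum_{i=1}^{n}2^{i-1}b_i.
\end{equation*}
As $\vec{b}$ ranges over $\{0,1\}^n$, the value $k(\vec{b})$ ranges over all integers in $\{0,1,\dots,2^n-1\}$, so for each such $k$ the linear polynomial $y+k$ divides $Q(y)$ in $\mathbb{Z}[y]$. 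Hence $Q$ has at least $2^n$ distinct integer roots, i.e.\ $z(Q)\ge 2^n$. Combined with $\tau(Q)=\poly(n)$ this yields $(1+\tau(Q))^c=\poly(n)\not\ge 2^n=z(Q)$ for any fixed constant $c$ and sufficiently large $n$, contradicting the $\tau$-conjecture.

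The only non-routine step is the bookkeeping to guarantee that $Q(y)$ depends only on $y$ (not on $\vec{x}$) while remaining nonzero and of polynomial constant-free complexity; this is handled by the lemma's clause that the denominator circuit produced from any $\mathbb{Q}(y)$-circuit contains no $\vec{x}$-variables and computes a nonzero polynomial, together with the elementary fact that the product of $n+1$ nonzero univariate polynomials, each with a $\poly(n)$-size constant-free circuit, is a nonzero univariate polynomial with a $\poly(n)$-size constant-free circuit. Everything else is substitution and divisibility in $\mathbb{Z}[y]$, which is routine.
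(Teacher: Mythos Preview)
Your proposal is correct and follows essentially the same route as the paper: apply the preceding lemma to write each $H_i=P_i/Q_i$, clear denominators by multiplying through by $Q(y)=\prod_i Q_i(y)$, and then specialize $\vec{x}$ to boolean values so that the boolean axioms vanish and the remaining factor $y+k$ forces $Q$ to vanish at $2^n$ integer points, contradicting the $\tau$-conjecture since $\tau(Q)=\poly(n)$. The only cosmetic difference is that you phrase the last step as ``$y+k$ divides $Q(y)$ in $\mathbb{Z}[y]$'' whereas the paper simply evaluates at the integer points $y\in\{0,-1,\ldots,-(2^n-1)\}$; these are equivalent.
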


\begin{proof}
Based on the above lemma, we can say that there are polynomials $P_i(y, x_1, \ldots, x_n) \in \mathbb{Z}[y, x_1, \ldots, x_n]$ and $Q_i(y) \in \mathbb{Z}[y]$ such that $H_i = \frac{P_i}{Q_i}$ for every $i$. Also we know that
$$
(y + x_1 + \dots + 2^{n - 1} x_n) \frac{P_0}{Q_0} + (x_1 ^ 2 - x_1) \frac{P_1}{Q_1} + \dots + (x_n ^ 2 - x_n) \frac{P_n}{Q_n} = 1
$$

So we can derive that

\begin{equation}\label{eq:final-rational-lb}(y + x_1 + \dots + 2^{n - 1} x_n) P_0 \prod_{j = 1}^{n} Q_j + (x_1 ^ 2 - x_1) P_1 Q_0 \prod_{j = 2}^{n} Q_j + \dots + (x_n ^ 2 - x_n) P_n  \prod_{j = 0}^{n - 1} Q_j =  \prod_{j = 0}^{n} Q_j
\end{equation}

Denote $ \prod_{j = 0}^{n} Q_j$ by $Q(y)$. From the above lemma we know that there is a constant-free circuit
over $\mathbb{Z}$ of size $\poly(n)$ for $Q(y)$.
Furthermore, for any integer $y$ such that $0\ge y > -2^n$,
there are values for $x_i$ (namely, the bit expansion of $-y$) such that the left hand side  of \ref{eq:final-rational-lb} is zero,
and hence $Q(y) = 0$. However, it contradicts the $\tau$-conjecture.
\end{proof}

%%%%%%%%%%%%%%%%%%%%%%%%%%%%%%%%%%%%%%%%%%%%%%%%%%%
%%%%%%%%%%%%%%%%%%%%%%%%%%%%%%%%%%%%%%%%%%%%%%%%%%%
%%%%%%%%%%%%%%%%   CONE PROOF SYSTEM  %%%%%%%%%%%%%
%%%%%%%%%%%%%%%%%%%%%%%%%%%%%%%%%%%%%%%%%%%%%%%%%%%
%%%%%%%%%%%%%%%%%%%%%%%%%%%%%%%%%%%%%%%%%%%%%%%%%%%

\section{The Cone Proof System}\label{sec:CPS}
Here we define a very strong semi-algebraic proof system under the name Cone Proof System (CPS for short).
Similarly to other semi-algebraic systems, CPS establishes that a collection of polynomial equations  $\assmp:=\{f_i=0\}_i$ and polynomial inequalities $\ineqassmp:=\{h_i\ge 0\}_i$ are unsatisfiable over 0-1 assignments (or over real-valued assignments, when desired (\ref{def:real-CPS})). In the spirit of the Ideal Proof System (IPS) of Grochow and Pitassi \cite{GP14} we are going to define  a refutation in CPS as a \emph{single} algebraic circuit. Specifically, a CPS refutation is a circuit $C$ that computes a polynomial that results from \emph{positive-preserving} operations such as addition and product applied between the inequalities \ineqassmp\ and themselves, as well as the use of nonnegative scalars and arbitrary squared polynomials. In order to simulate in CPS the free use of equations from \assmp\ we incorporate in the set of inequalities \ineqassmp\ the inequalities $f_i\ge 0$ and $-f_i\ge 0$ for each $f_i=0$ in \assmp\ (we show that this enables to add freely products of the polynomial $f_i$ in CPS proofs, namely working n the ideal of \assmp; see \ref{sec:Boolean-CPS-Simulates-Boolean-IPS}).
%\hirsch{Where do we show it? I did not find.}\iddo{Added ref.}

We need to formalise the concept of a cone as an algebraic circuit. For this we first introduce the  notion of a \emph{squaring gate}: let $C$ be a circuit and $v$ be a node in $C$. We call $v$ a \demph{squaring gate} if $v$ is a product gate whose two incoming edges are emanating from the \emph{same} node. Therefore, if we denote by $w$ the single node that has two outgoing edges to the squaring gate $v$, then $v$ computes $w^2$ (that is, the square of the polynomial computed at node $w$).

The following is a definition of a circuit computing polynomials in the cone of the \vy\  variables:

%\begin{definition}[Conic circuit] \label{def:old-conic-circuit} We say that an algebraic circuit $C$ computing a polynomial over $\R[\vx]$ is a \demph{conic circuit} if for every negative constant that appears as a leaf $u$ in $C$ the following holds: every path $p$ from $u$ to the output gate of $C$ contains a squaring gate.
%\end{definition}

\begin{definition}[\vy-conic circuit] \label{def:conic-circuit-for-vx} Let $R$ be an ordered ring. We say that an algebraic circuit $C$ computing a polynomial over $R[\vx,\vy]$ is a \demph{conic circuit with respect to \vy}, or \demph{$\vy$-conic}  for short,  if for every negative constant or a variable $x_i\in\vx$, that appears as a leaf $u$ in $C$, the following holds: every path $p$ from $u$ to the output gate of $C$ contains a squaring gate.
\end{definition}

Informally, a \vy-conic circuit is a circuit in which we assume that the \vy-variables are nonnegative, and any other input that may be negative (that is, a negative constant or an \vx-variable) must be part of a squared sub-circuit.
Here are examples of \vy-conic circuits (over \Z):
$y_1$, $~y_1 \cd y_2$, ~$3+2y_1$, ~$(-3)^2$, ~$x_1^2$, ~$(3\cd -x_1+1)^2$, $(x_1y_2+y_1)^2$, ~ $y_1+\dots+y_n$. On the other hand, $-1$, ~$x_1$,~ $x_1\cd y_2$, ~$-1\cd y_1+4$ ~are examples of non \vy-conic circuits.

%$x_1\cd x_2$, $(-3)^2$, $((-3)+x_1)^2$, $((-1)\cd x_1+1)^2+5x_2$. On the other hand, $-1\cd x_1+4$ is an example of a non-conic circuit.

 Note that if the \vy-variables of a \vy-conic circuit are assumed to take on non-negative values, then a \vy-conic circuit computes only non-negative values. It is evident that \vy-conic circuits can  compute all and only polynomials that are in the cone of the \vy\ variables. In other words, if $\vy$ are the variables $y_1,\dots,y_m$, then there exists a \vy-conic circuit  $C(\vx,\vy)$ that computes the polynomial $p(\vx,\vy)$ iff $p(\vx,\vy)\in\cone(y_1,\dots,y_m)\subseteq R[\vx,\vy]$. Similarly, if $\overline f(\vx)$ is a sequence of polynomials $f_1(\vx),\dots,f_m(\vx)$, then there exists a \vy-conic circuit  $C(\vx,\vy)$ such that $C(\vx,\overline f(\vx))=p(\vx)$ iff $p(\vx)$ computes a polynomial in $\cone\left(\overline f(\vx)\right)\subseteq R[\vx]$.

Deciding if a given circuit is \vy-conic is in  deterministic polynomial-time (see \autoref{cla:verify-conic-circuits} in \autoref{sec:basic-prop-CPS}).

%In order to be able to verify the correctness of CPS proofs we need the following:

%and polynomial inequalities \ineqassmp, where polynomial equalities can be multiplied by any polynomial, and polynomial inequalities can be multiplied by themselves or by non-negative polynomials (see below), and additionally we can add squares freely   Here we define a new  semi-algebraic proof system, that is an extension of \sos. It is basically \sos\ measured by algebraic circuit size.

%\begin{definition}[\soscirc]
%An \soscirc\ refutation is an \sos\ refutation as in \autoref{def:PS} where identity \ref{eq:ps} is written as a sum of algebraic circuits. More precisely, An \soscirc\ refutation of $\{f_i(\vx)=0\}_{i=1}^m$ is a collection of algebraic circuits    $\{G_i\}_{i=1}^m,\{T_i\}_{i=1}^n$ and $\{H_j\}_{j\in J}$  that compute the polynomials $\{g_i\}_{i=1}^m,\{t_i\}_{i=1}^n$ and $\{h_j\}_{j\in J}$, respectively, such that identity \ref{eq:ps} holds (as a formal polynomial identity). The \demph{size} of the \soscirc\ refutation is defined to be its  total algebraic circuit size, that is $\sum_{i=1}^m|G_i|+\sum_{i=1}^n|T_i|+\sum_{j\in J}|H_i|$.

%\end{definition}

%%\hirsch{What about multiplying of squares by $f_i$'s? Or by initial INequalities? Or a general setting of ``dynamic SoS'' like in Toni's paper? Or CPS by analogy with IPS?}

%Now  we extend the definition of \soscirc\ to obtain a new semi-algebraic proof system we call the Cone Proof System (CPS for short).

Similar to IPS, we start by defining a boolean version for the Cone Proof System (\ref{def:cps}), which is a refutation system for sets of polynomial equations and  inequalities with no \bits\ solutions. It is easy to define the corresponding \emph{real version} of CPS that refutes sets of polynomial equations and inequalities that are unsatisfiable \emph{over the reals}. This is done simply by taking out the boolean axioms from the system (\ref{def:real-CPS}).

%namely , a boolean version and a real version. The first version below is the  boolean one and the second version (\autoref{sec:algebraic-CPS}) is for polynomials with real-valued variables.
\emph{By default, when referring to CPS we will be speaking about the boolean version}.

\begin{definition}[(boolean)\ Cone Proof System (CPS)]\label{def:cps}
Consider a collection of polynomial equations $\assmp:=\{f_i(\vx)=0\}_{i=1}^m$, and a collection of polynomial inequalities $\ineqassmp:=\{h_i(\vx)\ge  0\}_{i=1}^\ell$, where all polynomials are from $\R[x_1,\ldots,x_n]$. Assume that the following \demph{boolean axioms} are included in the assumptions: \assmp\ includes $x_i^2-x_i=0$, and  $\ineqassmp$ includes the inequalities $x_i\ge0$ and $1-x_i\ge 0$, for every variable $x_i\in\vx$. Suppose further that $\ineqassmp$ includes (among possibly other inequalities) the two inequalities $f_i(\vx)\ge0$ and $-f_i(\vx)\ge0$ for every equation $f_i(\vx)=0$ in $\assmp$ (including the equations $x_i^2-x_i=0$).
A \demph{CPS proof of $p(\vx)$ from $ \assmp$ and $\ineqassmp$}, showing    that $\assmp,\ineqassmp$ semantically imply the polynomial inequality $p(\vx)\ge 0$
%\iddo{Fix the mistake in IPS definition where C is in a polynomial ring!!}\hirsch{don't follow the remark now...}\iddo{Hmm...don't rememebr what I\ meant now}%
over $0$-$1$ assignments, is an algebraic circuit $C(\vx,\vy)$ computing a polynomial in $\R[\vx,y_1,\dots,y_\ell]$, such that:\footnote{Note that formally we do not make use of the assumptions \assmp\ in CPS, as we assume always that the inequalities that correspond to the equalities in \assmp\ are present in \ineqassmp. Thus, the indication of \assmp\ is done merely to maintain clarity and distinguish (semantically) between two kinds of assumptions: equalities and inequalities.}\vspace{-5pt} \begin{enumerate}
 \item $C(\vx,\vy)$ is a \vy-conic circuit;
and\vspace{-5pt}
\item $C(\vx,\ineqassmp)=p(\vx)$,
\end{enumerate}\vspace{-5pt}
where equality 2 above is a formal polynomial identity\footnote{That is, $C(\vx,\ineqassmp)$ computes the polynomial $p(\vx)$.} in which the left hand side means that we substitute $h_i(\vx)$ for $y_i$, for $i=0,\dots,\ell$.

The \demph{size} of a CPS proof is the size of the circuit $C$. The variables $\vy$ are the \emph{placeholder} \emph{variables} since they are used as a placeholder for the axioms. A CPS proof of $-1$ from $\assmp,\ineqassmp$ is called a \demph{CPS refutation  of \assmp, \ineqassmp}.% (note that in this case  it must hold that $\assmp,\ineqassmp$ have no common $0$-$1$ solutions).
\end{definition}
In what follows, we will write ``conic'' instead of ``\vy-conic'' where the meaning of \vy\ is clear from the context.

In order to refute propositional formulas in conjunctive normal form (CNF) in CPS we use
the algebraic translation of CNFs (\autoref{def:algebraic-transl-CNF}), which is expressed as a set of polynomial equalities. We show in \ref{prop:CPS-from-equtional-CNF-to-inequalities-CNF} that CPS can efficiently translate CNF formulas written as polynomial equalities to the standard way in which CNF formulas are written as polynomial \emph{inequalities}.
The real version of CPS is defined as follows:

\begin{definition}[Real CPS]\label{def:real-CPS}
The \emph{real CPS} system is defined similarly to boolean CPS except that boolean axioms are \emph{not} included in the assumptions. That is, \assmp\ does \emph{not} include $x_i^2-x_i=0$, and  $\ineqassmp$ does \emph{not} include the inequalities $x_i\ge0$ and $1-x_i\ge 0$ (for  variables $x_i\in\vx$).
\end{definition}

\para{Remark about CPS.}

\begin{enumerate}
\item
CPS should be thought of as a way to derive valid polynomial inequalities from a set of polynomial equations and inequalities from $\R[\vx]$. Loosely speaking, it is a circuit representation of the \ps\ proof system (\autoref{def:PS}), though in CPS the assumptions \assmp, \ineqassmp\ (more precisely, placeholder variables of which) may have powers greater than one. That is, whereas \ref{eq:ps} is \emph{multilinear} in the $h_i$ variables, CPS is not.
\iddo{Make sure you reference the propositional version of SoS. As otherwise, the general algebraic version of SoS does not have the $x_i\ge 0 $ assumptions in the inequalities; hence, it seems slight weaker or different than CPS}
\hirsch{SoS is not referenced right here. The simulation does mention boolean (not propositional) version.}
\item We add the boolean axioms $x_i^2-x_i\ge 0$, $x_i-x_i^2\ge 0$, $x_i\ge0$ and $1-x_i\ge 0$ to \ineqassmp\ as a default.
Hence, the system can refute any set of inequalities (and equalities) that is unsatisfiable over 0-1 assignments.

\item Formally, CPS proves only consequences from an initial set of inequalities \ineqassmp\ and not equalities \assmp. However, we are not losing any power doing this. First, observe that:
\begin{quote}An assignment satisfies  \assmp, \ineqassmp\ iff it satisfies \ineqassmp\ (in the case of boolean CPS an assignment that satisfies either $\assmp$ or $\ineqassmp$ must be a 0-1 assignment). \end{quote}Second, we encode equalities $f_i(\vx)=0\in\assmp$ using the two inequalities $f_i(\vx)\ge 0$ and $-f_i(\vx)\ge 0$ in \ineqassmp. As shown in \autoref{thm:CPS-sim-IPS}  \emph{this way we can derive any polynomial in the \emph{ideal} of \assmp, and not merely in the cone of \assmp,} as is required for equations (and similar to the definition of SoS), with at most a polynomial increase in size (when compared to IPS).
% proof systems deriving polynomials in the ideal of \assmp\ written as algebraic circuits like IPS).%

To derive polynomials in the ideal of \assmp\ we need to be able to  multiply  $f_i$ and $-f_i$ (from \ineqassmp) by \emph{any} (positive) polynomial in the \vx\ variables. There are two ways to achieve this in boolean CPS: the first, is to use the boolean axiom $x_i\ge 0$ in \ineqassmp. This allows to product $f_i$ and $-f_i$ by any polynomial in the \vx-variables. The second way, the one we use in \ref{prop:single-minus-gate-at-top} to show that CPS simulates IPS in \ref{thm:CPS-sim-IPS}, is  different and does not necessitate the addition of the axiom $x_i\ge 0$ to \ineqassmp. Since the second way does not use the boolean axiom  $x_i\ge 0$ in \ineqassmp\ we can use it in  real CPS, hence allowing the derivation of polynomials in the ideal of \assmp\ within real CPS.

 %Had we not added this axiom, this would not be possible.

%Real CPS is similar to boolean CPS  in \autoref{def:cps}, only that we had to break the CPS proof into two separate circuits to preserve soundness with respect to real-valued assignments, leading to slightly different conditions in conditions 2 and 3.%

%\item The reason is that we need to allow for arbitrary polynomials in the \vx\ variables to product the equations in \assmp\ (namely, to allow the derivation of polynomial in the ideal generated by \assmp). In the propositional case (\autoref{def:cps}) this was done within the \vy-conic circuit $C(\vx,\vy),$ because we provided the axiom $x_i\ge 0$ in \ineqassmp, for every variables $x_i$. This allowed the free use of the  \vx\ variables (not necessarily within a squared sub-circuit). However,  in real CPS the variables \vx\ may be negative hence we cannot allow arbitrary polynomials in the \vx\ variables to product polynomials in \ineqassmp\ (while we can allow them to product polynomials in \assmp).

\end{enumerate}

% \begin{proposition}\label{prop:algebraic-CPS-complete-and-sound}
%  Real CPS is a complete and sound proof system for the language of unsatisfiable sets of polynomial equations and inequalities  over \R. More precisely, given two sets of polynomial equalities and inequalities \assmp, \ineqassmp, respectively, where all polynomials are from $\R[x_1,\dots,x_n]$, there exists a real CPS refutation of \assmp, \ineqassmp, iff there is no \R-valued assignment satisfying both \assmp, \ineqassmp.
% \end{proposition}
% \begin{proof}
% The proof is similar to \autoref{prop:CPS-complete-and-sound}. We omit the details.
% \end{proof}
% \begin{theorem} Real  CPS simulates real \ps.
% \end{theorem}
% \begin{proof}
% This is similar to \autoref{thm:CPS-simulates-PS}. We omit the details. \mar{check!}
% \end{proof}

% WE took out this obsolte version: Observe further      that in the conic circuit $C(\vx,\vy)$ we may assume that $C(\vx,\vy)$ is non-negative only if we assume that the variable $\vx$ are non-negative. Hence, CPS is sound only for 0-1 assignments to the $\vx$ variables. In order to get a CPS-like system for general real-valued assignments we need to change slightly the definition (see \autoref{def:algebraic-cps}).

\bigskip

%with the circuit $C(\vx,\vy)$ allowing more than is allowed in Definition  \autoref{def:PS}; namely, in $C(\vx,\vy)$ we can derive polynomials in the cone of the inequalities by multiplying a product of initial inequalities in \ineqassmp\ by \emph{any} monotone  polynomial in the \vx\ variables (e.g., $x_1+3x_2$)  while in SoS as in \autoref{def:PS} we can only multiply products of initial inequalities by sum of squares of polynomials in \vx\ (e.g., $x_1^2+(x_1-2x_2)^2$).
To exemplify a proof in CPS we provide the following simple proposition:

\begin{proposition}\label{prop:CPS-proof-of-BVP}
CPS admits a linear size refutation of the binary value principle \bvpn.
\end{proposition}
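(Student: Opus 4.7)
The plan is to produce an explicit CPS refutation exploiting the fact that the inequalities corresponding to the equation and the boolean axioms together make the required contradiction a single positive linear combination. Write the BVP as the polynomial equation $f(\vx) := 1 + \sum_{i=1}^n 2^{i-1} x_i = 0$. By the definition of CPS, the set $\ineqassmp$ then contains the two inequalities $f(\vx) \ge 0$ and $-f(\vx) \ge 0$, together with the boolean inequalities $x_i \ge 0$ for every $i \in [n]$. Denote by $y_0$ the placeholder variable for $-f(\vx) \ge 0$ and by $y_i$ the placeholder variable for $x_i \ge 0$, $i=1,\dots,n$.

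First I would take the refutation circuit to be
\[
C(\vx,\vy) \;:=\; y_0 \;+\; \sum_{i=1}^{n} 2^{i-1}\, y_i.
\]
Substituting the axioms into the placeholder variables gives
\[
C(\vx,\ineqassmp) = \bigl(-1 - \tsum_{i=1}^n 2^{i-1}x_i\bigr) + \tsum_{i=1}^n 2^{i-1}x_i = -1,
\]
which matches the requirement of a CPS refutation.

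Next I would verify that $C(\vx,\vy)$ is $\vy$-conic: no $\vx$-variable appears as a leaf, and the only numerical constants used are the positive integers $1, 2, 2^2, \dots, 2^{n-1}$, so no negative constant occurs on any leaf, and a fortiori no path from a negative leaf to the output exists. Hence every leaf is either a $\vy$-variable or a positive constant, which trivially satisfies \autoref{def:conic-circuit-for-vx}.

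Finally I would bound the size. Writing $C$ via Horner's scheme as $y_0 + (y_1 + 2(y_2 + 2(y_3 + \dots + 2\,y_n)\cdots))$, and building the single constant $2 = 1+1$ once, yields an algebraic circuit of size $O(n)$, even in the constant-free model. No step is really an obstacle here; the only thing to check carefully is that the CPS definition indeed lets us use the axiom $-f(\vx) \ge 0$ from $\ineqassmp$ (which is guaranteed because the BVP is given as an equation $f(\vx)=0$ in $\assmp$, and the definition of CPS then requires both $f \ge 0$ and $-f \ge 0$ to be present in $\ineqassmp$).
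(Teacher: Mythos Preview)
Your proof is correct and takes essentially the same approach as the paper: the paper's refutation is exactly the conic circuit $\left(\sum_{i=1}^n 2^{i-1} y_i\right) + y_{n+1}$ with $y_i$ standing for the axiom $x_i\ge 0$ and $y_{n+1}$ for $-S\ge 0$, identical to yours up to indexing. Your additional remarks on the constant-free size via Horner's scheme and the explicit check of the $\vy$-conic property are welcome but not strictly needed for the stated claim.
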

\begin{proof}
To simplify notation we put $S:=\sum_{i=1}^n 2^{i-1}\cd x_i +1$.
Let \assmp$:=\left\{ S =0,x_1^2-x_1=0,\dots,x_n^2-x_n=0 \right\}$. Then by the definition of CPS \ineqassmp\ contains the following correspondent $4n+2$ axioms ($4n$ boolean axioms, and two axioms for the single non-boolean axiom in \assmp):
\begin{multline*}
\ineqassmp:=\left\{x_1\ge 0, \dots, ~x_n\ge 0, ~-S\ge 0, ~S \ge 0, ~x_1^2-x_1 \ge 0,\dots,x_n^2-x_n\ge 0,  \right. \\
\left. -( x_1^2-x_1)\ge 0,\dots,~-(x_n^2-x_n)\ge 0, ~ 1-x_1\ge 0, \dots, ~1-x_n\ge 0\right\}.
\end{multline*}
Therefore, the CPS refutation of the binary value principle is defined as the following \emph{\vy-conic} circuit:
\begin{equation}\label{eq:CPS-bvp-proof}
C(\vx,\vy):= \left(\sum_{i=1}^n 2^{i-1}\cd y_i\right)  + y_{n+1} ,
\end{equation}
where the placeholder variables $y_1,y_2,\dots,y_{4n+2}$ correspond to the axioms in \ineqassmp\ in the order they appear above. Observe indeed that $C(\vx,\ineqassmp)= C(\vx,x_1,\dots,x_n,-S,\dots)=\left(\sum_{i=1}^n 2^{i-1}\cd x_i\right)  + (-S) =-1$.
\end{proof}

Observing the CPS refutation  in \ref{eq:CPS-bvp-proof} we see that it is in fact already an \sos\ refutation:
\begin{corollary}\label{cor:sos-short-refutation-of-BVP}
\sos\ admits a linear monomial size refutation of the binary value principle \bvpn.  \end{corollary}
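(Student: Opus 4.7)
The plan is to observe that the CPS refutation exhibited in \ref{prop:CPS-proof-of-BVP} is already, up to a trivial rewriting, a linear-monomial-size \sos\ refutation. Recall from \ref{def:sos-proof-system} that \sos\ is simply boolean \ps\ where we restrict the index sets $\zeta \subseteq J$ in \ref{eq:ps} to singletons $|\zeta|=1$ (so we may multiply each inequality by a sum of squares, but never form products of two or more distinct inequalities from \ineqassmp).

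Inspecting the CPS circuit $C(\vx,\vy) = \sum_{i=1}^n 2^{i-1}\cdot y_i + y_{n+1}$ from \ref{eq:CPS-bvp-proof}, one sees that every \vy-placeholder appears with multiplicity exactly one, multiplied by a nonnegative rational scalar. Substituting the axioms and using $2^{i-1} = \bigl(\sqrt{2^{i-1}}\,\bigr)^2$ over $\R$, we obtain the formal polynomial identity
\begin{equation*}
\sum_{i=1}^n \bigl(\sqrt{2^{i-1}}\bigr)^2 \cdot x_i \;+\; 1^2 \cdot (-S) \;=\; -1,
\end{equation*}
where $S = \sum_{i=1}^n 2^{i-1}x_i + 1$ and $x_i \ge 0$, $-S \ge 0$ are axioms in \ineqassmp. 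This fits the \sos\ template of \ref{def:PS} with $I = \emptyset$ (no equational axioms are invoked), and with $I_\zeta = \{1\}$ for each singleton $\zeta$ pointing at one of the $n+1$ inequalities $\{x_i \ge 0\}_{i=1}^n \cup \{-S \ge 0\}$, and $I_\zeta = \emptyset$ otherwise.

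To tally the monomial size as defined in \ref{def:PS}, note that each $\sum_{i \in I_\zeta} s_{i,\zeta}^2$ is a single constant $2^{i-1}$ (respectively, $1$), which contains exactly one monomial, and there are no $p_i$ contributions at all. Summing over the $n+1$ active singletons $\zeta$ gives total monomial size $n+1$, i.e., linear in $n$, as required. No step is delicate here; the only point worth flagging is that we are working over $\R$ (as in \ref{def:PS}), which is precisely what allows us to absorb the positive scalars $2^{i-1}$ as squares, and that we must verify the identity is formal (which it is, since substituting the axioms into $C$ already yielded the formal identity $-1$ in the proof of \ref{prop:CPS-proof-of-BVP}).
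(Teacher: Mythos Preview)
Your proof is correct in spirit and follows exactly the paper's approach: inspect the CPS refutation in \ref{eq:CPS-bvp-proof} and observe it is already an \sos\ refutation. There is one small technical slip worth noting. You write that $-S \ge 0$ is an axiom in $\ineqassmp$ and set $I=\emptyset$. That is the \emph{CPS} convention (\autoref{def:cps} augments $\ineqassmp$ with $\pm f_i \ge 0$ for each equation $f_i=0$), but in boolean \sos/\ps\ as defined in \autoref{def:prop-PS} only $x_i\ge 0$ and $1-x_i\ge 0$ are adjoined to $\ineqassmp$; the BVP equation $S=0$ sits in $\assmp$. The fix is immediate: move the $-S$ term to the ideal part with $p_1=-1$, giving
\[
(-1)\cdot S \;+\; \sum_{i=1}^n \bigl(\sqrt{2^{i-1}}\bigr)^2 \cdot x_i \;=\; -1,
\]
which is a genuine \sos\ refutation with monomial size $1+n$, still linear.
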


% \begin{definition}  [Propositional Cone Proof System]
% The \emph{Propositional Cone Proof System} is a proof system for unsatisfiable CNF formulas written as in \autoref{def:algebraic-transl-CNF}, that is identical  to the general cone proof system (\autoref{def:gencps}), except that the following  axioms are added to $\assmp,\ineqassmp$:\vspace{-5pt}
% \begin{description}
% \item[~~~~~~boolean axioms:]
% For every propositional variable $x_i\in\vx$, include the equation $x_i^2-x_i=0$ in $\assmp$
% and the inequalities $x_i\ge0$ and $1-x_i\ge0$ in $\ineqassmp$.
% \end{description}
% \end{definition}

%Note that adding the boolean axiom $x_i\ge 0$ to \ineqassmp\ is unnecessary, because the definition of CPS allows the occurrence of $x_i$ leaves in the conic circuit $C_1(\vx,\vw)$ anyway. On the other hand, $1-x_i$ cannot occur freely in the conic circuit, unless it is part of a sum of squares.

\subsection{Basic Properties of CPS and Simulations}
\label{sec:basic-prop-CPS}
CPS is a very strong proof system. In fact, of all proof systems with randomized polynomial-time verification, given concretely, \iddo{What about WF of Jerabek?}\hirsch{I guess IPS simulates it. I do not know if it is written anywhere.}\iddo{unsure..since WF has non-tautological axioms apparently...}\hirsch{I was thinking about WF as EF operating with circuits; the particular rules of showing circuits ``similarity'' are not important since IPS allows to rewrite circuits for the same polynomial.} \iddo{no, WF is something else. You meant CF}to the best of our knowledge CPS is the strongest to have been defined   to this date.
%\footnote{Apart from considering formal theories of set theory like ZFC as proof systems for propositional logic; in this case, current state of knowledge provides no simulation of ZFC by CPS, nor a simulation of CPS by ZFC--CHECK!.}.%
CPS simulates IPS as shown below, while we show that IPS simulates CPS only under the condition that there are short IPS refutations of the binary value principle.

\begin{proposition}[CPS is sound and complete]\label{prop:CPS-complete-and-sound} Let $R$ be an ordered ring.
CPS (resp.,~real CPS) is a complete and sound proof system for the language of sets of polynomial equations and inequalities over $R$ that have no 0-1 (resp.~$R$-solutions) solutions. More precisely, given two sets of polynomial equalities and inequalities \assmp, \ineqassmp, respectively, where all polynomials are from $R[x_1,\dots,x_n]$, there exists a CPS (resp.~real CPS) refutation of \assmp, \ineqassmp, iff there is no \bits\  assignment (resp.~$R$-assignment) satisfying both \assmp\ and \ineqassmp\ (iff there is no \bits\  assignment (resp.~$R$-assignment) satisfying \ineqassmp).
\end{proposition}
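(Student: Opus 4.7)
The plan is to prove soundness and completeness separately: soundness reduces to a non-negativity invariant for $\vy$-conic circuits, while completeness reduces to the Positivstellensatz theorem together with a routine conversion of its conclusion into a conic circuit. For soundness, the key lemma I would first establish is that the polynomial computed by a $\vy$-conic circuit $C(\vx,\vy) \in R[\vx,\vy]$ lies in $\cone(y_1,\dots,y_\ell) \subseteq R[\vx,\vy]$. I would prove this by induction on the structure of $C$: a leaf labelled by a positive constant or a $\vy$-variable is directly in the cone; a leaf labelled by a negative constant or an $\vx$-variable contributes to the output only through a squaring gate above it, and squares lie in $\cone(\emptyset)$, which is contained in any cone by \autoref{def:cone}; and sums and products preserve cone membership, again by \autoref{def:cone}. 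Substituting $y_i \mapsto h_i(\vx)$ then places $C(\vx,\ineqassmp)$ in $\cone(h_1(\vx),\dots,h_\ell(\vx)) \subseteq R[\vx]$. If there were an assignment $\valpha$ satisfying $\ineqassmp$, then all $h_i(\valpha) \ge 0$, and a straightforward induction on the cone construction shows that any polynomial in $\cone(\{h_i(\vx)\})$ evaluates non-negatively at $\valpha$, contradicting the identity $C(\vx,\ineqassmp) = -1$. In the boolean case, the boolean axioms in $\ineqassmp$ ensure that satisfying $\ineqassmp$ over $R$ forces 0-1 values, so ``no 0-1 solution'' and ``no $R$-solution of $\ineqassmp$'' coincide; moreover, the presence of $\pm f_i \ge 0$ for each $f_i = 0$ in $\assmp$ makes satisfying $\ineqassmp$ equivalent to satisfying both $\assmp$ and $\ineqassmp$, justifying the parenthetical equivalence in the statement.

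For completeness, assume $\assmp, \ineqassmp$ have no common satisfying assignment (over $R$ for real CPS, over $\bits$ for boolean CPS). In the boolean case, the boolean axioms $x_i^2-x_i=0$ in $\assmp$ force any real solution to be 0-1, so in both cases there is no $\R$-solution of $\assmp, \ineqassmp$. By the Positivstellensatz theorem from \autoref{sec:semi-algebraic-proofs-prelim}, we obtain polynomials $f \in \langle \assmp \rangle$ and $h \in \cone(\ineqassmp)$ with $f + h = -1$. The cone element $h$ is representable directly as a $\vy$-conic circuit by following the inductive definition of the cone: products and sums become product and sum gates, squares become squaring gates, and cone generators become leaves pointing to the corresponding placeholders. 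For the ideal part $f = \sum_i p_i f_i$, I would use that $\ineqassmp$ contains both $f_i \ge 0$ and $-f_i \ge 0$, together with the algebraic identity $4 p \cdot f_i = (p+1)^2 f_i + (p-1)^2 (-f_i)$, which expresses $p\, f_i$ as an element of $\cone(\{f_i,-f_i\})$ and thus admits a $\vy$-conic circuit. Combining these yields a $\vy$-conic circuit $C$ with $C(\vx,\ineqassmp) = -1$, which is the required CPS refutation.

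The main obstacle I expect is purely bookkeeping: verifying that the $\vy$-conic structure of \autoref{def:conic-circuit-for-vx} is preserved throughout the assembly of $h$ and the ideal-to-cone conversion of each $p_i f_i$. This amounts to checking that every leaf labelled by a negative constant or an $\vx$-variable appears under an explicit squaring gate, which is manifest from the construction, since every such leaf enters only through the explicit $(\cdot)^2$ subcircuits of the translation. Since the proposition claims no size bound, no optimization of the resulting circuit is needed, and the gap between $R$ and $\R$ in the statement is bridged either by the boolean axioms (in the boolean case) or, for real CPS over a subring $R\subseteq\R$, by applying the Positivstellensatz over $\R$ and clearing denominators.
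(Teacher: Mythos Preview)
Your soundness argument is essentially the paper's: both show that a \vy-conic circuit with nonnegative inputs on the \vy\ placeholders evaluates nonnegatively, contradicting $-1$. You spell out the induction through cone membership; the paper asserts the nonnegativity directly. No difference in substance.

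For completeness the routes diverge. The paper does not construct a refutation at all here; it forward-references the later simulation of \ps\ by CPS (\autoref{thm:CPS-simulates-PS}) and lets that carry the load. You instead invoke the Positivstellensatz theorem directly and build the conic circuit by hand, using the identity $4p\,f_i=(p+1)^2 f_i+(p-1)^2(-f_i)$ to push each ideal term into $\cone(\{f_i,-f_i\})$. This is a legitimate and self-contained alternative that avoids the forward reference. One small slip: the identity yields $4p\,f_i$, not $p\,f_i$, so your assembled circuit computes $-4$ rather than $-1$; over $\R$ or $\Q$ you finish by multiplying by the positive constant $\tfrac14$, but you should say so explicitly, and over $\Z$ this step is unavailable (the paper's own handling of rings other than $\R$ is admittedly loose as well). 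The paper's later simulation handles the ideal part with a different device---splitting each $p_i$ into its nonnegative- and negative-coefficient monomial parts and routing them to the placeholders for $f_i\ge 0$ and $-f_i\ge 0$ respectively (and, more generally, the $G_P,G_N$ decomposition of \autoref{prop:single-minus-gate-at-top})---which avoids the factor of $4$. Either trick works; yours is arguably cleaner for a bare existence statement.
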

\begin{proof}
The completeness of  boolean CPS follows from the simulation of propositional \ps\ below (\autoref{thm:CPS-simulates-PS}). The soundness of boolean CPS stems from the following. Assume that $C(\vx,\vy)$ is a CPS refutation of \assmp, \ineqassmp. Recall that an assignment satisfies  \assmp, \ineqassmp\ iff it satisfies \ineqassmp.  Assume by a way of contradiction that  \valpha\ is  a 0-1 assignment to $\vx$ that satisfies \assmp, \ineqassmp. The circuit $C(\vx,\vy)$ is \vy-conic and hence   $C(\valpha,\ineqassmp(\valpha))$ is non-negative assuming that the inputs to the \vy\ variables (that is, $\ineqassmp(\valpha)$) are non-negative. Since $\valpha$ satisfies \ineqassmp\ we know that indeed $h_i(\valpha)\ge 0$, for every $h_i(\vx)\in\ineqassmp$.
%   (Since $\valpha$ satisfies \ineqassmp, the inputs to the conic circuit $C(\valpha,\ineqassmp(\valpha))$  (i.e., $\valpha$, and $\ineqassmp(\valpha)$, where $\ineqassmp(\valpha):=\{h_i(\valpha)\}_{i=1}^\ell$) are  non-negative: indeed, the boolean axioms are clearly non-negative, and members in $\ineqassmp(\valpha)$ are either $h_i(\valpha)$'s which are non-negative since $\valpha$ satisfies \ineqassmp\ meaning that $h_i(\valpha)\ge 0$,  and $f_i(\valpha)=-f_i(\valpha)=0$).
Therefore, $\widehat C(\valpha,\ineqassmp(\valpha))\ge 0$, which contradicts our assumption that $C(\valpha,\ineqassmp(\valpha))=-1$.

The completeness for real CPS follows by similar arguments.
\end{proof}

%The soundness stems from the following: assume otherwise, then there is  a real-valued assignment \valpha\ to $\vx$ that satisfies \assmp,\ineqassmp, while $C_0(\valpha,\assmp)+C_1(\valpha,\ineqassmp)=p(\valpha)<0$. Since $\valpha$ satisfies \assmp, $C_0(\valpha,\assmp)=0$. Since $\valpha$ satisfies \ineqassmp, then the conic circuit $C_1(\valpha,\ineqassmp)$ has inputs (i.e., $\valpha$, and \ineqassmp) that are non-negative, and so $C_1(\valpha,\ineqassmp)\ge 0$. Together, this contradicts $p(\valpha)<0$.
   %\iddo{Need to change all conic to y-conic!!!}\hirsch{simply added that we write conic when y is clear from the context}
 \begin{proposition}\label{prop:CPS-verified-in-RP}
A CPS proof (either real of boolean) can be checked for correctness in probabilistic polynomial-time.
\end{proposition}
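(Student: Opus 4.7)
The plan is to decompose the verification into the two independent tasks prescribed by \autoref{def:cps}: (i) confirming that the given circuit $C(\vx,\vy)$ is $\vy$-conic, and (ii) confirming the polynomial identity $C(\vx,\ineqassmp)=p(\vx)$ (where $p(\vx)=-1$ in the refutation case). Task (i) will be solved deterministically, and task (ii) will be reduced to polynomial identity testing, which lies in $\coRP$. Combining the two gives a $\coRP$ verifier.

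For task (i), I would note that $\vy$-conicness is a purely syntactic property of the underlying DAG. For every gate $v$, define a boolean flag $\psi(v)$ to be \textsf{true} exactly when every directed path from $v$ to the output of $C$ contains a squaring gate. Then $\psi$ can be computed in linear time by a single traversal in reverse topological order: set $\psi(v):=\textsf{true}$ if $v$ is itself a squaring gate, and otherwise set $\psi(v):=\bigwedge_{w} \psi(w)$ where $w$ ranges over the out-neighbours of $v$ (so $\psi(\mathrm{output})$ is \textsf{true} iff the output gate is a squaring gate, and \textsf{false} otherwise). The circuit is $\vy$-conic precisely when $\psi(u)=\textsf{true}$ for every leaf $u$ labelled by a negative constant or by some $x_i\in\vx$. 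This is the claim already alluded to as \autoref{cla:verify-conic-circuits} in the paper, and it runs in deterministic polynomial time in $|C|$.

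For task (ii), I would build a single algebraic circuit $D(\vx)$ computing the polynomial $C(\vx,\ineqassmp(\vx))-p(\vx)$ by plugging the input circuits for $h_1,\dots,h_\ell$ into the $\vy$-placeholders of $C$ and attaching a final subtraction of $p$; the size of $D$ is polynomial in $|C|+\sum_i|h_i|+|p|$. Condition~2 of \autoref{def:cps} is then equivalent to $\widehat D\equiv 0$, that is, a PIT instance on $D$. By Schwartz--Zippel one obtains a one-sided-error randomized test. The only subtle point is that $D$ may compute a polynomial of exponential degree, so a naive rational evaluation produces numbers of exponential bit length; the standard remedy, which I would invoke, is to evaluate $D$ at a random point modulo a random prime of polynomial bit length (or equivalently over a random low-degree extension of a small prime field) of size comfortably exceeding the degree bound, keeping every intermediate value of polynomial bit length. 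Composing the deterministic structural check for (i) with this randomized identity test for (ii) yields the desired probabilistic polynomial-time verifier, and I expect no obstacle beyond packaging these standard ingredients correctly.
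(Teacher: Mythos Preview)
Your proposal is correct and follows the same two-part decomposition as the paper: a deterministic check that $C$ is $\vy$-conic, plus a $\coRP$ polynomial identity test for $C(\vx,\ineqassmp)=p(\vx)$. The only difference is stylistic: for the conic check the paper argues via $\NL$/$\coNL\subseteq\P$ (nondeterministically searching for a ``bad'' squaring-free path), whereas your reverse-topological dynamic programming on the flag $\psi(v)$ is a more direct linear-time implementation of the same idea; and for the identity test you spell out the standard Schwartz--Zippel-with-random-prime details that the paper leaves implicit by pointing to IPS verification.
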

\begin{proof}
Similar to IPS, we can verify condition 2 in \autoref{def:cps}, that is $C(\vx,\ineqassmp)=p(\vx)$, in probabilistic polynomial-time (formally, in  $\mathsf{coRP}$). For condition 1 we need to check that $C$ is a \vy-conic circuit, which can be done in \P\ via the following claim:
\begin{claim}\label{cla:verify-conic-circuits}
There is a polynomial-time algorithm to determine if a circuit $C(\vx,\vy)$ is a \vy-conic circuit or not. \end{claim}

\begin{proofclaim}
We say that a directed path from a leaf $u$ in $C$ holding either a negative constant or an \vx\ variable to the output gate of $C$ is \emph{bad} if the path does not contain any squaring gate.

For each leaf $u$ in $C$ holding either a negative constant or an \vx\ variable we can determine the following property in \NL: there \emph{exists} a bad path from $u$ to the output gate of $C$. This algorithm is in \NL\ simply because nondeterministically we can go along a directed path from $u$ to the output gate and check that no squaring gate was encountered along the way (we only need to record the current node and the current length of the path so to know when to terminate). This means that the complement problem of deciding that there does \emph{not} exist a bad path from $u$ to the output gate is in $\coNL$ which is known to be contained in \P.

Our algorithm thus checks that each of the leaves holding negative constants do not possess any bad path to the output gate, which can be done in polynomial-time by the argument above.
%
%For each leaf $u$ in $C$ holding a negative constant we are going to determine the following property: \emph{no} squaring gate appears on a (directed) path from $u$ to the output gate of $C$ (that is, no such path contains a squaring gate). The algorithm for this problem is in coNL because of the following: there is an algorithm in \NL\ that decides the complement of our problem, namely, decides that there exists a (directed) path from $u$ to the output gate that contains a squaring gate.  node on a from a path there is \emph{no} node if then there is no path
\end{proofclaim}
\end{proof}

%=== LS ===
As a corollary of \autoref{prop:CPS-complete-and-sound} and \autoref{prop:CPS-verified-in-RP} we get that, similar to IPS,  if CPS is p-bounded (namely, admits polynomial-size refutations for every unsatisfiable CNF formula)
then coNP is in MA, yielding in particular the polynomial hierarchy collapse to the third level (cf.~\cite{Pit98,GP14}).
%\iddo{check the precise consequence}\hirsch{done (actually, I think the collapse is to AM, not sure about MA)}

\medskip

\subsubsection{CPS Simulates IPS}\label{sec:Boolean-CPS-Simulates-Boolean-IPS}
%\mar{\hirsch{I have removed the word ``boolean'' from the subsection title!}}
We now show that boolean  CPS simulates boolean IPS
%\mar{took out the word "algebraic"---it's not defined yet, and also it's not defined further in the text. In general I suggest to keep the same text as before as it is consistent I think. Now things may become slightly inconsistent.}
for the language of \bits-unsatisfiable sets of polynomial \emph{equations} over any ordered ring. Similarly, real CPS simulates algebraic IPS over \Q. We translate an input equality $f_i(\vx)=0$
into a pair of inequalities $f_i(\vx)\ge 0 $ and $-f_i(\vx)\ge 0$,
Note that an IPS proof is written as a general algebraic circuit (computing an element of an ideal),
while a CPS proof is written as a more restrictive algebraic circuit, namely as a \vy-conic circuit
(computing an element of a cone).
This means that a priori we cannot (obviously) multiply an inequality by an arbitrary polynomial in CPS. We thus demonstrate how to do it when we have opposite-sign inequalities.
In order to do this, we represent an arbitrary polynomial
as the difference of two nonnegative expressions.
%Specifically, given a placeholder variable $y_i$, a \vy-conic circuit $C(y_i)$
% \hirsch{Are a single-variable circuit and a principle ideal intentional here?}
% can only compute polynomials in $\cone(y_i)$, while a general circuit $G(y_i)$
% can compute any polynomial in the ideal $\langle y_i \rangle $.
%Since we need to be able to multiply  the polynomial equations in \assmp\ by
%any (not necessarily non-negative) polynomial $p(\vx)$,
%we added to the set of inequalities \ineqassmp\ both $f_i(\vx)\ge 0 $ and $-f_i(\vx)\ge 0$,
% so to be able to compute within the \vy-conic circuit $C(\vx,\ineqassmp)$
% both the positive and negative monomials of $p(\vx)$ (as products of $f_i(\vx)$ and $-f_i(\vx)$, respectively);
% and moreover we added the inequality $x_i\ge 0$ to \ineqassmp, for every variable $x_i\in\vx$,
% so that the any (non-negative) polynomial in the \vx\ variables can be multiplied by either $f_i(\vx)$ or $-f_i(\vx)$.
% What we need to make sure of is that breaking up the polynomial $p(\vx)$ into two parts,
% one containing the positive monomials and the other the negative monomials,
% does not increase the algebraic circuit size of $p(\vx)$. We accomplish this task in the next proposition.

\begin{proposition}[minus gate normalisation]\label{prop:single-minus-gate-at-top}
Let $G(\vx)$ be an algebraic circuit computing a polynomial in the $\vx$ variables over \Q. Then, there is an algebraic circuit of the form $G_P(\vx)-G_N(\vx)$ computing the same polynomial as $G(\vx)$ where $G_P$ and $G_N$ are $\emptyset$-conic.
%\footnote{The single minus gate ``at the top'' means formally that the circuit is written as $G_P(\vx)+(-1)\cd G_N(\vx)$.}. %
The size of $G_P$, $G_N$ is at most linear in the size of $G$.
\end{proposition}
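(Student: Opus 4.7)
The plan is to use the polarization identity
\[
G \;=\; \left(\tfrac{1}{2}(G+1)\right)^{2} \;-\; \left(\tfrac{1}{2}(G-1)\right)^{2},
\]
which is verified by direct expansion: $(G+1)^{2}-(G-1)^{2} = 4G$. Accordingly, I would define
$G_{P} := \bigl(\tfrac{1}{2}(G+1)\bigr)^{2}$ and $G_{N} := \bigl(\tfrac{1}{2}(G-1)\bigr)^{2}$, so the proposed decomposition is the circuit with a single subtraction gate at the top feeding off these two.

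First I would spell out the circuit construction. The circuit for $G_{P}$ is built from the given circuit for $G$ by attaching: a new $+1$ leaf, an addition gate computing $G+1$, a new $\tfrac{1}{2}$ leaf, a product gate computing $\tfrac{1}{2}(G+1)$, and finally a squaring gate (that is, a product gate whose two incoming edges emanate from the same node, as in \autoref{def:conic-circuit-for-vx}). The circuit $G_{N}$ is built identically except that the new constant leaf is labelled $-1$ rather than $+1$. This adds only $O(1)$ nodes to $G$, so $|G_{P}|,|G_{N}| \le |G| + O(1)$, which gives the claimed linear size bound (and the complete circuit $G_{P}-G_{N}$ has size $2|G|+O(1)$).

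Next I would verify the $\emptyset$-conicity condition (\autoref{def:conic-circuit-for-vx} with $\vy=\emptyset$). In each of $G_{P}$ and $G_{N}$ the output gate is the outer squaring gate, so every leaf---whether a variable of $\vx$ inside $G$, a negative constant inside $G$, the $\pm 1$ introduced at the top of the construction, or the positive constant $\tfrac{1}{2}$---sits strictly below that squaring gate. Consequently every directed path from any leaf to the output passes through a squaring gate, and in particular the condition holds for all leaves carrying negative constants or $\vx$-variables.

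I do not anticipate any real obstacle: the whole argument is essentially a single algebraic identity combined with constant-size circuit surgery, and we are working over $\Q$ so the constant $\tfrac{1}{2}$ is unproblematic. The only subtle point to keep in mind is that the outer squaring must literally satisfy the paper's definition of a squaring gate, namely a product gate with both incoming edges originating from the same intermediate node computing $\tfrac{1}{2}(G\pm 1)$; this is immediate from the construction described above.
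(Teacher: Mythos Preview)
Your proof is correct and considerably shorter than the paper's. The identity $G=\bigl(\tfrac12(G+1)\bigr)^2-\bigl(\tfrac12(G-1)\bigr)^2$ does the job in one shot, and the $\emptyset$-conicity of each half is immediate because the output gate itself is a squaring gate.

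The paper takes a different route: it builds $G_P,G_N$ by structural induction on $G$, pushing the ``positive/negative'' split through every gate (so $(F+H)_P=F_P+H_P$, $(F\cdot H)_P=F_PH_P+F_NH_N$, etc.), with the base case $x_i\mapsto\bigl(\tfrac12(x_i^2+1),\,\tfrac12(x_i-1)^2\bigr)$. Your wrap-the-whole-circuit-in-a-square trick is more elegant for the proposition as stated. The one thing the inductive construction buys, noted in the Remark after \autoref{thm:CPS-sim-IPS}, is that in the \emph{boolean} setting (where $x_i\ge 0$ is available) the base case collapses to $G_P:=x_i$, $G_N:=0$, so no $\tfrac12$ is ever introduced; this is exactly what lets the simulation go through over an arbitrary ordered ring rather than just $\Q$. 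Your argument needs $\tfrac12$ essentially, so it would not recover that extension---but for the proposition itself, which is stated over $\Q$, your proof is fine.
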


\begin{proof}
This is somewhat reminiscent of Strassen's conversion
%\mar{\hirsch{Unsure of reference to Strassen anymore.} \iddo{Seems still similar.}}
of a circuit with division gate to a circuit with only a single division gate at the top \cite{Str73}. We are going to break inductively each node into a pair of nodes computing the positive and negative parts of the polynomial computed in that node.
%
%consider the circuit $G$ and duplicate every node $u$ in $G$ into two nodes $u_P, u_N$ computing the positive and negative monomials computed in $u$.
%
Formally, we define the circuits  $G_P,G_N$ (that may have common nodes) by induction on the size of $G$ as follows:

\case 1 $G=x_i$, for $x_i \in \vx$. Then, $G_P:=\frac12(x_i^2+1), G_N:=\frac12(x_i-1)^2$.

\case 2 $G=\alpha$, for $\alpha$ a constant in the ring. Then \vspace{-5pt}
\[
\begin{matrix}
G_P:=\alpha,~G_N:=0, & \text{if } \alpha\ge 0; \\
G_P:=0,~G_N:=\alpha, & \text{if } \alpha<0.
\end{matrix}
\]

\case 3 $G=F+H$. Then, $G_P:=F_P+H_P$ and $G_N:=F_N+H_N$.

\case 4 $G=F\cd H$. Then, $G_P:=F_P\cd G_P + F_N\cd G_N$ and $G_N:=F_P\cd G_N + F_N\cd G_P$.

The size of both $G_P, G_N$ is $O(|G|)$, namely linear in the size of $G$. This is because we only add constantly many new nodes in $G_P , G_N$ for any original node in $G$; note that since we  construct a new \emph{circuit} computing the same polynomial as $G$, we can re-use nodes computed already, in case 4: for example, $F_P$ is the same node used in $G_P$ and $G_N$ (hence, indeed, the number of new added nodes for every original node in $G$ is constant). \end{proof}

\begin{theorem}\label{thm:CPS-sim-IPS}
Real CPS simulates algebraic IPS as a proof system for the language of unsatisfiable sets of polynomial equations over $\mathbb{Q}$. In other words, there exists a constant $c$ such that for any polynomial $p(\vx)$ and a set of polynomial equations \assmp,
if $p(\vx)$ has an IPS proof of size $s$ from \assmp\ then there is a CPS proof of $p(\vx)$ from \assmp\ of size at most $s^c$. Furthermore, boolean CPS simulates boolean IPS (for any ordered ring).\end{theorem}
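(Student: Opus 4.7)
The plan is to convert the IPS proof in three stages: reduce it to IPS-LIN form, decompose each polynomial coefficient into its positive and negative conic halves via \autoref{prop:single-minus-gate-at-top}, and then reassemble using the two opposite-sign placeholder variables that CPS attaches to every equality in $\assmp$.

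Concretely, start from an algebraic IPS proof $C(\vx,\vy)$ of $p(\vx)$ from $\assmp=\{f_i=0\}_{i=1}^m$. First, apply the equivalence of IPS and IPS-LIN recalled in \autoref{sec:rational-field-lower-bounds} (from \cite{FSTW16}) to obtain circuits $g_1,\dots,g_m$ over $\mathbb{Q}$, each of size $\poly(s)$, with $\sum_{i=1}^m g_i(\vx)\, f_i(\vx)=p(\vx)$ as a formal polynomial identity. Second, apply \autoref{prop:single-minus-gate-at-top} to each $g_i$ to obtain $\emptyset$-conic circuits $g_{i,P}, g_{i,N}$ satisfying $g_{i,P}-g_{i,N}=g_i$ and $|g_{i,P}|+|g_{i,N}|=O(|g_i|)$. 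Third, recall that in the CPS definition each equality $f_i=0$ of $\assmp$ is witnessed by the pair of inequalities $f_i\ge 0$ and $-f_i\ge 0$ in $\ineqassmp$, with placeholder variables $y_i^+$ and $y_i^-$, respectively; assemble the CPS proof as
$$C'(\vx,\vec{y^+},\vec{y^-}):=\sum_{i=1}^m\bigl(g_{i,P}(\vx)\cdot y_i^+ \,+\, g_{i,N}(\vx)\cdot y_i^-\bigr).$$
Substituting the inequality polynomials for their placeholders yields $\sum_i(g_{i,P}-g_{i,N})f_i=\sum_i g_i f_i=p(\vx)$, as required. For conicness, every $\vx$-leaf or negative-constant leaf of $C'$ lies inside some $g_{i,P}$ or $g_{i,N}$, which are $\emptyset$-conic and therefore route every such leaf through a squaring gate before it ever reaches an outer gate; the outer gates (product by a nonnegative placeholder, then sum) preserve this property, so $C'$ is a $\vy$-conic circuit of size $\poly(s)$.

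For the ``furthermore'' statement about boolean CPS versus boolean IPS over an arbitrary ordered ring, two small modifications suffice. First, the IPS proof may additionally use the boolean-axiom ideal generators $x_j^2-x_j$; these contribute terms of the same shape and are handled identically via the corresponding CPS-included inequalities $x_j^2-x_j\ge 0$ and $x_j-x_j^2\ge 0$. Second, the decomposition of a variable leaf in \autoref{prop:single-minus-gate-at-top} uses $\tfrac12$, which may not lie in the ring; this is overcome by exploiting the boolean axiom $x_j\ge 0$, replacing the decomposition of an $\vx$-leaf by the single placeholder $y_{x_j}^{+}$ for $x_j\ge 0$ (with the $g_{i,N}$ part set to $0$), which keeps the resulting circuit conic without any division. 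The main technical step is therefore the IPS-to-IPS-LIN reduction, which we invoke as a black box from \cite{FSTW16}; beyond that, everything reduces to the routine fact that $\vy$-conicness is preserved under sums, products by $\vy$-placeholders, and sub-circuit substitution.
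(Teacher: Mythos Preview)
Your proposal is correct and takes essentially the same route as the paper's own proof. Both arguments (i) linearize the IPS certificate into a sum of the form $\sum_i y_i\,C_i$, (ii) apply the minus-gate normalisation (\autoref{prop:single-minus-gate-at-top}) to each coefficient, and (iii) reassemble a conic circuit using the paired CPS placeholders for $f_i\ge 0$ and $-f_i\ge 0$. The only difference is packaging: you call step~(i) ``IPS$\to$IPS-LIN'' and cite \cite{FSTW16} as a black box, whereas the paper carries out that very factoring by a short inline claim (writing $C(\vx,\vy)=\sum_i y_i\,C_i(\vx,\vy)$ via iterated variable extraction) and keeps the placeholder variables $\vy$ inside the coefficients until the end. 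Your boolean-case fix (replace the $\tfrac12$-based decomposition of an $x_j$-leaf by the CPS placeholder for the axiom $x_j\ge 0$) is exactly what the paper's remark after \autoref{prop:single-minus-gate-at-top} does.

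One small technical caveat worth noting: by first substituting $\assmp$ for $\vy$ (as IPS-LIN requires) and only then normalising, your resulting conic circuit embeds the axiom circuits $f_j$ inside the normalised coefficients $g_{i,P},g_{i,N}$, so its size is $\poly(s)+O(|\assmp|)$ rather than $\poly(s)$. The paper sidesteps this by applying the normalisation directly to $C_i(\vx,\vy)$ and then renaming the inner $\vy$-leaves to the appropriate CPS placeholders; that yields a CPS circuit whose size is $\poly(s)$ independently of $|\assmp|$, matching the literal $s^c$ bound in the theorem statement. This does not affect correctness, only the precise dependence in the size bound.
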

\begin{remark}
It is easy to see that fractional $\mathbb{Q}$ coefficients are not needed in the case
of boolean systems, as Case 1 in \autoref{prop:single-minus-gate-at-top} above
simplifies to $G_P:=x_i, G_N:=0$ when $x_i$'s are nonnegative. This is the reason boolean CPS simulates boolean IPS over any ordered ring.
\end{remark}

Specifically, if \assmp\ is a set of polynomial equations with no 0-1 satisfying assignments and suppose that there is an  IPS refutation of \assmp\ with size $s$, then there is a CPS refutation of \assmp\ with size at most $s^c$. %\iddo{make sure in cond. IPS simul of CPS we distinguish between propositional and algebraic cases}

\begin{proof}[Proof of \autoref{thm:CPS-sim-IPS}]

%
% Made a mistake of basing the IPS simulation by CPS on the FSTW16 general to Linear % IPS simulation; though it only works for poly-size and poly-degrees.
%
%\mar{Don't need homogenization now!! take it out from prelims}
We are going to simulate both the boolean and the algebraic versions of IPS. The proof in both cases is the same.

 Assume that $C(\vx,\vy)$ is the IPS proof of $p(\vx)$ from $\assmp=\{f_i(\vx)=0\}_{i=1}^{\ell} $, of size $s$, and let $\vy=\{y_1,\dots,y_\ell\}$ be the placeholder variables for the equations in \assmp. We assume for simplicity that if we simulate the \emph{boolean }version of IPS  the boolean axioms \ba \ are  also  part of \assmp\ (while if we simulate the \emph{algebraic} version of IPS these axioms are not part of \assmp).\mar{Put in Prelim: We sometimes write $C(\vx)$ displaying the input variables of a circuit and sometimes we can suppress the explicit mention of the variables.}
%
%By \autoref{prop:single-minus-gate-at-top} we can write $C(\vx,\vy) = C_P(\vx,\vy) - C_N(\vx,\vy)$ where $C_P(\vx,\vy)$ and $C_N(\vx,\vy)$ are conic and the size of $C_P(\vx,\vy),  C_N(\vx,\vy)$ is polynomial in the size of $C(\vx,\vy)$.
%
We  use the following claim:
\begin{claim}\label{cla:simple-break}
Let $C(\vx,\vy)$ be a circuit of size $s$, where $\vy=\{y_1,\dots,y_\ell\}$ and such that $C(\vx,\vzero)=0$. Then $C$ can be written as a sum of circuits with only a polynomial increase in size as follows: $C(\vx,\vy) = \sum_{i=1}^\ell y_i\cd C_i(\vx,\vy)$.
\end{claim}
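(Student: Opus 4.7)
The plan is to prove the claim by structural induction on the circuit $C$, simultaneously building, for each gate $v$ of $C$ computing a polynomial $\hat v \in R[\vx,\vy]$, auxiliary sub-circuits $\hat v_0, \hat v_1, \dots, \hat v_\ell$ (with $\hat v_0$ in the $\vx$-variables only) that satisfy the invariant
\[
\hat v \;=\; \hat v_0 \;+\; \sum_{i=1}^{\ell} y_i\, \hat v_i\,.
\]
Applied at the output gate, the hypothesis $C(\vx,\vzero)=0$ forces the $\hat v_0$-part to be the zero polynomial, which yields the desired decomposition $C(\vx,\vy)=\sum_{i=1}^{\ell} y_i\, C_i(\vx,\vy)$ with $C_i := \hat v_i$.

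The base cases are immediate: if $v$ is a leaf labelled by a constant or by a variable $x_j\in\vx$, set $\hat v_0 := v$ and $\hat v_i := 0$ for all $i$; if $v$ is the leaf $y_j$, set $\hat v_0 := 0$, $\hat v_j := 1$, and $\hat v_i := 0$ for $i\ne j$. For an addition gate $v = v_A + v_B$, set $\hat v_0 := \hat v_{A,0} + \hat v_{B,0}$ and $\hat v_i := \hat v_{A,i} + \hat v_{B,i}$. The main step is the multiplication gate $v = v_A\cdot v_B$, for which I will use the algebraic identity
\[
\hat v_A\cdot \hat v_B \;=\; \hat v_{A,0}\,\hat v_{B,0} \;+\; \sum_{i=1}^{\ell} y_i\,\bigl(\hat v_{A,0}\,\hat v_{B,i} \;+\; \hat v_{A,i}\,\hat v_B\bigr),
\]
which one verifies by expanding $\hat v_A = \hat v_{A,0}+\sum_j y_j\hat v_{A,j}$ and $\hat v_B = \hat v_{B,0}+\sum_j y_j\hat v_{B,j}$, collecting the coefficient of each $y_i$, and using the equality $\sum_j y_j\,\hat v_{B,j} = \hat v_B - \hat v_{B,0}$. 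Accordingly, I set $\hat v_0 := \hat v_{A,0}\cdot \hat v_{B,0}$ and $\hat v_i := \hat v_{A,0}\cdot \hat v_{B,i} + \hat v_{A,i}\cdot \hat v_B$, crucially reusing the already-present sub-circuit $\hat v_B$ from $C$ rather than rebuilding it.

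For the size analysis, at every gate of $C$ we introduce only $O(\ell)$ new gates in the combined multi-output circuit that carries the tuple $(\hat v_0, \hat v_1, \dots, \hat v_\ell)$: one addition per coordinate at an addition gate, and one multiplication for $\hat v_0$ together with two multiplications and one addition per coordinate at a multiplication gate; the leaves contribute $O(\ell)$ via shared constants $0$ and $1$. Since each placeholder variable $y_j$ actually appearing in $C$ must occur on some leaf, we may assume $\ell\le s$, so the total size of the multi-output circuit is $O(s\cdot \ell) = O(s^2)$, and forming the final expression $\sum_{i=1}^{\ell} y_i\, C_i$ adds only another $O(\ell)$ gates, which is a polynomial blow-up overall. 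The main subtlety I expect to grapple with is the multiplication case: I need to verify the algebraic identity above carefully and, more importantly, ensure that the sub-circuit $\hat v_B$ is shared from the original $C$ so that the per-gate overhead remains additive $O(\ell)$ instead of multiplicative in the depth (which would spoil the polynomial bound).
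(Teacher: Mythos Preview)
Your proof is correct and takes a genuinely different route from the paper. Your structural induction maintains the invariant $\hat v = \hat v_0 + \sum_i y_i\,\hat v_i$ at each gate; the multiplication identity you use is valid, and the key point---that you may reuse the already-computed gate $\hat v_B$ from $C$ rather than its decomposition---is exactly what keeps the per-gate overhead at $O(\ell)$ and yields the overall $O(s\ell)=O(s^2)$ bound.

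The paper instead factors out the $y_i$'s sequentially via the one-variable interpolation identity
\[
C \;=\; y_1\bigl(C\rvert_{y_1=1}-C\rvert_{y_1=0}\bigr) + C\rvert_{y_1=0},
\]
iterating on the remainder $C\rvert_{y_1=0}$. This is slicker to write down, but as stated it is a formal polynomial identity only when $C$ has degree at most one in $y_1$ (it already fails for $C=y_1^{2}$). Your gate-by-gate construction handles the general case directly and does not rely on any multilinearity of $C$ in the placeholder variables, so in this instance your argument is the more robust of the two.
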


\begin{proofclaim}
We proceed by a standard process to factor out the \vy\ variables one by one. Beginning with $y_1$ we get:
$$
C(\vy,\vx)=y_1\cd \left(C(1,\vy',\vx)-C(0,\vy',\vx)\right)+C(0,\vy',\vx),
$$
where $\vy'$ denotes the vector of variables $(y_2,\dots,y_\ell)$. In a similar manner we factor out the variable $y_2$ from $C(0,\vy',\vx)$. Continuing in a similar fashion we conclude the claim. Notice that the size of the resulting circuit is $O(|C|^2)$, and that in the final iteration of the construction we factor out $y_\ell$ from $C(\overline 0, y_\ell,\vx)$ it must hold that $C(\overline 0, y_\ell,\vx)=y_1\cd \left(C(\vzero,1,\vx)-C(\vzero,0,\vx)\right)+C(\overline 0,0,\vx)=y_1\cd C(\vzero,1,\vx)$, because by assumption $C(\overline 0,0,\vx)=0$.
\end{proofclaim}

%This is shown in the following claim:

By this claim we have
\begin{align}\notag
C(\vx,\vy) & = \sum_{i=1}^\ell y_i\cd C_i(\vx,\vy)\\
& = \sum_{i=1}^\ell y_i\cd C_{i,P}(\vx,\vy)  - \sum_{i=1}^\ell y_i\cd C_{i,N}(\vx,\vy)\,, \label{eq:break-up-C}
\end{align}
where $C_{i,P}(\vx,\vy),C_{i,N}(\vx,\vy)$ are the positive and negative parts of $C_i(\vx,\vy)$, respectively, that exist by \autoref{prop:single-minus-gate-at-top}, written as circuits in which no negative constants occur (we do not need to distinguish between the variables \vx\ and \vy\ here).

We wish to construct now a CPS refutation of \assmp. Our corresponding set of inequalities \ineqassmp\ will consist of  $f_i(\vx)\ge 0,-f_i(\vx)\ge 0$, for every $i\in[\ell]$.
%as well as the axioms $x_i\ge 0$, $1-x_i\ge 0$, for all $i\in[n]$.
In total, $|\ineqassmp|=2\ell$. Accordingly, our CPS refutation of \assmp, \ineqassmp,
will have $2\ell$ placeholder variables for the axioms in \ineqassmp\ denoted as follows:
$\vy_P$ are the $\ell$ placeholder variables $y_{i,P}$ corresponding to  $f_i(\vx)\ge 0$, $i\in[l]$,
$\vy_N$ are the $\ell$ placeholder variables $y_{i,N}$ corresponding to  $-f_i(\vx)\ge 0$, $i\in[l]$.

Since $C_{i,P}$ and $C_{i,N}$ are $\emptyset$-conic circuits,
$$
\sum_{i=1}^\ell y_{i,P}\cd C_{i,P}(\vx,\vy_P,\vy_N)  + \sum_{i=1}^\ell y_{i,N}\cd C_{i,N}(\vx,\vy_P,\vy_N)
$$
is a $(\vy_P,\vy_N)$-conic circuit.
It constitutes
a  CPS proof of $p(\vx)$ from the assumptions $f_i(\vx)\ge 0,-f_i(\vx)\ge 0$, for $i\in[\ell]$
%(and $x_i\ge 0$\hirsch{Where are these assumptions used?}, for $i\in[n]$)
of size linear in $|C|$ (as before, we denote by \assmp\ the vector $f_1(\vx),\dots,f_\ell(\vx)$):
%Note indeed that this is a CPS proof from \ineqassmp\ of $p(\vx)$:
\begin{align*}
\sum_{i=1}^\ell f_i(\vx)\cd C_{i,P}(\vx,\assmp) & + \sum_{i=1}^\ell (-f_i(\vx))\cd C_{i,N}(\vx,\assmp) \\ & = \sum_{i=1}^\ell f_i(\vx)\cd \left( C_{i,P}(\vx,\assmp)  - C_{i,N}(\vx,\assmp)\right)\\  & = \sum_{i=1}^\ell f_i(\vx)\cd  C_{i}(\vx,\assmp)=   C(\vx,\assmp) = p(\vx).
\end{align*}
\end{proof}

\subsubsection{CPS Simulates \ps\ and SoS}

The following theorem is immediate from the definitions.

\begin{theorem}\label{thm:CPS-simulates-PS}
Real CPS simulates  \ps\ (and hence also \sos) proof system over the same ordered ring.
\hirsch{Actually, we should we very cautious about simulation for abstract rings,
as \emph{\underline{the size is the number of bits}}. However, this is not the main result,
and it is a very straightforward one, so I state it for rings.}\iddo{But we define the size of CPS in the unit-cost model. While your comment is relevant only to the constant-free CPS model, no?}
\end{theorem}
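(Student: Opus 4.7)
The plan is to transform a \ps\ refutation into a single $\vy$-conic circuit in a direct, modular fashion, handling the ideal-part and the cone-part of the \ps\ identity separately and then summing. Given a \ps\ refutation of $\assmp,\ineqassmp$ as in \autoref{eq:ps}, I first set up CPS placeholder variables: for each equality $f_i(\vx)=0$ in $\assmp$ introduce two placeholders $y_{i,P}, y_{i,N}$ corresponding to the CPS inequalities $f_i\ge 0$ and $-f_i\ge 0$ (which by definition of real CPS belong to $\ineqassmp$), and for each inequality $h_j(\vx)\ge 0$ in $\ineqassmp$ introduce a placeholder $y_j$.

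To handle the ideal-part $\sum_{i\in I} p_i\cd f_i$ I would apply the minus-gate normalisation (\autoref{prop:single-minus-gate-at-top}) to each $p_i$, writing $p_i=p_{i,P}-p_{i,N}$ where both $p_{i,P}$ and $p_{i,N}$ are computed by $\emptyset$-conic circuits of size linear in $|p_i|$. Then, after substitution of the axioms, $p_i\cd f_i=p_{i,P}\cd f_i-p_{i,N}\cd f_i$ is represented in CPS by the subcircuit $p_{i,P}(\vx)\cd y_{i,P}+p_{i,N}(\vx)\cd y_{i,N}$, which is $\vy$-conic since the only negative constants or $\vx$-variables live inside the $\emptyset$-conic circuits for $p_{i,P}, p_{i,N}$.

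For the cone-part $\sum_{\zeta\subseteq J}\prod_{j\in\zeta}h_j\cd\sum_{i\in I_\zeta}s_{i,\zeta}^2$, the product $\prod_{j\in\zeta}y_j$ uses only placeholder variables and is automatically $\vy$-conic. For each $s_{i,\zeta}$ I would write it as an arbitrary algebraic circuit $S_{i,\zeta}$, then form $S_{i,\zeta}^2$ as a single squaring gate at its top; this circuit is $\emptyset$-conic because every path from a negative constant or $\vx$-variable to the output necessarily passes through that top squaring gate. Sums of $\emptyset$-conic circuits are $\emptyset$-conic, and products of $\vy$-conic circuits with $\emptyset$-conic circuits are $\vy$-conic, so each summand $\prod_{j\in\zeta}y_j\cd\sum_i S_{i,\zeta}^2$ is $\vy$-conic. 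Adding up all contributions from the ideal-part and the cone-part yields a single $\vy$-conic circuit $C(\vx,\vy)$ with $C(\vx,\ineqassmp)=-1$, i.e.\ a real CPS refutation. For size, each polynomial appearing in the \ps\ refutation is converted into a circuit of size polynomial in its monomial representation, and the minus-gate normalisation contributes only a linear overhead, so the CPS proof size is polynomial in the \ps\ monomial size. The \sos\ case is the specialisation with $|\zeta|\le 1$. There is no real obstacle here --- the only non-cosmetic step is invoking \autoref{prop:single-minus-gate-at-top} to split the unrestricted multipliers $p_i$ into conic parts.
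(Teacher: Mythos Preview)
Your approach is correct and follows the same overall shape as the paper's proof: both split each ideal-multiplier $p_i$ into a ``positive'' and a ``negative'' piece and route these to the placeholders for $f_i\ge 0$ and $-f_i\ge 0$, and both handle the cone-part by keeping an explicit squaring gate atop each $s_{i,\zeta}$. The difference is in how the split of $p_i$ is performed. The paper simply takes $p_{i,P}$ to be the sum of the monomials of $p_i$ with nonnegative coefficient and $p_{i,N}$ the rest; but then the subcircuit $p_{i,P}\cdot y_i$ still has bare $x_j$-leaves that reach the output without passing through any squaring gate, so as literally written it is not $\vy$-conic (the paper itself lists $x_1\cdot y_2$ among its examples of \emph{non}-conic circuits). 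Your use of \autoref{prop:single-minus-gate-at-top} sidesteps this cleanly: it produces genuinely $\emptyset$-conic circuits for $p_{i,P}$ and $p_{i,N}$, and this is exactly the device the paper itself employs in the proof of \autoref{thm:CPS-sim-IPS}. The only caveat is that \autoref{prop:single-minus-gate-at-top} needs the scalar $\tfrac12$, so your argument applies over $\mathbb{R}$ or $\mathbb{Q}$ (the natural setting of \ps) but not over an arbitrary ordered ring lacking $\tfrac12$.
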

%\iddo{We need to discuss the issue of \R\ \Z\ \Q\ how it is represented in proofs and check things..}
\begin{proof}
This follows immediately from the fact that CPS is a circuit representation of the second big sum in \ref{eq:ps}. More formally, let $\assmp:=\{f_i(\vx)=0\}_{i\in I}$ be a set of polynomial equations and let  $\ineqassmp:=\{h_j(\vx)\ge 0\}_{j\in J}$ be a set of polynomial inequalities, where all polynomials are from $\R[x_1,\ldots,x_n]$. Consider the following Positivstellensatz refutation of \assmp, \ineqassmp, where $\{p_i\}_{i\in I}$ and $\{s_{i,\zeta}\}_{i,\zeta}$ (for $i\in\N$ and $\zeta\subseteq J$) are collections of polynomials  in $\R[x_1,\ldots,x_n]$:
\begin{equation}\label{eq:PS-proof-in-CPS-simulate-PS}
\sum_{i\in I} p_i\cd f_i + \sum_{\zeta\subseteq J} \left(\prod_{j\in \zeta} h_j \cd \left(\sum_{i\in I_\zeta} s_{i,\zeta}^2\right)\right)=- 1\,.
\end{equation}
The size of the \ps\ refutation is  the combined total number of monomials in $\{p_i\}_{i\in I}$ and $\sum_{i\in I_\zeta} s_{i,\zeta}^2$, for all  $\zeta\subseteq J$ (see \autoref{def:PS}).

%In case we do not have equalities $f_i$'s, that is, $\assmp=\emptyset$, we finish by the fact that the right big sum on the left hand side of \ref{eq:PS-proof-in-CPS-simulate-PS}  can be written as a conic circuit; namely, we define the conic circuit $C(\vx,\vy):=$ \sum_{\zeta\subseteq J} \left(\prod_{j\in \zeta} y_j \cd \left(\sum_i s_{i,\zeta}^2\right)\right)$ written as sum of monomials. Hence,  $C(\vx,\ineqassmp)= \sum_{\zeta\subseteq J} \left(\prod_{j\in \zeta} h_j \cd \left(\sum_i s_{i,\zeta}^2\right)\right)=-1$ and the size of $C(\vx,\vy)$ is linear in  $\sum_{\zeta\subseteq J}\sum_{i\in I_\zeta} \monomsize{s_{i,\zeta}^2}$.\mar{check it's linear}  \mar{there is a problem here...PS is too weak as it doesn't have the xi 1-xi products....So need to make it like: algebraic CPS simulates PS******\\ Solution I implemented: propositional CPS simualtes propositional (and also algebraic) PS. Algebraic CPS simulates algebraic IPS and algebraic CPS.}

By definition every $f_i(\vx)=0\in\assmp$ has corresponding two inequalities in \ineqassmp, $f_i(\vx)\ge 0$ and $-f_i(\vx)\ge 0$. Let  the  variables \vy\ (to be used as placeholder variables) be partitioned into three disjoint parts: $\vy=\{y_i\}_{i\in I}\uplus\{y_{i'}\}_{i'\in I'}\uplus\{y_j\}_{j\in J}$, where $\{y_i\}_{i\in I}$ are the placeholder variables for $\{f_i(\vx)\ge 0\}_{i\in I}$ in \ineqassmp, $\{y_{i'}\}_{i'\in I'}$ are the placeholder variables for $\{-f_i(\vx)\ge 0\}_{i\in I}$ in \ineqassmp\ and $\{y_j\}_{j\in J}$ are the placeholder variables for $\{h_j(\vx)\ge 0\}_{j\in J}$ in \ineqassmp.
Assume also that for every $i\in I$, $p_{i,P}$ is the sum of all non-negative monomials in $p_i$ and $p_{i,N}$ is the sum of all negative monomials in $p_i$. Define
$$
C(\vx,\vy):=\sum_{i\in I} p_{i,P}\cd y_i+\sum_{i'\in I'} p_{i,N}\cd y_i  +\sum_{\zeta\subseteq J} \left(\prod_{j\in \zeta} y_j \cd \left(\sum_i s_{i,\zeta}^2\right)\right),
$$
where each of the three big sums is written as a sum of monomials.

%We finish by the fact that the right big sum on the left hand side of \ref{eq:PS-proof-in-CPS-simulate-PS} can be written as a conic circuit; namely, we define the conic circuit $C(\vx,\vy):=$ \sum_{\zeta\subseteq J} \left(\prod_{j\in \zeta} y_j \cd \left(\sum_i s_{i,\zeta}^2\right)\right)$ written as sum of monomials.

Hence,  $C(\vx,\ineqassmp)= -1$ by \ref{eq:PS-proof-in-CPS-simulate-PS} and  the size of $C(\vx,\vy)$ is linear in  $\sum_{i\in I}\monomsize{p_i } +\sum_{\zeta\subseteq J}\sum_{i\in I_\zeta} \monomsize{s_{i,\zeta}^2}$.
\end{proof}
\begin{corollary}
Boolean CPS simulates \sos\ and Positivstellensatz
for inputs that include the boolean axioms.
\end{corollary}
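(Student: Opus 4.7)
The plan is to reduce this to \autoref{thm:CPS-simulates-PS} essentially for free, by observing that the only difference between the boolean and real variants of these systems is the presence of the boolean axioms, and that these axioms are among the assumptions available to boolean CPS by definition. First I would take a boolean \sos\ or boolean \ps\ refutation of $\assmp,\ineqassmp$ of size $s$. By \autoref{def:prop-PS}, this is syntactically identical to a real \ps\ refutation of the enlarged system $\assmp',\ineqassmp'$, where $\assmp' = \assmp \cup \{x_i^2 - x_i = 0\}_{i=1}^n$ and $\ineqassmp' = \ineqassmp \cup \{x_i \ge 0,\ 1-x_i \ge 0\}_{i=1}^n$.

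Next I would apply \autoref{thm:CPS-simulates-PS} to obtain a real CPS refutation $C(\vx,\vy)$ of $\assmp',\ineqassmp'$ of size polynomial in $s$; here the placeholder variables $\vy$ correspond to the inequalities of $\ineqassmp'$ (together with the pair of inequalities $f_i \ge 0,\ -f_i \ge 0$ introduced in the proof of \autoref{thm:CPS-simulates-PS} for every equation $f_i = 0$ of $\assmp'$, including the boolean equations $x_i^2 - x_i = 0$). The resulting $C$ is a \vy-conic circuit satisfying $C(\vx,\ineqassmp') = -1$ as a formal polynomial identity.

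Finally I would observe that this $C$ already qualifies as a boolean CPS refutation of $\assmp,\ineqassmp$: by \autoref{def:cps}, boolean CPS makes available exactly the extra inequalities $x_i \ge 0,\ 1-x_i \ge 0$, the boolean equations $x_i^2 - x_i = 0$, and the inequalities $f_i \ge 0,\ -f_i \ge 0$ corresponding to every equation in the input assumption set (including the boolean ones). Hence every placeholder variable appearing in $C$ corresponds to an axiom legitimately present in the boolean CPS assumption set, and no modification of the circuit is needed. The overall size increase is the polynomial blow-up already incurred in \autoref{thm:CPS-simulates-PS}, giving a boolean CPS refutation of size $\poly(s)$. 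The only potential subtlety, and hence the ``main obstacle,'' is bookkeeping: making sure the placeholder-variable correspondence lines up between the boolean PS/SoS proof and the boolean CPS refutation, and that no axiom used by the simulation lies outside the assumption set guaranteed by \autoref{def:cps}. Both checks are direct from the definitions.
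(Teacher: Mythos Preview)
Your proposal is correct and is precisely the intended argument: the paper states this corollary without proof, treating it as immediate from \autoref{thm:CPS-simulates-PS}, and your unpacking---identify a boolean \ps/\sos\ refutation with a real \ps\ refutation of the system augmented by the boolean axioms, apply \autoref{thm:CPS-simulates-PS}, and note that all placeholder variables used are already part of the boolean CPS assumption set by \autoref{def:cps}---is exactly the justification one would write if asked to spell it out. There is nothing missing and no alternative route is taken.
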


\mar{\iddo{there was a problem here...PS is too weak as it doesn't have the xi 1-xi products....So need to make it like: algebraic CPS simulates PS******\\ Solution I implemented: propositional CPS simulates propositional (and also algebraic) PS. Algebraic CPS simulates algebraic IPS.}
\hirsch{it was not in the formulation, and I have changed the formulation now, and added a corollary}
}

 %simply build the circuits computing the $p_i(\vx)$'s from  \ref{eq:PS-proof-in-CPS-simulate-PS} as we have  done in the simulation of IPS by CPS (using the negative and positive parts of pi's; in this case in fact it is trivial).
%we simply state that given F,H initial equations and inequalities, we can simulate a PS refutation of F,H in CPS, using the same F,H (where we assume by definition of CPS that equations in F becomes also two different signed inequalities in H).
%\end{proof}

%EH: removed as this is the main statement above now
%\begin{theorem} Real  CPS simulates real \ps.
%\end{theorem}
%\begin{proof}
%This is similar to \autoref{thm:CPS-simulates-PS}. We omit the details. \mar{check!}
%\end{proof}

\subsubsection{CPS Simulates \LSInfty\ for CNFs Written as Inequalities}
%efficiently translates clauses as equations to clauses as inequalities,

CPS can simulate the strongest semi-algebraic proof system as defined in \autoref{def:LSInfty}.

\begin{theorem}\label{thm:Boolean-CPS-simulates-LSInfty}
Boolean CPS simulates \LSInfty\ (that is, ``dynamic \ps" from \autoref{def:LSInfty}).
\end{theorem}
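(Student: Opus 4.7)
The plan is to induct on the length of an \LSInfty\ proof $\pi = D_1, D_2, \ldots, D_k$, where each line $D_i$ has the form $f_i \ge 0$ and $f_k = -1$. For every $i$ I will construct a \vy-conic algebraic circuit $C_i(\vx, \vy)$ such that, when the placeholder variables \vy\ are substituted by the inequalities in \ineqassmp, we have the formal polynomial identity $C_i(\vx, \ineqassmp) = f_i$. Nodes will be shared across the $C_i$'s, so the resulting single DAG has size polynomial in the total monomial count of $\pi$. Taking $C_k$ as the CPS refutation then finishes the argument, since $C_k(\vx, \ineqassmp) = -1$ and $C_k$ is \vy-conic.

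For the base cases (axioms of \LSInfty), I will distinguish three situations. (i) If $f_i \ge 0$ is a translated CNF clause $\sum_{j\in P} x_j + \sum_{j \in N}(1-x_j) - 1 \ge 0$, or a boolean axiom $x_j \ge 0$, $1-x_j \ge 0$, $x_j^2 - x_j \ge 0$, $x_j - x_j^2 \ge 0$, then the corresponding inequality is already in \ineqassmp\ by the definition of CPS, so I simply take $C_i$ to be the single placeholder leaf $y_{j(i)}$ for the appropriate index, which is trivially \vy-conic. (ii) For the axiom $h^2 \ge 0$ with $h$ given as a sum of monomials, I build an $\vx$-circuit $\widehat h$ of size $O(|h|)$ that computes $h$ (a circuit summing up the monomials with their coefficients) and attach a single squaring gate at the top, yielding $C_i = \widehat h^{\,2}$. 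This circuit is \vy-conic because the \emph{output} node is itself a squaring gate, so every path from any leaf (whether a negative constant or an \vx-variable) to the root passes through a squaring gate.

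For the inductive cases I will mirror each \LSInfty\ rule by the corresponding circuit construction: for $f_j + f_k \ge 0$ take $C := C_j + C_k$; for $f_j \cdot f_k \ge 0$ take $C := C_j \cdot C_k$; for $\alpha f_j \ge 0$ with $\alpha \in \mathbb{Z}_{\ge 0}$ take $C := \alpha \cdot C_j$ (attaching the nonnegative constant $\alpha$ as a new leaf). In each case the polynomial identity $C(\vx, \ineqassmp) = $ the new right-hand side holds by induction. The \vy-conic property is preserved: at an addition or product gate whose children are both \vy-conic, any path from a problematic leaf (negative constant or $x_i$) to the new output must pass through the old output of one of the children, and hence through a squaring gate already present in that child; the nonnegative scalar $\alpha$ introduces only a nonnegative constant leaf, which does not create a new violating path.

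The main technical point to keep track of is the size bound, which is where DAG sharing matters: when the same line $D_j$ is used in several subsequent derivations, the corresponding subcircuit $C_j$ should be reused rather than duplicated; otherwise, repeated use via the multiplication rule would cause exponential blow-up. With sharing, $|C_i| \le c + |C_j| + |C_k|$ per inductive step (and $|C_i| = O(|h|)$ for the squared-polynomial axiom), so the total size of the final DAG is linear in the total monomial count of $\pi$. The final circuit $C_k$ then satisfies $C_k(\vx,\ineqassmp) = -1$ and is \vy-conic, hence it is a polynomial-size boolean CPS refutation of the same CNF.
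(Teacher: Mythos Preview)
Your inductive construction is exactly the paper's approach: build a \vy-conic circuit by mirroring each \LSInfty\ rule with the corresponding circuit operation, share nodes across lines, and output $C_k$. The inductive steps and the squared-axiom case are fine.

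There is, however, a real gap in base case (i). You claim that a translated CNF clause $\sum_{j\in P} x_j+\sum_{j\in N}(1-x_j)-1\ge 0$ ``is already in $\ineqassmp$ by the definition of CPS,'' so you can use a single placeholder leaf. But boolean CPS encodes CNFs via the \emph{algebraic} translation (\autoref{def:algebraic-transl-CNF}): the clause enters $\assmp$ as the equation $\prod_{i\in P}(1-x_i)\cdot\prod_{j\in N}x_j=0$, and hence $\ineqassmp$ contains the two inequalities $\pm\prod_{i\in P}(1-x_i)\cdot\prod_{j\in N}x_j\ge 0$, not the semi-algebraic inequality $\sum_{j\in P} x_j+\sum_{j\in N}(1-x_j)-1\ge 0$ that \LSInfty\ uses as its axiom. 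So there is no placeholder variable for the latter.

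This is precisely the point the paper isolates: it proves a separate proposition (\autoref{prop:CPS-from-equtional-CNF-to-inequalities-CNF}) showing that from the algebraic clause (as two inequalities in $\ineqassmp$) CPS can derive the semi-algebraic clause inequality with a polynomial-size \vy-conic circuit; the paper explicitly says that ``modulo this proposition the proof of \autoref{thm:Boolean-CPS-simulates-LSInfty} is almost trivial.'' To repair your argument, replace the single leaf for a clause axiom by that derived conic circuit; the rest of your induction then goes through unchanged.
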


Recall that CPS uses the algebraic translation of CNFs (\autoref{def:algebraic-transl-CNF}) as equations
while earlier semi-algebraic systems historically used the semi-algebraic translation of CNFs (\autoref{def:semi-algebraic-transl-CNF}) as inequalities. We will show below that one can be efficiently converted into the other. Modulo this proposition the proof of \autoref{thm:Boolean-CPS-simulates-LSInfty} is almost trivial.

\begin{proof}[Proof sketch of \autoref{thm:Boolean-CPS-simulates-LSInfty}]
It suffices to observe that the derivation rules (adding and multiplying two inequalities, taking a square of an arbitrary polynomial) are the same as the rules of constructing the conic circuit. Therefore, following the $\textrm{LS}^\infty_{*,+}$ proof we construct a conic circuit that, given the axioms on the input, computes -1.
%\iddo{Should be checked and completed more formally the proof!}\hirsch{It is literally the same, I don't know what to add...}\iddo{Ok, I wrote "proof sketch". (What I meant is that the precise  rules and their corresponding rules in CPS should be spelled out).}\hirsch{Let us leave it as is, or otherwise our paper would be further extended by rules of LS*+ (and mentioning that these correspond to literally the same gates of constructing conic circuits...)}
\end{proof}

%===

%there exist polynomial-size proofs that translate a CNF given as a set of clauses written as \emph{equations} into a set of clauses written as \emph{inequalities}. This will become important in \autoref{sec:Algebraic-versus-Semi-Algebraic-Proof-Systems}, in which we want to show a conditional simulation of CPS by IPS. It is important since IPS is a refutation system for sets of unsatisfiable equations and not inequalities, and so IPS can only refute CNFs written as equalities. But to  genuinely simulate CPS, one needs to be able to exploit also the ability of CPS to refute CNFs written as a set of inequalities; for example, Pudlak \cite{Pud99} demonstrated a short LS (and hence also propositional \ps) refutation of the pigeonhole principle expressed as CNFs written as inequalities.

%Recall the algebraic translation of CNFs (\autoref{def:algebraic-transl-CNF}) in which we translation each clause of a CNF into a polynomial equality. For purely semi-algebraic proof systems we need to be able to translate clauses into inequalities instead:

%\begin{definition}[Semi-algebraic translation of CNF]%Given a CNF formula in the variables $\vx$,  every clause $\bigvee_{i\in P} x_i \lor \bigvee_{j\in N} \neg{x_j}$, where $P$ and $N$ respectively are the indices of the positive and negative literals in the clause, is translated into $\sum_{i\in P} x_i + \sum_{j\in N} (1-x_j)-1\ge 0$.
%\end{definition}

\begin{proposition}\label{prop:CPS-from-equtional-CNF-to-inequalities-CNF}
There is a polynomial-size propositional CPS proof that starts from the algebraic translation of a  clause as the two inequalities   $\prod_{i\in P} (1-x_i)\cdot \prod_{j\in N} x_j\ge 0$ and  $-\left( \prod_{i\in P} (1-x_i)\cdot \prod_{j\in N} x_j\right)\ge 0$, and derives the semi-algebraic translation of the clause  $\sum_{i\in P} x_i + \sum_{j\in N} (1-x_j)-1\ge 0$.
\end{proposition}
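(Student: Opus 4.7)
The plan is to reduce the derivation to a single polynomial identity and then exhibit its right-hand side as a conic circuit. For each index let $L_s := x_s$ for $s\in P$ and $L_s := 1-x_s$ for $s\in N$ denote the \emph{literal polynomials}, let $\bar L_s := 1-L_s$ denote the \emph{opposite-literal polynomials}, and set $k := |P|+|N|$. The crucial observation is that every $L_s$ and every $\bar L_s$ is itself one of the boolean axioms listed in $\ineqassmp$ (it is either $x_t\ge 0$ or $1-x_t\ge 0$ for some $t$), and therefore each of them already has its own placeholder variable in any CPS circuit. The two inequalities given as hypothesis contribute one further placeholder, call it $\pi$, for the axiom $-\prod_s \bar L_s \ge 0$.

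The identity I would establish is
\[
\sum_{s} L_s \;+\; \prod_{s} \bar L_s \;-\; 1 \;=\; \sum_{1 \le j < m \le k} L_j\, L_m\, \prod_{i<j}\bar L_i,
\]
which I would prove by two telescoping steps applied to $\Pi_t := \prod_{i\le t}\bar L_i$: first, $1 - \Pi_{m-1} = \sum_{j<m} L_j\,\Pi_{j-1}$ (since $\Pi_{j-1}-\Pi_j = L_j\,\Pi_{j-1}$), which rewrites the right-hand side as $\sum_{m} L_m(1-\Pi_{m-1})$; second, the same telescope applied to $\sum_{m} L_m \Pi_{m-1}=1-\Pi_k$ converts this expression into the left-hand side. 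No non-boolean assumption is used in this calculation.

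With the identity in hand, I would build the CPS proof circuit. The circuit $C_0$ computing the right-hand side is assembled entirely from the placeholder variables for boolean axioms: each monomial $L_j L_m \prod_{i<j}\bar L_i$ is a product of such placeholders, hence $\vy$-conic, and summing these monomials keeps the circuit $\vy$-conic. Sharing the prefix products $\Pi_{j-1}$ across different values of $m$ bounds the total size by $O(k^2)$. After substitution of the placeholder variables by their corresponding axioms, the identity gives $C_0(\vx,\ineqassmp) = \sum_s L_s + \prod_s \bar L_s - 1$.

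The final CPS proof is then $C := C_0 + \pi$, which is a sum of two $\vy$-conic circuits and, after substitution, computes exactly $\sum_{s\in P} x_s + \sum_{s\in N}(1-x_s) - 1$, as required. The only delicate point in the plan is the form of the identity: I must keep each summand in the shape $L_j L_m\prod_{i<j}\bar L_i$ rather than expand $\prod(1-L_i)$ by inclusion--exclusion, because the latter expansion introduces negative coefficients that would destroy conicity. The telescoping derivation above is precisely the device that produces a monomial decomposition with only nonnegative coefficients, so once it is verified the construction of the conic circuit is purely syntactic.
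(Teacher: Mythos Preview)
Your proof is correct and takes a genuinely different route from the paper. The paper proceeds by induction on the number of literals in the clause, at each step splitting on the boolean value of the new variable and appealing to reasoning by boolean cases (\autoref{prop:IPS-cases}) together with the fact that CPS simulates IPS (\autoref{thm:CPS-sim-IPS}) to combine the two cases. Your approach is direct: you exhibit an explicit polynomial identity
\[
\sum_{s} L_s + \prod_{s} \bar L_s - 1 \;=\; \sum_{1\le j<m\le k} L_j\,L_m\,\prod_{i<j}\bar L_i,
\]
whose right-hand side is manifestly a nonnegative combination of products of the boolean-axiom placeholders, and then add the placeholder for $-\prod_s\bar L_s\ge 0$ to cancel the $\prod_s\bar L_s$ term. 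The telescoping verification you give is correct (both telescopes reduce to $\Pi_{j-1}-\Pi_j=L_j\Pi_{j-1}$), and your observation that every $L_s$ and $\bar L_s$ is itself one of the boolean axioms $x_t\ge 0$ or $1-x_t\ge 0$ is exactly what makes the right-hand side $\vy$-conic without any squaring.

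What your approach buys is self-containment and an explicit $O(k^2)$ bound: you avoid the detour through IPS simulation and boolean-case reasoning, and the resulting circuit is written down concretely. The paper's inductive approach, by contrast, reuses existing machinery and is perhaps easier to extend to related translations, but the actual size bound it yields is less transparent because of the overhead hidden in \autoref{prop:IPS-cases} and \autoref{thm:CPS-sim-IPS}.
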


Recall that CPS works  with inequalities, whereas equalities $f=0$ in \assmp\ are interpreted as the two inequalities $f\ge 0$ and $-f\ge 0$ in \ineqassmp. Hence,  \autoref{prop:CPS-from-equtional-CNF-to-inequalities-CNF} suffices to show that a clause given as an equality in \assmp\ can be translated efficiently in CPS to its semi-algebraic translation.

\begin{proof}[Proof of \autoref{prop:CPS-from-equtional-CNF-to-inequalities-CNF}]
We proceed by induction on the number of variables in the clause.

\Base We start with one of the (algebraic) clauses   $x_1$ or $1-x_1$. In the former  case, we start from $-x_1$ which is in $\ineqassmp$ by the definition of CPS, and we need to derive   $(1-x_1)-1$, which is equal to $-x_1$, hence we are done. In the latter case, we start from $-(1-x_1)$ which is $x_1-1$, hence we are done again.

\induction

\case 1    We start from the clause $(1-x_n)\cd\prod_{i\in P}(1-x_i)\cd\prod_{i\in N}x_i$ as a given equation (namely, in \assmp; formally, the two corresponding inequalities are in \ineqassmp), and we need to derive $x_n+\sum_{i\in P}x_i+\sum_{i\in N}(1-x_i)-1$ in CPS. We consider the two cases $x_n=0$ and $x_n=1$, and then use reasoning by boolean cases in CPS. Reasoning by boolean cases in propositional CPS is doable in polynomial-size by \autoref{prop:IPS-cases} which states this for IPS and since propositional CPS simulates IPS by \autoref{thm:CPS-sim-IPS} for the language of polynomial equations \assmp\ (in our case the initial clauses are indeed given as equations, and thus CPS  simulates IPS' reasoning by boolean cases).

In case $x_n=0$, $(1-x_n)\cd\prod_{i\in P}(1-x_i)\cd\prod_{i\in N}x_i = \prod_{i\in P}(1-x_i)\cd\prod_{i\in N}x_i$, from which, by induction hypothesis we can derive in CPS with a polynomial-size proof $\sum_{i\in P}x_i+\sum_{i\in N}(1-x_i)-1$.
Since $x_n=0$ we can add $x_n$ to this expression obtaining $x_n+\sum_{i\in P}x_i+\sum_{i\in N}(1-x_i)-1$, and we are done.

In case $x_n=1$, we have $x_n+\sum_{i\in P}x_i+\sum_{i\in N}(1-x_i)-1 =1+ \sum_{i\in P}x_i+\sum_{i\in N}(1-x_i)-1=\sum_{i\in P}x_i+\sum_{i\in N}(1-x_i)$. But $\sum_{i\in P}x_i+\sum_{i\in N}(1-x_i)$ is easily provable in propositional CPS because we have the axioms $x_i\ge 0$ and $1-x_i\ge 0$ in \ineqassmp, for every variable $x_i$, by definition.

\case 2 We start from the clause $x_n\cd\prod_{i\in P}(1-x_i)\cd\prod_{i\in N}x_i$, and we need to derive $(1-x_n)+\sum_{i\in P}x_i+\sum_{i\in N}(1-x_i)-1$ in CPS. This is similar to Case 1 above with the two boolean sub-cases $x_n=0$ and $x_n=1$ flipped.
\end{proof}

\mar{Non-capitals in [...] in theorems and definitions!}

\section{Reasoning about Bits within Algebraic Proofs}
\label{sec:Reasoning-about-Bits-within-Algebraic-Proofs}

%   \iddo{is it integers now? Should be rationals? This is very important technically, because the proofs will be more complicated for rationals. Maybe simulate rational by integers by first multiplying all the sos-proof with the largest denominator, and then reason that there is a proof of the rationals from the multiplied-integer proof? }
%
%   \hirsch{unfortunately, that can make coefficients up to $c^s$, where $c$ is the bound on the original coefficients and $s$ is the proof size. I suggest we prove for the integers, and [inefficient] corollaries can be extracted if needed [by those who need them].}

%\subsection{Bit Encoding of Algebraic Circuits}
%Here we show that  assuming a polynomial identity $F=0$ represented by an algebraic circuit $F$, \ipslin\ can prove
%that the encoding of $F$ as a boolean circuit outputs the zero value.
%In other words, that $\bitv(F) = 0$, where $\bitv(F)$ denotes a sequence of polynomials, one for each bit of $F$,
%computing the bits of the binary representation of $F(x_1,\ldots,x_n)$ when $x_1,\ldots,x_n\in\{0,1\}$.

\bigskip

In what follows we define a number of circuits implementing arithmetic in the two's complement notation (see below for the details).
Namely, we will define the following polynomial-size circuits:

\begin{description}

\item[$\biti{i}(f)$:] if $f$ is a circuit in the variables $\vx$ then $\biti{i}(f)$ computes the $i$th bit of the integer computed by $f$ (as a function of the input variables $\vx$ where the variables $\vx$ range over 0-1 values).

\item[$\bitv(f)$:] a multi-output operation that computes the bit vector of $f$ (as a function of the input variables $\vx$ where the variables $\vx$ range over 0-1 values).

\item[$\addv(\vy,\vz)$:] a multi-output carry-lookahead circuit that computes the bit vector of the sum of $\vy$ and $\vz$.

\item[$\addi{i}(\vy,\vz)$:] the circuit that computes the $i$th output bit in the carry-lookahead circuit $\addv(\vy,\vz)$.

\item[$\cari{i}(\vy,\vz)$:] the carry for bit $i$ when adding two bit vectors $\vy,\vz$.

\item[$\prdv(\vy,\vz)$:]
the multi-output circuit computing binary multiplication of two bit vectors $\vy$ and $\vz$.

\item[$\prdvp(\vy,\vz)$:]
the multi-output circuit computing binary multiplication of two nonnegative bit vectors $\vy$ and $\vz$.

\item[$\val(\vz)$:] the valuation function that converts $\vz$ encoding an integer in the two's complement representation to its integer value (see below).
%\item[$(A)_i^k$:] For a multi-output circuit $A$, where each single output of $A$ is represents a bit, $(A)_i^k$ denotes the bit vector consisting of all the bits from  bit $i$ as the lsb to bit $k$ as the msb. \iddo{checked if used}
\item[$\iabsv(\vx)$:] The multi-output circuit computing the two's complement binary representation of the absolute value of an input integer $\vx$ given in two's complement.
\end{description}

We construct the \biti{i} function by induction on the size of $f$. In general this cannot be done for algebraic variables, but in our case we are assuming that the variables $x_1,\ldots, x_n$ are boolean variables, and this allows us to carry out the constructions below, yielding a circuit of size which is polynomial in the size of the algebraic circuit of $f$ where ring scalars are encoded in binary.

\subsection{Basic Two's Complement Arithmetic}\label{sec:basic-twos-complement}
Integer numbers are encoded in the \textit{two's complement} scheme since this scheme makes standard binary addition work for both positive and negative numbers, which simplifies the construction slightly.
In the two's complement scheme the value represented by the bit string $\overline{w}\in\bits^k$ is determined by a function from $\bits^k$ to \Z\ as follows:
\mar{change k to t}\begin{definition}[the binary value operation \val]\label{def:val}
Given a bit vector $w_0\cdots w_{k-1}$ of variables, denoted $\overline w$, ranging over $0$-$1$ values,
define the following algebraic circuit with integer coefficients\footnote{We assumed that algebraic circuits have fan-in two, hence \val\ is written as a logarithmic depth circuit of addition gates (and product gates at the bottom of the circuit).}:
$$
\val(\overline w):=\sum_{i=0}^{k-2} 2^i\cd w_i\,-2^{k-1}\cd w_{k-1}.
$$
The most significant bit (msb) $w_{k-1}$ is called the \demph{sign bit} of $\overline w$.
%(In what follows, the bit vector $\vw$ can be either a constant  0-1 vector or a vector of variables ranging over 0-1 values.)
\end{definition}
Thus, $\val(\overline w)=\sum_{i=0}^{k-2} 2^i\cd w_i$ in case the sign bit $w_{k-1}=0$ (hence, $\overline w$ encodes a positive integer), and  $\val(\overline w)=\sum_{i=0}^{k-2} 2^i\cd w_i\,-2^{k-1}$, in case the sign bit $w_{k-1}=1$ (hence, $\overline w$ encodes a negative integer).

\medskip
We will represent the integers computed inside the original algebraic circuit by \textit{variable length} bit vectors  (that is, different bit vectors may have different lengths). For each gate in a given circuit we will assign  a number that is sufficiently large to store the bit vector of the integer it computes when the input variables range over 0-1 values; this number will be called the \demph{syntactic length} of the gate (or equivalently, of the circuit whose output gate is this gate). The syntactic length of  a gate is not necessarily the minimal number of bits needed to store a number, since we will find it convenient to use  slightly more bits than is actually required at times. For instance, the product of  two $t$-bit binary numbers can be stored with only $2t-1$ bits, but we will use $2t+3$ bits for a product. Given the syntactic length of algebraic gates such as $+,\times$ as functions of the syntactic length of their incoming edges, we can compute by induction on circuit size the syntactic length of any given gate in a circuit.
\iddo{We need to somehow compute or think how to speak about syntactic length in more accuracy: \textbf{the most important lemma we need is a lemma that bounds from above the syntactic length of a circuit (constant-free or not) given parameters such as size, degree, etc}. Then we need to think of it over \Q\ as I've written in \autoref{def:syntactic-length-over-Q} and the theorems that follow this definition.}
\hirsch{For now, I added a sentence below. We can formulate it as a lemma, but I would not spend more than two lines for its proof.
I have not thought about the $\mathbb{Q}$-version yet.}
It will be straightforward that the syntactic length of a constant-free (integer algebraic) circuit that has $s$ gates and multiplicative depth $D$
(that is, the longest directed path goes through at most $D$ multiplications) is at most $O(s2^D)$ (imagine repeated squaring of $2$ as the worst case),
and it is at most $O(sd)$ for a constant-free circuit that has syntactic degree $d$ (that is, it would compute
a polynomial of total degree at most $d$ if all constants $-1$ are replaced by $1$; surely, $d\le 2^D$).

%\textcolor[rgb]{0.690196,0.690196,0.690196}{The syntactic length is thus Given an algebraic circuit $C$ of which inputs are intended to range over 0-1 assignments, there is a minimal number $r_C$ of bits necessary to store the integers computed along the circuit (over all possible input boolean assignments). It will sometimes be convenient for us to use slightly more bits than $r_C$ to store the integers computed along the circuit; hence, given a circuit $C$ we will fix some  number, called the \demph{syntactic length of $C$}, that is used as the bit length of integers computed along $C$.
%In case or the maximal output, in case of a multi-output circuit) \hirsch{We don't have multi-output algebraic circuits to translate, thus I removed mention.} \iddo{Hmmm..I think we do. Look up the main binary value lemma proof: we do use multi-output circuits there; thought indeed, maybe we don't use syntactic length specifically there..for safety I retain this.} \emph{on 0-1 inputs}. %
%The  circuits $\addv$, $\prdv$, etc., defined below, output more bits than each of their input, thus inductively defining the syntactic length. }

When we need to make an operation over integers of different bit-length, \emph{we pad the shorter one }(in the two's complement scheme, a number is always padded by its sign bit, and it is immediate  to see that such padding preserves the value of the number as computed by \val).

We will use the boolean connectives $\land,\lor,\oplus$, which stand for AND, OR and XOR, respectively. In order to use boolean connectives inside algebraic circuits, we define the arithmetization of connectives as follows:
%\iddo{Is this the right arithmetization? I think it may have been the dual one, where a variable x is translated into 1-x, so that tran(A)(alpha)=0 iff A is true under assignment alpha.}

\begin{definition}[arithmetization operation $\arit{\cd}$]\label{def:arithmetization}
 For a variable $y_i$, $\arit{y_i}:=y_i$. For the truth values false $\false$ and true $\true$
we put $\arit{\textsf{\false}}:=0$ and $\arit{\textsf{\true}}:=1$.
For logical connectives we define
$\arit{A\land B}:=\arit{A}\cdot \arit{B}$,
$\arit{A\lor B}:=1-(1-\arit{A})\cdot(1-\arit{B})$, and for the {\rm XOR} operation we define
$\arit{A\oplus B}:= \arit{A}+\arit{B}-2\cdot\arit{A}\cdot\arit{B}$.
\end{definition}
In this way, for every boolean circuit $F(\vx)$ with $n$ variables and a boolean assignment $\valpha\in\bits^n$,  $\arit F(\overline \alpha) = 1$ iff $F(\valpha)=\mathsf{true}$.

In what follows, we sometimes omit $\arit{\cd}$ from our formulas and simply write $\land$, $\lor$, $\oplus$
meaning the corresponding polynomials or algebraic circuits.

Addition is done with a carry lookahead adder as follows:

\begin{definition}[$\cari i$, $\addi i $, \addv]\label{def:carryadd}
When we use an adder for vectors of different size,
we pad the extra bits of the shorter one by its sign bit. Suppose that we have a pair of length-$t$ vectors of variables $\vy=(y_0,\dots,y_{t-1}),\vz=(z_0,\dots,z_{t-1})$ of the same size. We first  pad the two vectors by a single additional bit $y_k=y_{k-1}$ and $z_t=z_{t-1}$, respectively (this is the  way to deal with  a possible overflow occurring  while adding the two vectors). Define
\begin{eqnarray}\label{eq:def-carryi}
\cari{i}(\vy,\vz)&:=&
\begin{cases}
(y_{i-1}\land z_{i-1})\lor((y_{i-1}\lor z_{i-1})\land\cari{i-1}(\vy,\vz))
              ,               & i = 1,\ldots,t;\\
              0\,, & i=0\,,
\end{cases}
%\\
%\arit{(y_{i-1}\land z_{i-1})\lor((y_{i-1}\lor z_{i-1})\land\cari{i-1}(\vy,\vz)}\\&=&\arit{\bigvee_{r<i} \left(y_r \land z_r\land \bigwedge_{t=r+1}^{i-1} (y_t\lor z_t)\right)},
\end{eqnarray}
and
%\iddo{I don't understand why do we define carry with two expressions? If we want to define it recursively we can define it with a base case and a recursion like in the tope equation, no?}
$$
\addi{i}(\vy,\vz):=
%\begin{cases}
              y_i\oplus z_i \oplus \cari{i}(\vy,\vz)\,,                ~i = 0,\ldots,t.
%          \end{cases}
$$
% <<<<<<< TREE
% Finally, $$\addv(\vy,\vz):=\addi{k}(\vy,\vz)\cdots\addi{0}(\vy,\vz)$$ (that is, $\addv$ is a multi-output circuit with $k+1$ output bits).
% \end{definition}
%
% For technical reasons, we define the $\iabsv$ operator.
% \begin{definition}
% Let $\ve$ be the all-1 vector, let $\vx$ be a vector, $s$ be its sign bit,
% and let $\vm=s\ve$.
% Define $\iabsv(\vx)$ as the circuit that
% makes a bitwise xor of $\addv(\vx,\vm)$ with $\vm$.
% \end{definition}
% \begin{lemma}
% $\val(\vx)=(-1)^s\cdot \val(\iabsv(\vx)))$ has a polynomial-size derivation in IPS
% from $x_i^2-x_i$ for every variable $x_i$.
% \end{lemma}
% \begin{proof}
% \hirsch{To be (easily) proved.}
% \end{proof}
% \begin{lemma}[Nonnegativeness of $\iabsv$]\label{lem:abspos}
% Let $s$ be the sign bit of $\iabsv(\vx)$. Then $s$ has a polynomial-size derivation in IPS
% from $x_i^2-x_i$ for every variable $x_i$.
% \end{lemma}
%
% Multiplication of $k$-bit numbers is done by first padding them to $2k$ bits
% and then multiplying them in a standard way (keeping just the lower $2k$ bits):
% we multiply the first (padded) number $y$ by single bits of the second (padded) number $z$:
% $z_0,z_1,\ldots$, and sum the shifted products as usual dropping out the lowest bits.
Finally, define
\begin{equation}\label{eq:def-ADD}
\addv(\vy,\vz):=\addi{t}(\vy,\vz)\cdots\addi{0}(\vy,\vz)
\end{equation} (that is, $\addv$ is a multi-output circuit with $t+1$ output bits).
\end{definition}

It is worth noting that by \autoref{def:carryadd} we have (where the equality means that the polynomials are identical, though the circuits for them is different):
%\iddo{Is this a syntactic equality or simply something that can be proved in IPS?}\hirsch{it is the same polynomial (syntactically), but it is a different circuit}%
\begin{equation}\label{eq:car}
\cari{i}(\vy,\vz) = \begin{cases}
\bigvee_{r<i} \left(y_r \land z_r\land \bigwedge_{k=r+1}^{i-1} (y_k\lor z_k)\right),
               & i = 1,\ldots,t;\\
              0\,, & i=0\,.
\end{cases}
\end{equation}

%Let $\ve$ be the all-1 vector (when the length of the vector is understood from the context),

%%% The size of ADDi is O(t). And the size of ADD(F) is thus O(t^2).

Let $s$ be a bit, and denote by $\ve(s)$ the bit vector in which all bits are $s$ (that is, $\ve(s)=s\cdots s$) and where the length of the vector is understood from the context.
%Recall that $(A)_0^{k-1}$ is the operator that given a bit vector $A$ outputs  the $0$th-bit to the $(k-1)$th-bit of $A$, where the former bit is the lsb and the latter is the msb of the resulting bit vector.
%
In the two's complement scheme inverting a positive number is a two-step process: first flip its bits (that is, XOR with the all-1 vector) and then add 1 to the result.
Hence, in what follows, to \emph{invert a negative} number, and extract its absolute value, we first subtract $1$ and then flip its bits:
\begin{definition}[absolute value operation \iabsv]\label{def:abs}
Let  $\vx$ be a $t$-bit vector representing an integer in two's complement. Let  $s$ be  its sign bit, and let $\vm=\ve(s)$ be the $t$-bit vector all of which bits are $s$. Define $\iabsv(\vx)$ as the multi-output circuit that outputs $t+1$ bits\footnotemark\ as follows (where $\oplus$ here is bit-wise {\rm XOR}): $$\iabsv(\vx):=\addv(\vx,\vm)\oplus\vm.$$
\end{definition}
\footnotetext{Note that since the largest (in absolute value) negative number that can be represented by a $t$-bit binary vector in the two's complement scheme is $2^{t-1}$, while the largest positive number that can be represented in such a way is only $2^{t-1}-1$, we need to store the absolute number of a $t$-bit integer in the two's complement scheme using $t+1$ bits.}

%It is not hard to show that this operation defines the absolute value of $\vx$.

For the sake of the clarity of the proof, we compute the product  of two $t$-bit numbers in the two's complement notation
somewhat less efficiently than it is usually done:
we define the product of  nonnegative numbers in the standard way,
apply it to the absolute values of the numbers
and then apply the appropriate sign bit.
This way we get a slightly greater number of bits
than needed to keep the value. %but we allow this for the ease of exposition.

%by first padding them to $2k$ bits and then multiplying them in a standard way (keeping just the lower $2k$ bits):
%we multiply the first (padded) number $y$ by single bits of the second (padded) number $z$: $z_0,z_1,\ldots$, and sum the shifted products as usual dropping out the lowest bits.

To define the multiplication of two $t$-bit integers  in the two's complement notation we first
%\iddo{make formal; define abuse of notation} \hirsch{two ways: 1) use ADD and drop higher bits after everything, 2) use ADD' that drops higher bits immediately}%
define an unsigned multiplication operator \prdvp\ which is easy to implement.
It takes two non-negative integers (that is, their sign bit is zero,
and this assumption is required for the circuit to work correctly),
and performs the standard non-negative multiplication by performing
 $i=0,\dots,t-1$ iterations, where the $i$th iteration consists of multiplying the first integer
by the single $i$-th bit of the second integer,
and then padding this product by $i$ zeros to the right.

\begin{definition}[product of two nonnegative numbers in binary \prdvp]\label{def:prodp}
Let $\va$ and $\vb$ be two $t$-bit integers where the sign bit of both $\va,\vb$ is zero.
We define $t$ iterations $i=0,\ldots,t-1$;
the result of the $i$th iteration is defined as the $(t+i)$-length vector $\overline s_i=s_{i,t+i-1}s_{i,t+i-2}\cdots s_{i,0}$ where
\begin{align*}
s_{ij}&:=a_{j-i}\land b_i, &\text{ for $i \le j\le t-1+i$},\\
s_{ij}&:=0 &\text{for $0\le j<i$.}
\end{align*}
(Note that we  use the sign bits $a_{t-1},b_{t-1}$ in this process although we assume it is zero; this is done in order to preserve uniformity with other parts of the construction.)
The  product of the  two nonnegative $t$-bit numbers  is defined  as the sequential addition of all the results in all iterations:
\[
\prdvp(\va,\vb):=\addv\left(\overline s_{t-1},\addv\left(\overline s_{t-2},\ldots,\addv\left(\overline s_1,\overline s_0\right)\right)\ldots\right).
\]
The number of output bits of $\prdvp$ is formally $2t$ including the sign bit.
\end{definition}

\begin{definition}[product of two numbers in binary \prdv]\label{def:prod}
Let $\vy$ and $\vz$ be two $t$-bit integers in the two's complement notation.
Define the product of $\vy$ and $\vz$ by first multiplying the absolute values of the two numbers
and then applying the corresponding sign bit:
\[
\prdv(\vy,\vz):= \addv\left(\prdvp\left(\iabsv(\vy),\iabsv(\vz)\right)\oplus\vm,s\right),
\]
where $s=y_{t-1}\oplus z_{t-1}$ and $\vm=\ve(s)$, with $y_{t-1},z_{t-1}$ the sign bits of $\vy,\vz$ as bit vectors in the two's complement notation, respectively.
%\[
%\prd(\vy,\vz)=\add(2^{k-1}\vy z_{k-1},\add(2^{k-2}\vy z_{k-2},\ldots,\add(2^1\vy z_1,2^0y_{k'}z_0\circ 2^0\vy z_0)\ldots)),
%\]
%where multiplications of a vector by a single bit or by $2^i$ are bitwise operations of $\land$ and shift (the latter is just proper wiring for circuits).

Note that the number of bits that $\prdv$ outputs is  $2t+3$:
given a $t$-bit number, its $\iabs$ is of size $t+1$ (including the zero sign bit),
the nonnegative product $\prdvp$ of $\iabs(\vx)$ and $\iabs(\vy)$ has size $2(t+1)$,
bitwise XOR does not change the length, and adding $s$ augments the result by one more bit.
\end{definition}

Note that given the bit vectors $\vx,\vy$ of length $t$, the size of the circuit for $\cari{i}(\vx,\vy)$ is $O(t)$ by \ref{eq:def-carryi}, for  $\addi{i}(\vx,\vy) $ it is $O(t)$ as well,
and for $\addv(\vx,\vy)$ it is still $O(t)$
because in \ref{eq:def-ADD} we can  \emph{re-use} $\cari i $. The size of $\iabsv(\vx)$  is also $O(t)$
(this is addition and linear-size bitwise XOR)
and finally  $\prdv(\vx,\vy)$ is of size  $O(t^2)$:
we perform an addition of $O(t)$ bit vectors of size $O(t)$ each. %each of these bit vectors is easy to compute.

%Note that the degree
%of $\addi{i}(\vx,\vy)$ as a polynomial in $x_j$'s and $y_j$'s is linear,
%which can be easily seen from the explicit expression for $\cari{i}$.
%So does the degree of the circuits computing the bits of $\iabs(\vx,\vy)$.
%The degree of $\prdi{i}(\vx,\vy)$ is potentially exponential,
%as we are performing a linear number of consecutive summations
%\hirsch{Do we need all this discussion on the degree at all?
%Isn't it easier to state that since the size is polynomial,
%the degree is at most exponential}.

%\iddo{We need to check that the degrees of PROD and ADD are polynomial\hirsch{The total degree is exponential anyway. Do we care about it?}\iddo{Okay; let's not talk about the degree specifically; only add a note in some place here and in the conditional simulation that the degrees of the BIT(C) for a given circuit C may be exponential, and so the simulation increases the degree.}}

\subsection{Extracting Bits and the Main Binary Value Lemma}\label{sec:extracting-bits}
We are now ready to define the algebraic circuit  \bitv, in which $\biti i$ is the $i$th bit, that extracts the bit vector of the output of a given algebraic circuit (as a function of the input variables, where the variables are considered to range over 0-1).

\begin{definition}[the bit vector extraction operation \bitv]\label{def:bit}
Let $F$ be an algebraic circuit with $t$ its syntactic length.
Assume that $0\le i \le t-1$.
We define \biti{i}$(F)$ to be the circuit computing the $i$th bit of $F$ recursively as follows
(note that \biti{i} is a circuit, that is, in the induction step of the construction we may  \emph{re-use} the same nodes more than once).

\case 1 $F=x_j$ for an input $x_j$. Then, $\biti{0}(F):=x_j$, $\biti{1}(F):=0$ (in this case there are just two bits).

\case 2 $F=\alpha$, for $\alpha\in\Z$. Then, $\biti{i}$ is defined to be the $i$th bit of $\alpha$ in two's complement notation, with at most $t$ bits (i.e., $i<t$).

\case 3 $F=G+H$. Then $\bitv(F):=\addv(\bitv (G),\bitv(H))$, and $\biti i(F)$ is defined to be the $i$th bit of $\bitv(F)$.

\case 4 $F=G\cd H$. Then $\bitv(F):=\prdv(\bitv (G),\bitv(H))$, and $\biti i(F)$ is defined to be the $i$th bit of $\bitv(F)$.

Recall that in the latter two cases the shorter number is padded to match the length of the longer number by copying the sign bit before applying $\addv$ or $\prdv$.
\end{definition}

Note that both  $|\biti{i}(F)|$ and $|\bitv(F)|$ have size $O(t^2\cd|F|)$ (for $t$ the syntactic length of $F$). To understand this upper bound, observe that every node in the circuit for $\bitv(F)$  belongs to  either a sub-circuit computing the $i$th bit of $\addv(\vx,\vy)$ (i.e., $\addi i(\vx,\vy)$) or to a sub-circuit computing the $i$th bit of $\prdv(\vx,\vy)$, for some $0\le i\le t$ and some two vectors of bits $\vx,\vy$ that were already computed by the circuit (since this is a circuit, once the vectors $\vx,\vy$ were computed we can use their result  as many times as we like, without the need to compute them again). Hence, each addition gate in $F$ contributes $O(t)$  nodes to $\bitv(F)$ and each product gate in $F$ contributes $O(t^2)$ nodes to $\bitv(F)$.  This accounts for the size $O(t^2\cd|F|)$ for $\bitv(F)$ (as well as for $\biti{i}(F)$).

For technical reasons we need the following definition:
\begin{definition}[IPS sub-proof; multi-output IPS proofs]\label{def:sub-proof}
Let $C(\vx,\vy)$ be an IPS proof from a set of polynomial equations as assumptions $\overline {\mathcal F}$ of $p(\vx)$ (that is, $C(\vx, \overline {\mathcal F})=p(\vx)$ and $C(\vx,\overline 0)=0$), and suppose that  $C'(\vx,\vy)$ is a sub-circuit of $C(\vx,\vy)$ such that $C'(\vx,\vy)$ is an IPS proof of $g(\vx)$ (that is, $C'(\vx, \overline {\mathcal F})=g(\vx)$ and $C'(\vx,\overline 0)=0$).\footnote{Notice that not all sub-circuits of $C$ are IPS proofs; e.g., if they are polynomials that are not in the ideal generated by $\vy$, they are not sub-proofs.} Then, we say that \emph{$C'(\vx,\vy)$ is a sub-proof of $C(\vx,\vy)$}, and also (by slight abuse of terminology) that \emph{$g(\vx)$ is a sub-proof of the IPS proof $C$ of $p(\vx)$ from $\overline {\mathcal F}$}.

Furthermore, a \emph{multi-output} circuit $C(\vx,\vy)$ is said to be an IPS proof from assumptions \assmp, if each of its output gates computes an IPS proof.
\end{definition}

For example, let the assumptions be $\assmp=\{ x_2, (1+x_1x_2)\}$. The two-output circuit $C(\vx,\vx)$ defined as $(x_1\cd y_1, x_1\cd y_2)$, where $x_1$ is joined by the two sub-circuits $x_1\cd y_1$ and $x_1\cd y_2$, is an IPS proof having two sub-proofs: the first is a sub-proof of  $x_1\cd x_2$ from \assmp, and the second is a sub-proof of $x_1\cd(1+x_1x_2)$ from \assmp.

   %and $\bitv(F)$ has size $\poly(|F|,t,d)$. %
% \mar{Do we need the syntactic-degree here? This is not necessary if we have t.}
  %For convenience we shall always assume that  for sufficiently large $|F|$ \mar{check} the maximal required bit-length of the coefficients in $F$ is smaller than $|F|$ (so that $\bitv(F)$ as a multi-output circuit has size polynomial in $|F|$ as well).
\begin{lemma}(main binary value lemma)\label{lem:main-binary-value-lemma}
Let $F(\vx)$ be an algebraic circuit over \Z\  in the variables $\vx=\{x_1,\dots,x_n\}$, and suppose that the syntactic length of $F$ is at most $t$. Then, there is an IPS proof of $$F=\val(\bitv(F))$$
of size $\poly(|F|,t)$ (there are no axioms for this IPS proof, except for the boolean axioms). Furthermore, if $F(\vx)$ is constant-free,   the $\poly(|F|,t)$-size  IPS proof is also constant-free.
%
%(\textbf{ii}) \iddo{This stems from substitution fact at the annex. Where do we use this anyway? Answer: in the last sentence in the last lemma of the simulation of CPS by IPS.} Moreover, let  $\{G_i\}_{i=1}^n$ be $n$ algebraic circuits over \Z\ of size $\poly(|F|,t)$ \mar{check t here; was d originally}and syntactic length at most $t$ such that IPS proves $G_i^2-G_i=0$ for each $i\in[n]$ with a $\poly(|F|,t)$-size proof, and denote by $F\left(\overline  G/\overline x\right)$ the circuit obtained by replacing every $x_i$ in $F$ by $G_i$, for all $i\in[n]$. Then, there is an  IPS proof of $F\left(\overline  G/\overline x\right)=\val\left(\bitv\left(F\left(\overline G/\overline x\right)\right)\right)$ with size $\poly(|F|,t)$.\mar{Maybe a small inaccuracy if n is big here than s...?}
\end{lemma}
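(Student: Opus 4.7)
The plan is to prove the lemma by structural induction on the algebraic circuit $F$. In each inductive step we reduce to polynomial-size IPS sub-proofs of the correctness of the bit-level arithmetic operations $\addv$ and $\prdv$. Accordingly, the main engineering load is absorbed by two bit-arithmetic sub-lemmas, which I would establish first:
\begin{itemize}
\item \emph{Addition correctness}: for fresh boolean variables $\vy,\vz$ of length $t$, there is an IPS proof of $\val(\addv(\vy,\vz)) = \val(\vy) + \val(\vz)$ of size $\poly(t)$ from only the boolean axioms on $\vy,\vz$;
\item \emph{Multiplication correctness}: analogously for $\val(\prdv(\vy,\vz)) = \val(\vy)\cdot\val(\vz)$.
\end{itemize}

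With these in hand, the main lemma follows quickly. The base cases ($F=x_j$ or $F=\alpha$) give a formal polynomial identity $\widehat F = \val(\bitv(F))$, provable by the zero circuit in IPS. For the inductive step $F = G \star H$ with $\star \in \{+,\cdot\}$, the induction hypothesis provides polynomial-size IPS proofs of $G=\val(\bitv(G))$ and $H=\val(\bitv(H))$; substituting $\bitv(G)$ and $\bitv(H)$ for the placeholder variables in the corresponding sub-lemma yields $\val(\bitv(G))\star\val(\bitv(H)) = \val(\bitv(F))$, and chaining all three equalities gives $F = \val(\bitv(F))$. Each of the $|F|$ gates thus contributes a $\poly(t)$-size sub-proof, for a total of $\poly(|F|,t)$. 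The constant-free claim is preserved because the only auxiliary constants introduced are the powers $2^i$ appearing inside $\val$, each constructible constant-free in size $O(\log t)$ by repeated squaring of $2$.

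The addition sub-lemma is where the work lives. The per-bit relation $y_i + z_i + \cari{i} = 2\,\cari{i+1} + \addi{i}$ is a \emph{formal polynomial identity} in the ring (verified by expanding the arithmetizations of $\oplus$ and of the carry formula in \autoref{def:carryadd}), hence IPS-provable by the zero circuit. Summing these weighted by $2^i$ telescopes to the formal polynomial identity
\[
\sum_{i=0}^{k-1} 2^i (y_i + z_i) \;=\; \sum_{i=0}^{k-1} 2^i\,\addi{i}(\vy,\vz) \;+\; 2^k\,\cari{k}(\vy,\vz), \qquad k \le t+1,
\]
again provable with no axioms. After sign-extension ($y_t := y_{t-1},\ z_t := z_{t-1}$), a direct algebraic manipulation using the per-bit identity at $i=t$ yields
\[
\val(\addv(\vy,\vz)) - \val(\vy) - \val(\vz) \;=\; 2^{t+1}\bigl(\cari{t+1}(\vy,\vz) - \cari{t}(\vy,\vz)\bigr)
\]
as a formal polynomial identity, reducing the target to deriving $\cari{t+1} = \cari{t}$ from the boolean axioms. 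I would prove this by first deriving inductively in IPS the booleanness $\cari{i}^2 - \cari{i} = 0$ with $\poly(i)$-size proofs (mechanizing the fact that a carry of boolean inputs is boolean), and then case-splitting on whether $y_{t-1}$ and $z_{t-1}$ agree: when they coincide with a boolean value $a$, the carry formula collapses to $a$ modulo $a^2=a$; when they differ, it evaluates to $\cari{t}$. The multiplication sub-lemma is reduced to the addition sub-lemma by unfolding $\prdv$ and $\iabsv$: each invocation of $\addv$ in these constructions is replaced by its correctness proof, and an auxiliary $\poly(t)$-size sub-proof certifies $\val(\iabsv(\vx)) = |\val(\vx)|$ on boolean inputs by splitting on the sign bit.

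The main obstacle is the efficient IPS derivation that each carry $\cari{i}$ is boolean modulo the boolean axioms on $\vy,\vz$. A naive unfolding of $\cari{i}$ blows up in both syntactic degree and size, so one must \emph{incrementally reuse} the proof that $\cari{i-1}$ is boolean when proving that $\cari{i}$ is boolean, keeping the total proof size linear in $i$. Once these booleanness proofs are in place, the remainder of the argument mechanizes the textbook correctness proof of a carry-lookahead adder inside IPS, and the structural induction on $F$ delivers the stated bound.
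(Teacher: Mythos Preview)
Your overall strategy matches the paper's: structural induction on $F$, with the work concentrated in addition and multiplication correctness sub-proofs that are composed at each gate. Your modularization via sub-lemmas on fresh boolean variables (then substituting $\bitv(G),\bitv(H)$) is a clean repackaging of what the paper does inline, and your telescoping reduction to $\cari{t+1}=\cari{t}$ followed by a case-split on the sign bits is essentially the same mechanism as the paper's nested induction on the bit-length.

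There is one concrete error. The per-bit relation $y_i+z_i+\cari{i}=2\,\cari{i+1}+\addi{i}$ is \emph{not} a formal polynomial identity: expanding the arithmetizations of $\oplus$ and of the carry recursion yields
\[
2\,\cari{i+1}+\addi{i}\;=\;y_i+z_i+\cari{i}\;+\;2\,y_iz_i(1-y_i)(1-z_i)\,\cari{i},
\]
and the trailing term vanishes only modulo the boolean axioms $y_i^2-y_i=0$ and $z_i^2-z_i=0$. So ``IPS-provable by the zero circuit'' is false as stated; you need a constant-size IPS derivation per bit from the boolean axioms, and these corrections must be carried through the telescoping (your subsequent ``formal polynomial identity'' is likewise only an identity modulo the boolean axioms). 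This is easily repaired and does not affect the $\poly(t)$ bound. A related point you glossed over: after substituting $\bitv(G),\bitv(H)$ for the fresh $\vy,\vz$, you must supply the substitution instances of the boolean axioms, i.e.\ derive $\biti{i}(G)^2-\biti{i}(G)=0$ for each bit; the paper handles this explicitly (it follows because each $\biti{i}$ is the arithmetization of a boolean circuit).
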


\begin{proof}
We will proceed, essentially, by induction on the structure of $F$. For technical reasons (since we work with circuits of which sub-circuits can be re-used) we are going to state our induction hypothesis on an IPS proof that consists, as sub-proofs, of other IPS proofs.

More precisely, let $F_1,\dots,F_k$ be a set of sub-circuits of $F$. We denote by $\lambda(F_1,\dots,F_k)$ the size of the IPS proof we are to construct; this proof will  consist (as sub-proofs) of IPS proofs of $F_i=\val(\bitv(F_i))$, for all $i\in[k]$.
%that is, $\lambda(F_1,\dots,F_k):=\sum_{i=1}^k s_i$ where $ {}\ipsprf{s_i} F_i=\val(\bitv(F_i))$, for $ i\in[k]$. %
We let $\lambda(\emptyset):=0$.
We shall assume that at every step of the construction $F_1$ is of maximal size, namely there is no $F_i$ that has size bigger than $F_1$ (possibly there are other $F_i$'s with the same size). In this way, we make sure that $F_1$ is not a sub-circuit of any other $F_i$.
The IPS proof is to be constructed by  induction on $|F_1|$  so that in each step of the induction we deal with a single  sub-circuit $F_1$ of $F$, such that $|F_1|>1$. In  the base case of the induction we thus have $\lambda(F_1,\dots,F_k)$ such that all $F_i$'s have size 1, namely, they are all the variables and constant gates that appear in $F$.
%; in this base case we shall proceed by induction once more, this time on $k$.

 Note that since we treat the input to $\lambda$ as a \emph{set} we discard duplicate $F_i$'s from its input. For example, $\lambda(G,H)=\lambda(G)$ in case $G=H$. (This is convenient in what follows, because $F$ is a circuit and the IPS proof we construct is also a circuit, and hence can re-use multiple times the same IPS sub-proof; see below.)

   % and by $\zeta(F)$ the size of the multi-output circuit $\bitv(F)$.

We proceed by induction on $|F_1|$, the maximal size of a circuit in $F_1,\dots,F_k$, to show  the following: in case all $F_1,\dots,F_k$ are variables or constant nodes we show that
$$
\lambda(F_1,\dots,F_k)\le c_0 k t,$$ for some constant $c_0$.

%$$\lambda(F_1,\dots,F_k)\le \lambda(F_2,\dots,F_k) +  c_0t,$$ for some constant $c_0$.
%\hirsch{Explain informally where are these $F_2$, etc. come from --- are these subcircuits of $F_1$ or ?..}\iddo{I changed slightly the definition and explanations.}
In case  $F_1=G\circ H$, for $\circ\in\{+,\cd\}$, we  show that
$$
\lambda(F_1,\dots,F_k)\le \lambda(G,H,F_2,\dots,F_k)+(t\cd|F_1|)^{b},
$$
for some constants $b$ independent of $|F_1|$ and $t$. This recurrence relation immediately  implies that $\lambda(F)\le |F|\cd(t\cd|F|)^{b}$, which is polynomial in $|F|$ and $t$ (informally, every node in $F$ contributes the additive term $c_0t$ or $(t\cd|F|)^{b}$ to the recurrence).

%$\lambda(F)=(5t\cd\zeta(F))^c =\poly(s,t)$ (since $\zeta(F)$ is polynomial in $|F|$), for some constant $c$. \mar{What about notation for degree? Is it even polynomial?}
\Base All $F_1,\dots,F_k$ are variables or constant nodes.
We construct a multi-output IPS proof $C(\vx,\vy)$, that consists of $k$ disjoint proofs of   $\val(\bitv(F_j))=F_j$, for $j\in[k]$.

\medskip
\case 1 $F_j=x_i$, for $i\in[n]$. Thus, the syntactic length of $F_j$ is 2 and by definition $\val(\bitv(x_i)):=\val(0x_i) := x_i-2^1\cd 0$ (the left equality is by definition of $\bitv$, and the right equality is by definition of $\val$). Hence,  $\val(\bitv(x_i))=x_i$ is a true polynomial identity and so  by \autoref{fact:zero-poly-ips-proof} we have an IPS proof of size precisely the size of the circuit for $x_i-2^1\cd 0-x_i$ which is at most, say, 20. %Hence $\lambda(F_1,\dots,F_k)\le \lambda(F_2,\dots,F_k) +  c_0t$ for a constant $c_0$, as required.

%\iddo{***compute t...}\hirsch{do not understand...} %(since $-f$ is a shorthand for $-1\cd f$).

%the desired IPS proof of $\val(\bitv(x_i))=x_i$, which formally is an IPS proof of the circuit $\val(\bitv(x_i))-x_i=0$, is in fact the IPS circuit  $C(\vx,\vz) $ (we do not have non boolean axioms in this case) where $C(\vx,\vz):=x_i-2\cd 0-x_i$.   Hence, $\lambda(F)=7 =7\cd|F|$.

\medskip
\case 2 $F_j=\alpha$, for $\alpha \in \Z$. Then, by \autoref{def:bit}, $\val(\bitv(\alpha)):=\sum_{i=0}^{t-2}2^i\alpha_i-2^{t-1}\cd\alpha_{t-1}$, where $\alpha_{t-1}\dots\alpha_0$ is the correct bit vector of $\alpha$\ in the two's complement notation (where  $t$ is the syntactic length of $F_j$).
Hence, $\val(\bitv(\alpha))=\alpha$ is a true polynomial identity of size at most $c_0t$, for some constant $c_0$. By \autoref{fact:zero-poly-ips-proof} we have an IPS proof of $\val(\bitv(\alpha))=\alpha$ of size at most $c_0t$.

Hence, the total size of all the proofs of $\val(\bitv(F_j))=F_j$, for $j\in[k]$, amounts to $\lambda(F_1,\dots,F_k)\le  c_0kt$, as required.

\induction We assume that the syntactic length of $F_1$ is $t.$ We show that, in case $F_1=G+H$,  $\lambda(F_1,\dots,F_k)\le \lambda(G,H,F_2,\dots,F_k)+c_1+(t\cd|F_1|)^{b'}$, for some constants $b'$ and $c_1$, and in case $F_1=G\cd H$ we show that $\lambda(F_1,\dots,F_k)\le \lambda(G,H,F_2,\dots,F_k)+(t\cd|F_1|)^{a}+(t\cd|F_1|)^{b'}$, for some constants $b'$ and $a$ independent of $t$ and $|F_1|$. Thus, choosing a big enough constant $b$, e.g., $b>10\cd\max(a,b')$, will conclude that $\lambda(F)\le |F|\cd(t\cd|F|)^{b}$, and hence will conclude the proof.
\medskip

\case 1 $F_1=G+H$, with $F_1$ of syntactic length $t$. Since the syntactic length of $F_1$ is $t$, the syntactic length of $\bitv(G), \bitv(H)$ is $t-1$ (after padding $\bitv(G), \bitv(H)$ to have the same size). We need to construct an IPS proof consisting of sub-proofs of $\val(\bitv(F_1))=F_1,\dots,\val(\bitv(F_k))=F_k$. %of size at most  $\lambda(G,H,F_2,\dots,F_k)+(c(t+1)^3\cd|F|)^b$, for $c,b$ constants independent of $|F|$ and $k$.%
By induction hypothesis we have an  IPS proof consisting of sub-proofs of $G+H=\val(\bitv(G)) +\val(\bitv(H))$ and $ F_i=\val(\bitv(F_i))$, for $ i=2,\dots,k$, of total size $\lambda(G,H,F_2,\dots,F_k)+c_1$, for some constant $c_1$ (the constant $c_1$  here is needed for the addition of the two proofs; see \autoref{fac:F-G+H-K} in which $c_1=1$). It thus suffices to  prove
$$\val(\bitv(G)) +\val(\bitv(H))=\val(\bitv(F_1))$$ with an IPS proof of size at most $(t\cd|F_1|)^{b'}$, for some  constant $b'$ independent of $t$. % which will lead to our desired IPS proof of $\val(\bitv(F_i))=F_i$, for $i=1,\dots,k$, as sub-proofs, with size at most $\lambda(G,H,F_2,\dots,F_k)+((t+1)^3\cd|F|)^{b}$.

% Note that first we choose c' to suit this proof and afterwords we choose c>c' so that the %  recurrence works/

For simplicity of notation, let us denote the circuits for bits $\bitv(G)$  and $\bitv(H)$, by $\vy$ and $\vz$, respectively, and the syntactic length of $\vy,\vz$ by $r=t-1$.  We proceed slightly informally within IPS as follows (recall that polynomial identities of size $s$ have trivial IPS proofs of size $s$ by \autoref{fact:zero-poly-ips-proof}).
$$
\val(\vy) +\val(\vz) = \sum_{i=0}^{r-2} 2^i(y_i+z_i)-2^{r-1}(y_{r-1}+z_{r-1}).
$$
On the other hand we have (recall the padded bits $y_r=y_{r-1}$, $z_r=z_{r-1}$ in the definition of $\addv$)
\begin{align*}
\val(\bitv(F_1)) & =
    \val\left(
        \addi{0}\left(
                \vy,\vz
                \right)
                \ldots
                \addi{r}\left(
                \vy,\vz
                \right)
        \right)~~~~\text{(by definition of \bitv)}\\
& =
\sum_{i=0}^{r-1} 2^i
    \left(
        z_i\oplus y_i\oplus \cari{i}(\vy,\vz)
    \right)
    -   2^{r}(z_{r-1}\oplus y_{r-1}\oplus \cari{r}(\vy,\vz))\,\\ & ~~~~~~~~~~~~~~~~~~~~~~~~~~~~~~~~~~~~~~\text{(by definition of $\addi i$ and \val)}.
\end{align*}

Thus, to complete the case of addition, it remains to prove the following:

\begin{claim}
There is an IPS proof with size  at most $(r\cd|F_1|)^{b''}$, for a constant $b''$ (independent of $r$, and such that $b'$ will be chosen so that  $b'>b''$) of the equation
\begin{eqnarray*}
 &&\sum_{i=0}^{r-2} 2^i(y_i+z_i)-2^{r-1}(y_{r-1}+z_{r-1})\\
&&=  \sum_{i=0}^{r-1} 2^i
    \left(
        z_i\oplus y_i\oplus \cari{i}(\vy,\vz)
    \right)
    -   2^r(z_{r-1}\oplus y_{r-1}\oplus \cari{r}(\vy,\vz))\,.
\end{eqnarray*}
\end{claim}

\begin{proofclaim}
This is proved by induction on $r$ as follows.

 %the syntactic length  $t$ of $F_1$.
\Base $r =2$. It is easy to see (or can be verified by hand) that in this case the two sides of the claim are equal for every $y_0,z_0,y_1,z_1\in\{0,1\}$.
Given that the number of bits in this case is constant, this is enough to conclude that there is an IPS proof of the above equation (using reasoning by boolean cases as in \autoref{prop:IPS-cases},\mar{TBC in last section} over a constant number of possible truth assignments for $y_0,z_0,y_1,z_1$) of size $(2\cd|F|)^{b''}$, for some constant $b''$.
\mar{It's unclear whether contraction here of cases works! But we may NOT need Fact 22 ! Because if the number of bits is constant we can simply prove this constant size statement in brute force.}\mar{Check the b' here!! We may need to fix this b', and specifically b at the end.}

\induction
To prove this step, notice that using the induction hypothesis we see that the equality we need to prove is
\[
 (z_{r-2}+y_{r-2})-(z_{r-1}+y_{r-1}) = z_{r-2}\oplus y_{r-2}\oplus \cari{r-1}(\vy,\vz) - z_{r-1}\oplus y_{r-1}\oplus \cari{r}(\vy,\vz)\,.
\]
Substituting the definition for $\cari{r-1}$ and $\cari{r}$,
we get a polynomial equation in five variables: $z_{r-2}$, $y_{r-2}$, $z_{r-1}$, $y_{r-1}$, and $C$,
where $C=\cari{r-2}(\vy,\vz)$.
Once it is verified by hand on $\{0,1\}$, we conclude that the circuit size of the proof is polynomial in the size of the circuits provided that these five ``variables'' are indeed boolean. Four of them are boolean by the hypothesis of the lemma, and the equation $C^2-C=0$ for the carry bit $C$ is also easy to derive.
Similarly to the above, we get an IPS proof of size at most  $(r\cd|F|)^{b''}$, for a constant $b'$.  \end{proofclaim}

This concludes Case 1 (i.e., addition) of the induction step of the proof of \autoref{lem:main-binary-value-lemma}.
\bigskip

\case 2 $F_1=G\cdot H,$ with $F_1$ of syntactic length $t$.
We need to construct an IPS proof consisting of sub-proofs of $\val(\bitv(F_1))=F_1,\dots,\val(\bitv(F_k))=F_k$, of size at most   $\lambda(G,H,F_2,\dots,F_k)+(t\cd|F_1|)^a+(t\cd|F_1|)^{b'}$, for constants $a,b'$ independent of $|F_1|$ and $t$. By induction hypothesis we have an  IPS proof consisting of sub-proofs of $G\cd H=\val(\bitv(G))\cd\val(\bitv(H))$
%\mar{\hirsch{Don't we have separate proofs for $G$ and $H$, not for $G\cd H$?}\iddo{Yes,
%what I write is this, only that I "skipped" one step to make it shorter, and thus already multiplied these two proofs
%together by Fact 35 as explained in brackets...slightly confusing indeed...but formally it's fine...} }
and $ F_i=\val(\bitv(F_i))$, for $ i=2,\dots,k$, of total size $\lambda(G,H,F_2,\dots,F_k)+|F_1|+c_2$, for some constant $c_2$ (the term $|F_1|+c_2$ here is needed for the product of the two proofs $G=\val(\bitv(G))$ and $H=\val(\bitv(H))$; see \autoref{fac:FxH-GxK}). It thus suffices to  prove
$$\val(\bitv(F_1))=\val(\bitv(G)) \cd\val(\bitv(H))$$ with an IPS proof of size at most $(t\cd|F_1|)^{b'}$, for a constant $b'$. Let $r$ denote the syntactic length of $G,H$. Since the syntactic length of $F_1$ is $t$ we have  $t=2r+3$.
% $r\le \lfloor t/2\rfloor$ (for simplicity %we assume %it is precisely $\floor {t/2}$), and notice%
%
% which will lead to our desired IPS proof of $\val(\bitv(F_i))=F_i$, for $i=1,\dots,k$, %as sub-proofs, with size at most $\lambda(G,H,F_2,\dots,F_k)+(c(t+1)^3\cd|F|)^{b}$.

%By the induction hypothesis we have $\val(\bitv(G))=G$, $\val(\bitv(H))=H$.
%Thus we need to prove $\val(\prodv(\bit(G),\bit(H)))=\val(\bitv(G))\cdot \val(\bitv(H))$.
In what follows, we use the notation from \autoref{def:prod}, namely, $\vy= \bitv(G)$ and $\vz=\bitv(H)$. %it remains to prove
%$$\val(\prdv(\vy,\vz))=\val(\vy)\cdot\val(\vz).$$
We first prove two simple statements about $\iabsv$.
\begin{claim}\label{cla:neg} Let $\vx$ be a bit vector of length $r$ representing an integer in two's complement and let  $s$ be the sign bit of $\vx$. Then
$\val(\vx)=(1-2s)\cd \val\left(\iabsv(\vx)\right)$ has an IPS proof from the boolean axioms, of size at most  $r^c$, for some constant $c$ independent of $r$.
\end{claim}
\begin{proofclaim}
Recall that the size of $\iabsv(\vx)$ is $O(r)$.
We will apply (slightly informally) \autoref{prop:IPS-cases} for reasoning by boolean cases in IPS as follows. Consider the two cases for the sign bit $s$. In case $s=0$ the claim is not hard to check; we will show only the case $s=1$.

Recall that inverting a negative number via \iabsv\  is done by subtracting $1$ (which is the same as adding the all-one vector) and then inverting all the bits in the resulting vector. Let $\vy$ be a bit vector and $\vone$ be the all-one vector of the same length of $\vy$, then
\begin{equation}\label{eq:something}
\val(\vy \oplus \vone) =
\sum_{i=0}^{r-2} (1-y_i)2^i - (1-y_{r-1})2^{r-1} =
-1-\val(\vy).
\end{equation}
Using this, we have
\begin{align*}
(1-2s)\cd\val(\iabsv(\vx)) &= -1\cd\val(\addv(\vx,\vone)\oplus\vone)~~~\text{(by definition of \iabsv)}\\
&= -1\cd(-1-\val(\addv(\vx,\vone)))~~~ \text{(by \ref{eq:something} above)}\\
&= 1+\val(\addv(\vx,\vone)).
\end{align*}
By the addition case (Case 1 above) we can construct an IPS proof of $\val\left(\addv(\vx,\vone)\right)=\val(\vx)+\val(\vone)=\val(\vx)-1$ of size  at most $r^{b'}$, for some constant $b'$. This concludes the proof since we finally get $1+\val(\addv(\vx,\vone))=\val(\vx)$, where the whole proof is of size at most $r^c,$ for some constant $c$.
\end{proofclaim}
\mar{We need to mention that IPS proofs are closed under substitutions}
\begin{claim}[non-negativeness of $\iabsv$]\label{cla:abspos}
Let $\vx$ be a bit vector of length $r$ representing an integer in two's complement and let $s$ be the circuit computing the sign bit of $\iabsv(\vx)$ according to \autoref{def:abs}. Then $s=0$ has a polynomial-size \IPS\ proof (using only the boolean axioms).
\end{claim}
\begin{proofclaim}
We proceed as before by considering the two cases of the sign of $\vx$. The case of  positive sign is  easy to verify.
In the case of a negative sign we have
$\iabs(\vx)=\addv(\vx,\vone)\oplus\vone$,
where by the definition of \addv, $\vx$ is padded with an additional one bit $x_r=x_{r-1}=1$, and hence the sign bit of $\iabs(\vx)$ is computed as $\cari{r}(\vx,\vone)\oplus 1$ (note that $\addv$ has one more bit than $\vx$).
By (\autoref{eq:car}), $\cari{r}(\vx,\vone)$ is equal to (the arithmetization of) $\bigvee_{i<r} x_i$. Since $x_{r-1}=1$, we can prove in IPS by a simple substitution that the arithmetization of $\bigvee_{i<r} x_i$ is the constant 1, leading to $\cari{r}(\vx,\vone)\oplus 1=0$.
\end{proofclaim}

We consider then the case of the multiplication of nonnegative numbers.
\begin{claim}\label{cla:prdvp}Let $\vy,\vz$ be two bit vectors of length $r$ in the two's complement notation. Then,
$$
\val\left(\prdvp\left(\iabsv(\vy),\iabsv(\vz)\right)\right)=\val\left(\iabsv(\vy)\right)\cd\val\left(\iabsv(\vz)\right)
$$
has an \IPS\ derivation (from the boolean axioms) of size $r^{c}$, for a constant $c$ independent of $r$.
\end{claim}
\begin{proofclaim}
Let $\vy^+$ denote   $\iabsv(\vy)$ and $\vz^+$ denote $\iabsv(\vz)$, both of length $r+1$
(we know from \autoref{cla:abspos} that the sign bits $\vy^+_r,\vz^+_r$ of $\vy^+$ and $\vz^+$, respectively, are zero).
Recall \autoref{def:prod} of \prdv, in which we defined the vector $\overline s_i$ to be the result of multiplying the $i$th bit of $\overline z^+$, denoted $z_i^+$, with   $\vy^+$, and then padding it with $i$ zeros to the right.
First, we show that IPS can prove  that this multiplication step is correct, in the sense that IPS has an $O(r)$-size proof of:% (this in %fact holds for any number, not just the nonnegative $\vy^+$):
%\begin{equation}
\begin{equation}
\val(\overline s_i)=\val(\vy^+)\cdot z^+_i\cd 2^i\,,\label{eq:prodone}
\end{equation}
for every $i=0,\dots,r$.
Indeed, for every $i=0,\dots,r$, by definition of $\overline s_i$ we have the following polynomial identities:
\begin{align*}
\val(\overline s_i)&=\sum_{j=i}^{r+i-1} (y^+_{j-i}z^+_i) 2^j - (y^+_r\cd z^+_i) 2^{r+i} =
\left(\sum_{j=0}^{r-1} y^+_j2^j\right)\cd z^+_i\cd 2^i \\
& = \val(\vy^+)\cdot z^+_i\cd 2^i\,
\end{align*}
(we have used $y^+_r = z^+_r = 0$ here).

%\[
%\sum_{j=i}^{2k-1} (y_{j-i}z_i) 2^j = \left(\sum_{j=0}^{2k-i-1} y_j2^j\right)z_i2^i,
%\]
%%\[
%\sum_{t=0}^{2k-i-1} (y_tz_i) 2^{t+i} = \left(\sum_{j=0}^{k-1} y_j2^j\right)z_i2^i.
%\]
%\hirsch{There was a comment on padding here, but it did not find where is it used, and commented it out. I think it was misplaced as it applies to ADD above as well, and maybe it's better formulated as %a separate Claim before all proofs. Actually no case distinction is needed in it, as it's the same polynomial after substitution, no additional IPS proof needed as soon as we have $x_r = x_{r-1}$.}
%Second, given an integer in two's complement $\vx$ of length  $r$, if we pad it (to the left) with the sign bit $x_{r-1}$, to get an $(r+1)$-size vector (that represents an integer in two's complement), then we do not change the value of the number. In other words, we have\iddo{I've started implementing the first solution of adding a dummy 0 bit but have not finished. Also, there is a possible problem with the ADD operator: it adds an arbitrary bit every time, so it may increase too much and in an unnecessary manner the length of bit vectors. Unclear...}
%\[
%\sum_{i=0}^{r-2} 2^ix_i - 2^{r-1}x_{r-1} = \sum_{i=0}^{r-1} 2^ix_i - 2^{r}x_{r}\,.
%\]
%This has an  $O(r)$-size IPS proof, by considering the two cases where the sign bit $x_r$ is either 0 or 1.
%\mar{check}

%We then repeatedly sum shifted (and padded) % properly padded because of the two's complement encoding) partial products. %
\smallskip
Second, based on the proof of the case of addition (Case 1 above), we can derive
\begin{align}
 \val(\addv(\overline s_{r},& \addv(\overline s_{r-1},\ldots,\addv(\overline  s_0,\overline s_1)\ldots))) \label{eq:prodshift}  \\
&  = \val(\overline s_{r})+\val(\addv(\overline s_{r-1},\ldots,\addv(\overline s_0,\overline s_1)\ldots))) \notag \\
& ~~~~~...\notag
\\
& =  \val(\overline s_{r})+\dots+\val(\overline s_2)+
\val(\addv(\overline s_0,\overline s_1)) \notag
\\
& = \sum_{i=0}^{r} \val(\overline s_i)\,.\label{eq:calculations-end}
\end{align}
Consider line \ref{eq:prodshift}: each \addv\ there contributes $O(r)$ gates. Thus, in total \ref{eq:prodshift} has a circuit of size  $O(r^2)$. %This is similar to the other lines below \ref{eq:prodshift}.
Since line \ref{eq:prodshift} is of size $O(r^2)$, every step in which we use the addition case of the induction statement (Case 1), takes $r^{c'}$, for some constant $c'>2$ independent of $r$. Hence,  overall we obtain an IPS proof of the equality between \ref{eq:prodshift} and \ref{eq:calculations-end}, of size  $r^{b''}$, for some constant $b''$ independent of $r$.
%\mar{\hirsch{I do not understand how the number of gates yields the proof size. Also the number of gates in ADD is linear, not quadratic.}\iddo{I've made it linear. The size of proof stem from the size of the circuit = number of gates, and the addition case of the induction step above which is polynomial.}
%}

Using (\autoref{eq:prodone}) and $z^+_r=0$ we conclude with an IPS proof that  \ref{eq:prodshift} above (which by \autoref{def:prod} is $\val\left(\prdvp(\vy^+,\vz^+)\right)$) equals
\[
\val(\vy^+)\cd\left(\sum_{i=0}^{r-1} z^+_i 2^i\right),
\]
which  in turn is equal to  $\val(\vy^+)\cd\val(\vz^+)$,
by definition of \val\ and the fact that $z^+_r=0$. This amounts to an IPS proof of total size $r^{c}$, for a constant $c$ independent of $r$.
%
%where $\vz_t$ is the positive number obtained by taking the lowest $t-1$ bits of $t$ and prepending a zero sign bit to them.
%Recall that
%\[
%\val(\prdv(\vy,\vz))= \sum_{i=0}^{2k-1} x_i2^i-2^{2k-1}x_{t-1},
%\]
%where $\vx = \prdv(\vy,\vz)$.
%Note that the sum is exactly $v_{2k-2}$.
%We thus need to derive that
%$
%\val(\vy)\val(\vz_{2k-2})-2^{2k-1}\prdv_{2k-1}(\vy,\vz) =
%\val(\vy)\val(\vz).
%$
%
\end{proofclaim}
%\begin{lemma}\label{lem:prdv:prdvp}
%$\val(\prdv(\iabsv(\vy),\iabsv(\vz)))=\val(\prdvp(\iabsv(\vy),\iabsv(\vz)))$
%has a polynomial-size IPS derivation from $x_i^2-x_i$ for every variable $x_i$.
%\end{lemma}
%\begin{proof}
%By Lemma~\autoref{lem:abspos} and the definition of $\prdv$.
%\end{proof}
\medskip

Finally, we arrive at the main case of multiplying two possibly negative integers written in the two's complement scheme, each  with bit vector of length $r$.
Let $s = y_{r-1}\oplus z_{r-1}$ and let $\vm=\ve(s)$ be a
vector of length $r$ in which every bit  is $s$.
Recall that
\[
\prdv(\vy,\vz)= \addv\left(\prdvp\left(\iabsv(\vy),\iabsv(\vz)\right)\oplus\vm,s\right).
\]

\begin{claim}\label{cla:prdv:sign}
$\val{\left(\prdv(\vy,\vz)\right)} = (1-2s)\cd \val{\left(\prdvp\left(\iabsv(\vy),\iabsv(\vz)\right)\right)}$
has an IPS derivation from the boolean axioms of size $r^{c}$, for some constant $c$ independent of $r$.
\end{claim}
\begin{proofclaim}
Consider the following two cases.

\case 1 $s=1$. Note that inverting all bits affects the value of a bit vector as follows: if $\vx$ is a length $k$ bit vector, then
\begin{equation}\label{eq:filling-in}
\val(\vx \oplus \ve(s)) =
\sum_{i=0}^{k-2} (1-x_i)2^i - (1-x_{k-1})2^{k-1} =
-1-\val(\vx).
\end{equation}
Hence, since $s=1$,
\begin{align*}
\val{\left(\prdv(\vy,\vz)\right)}& = \val\left(\addv\left(\prdvp(\iabsv(\vy),\iabsv(\vz))\oplus\vm,s\right)\right)\text{~by definition of \prdv} \\
& = \val\left(\prdvp(\iabsv(\vy),\iabsv(\vz))\oplus\vm\right)+ 1\\& \text{~~~~~~~~~~~~~~~~~~~~~~~~~~~~~~~~~~~~by Case 1 (addition) of induction statement}\\
& =-1- \val\left(\prdvp(\iabsv(\vy),\iabsv(\vz))\right)+ 1 \text{~~~~~~~~~~~~~~by \ref{eq:filling-in}}  \\
& = (1-2s)\cd\val\left(\prdvp(\iabsv(\vy),\iabsv(\vz))\right)\text{~~~~~~~~~since $s=1$}.
\end{align*}

\case 2  $s=0$. This is an easier case, in which  we show  $\val{\left(\prdv(\vy,\vz)\right)}=\val(\prdvp\left(\iabsv(\vy),\iabsv(\vz)\right))$, and so we are done by $s=0$. We omit the details.

Using reasoning by boolean cases in \IPS\ according to \autoref{prop:IPS-cases} we conclude the claim.
\end{proofclaim}

Taking together ~\autoref{cla:prdv:sign},~\autoref{cla:prdvp} and ~\autoref{cla:neg} (for $\vy$ and for $\vz$ of length $t$)
we get the desired equality for the product case, where  $s=y_{t-1}\oplus z_{t-1}$:
\begin{eqnarray*}
&&\val\left(\prdv(\vy,\vz)\right)\\
&&=(1-2s)\cd \val\left(\prdvp\left(\iabsv(\vy),\iabsv(\vz)\right)\right)\\
&&=(1-2s) \cd\val\left(\iabsv(\vy)\right)\cd\val\left(\iabsv(\vz)\right)\\
&&=(1-2y_{r-1})\cd \val\left(\iabsv(\vy)\right) \cdot \left(1-2z_{r-1}\right)\cd \val\left(\iabsv(\vz)\right)\\
&&=\val(\vy)\cdot\val(\vz)\,,
\end{eqnarray*}
where the penultimate equation stems from the polynomial identity  $(1-2y_{t-1})\cd(1-2z_{t-1})=1-2(y_{t-1}\oplus z_{t-1})$.
\smallskip

This concludes the proof of the first part of  \autoref{lem:main-binary-value-lemma}. For the second part, assuming that $F(\vx)$ is constant-free, the proof is identical, noting simply that in the IPS proof we constructed above all coefficients are at most exponential in $n$, and thus by the upper bound $\tau(m)\leq O(\log m )$ for every $m\in\N$, we get a constant-free IPS proof of size $\poly(n)$.
\end{proof}% main binary value lemma

%%%%%%%%%%%%%%%%%%%%%%%%%%%%%%
%%%%%%%%%%%%%%%%%%%%%%%%%%%%%%
%%%%%%%%%%%%%%%%%%%%%%%%%%%%%%

\section{Algebraic versus Semi-Algebraic Proof Systems}
\label{sec:Algebraic-versus-Semi-Algebraic-Proof-Systems}
%\hirsch{We should say explicitly that all simulations are between boolean versions.}
Here we show that IPS simulates CPS over \Q\ assuming the existence of small IPS refutations for the generalized binary value principle (and the binary value principle for the case of \Z). Under reasonable conditions we show that in fact IPS is polynomially equivalent to CPS assuming short IPS refutations of the (generalized) binary value principle, hence bridging the gap between algebraic and semi-algebraic proof systems in the regime of very strong proof systems. We work with the \emph{boolean versions} of both CPS and IPS, meaning that the boolean axioms are present.
%This holds both over \Z\ and over \Q\ (but for \Q\ we need to assume the \emph{generalized} binary value principle has short IPS refutations).%
 %The  conditional simulations holds for both  standard proofs and constant-free proofs.

Moreover, we  demonstrate two kinds of conditional simulations: a (standard) polynomial simulation for the language of unsatisfiable sets \assmp\ of polynomial \emph{equations}, and in \autoref{sec:effective-simulation} an  \emph{effective} \emph{simulation} (in the sense of Pitassi-Santhanam \cite{PitassiSanthanam10}) for the language of unsatisfiable sets containing both equations \assmp\ and inequalities \ineqassmp\ over \Z; similar reasoning works over \Q). Note that we cannot hope to show a (standard) simulation of CPS by IPS for the language containing both polynomial equalities and polynomial inequalities, because  inequalities are not expressible directly as polynomial equations in IPS; hence, for the sake of the second kind of simulation we first translate \ineqassmp\ to bit representation and only then simulate the CPS proof, yielding an effective simulation.

We now prove the simulation for constant-free proofs over \Q, and in \ref{sec:effective-simulation} we will prove the effective simulation (over \Z, which implies the same result over \Q). %All other cases (namely, for non-constant-free proofs over either \Q\ and \Z\ and constant-free proofs over \Z) are similar.
%\iddo{added text}\hirsch{Should we really mention $\mathbb{Z}$ explicitly? It is a %particular case.}\iddo{Don't have strong opinion.}

%\hirsch{Probably we need a subsection "6.0" here to distinguish from the 6.1.}
%\iddo{Changed text now. (Not sure it's needed. Because then we'll have simulation in the subtitle, while our "main result" here is "BVP characterizes ..."...)}

Recall that \ipsq\ and \cpsq\ stand for IPS and CPS proofs over \Q, respectively, and  that by \ref{prop:Q-circuit-to-Z-circuit}, given a constant-free circuit $C$ over \Q\ we can turn it into a constant-free circuit $C'$ over \Z\ computing $M\cd\widehat C$, for some nonzero integer $M$, with $|C'|\le 4|C|$ and $\tau(M)\le 4|C|$.
% (this can then made trivially an \ipsq\ refutation by dividing the proof by $M$). \mar{check consistency of things here wrt Z and Q}

\begin{definition}[syntactic length of a circuit over \Q]\label{def:syntactic-length-over-Q} The syntactic length of a circuit $C$ over \Q\ is the syntactic length of the corresponding circuit $C'$ over \Z\ constructed from $C$  in \ref{prop:Q-circuit-to-Z-circuit}.
\end{definition}
%\mar{should be put in the section when defining syntactic length }
%Given a CPS refutation $C(\vx,\vy)$ we can talk about \emph{the syntactic length of the CPS refutation} as the syntactic length of the circuit $C(\vx,\vy)$ (which is defined by induction on the circuit size, using the syntactic length of $+,\times$ gates as function of their inputs syntactic lengths; cf.~\autoref{sec:basic-twos-complement}).

%\begin{hypothesis}\label{hyp:knapsack}
%The binary value principle  $1+\sum_{i=0}^{n=1} 2^ix_i = 0$ has a polynomial-size IPS refutation.
%\end{hypothesis}

The main technical theorem of this section is the following:
\begin{theorem}[conditional simulation of constant-free boolean \cpsq\ by constant-free boolean \ipsq]\label{thm:ips_sim_cps}
 Let \assmp\ denote a system of polynomial equations over \Q\ written as constant-free circuits $\{F_i(\vx)=0\}_{i\in I}$ and let $C(\vx,\assmp)=-1$ be a constant-free \cpsq\ refutation of \assmp\ where $C(\vx,\assmp)$ is of size $s$ and syntactic length $t$ (as  in \autoref{def:syntactic-length-over-Q}).\footnotemark ~Assume that the binary value principle \bvptm\ has a
\hirsch{Changed formulation below: $M$ (and thus $r$) depends both on $s$ and $t$. The old one is commented out.
We can simplify it by requiring $r$ to be a polynomial so that we talk about $\poly(s,t)$ only.}
% size $\le r$ constant-free \ipsq\ refutation, for every given positive integer $M$
% with $\tau(M)=\poly(s)$. Then, there is a constant-free \ipsq\ refutation of \assmp\  with size $\poly(s,t,r)$.
 size $\le r$ constant-free \ipsq\ refutation, for every given positive integer $M$ with $\tau(M)=O(s)$.
 Then, there is a constant-free  \ipsq\ refutation of \assmp\  with size $\poly(s,t,r)$. \iddo{I\ simplified it to r, while also adding the bound on tau(M). I thought about your formulation: it seems that the formulation now is simpler and accurate: although r may depend on other parameters, we consider in this statement *fixed* r, s, t (also I'm not sure r depends on tau(M) or s in fact). And then the bound poly(s,t,r) means that there is a polynomial independent of r,s,t that bounds the size of the constructed proof. Also, we do not know that r is a monotone function of s,t! Maybe for some increased parameters to r, r is actually decreased!  So I don't see the advantage of the complicated bound $\poly\big(s,t,r\left(\poly\left(s,t\right),t\right)\big)$ (also it seems slightly strange to nest the poly's). Commented the previous thm below.}\end{theorem}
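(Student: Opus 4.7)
The plan is to translate the CPS refutation into an IPS refutation by extracting its bit representation and reducing the resulting polynomial identity to an instance of the generalized binary value principle. I would proceed in four steps. First, apply \ref{prop:Q-circuit-to-Z-circuit} to $C$, obtaining a constant-free $\Z$-circuit $C^\prime$ of size $O(s)$ which computes $M \cd C = -M$ for a positive integer $M$ with $\tau(M) = O(s)$ (multiplying by $-1$ if necessary). Second, apply the main binary value lemma (\ref{lem:main-binary-value-lemma}) to derive an IPS proof of $C^\prime = \val(\bitv(C^\prime))$ of size $\poly(s, t)$; combined with the trivially-IPS-provable polynomial identity $C^\prime + M = 0$ (via \ref{fact:zero-poly-ips-proof}), this yields an IPS derivation of $\val(\bitv(C^\prime)) + M = 0$. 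Third, prove in IPS that the sign bit of $\bitv(C^\prime)$ is $0$ modulo \assmp. Fourth, combine the previous steps to reduce to an instance of $\textup{BVP}_{t^\prime, M}$ for some $t^\prime = \poly(s, t)$ and invoke the hypothesis to conclude.

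Step~1 requires verifying that the denominator-clearing in \ref{prop:Q-circuit-to-Z-circuit} preserves $\vy$-conicity; this follows by a small refinement of the construction, since the rescaling factors $M_i$ are introduced as pure integer constants outside of the subcircuits whose conic structure matters, and on the squaring / negative-constant leaves the transformation changes nothing. Steps~2 and~4 are direct invocations of earlier results once Step~3 is in hand.

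The main obstacle is Step~3. I would prove it by induction on the top-down parse of the conic circuit $C^\prime$---descending through $+$ and $\times$ gates until reaching either a leaf or a squaring gate---showing that for every conic subcircuit $D$ encountered, IPS derives $\biti{\mathrm{sign}}(D) = 0$ of size polynomial in $|D|$ and the syntactic length. The base cases handle non-negative constants (trivial by the definition of $\bitv$) and placeholder variables substituted either by $F_j$ or $-F_j$ (which vanish modulo \assmp) or by the boolean-axiom expressions $x_j, 1-x_j$ (non-negative by the boolean axioms, and IPS-provably so via the binary value lemma). For inductive $+$ and $\times$ gates, assuming by induction that the operand sign bits vanish, the sign bit of $\addv$ / $\prdv$ of the operand bit vectors is $0$ by a direct IPS unwinding of the respective bit-arithmetic circuits (for $\addv$ we rely on padding the syntactic length so non-overflow is automatic; for $\prdv$ the output sign bit is an explicit XOR of the input sign bits). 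The crucial case is the squaring gate $D = D_1^2$: by \ref{def:prod}, the output sign bit of $\prdv(\vy, \vz)$ is $y_{\mathrm{sign}} \oplus z_{\mathrm{sign}}$, which on the diagonal $\vy = \vz = \bitv(D_1)$ becomes $y_{\mathrm{sign}} \oplus y_{\mathrm{sign}} = 0$ regardless of the sign of $D_1$---the induction therefore terminates at squaring gates without descending into $D_1$, which is precisely why $\vy$-conicity is the right syntactic notion to control the sign bit.

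Putting it all together, we obtain in IPS the equation $\sum_{i=0}^{t^\prime - 2} 2^i z_i + M = 0$ where $z_i := \biti{i}(C^\prime)$ (the booleanity of each $z_i$ is an IPS-provable consequence of its construction as an output of $\bitv$). This is precisely an instance of $\textup{BVP}_{t^\prime, M}$, and since $\tau(M) = O(s)$, the hypothesis furnishes an IPS refutation of size at most $r$. Composing all the above IPS proofs yields a constant-free IPS$_\Q$ refutation of \assmp\ of size $\poly(s, t, r)$, as desired.
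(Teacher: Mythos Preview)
Your overall architecture matches the paper's proof closely: reduce to $\Z$, invoke the main binary value lemma, prove the sign bit of the conic circuit vanishes, and finish via the BVP hypothesis. Two technical points deserve attention, however.

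\textbf{The axiom base case is not as immediate as you suggest.} You write that for placeholder variables substituted by $F_j$ or $-F_j$ the sign bit is handled because these ``vanish modulo \assmp''. But $\biti{\mathrm{sign}}(F_j)$ is defined structurally on the \emph{circuit} $F_j$, not on its value; the axiom $F_j=0$ does not directly zero out any particular bit circuit. What you actually obtain from the binary value lemma and the axiom is $\val(\bitv(F_j))=0$, i.e.\ $\sum_{i=0}^{t-2}2^i w_i - 2^{t-1}w_{t-1}=0$ for $w_i=\biti{i}(F_j)$. Extracting $w_{t-1}=0$ from this is itself a BVP-style inference: the paper (in its \autoref{lem:sign-bits-of-axioms-is-zero}) refutes the case $w_{t-1}=1$ by reducing $\sum 2^{i-1}x_i - 2^t=0$ to an instance of \bvpn\ via the substitution $x_i\mapsto 1-y_i$. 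So the BVP hypothesis is invoked already at this base case, not only in your Step~4.

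\textbf{The sign bit of $\prdv$ is not literally the XOR of the input sign bits.} By \autoref{def:prod}, $\prdv(\vy,\vz)=\addv\bigl(\prdvp(\iabsv(\vy),\iabsv(\vz))\oplus\vm,\,s\bigr)$ with $s=y_{t-1}\oplus z_{t-1}$; the output sign bit is the top bit of this final $\addv$, which is not $s$ itself. For a square ($\vy=\vz$) one does get $s=0$ and $\vm=\vzero$, but one must still argue that $\prdvp$ of two $\iabsv$-outputs has zero sign bit (this uses \autoref{cla:abspos} and the closure of zero sign bit under $\addv$), and then that the final $\addv$ with $0$ preserves it. The paper carries this out in \autoref{lem:sign-bit-of-sos}. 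The same remark applies to your inductive product case: knowing both operands have zero sign bit does not make the $\prdv$ sign bit an XOR equal to $0$; it makes $s=0$, after which the remaining argument is as for squares.

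With these two gaps filled, your proof coincides with the paper's.
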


\footnotetext{We need to consider also the size of the CPS refutation \emph{after} the substitution  of \assmp\ for the placeholder variables, that is, $C(\vx,\assmp)$, because of the slightly peculiar nature of IPS (similar to CPS) in which the size of a refutation does not include directly  the size of the assumptions it refutes.}

\begin{remark}
\begin{enumerate}
\item
By inspection of the proof of \autoref{thm:ips_sim_cps} one can see that the degree of the simulating IPS refutation can be
exponential in the size of the resulting circuit (clearly, the degree cannot be larger than that).
%\iddo{I don't follow precisely: should be more precise: do you mean the degree of the resulting IPS refutation is at most $2^{O(s)}$? And why is that? We should write something about the degree here I\ guess.}\hirsch{Changed wording. How can he degree be greater than exponential? Take $x$ and square it a polynomial number of times, you get $x^{2^{\poly(n)}}$. So we are just talking about noticing the fact that the degree can grow.} \iddo{Why? ``Growing exponentially in X" means that while the CPS has degree d, the resulting IPS circuit has degree $d^X$. In this case you say "grow exponentially in the size of the IPS refutation", hence we get that the degree may become $d^{|C|}$. But the trivial upper bound you mention gives us $d^{2^{|C|}}$. Or do you mean that "the degree d grows exponentially in the size of the resulting circuit C means that it becomes $d^{2^{|C|}}$? which should be "growing double exponentially". Also I don't think it does grow that fast. It grows much slower I think: it grows indeed only exponentially: $d^{|C|}$.}

\item Assuming that indeed propositional IPS simulates propositional CPS, by \autoref{prop:CPS-from-equtional-CNF-to-inequalities-CNF} propositional IPS also simulates any propositional CPS (or \ps/\sos) refutation of CNF formulas given as \emph{inequalities}. This is because if propositional CPS has a short refutation for a CNF given as inequalities (\autoref{def:semi-algebraic-transl-CNF}) then from  \autoref{prop:CPS-from-equtional-CNF-to-inequalities-CNF}, propositional CPS also has a short refutation of the CNF given as equations (\autoref{def:algebraic-transl-CNF}).
\end{enumerate}
\end{remark}
%The rest of this section is dedicated to proving \autoref{thm:ipssimsos}.
%Our proof strategy is as follows:
%\begin{itemize}
%\item We construct circuits that produce the bit representation for the value of an algebraic circuit and prove (in IPS) that the value is indeed as planned.
%\item We use the simulated SoS proof to show that a positive number (the sum of squares plus one)  is in the ideal generated by the input polynomials.
%\item We show that the sum of squares has a nonnegative bit representation.
%\item We obtain a contradiction \emph{using} the hypothesis that $\sum_i x_i2^i=-1$ is contradictory.
%\end{itemize}

%\begin{definition}
%By a \emph{canonical} circuit for a polynomial given by an algebraic formula we mean the most natural interpretation of the operations mentioned explicitly, for example, a canonical circuit for $\sum h_j^2$ contains a multiplication of the final gate of $h_j$ by itself and sums up then $h_j^2$'s by $+$ gates.
%\end{definition}

Since the simulation of CPS by IPS in \autoref{thm:ips_sim_cps} depends on the syntactic length $t$ of the simulated CPS proof, if we aim to achieve a (polynomial) simulation we need to bound the syntactic length of the CPS proofs to be at most polynomial in the proof size. We denote this restricted proof system by \cpszs\ and \cpsqs. In other words, a family $\set{\pi_i}_{i=1}^\infty$ of \cpsz\ (resp.~\cpsq) proofs is said to be \emph{a family of \cpszs\ (resp.~\cpsqs) proofs} if there is a constant $c$ such that for every $i\in\N$, the syntactic length of $\pi_i$ is at most $|\pi_i|^c$.
In other words, the maximal value (over \bits-assignments to the variables) of every gate in \cpszs\ proof-sequence   $\set{\pi_i}_{i=1}^\infty$ is  bounded from above by $2^{|\pi_i|^{O(1)}}$.
%Similarly, we denote by \cpsqs\ the restriction of \cpsq\ in which

It is important to note that most known examples of short semi-algebraic proofs of propositional formulas \emph{have polynomial syntactic length},
as the multiplication of arbitrary inequalities is not used, and multiplying by $x$
or by $1-x$ for a variable $x$ increases the syntactic length additively.
The use of division by scalars (for example, in the LS proof of PHP)
can increase the syntactic length in \ref{prop:Q-circuit-to-Z-circuit};
however, as those scalars have at most exponential (actually, polynomial) values,
the transformation from
rational numbers to integers can bring at most $(exp(\poly(n)))^{\textrm{proof-size}}$ factor,
thus a polynomial number of bits.\hirsch{pls check the reasoning!}

%Since the simulation of CPS by IPS in \autoref{thm:ips_sim_cps} depends on the syntactic length $t$ of the simulated CPS proof, if we aim to achieve a (polynomial) simulation we need to bound  the magnitude of coefficients in the CPS proofs to be at most exponential in the proof-size. We denote this restricted proof system by \cpszs. In other words, a family $\set{\pi_i}_{i=1}^\infty$ of \cpsz\ proofs is said to be \emph{a family of \cpszs\ proofs} if there is a constant $c$ such that for every $i\in\N$, every coefficient in $\pi_i$ has a value at most $2^{|\pi_i|^c}$. Similarly, we denote by \cpsqs\ the restriction of \cpsq\ in which the magnitude of \emph{both the numerator and denominator} of each rational coefficient in a proof is at most exponential in the proof-size.

\medskip

Recall the terminology in \ref{sec:Shub-Smale-lower-bound}: a refutation in \ipsz\ means a proof of $M$ for some nonzero integer $M$. Further, we say that \ipsz\ simulates \cpsq\ if a size-$s$ \cpsq\ proof of $p$ from assumptions \assmp\ over \Z\ implies that there is a $\poly(s)$-size \ipsz\ proof of $M\cd p$ from \assmp, for some nonzero  integer $M$.

The binary value principle  thus characterizes  exactly the apparent advantage  CPS has over IPS, in the following sense:

%\hirsch{We should mention that corollary is for boolean systems only.
%Not for arbitrary algebraic systems (our simulation is for boolean variables)
%and not for propositional systems (the hard example is not a CNF).}\iddo{Okay, added this. }

\begin{corollary}[BVP characterizes the strength of boolean CPS]\label{cor:capture-semi-algebraic-proofs}
In what follows, IPS and CPS stand for boolean IPS and boolean CPS, respectively, where both are proof systems for refuting unsatisfiable sets of polynomial equalities (not necessarily CNF formulas).
\begin{enumerate}
\item \label{it:cond-sim-corol-z}
Constant-free \ipsz\ simulates  constant-free \cpszs\ iff constant-free \ipsz\ admits $\poly(t)$-size refutations of \bvpt.

%\item \label{it:cond-sim-corol-q}
%\ipsq\ simulates  \cpsqs\ iff \ipsq\ admits $\poly(t)$-size refutations of \bvptm\ for every positive integer $M$.
\item \label{it:cond-sim-corol-q-cf}
Constant-free \ipsq\ simulates constant-free \cpsqs\ iff for every positive integer $M$,  constant-free \ipsq\ admits $\poly(t,\tau(M))$-size refutations of \bvptm.
\end{enumerate}
%Moreover, the same three statements above hold when \ipsq\ is replaced by \ipsz.

\end{corollary}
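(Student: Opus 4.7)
The plan is to prove each of the two biconditionals by establishing both directions separately. The ($\Rightarrow$) direction invokes Proposition \ref{prop:CPS-proof-of-BVP}, which provides short CPS refutations of the binary value principle, while the ($\Leftarrow$) direction invokes Theorem \ref{thm:ips_sim_cps}, the main conditional simulation theorem of this section.

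For the direction ($\Rightarrow$) in both parts, I would proceed as follows. By Proposition \ref{prop:CPS-proof-of-BVP}, the linear-size conic circuit $C(\vx,\vy) = \sum_{i=1}^{t} 2^{i-1} y_i + y_{t+1}$ is a CPS refutation of \bvpt; direct inspection shows that its syntactic length is $O(t)$, so it lies in \cpszs. Applying the simulation assumption yields a constant-free \ipsz\ refutation of \bvpt\ of size $\poly(t)$, giving part~1. For part~2, I would extend this to \bvptm\ by composing the analogous conic refutation with a constant-free circuit for $1/M$ so that its value becomes exactly $-1$ (as required for \cpsq\ refutations); the resulting circuit has size $O(t + \tau(M))$ and bounded syntactic length, so it lies in \cpsqs, and the simulation assumption then yields a $\poly(t,\tau(M))$-size \ipsqs\ refutation.

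For the direction ($\Leftarrow$), the plan is to directly invoke Theorem \ref{thm:ips_sim_cps}. For part~2 over \Q, given a \cpsqs\ refutation of \assmp\ of size $s$, the \cpsqs\ hypothesis guarantees that the syntactic length is $t = s^{O(1)}$. The theorem then produces a constant-free \ipsq\ refutation of size $\poly(s,t,r)$, where $r$ bounds the size of constant-free \ipsq\ refutations of \bvptm\ for each $M$ with $\tau(M) = O(s)$. Our assumption gives $r = \poly(t,\tau(M)) = \poly(t,s)$, so the total size is $\poly(s)$, establishing the simulation. For part~1 over \Z, the \Z-analog of Theorem \ref{thm:ips_sim_cps} applies: since an \ipsz\ refutation need only produce \emph{some} nonzero integer rather than exactly $1$, the reduction by the Main Binary Value Lemma (Lemma \ref{lem:main-binary-value-lemma}) terminates by substituting the bit polynomials $\biti{i}(C)$ directly into the given \ipsz\ refutation of \bvpt\ itself, rather than a family of shifted instances; together with polynomial-size \ipsz\ derivations that each bit polynomial is Boolean, this yields the desired refutation of \assmp.

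The hard part will be justifying the \Z-variant of Theorem \ref{thm:ips_sim_cps}, in particular that over \Z\ one only needs refutations of \bvpt\ rather than \bvptm\ for every positive $M$. The key point is that the scaling step that, over \Q, normalizes the output of the IPS refutation to the value $1$ (and thereby forces refutations of \bvptm\ for various $M$) is not needed over \Z, where any nonzero integer output suffices. Tracking constants and syntactic lengths through the Main Binary Value Lemma and the subsequent substitution, and verifying that the bit polynomials $\biti{i}(C)$ are provably Boolean in polynomial-size \ipsz, constitute the main bookkeeping.
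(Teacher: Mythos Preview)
Your proposal is correct and follows essentially the same approach as the paper: $(\Rightarrow)$ via the explicit short CPS refutations of \bvpt\ (resp.\ \bvptm) from \autoref{prop:CPS-proof-of-BVP}, and $(\Leftarrow)$ via \autoref{thm:ips_sim_cps}. Your handling of the $\Q$ case in both directions matches the paper's proof almost verbatim.

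One clarification on your treatment of the $\Z$ case in $(\Leftarrow)$: your explanation of \emph{why} only \bvpt\ (rather than \bvptm) is needed is slightly misdirected. You attribute it to the fact that an \ipsz\ refutation may output any nonzero integer; but that fact is relevant only to the \emph{final} substitution step (indeed, substituting the $\biti{i}(C)$'s into a size-$r$ \ipsz\ refutation of \bvpt\ yields a proof of some nonzero $M'$ from \assmp, which is a legitimate \ipsz\ refutation). The reason the reduction lands on \bvpt\ with $M=1$ in the first place is simply that a \cpsz\ refutation already proves $-1$ by definition, so no $\Q\to\Z$ clearing of denominators (the source of the factor $M$ in the $\Q$ case via \autoref{prop:Q-CPS-to-Z-CPS}) ever occurs. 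Once that is said, the rest of your sketch---bit extraction via \autoref{lem:main-binary-value-lemma}, sign-bit zero, booleanity of the $\biti{i}(C)$'s via \autoref{cor:bitisbit}, and substitution---is exactly right. (Also, a typo: in the $(\Rightarrow)$ direction for part~2 you wrote \ipsqs\ where \ipsq\ is meant.)
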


% \begin{corollary}[BVP characterizes the strength of boolean CPS]\label{cor:capture-semi-algebraic-proofs}
% In what follows, IPS and CPS stand for boolean IPS and boolean CPS, respectively, where both are proof systems for refuting unsatisfiable sets of polynomial equalities (not necessarily CNF formulas).
% \begin{enumerate}
% \item \label{it:cond-sim-corol-z}
% \ipsz\ (resp.~constant-free \ipsz) simulates     \cpszs\ (resp.~constant-free \cpszs) iff \ipsz\ (resp.~constant-free \ipsz) admits $\poly(t)$-size refutations of \bvpt.
%
% \item \label{it:cond-sim-corol-q}
% \ipsq\ simulates  \cpsqs\ iff \ipsq\ admits $\poly(t)$-size refutations of \bvptm\ for every positive integer $M$.
% \item \label{it:cond-sim-corol-q-cf}
% Constant-free \ipsq\ simulates constant-free \cpsqs\ iff for every positive integer $M$ constant-free \ipsq\ admits $\poly(t,\tau(M))$-size refutations of \bvptm.
% \end{enumerate}
% %Moreover, the same three statements above hold when \ipsq\ is replaced by \ipsz.
%
% \end{corollary}

\begin{proof}
We show the proof of \autoref{it:cond-sim-corol-q-cf} (which includes all the ideas for the other  case).

\nind ($\Leftarrow$) Assume that for every positive integer $M$  constant-free \ipsq\ admits $\poly(t,\tau(M))$-size refutations of \bvptm. Then specifically for  $\tau(M)=O(s)$ there is a $\poly(t,s)$ upper bound on the size of constant-free \ipsq\ refutations of \bvptm. By \autoref{thm:ips_sim_cps} if there exists a syntactic-length $t$ constant-free \cpsqs\ refutation of \assmp\ then there exists a constant-free IPS refutation of \assmp\ with size $\poly(s,t,r)=\poly(s)$, because $t =\poly(s)$ by assumption and $r=\poly(s,t)$.
%and the assumption that \ipsq\ simulates \cpsqs. %By \ref{thm:CPS-sim-IPS}, \cpsq\ simulates \ipsq\ (the syntactic length considerations

\nind ($\Rightarrow$) This follows from the \cpsz\  upper bound on \bvpn\  demonstrated in \autoref{prop:CPS-proof-of-BVP}. More precisely, it suffices to show that given a positive integer $M$ there are constant-free \cpsqs\ refutations of \bvptm\ having $\poly(t,\tau(M))$-size. Using the notation as in the proof of  \autoref{prop:CPS-proof-of-BVP}, we claim that the conic circuit $\frac{1}{M}\cd \left(\sum_{i=1}^t 2^{i-1}\cd y_i\right)  + \frac{1}{M}\cd y_{t+1} $ serves as such a refutation. Indeed, this conic circuit is easily written as an $O(t\cd \log t+\tau(M))$-size  \emph{constant-free} circuit. This is because $\tau(2^{i-1})=\log (i-1)$, for every $i=1,\dots, t$, and $1/M$ is clearly of size $2+\tau(M)$. That this conic circuit is a refutation of \bvptm\ follows immediately from the definition (see the proof of \autoref{prop:CPS-proof-of-BVP}).

%Note that since  \bvptm\ is an integral equation it is written as a constant-free circuit-equation over \Q\ in exactly the same manner that it is written over \Z\ (and specifically, the integer $M$ has no division gates). Second, since the size of \bvptm\ written as a constant-free circuit-equation is $O(t+\tau(M))$, a polynomial-size refutation of \bvptm\ in this case means a $\poly(t,\tau(M))$-size refutation.

%The proof of \autoref{it:cond-sim-corol-z} is similar, only that we observe the following: first,

%We begin with the proof of \autoref{it:cond-sim-corol-it1}: %($\Leftarrow$) This follows from \autoref{thm:ips_sim_cps}, only that we work with non-constant-free circuits and proofs (the proof is identical). ($\Rightarrow$) This follows from the \cpsz\  upper bound on \bvpn\ demonstrated in \autoref{prop:CPS-proof-of-BVP}: consider the conic circuit $\frac{1}{M}\cd \left(\sum_{i=1}^t 2^{i-1}\cd y_i\right)  + \frac{1}{M}\cd y_{t+1} $. It is a \cpsq\  refutation of \bvptm\ (see proof of \autoref{prop:CPS-proof-of-BVP}).

%(in fact, we only need to consider positive integers $M$'s such that $\tau(M)=\poly(s,t)$ ....

The proof of \autoref{it:cond-sim-corol-z} is similar and we omit the details.
%The fact that we can replace \ipsq\ by \ipsz\ stems from the fact that these proof systems are polynomially equivalent.
\end{proof}

 By \ref{thm:CPS-sim-IPS} CPS simulates IPS, hence when considering IPS proofs of which the syntactic-length grows polynomially in the size of the proofs, \ref{cor:capture-semi-algebraic-proofs} characterizes when IPS and CPS \emph{can simulate each other}. More precisely, similar to \cpszs\ and \cpsqs\ we denote by \ipszs\ and \ipsqs\ the proof systems consisting  of IPS proofs in which the syntactic length grows polynomial in the size of proofs (over \Z\ and \Q, respectively). In other words, a family $\set{\pi_i}_{i=1}^\infty$ of \ipsz\ (resp.~\ipsq) proofs is said to be \emph{a family of \ipszs\ (resp.~\ipsqs) proofs} if there is a constant $c$ such that for every $i\in\N$, the syntactic length of $\pi_i$ is at most $|\pi_i|^c$.

\begin{corollary}[Conditional equivalence of strong algebraic and semi-algebraic proofs]\label{cor:IPS-cond-equiv-CPS}
In what follows, IPS and CPS stand for boolean IPS and boolean CPS, respectively, where both are proof systems for refuting unsatisfiable sets of polynomial equalities (not necessarily CNF formulas).
\begin{enumerate}
\item %\label{it:cond-sim-corol-z}
Constant-free \ipszs\ is polynomially equivalent to constant-free \cpszs\ iff constant-free \ipszs\ admits $\poly(t)$-size refutations of \bvpt.

%\item \label{it:cond-sim-corol-q}
%\ipsq\ simulates  \cpsqs\ iff \ipsq\ admits $\poly(t)$-size refutations of \bvptm\ for every positive integer $M$.
\item %\label{it:cond-sim-corol-q-cf}
Constant-free \ipsqs\ is polynomially equivalent to  constant-free \cpsqs\ iff for every positive integer $M$ constant-free \ipsqs\ admits $\poly(t,\tau(M))$-size refutations of \bvptm.
\end{enumerate}
\end{corollary}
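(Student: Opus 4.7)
The plan is to reduce both directions of \autoref{cor:IPS-cond-equiv-CPS} to the non-starred characterisation \autoref{cor:capture-semi-algebraic-proofs} together with the (unconditional) simulation \autoref{thm:CPS-sim-IPS} of IPS by CPS, with the only new work being to check that in each simulation the polynomial syntactic-length bound is preserved. I will carry out the argument for part~2 (over \Q); part~1 is the same argument with \bvpt\ in place of \bvptm.

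\textbf{Forward direction $(\Rightarrow)$.} Assume constant-free \ipsqs\ is polynomially equivalent to constant-free \cpsqs. By the proof of \autoref{prop:CPS-proof-of-BVP} (using the conic circuit $\frac{1}{M}\sum_{i=1}^t 2^{i-1}y_i+\frac{1}{M}y_{t+1}$, which is constant-free of size $O(t\log t+\tau(M))$) we obtain a constant-free CPS refutation of \bvptm\ whose syntactic length is at most linear in its size, hence a \cpsqs-refutation. Applying the assumed equivalence yields a $\poly(t,\tau(M))$-size \ipsqs-refutation of \bvptm, as required.

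\textbf{Backward direction $(\Leftarrow)$.} Given the hypothesis I must establish both simulations.

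\emph{(i) \ipsqs\ simulates \cpsqs.} This is exactly the content of \autoref{cor:capture-semi-algebraic-proofs}(2), which in turn invokes \autoref{thm:ips_sim_cps}. The only extra point to verify is that when the input CPS-refutation has syntactic length $t=s^{O(1)}$, the output IPS-refutation also has polynomial syntactic length. This is visible from the proof of \autoref{thm:ips_sim_cps}: that construction represents the CPS computation by its bit-vector via \autoref{lem:main-binary-value-lemma}, so the values appearing at internal gates of the constructed IPS proof are essentially bits (of magnitude $\le 1$) together with sub-proofs of \bvptm, whose syntactic length is in turn controlled by $\tau(M)=O(s)$ and $t$. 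Hence the resulting proof lies in \ipsqs.

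\emph{(ii) \cpsqs\ simulates \ipsqs.} Apply \autoref{thm:CPS-sim-IPS}: an \ipsq\ proof of size $s'$ becomes a CPS proof of size $\poly(s')$. I need to check that if the original proof has syntactic length $t'=(s')^{O(1)}$, then so does the resulting CPS proof. The only step that could blow up the value computed at a gate is the minus-gate normalisation of \autoref{prop:single-minus-gate-at-top}, which replaces each gate $G$ by a pair $(G_P,G_N)$ with $G=G_P-G_N$ and both $G_P,G_N$ nonnegative. A straightforward induction on circuit structure shows that $G_P+G_N$ grows by at most a constant factor per gate of the input circuit, so the syntactic length of $G_P,G_N$ is $O(t'\cdot s')$, again polynomial in the original proof size. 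The rest of the construction in \autoref{thm:CPS-sim-IPS} only performs size-$O(1)$ local manipulations at each gate and so does not further affect syntactic length asymptotically. Hence the resulting CPS proof lies in \cpsqs.

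\textbf{Main obstacle.} The only non-routine point is the syntactic-length bookkeeping in step~(ii). In the product case of \autoref{prop:single-minus-gate-at-top} one has $G_P=F_PH_P+F_NH_N$, so the bound $G_P+G_N\le(F_P+F_N)(H_P+H_N)$ compounds multiplicatively through the circuit. I expect this to give the same $O(s'\cdot 2^{D'})$-type envelope as for ordinary constant-free circuits (\autoref{sec:basic-twos-complement}), hence polynomial in $t'$; but this is the estimate that must be verified carefully to conclude that \emph{starred} systems simulate each other, rather than merely that \ipsq\ simulates \cpsqs.
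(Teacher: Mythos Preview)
Your plan is correct and matches the paper's approach: the paper treats \autoref{cor:IPS-cond-equiv-CPS} as an immediate consequence of \autoref{cor:capture-semi-algebraic-proofs} together with the unconditional simulation \autoref{thm:CPS-sim-IPS}, and does not spell out a proof at all. You go further by explicitly addressing the one genuinely new issue, namely that both simulations preserve the polynomial syntactic-length bound required to stay inside the starred systems. That is the right thing to isolate.

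Two small sharpening comments.

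\emph{Step (i).} The phrase ``values \ldots\ are essentially bits (of magnitude $\le 1$)'' is not accurate: the proof built in \autoref{thm:ips_sim_cps} passes through $\val(\bitv(\cdot))=\sum_i 2^i w_i$, so internal gates take values up to $2^t$, not $1$. The correct observation is that every gate of the constructed IPS proof either (a) lies in a $\bitv$/$\addv$/$\prdv$ sub-circuit, where arithmetized boolean operations keep values $O(1)$, or (b) lies in a $\val$ or coefficient sub-circuit with values bounded by $2^{O(t)}$, or (c) lies in the plugged-in \bvptm\ refutation, whose syntactic length is polynomial by the \ipsqs\ hypothesis. This still gives syntactic length $\poly(s,t)$, so your conclusion stands.

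\emph{Step (ii).} Your concern about the product case in \autoref{prop:single-minus-gate-at-top} is the right place to look, but the compounding is tamer than your $O(t's')$ estimate suggests. Writing $\mu(G)=\max(\ell(G_P),\ell(G_N))$, the construction gives $\mu(F{+}H)\le\max(\mu(F),\mu(H))+1$ and $\mu(F{\cdot}H)\le 2\max(\mu(F),\mu(H))+4$, which differs from the original recursion $\ell(F{\cdot}H)=2\max(\ell(F),\ell(H))+3$ only in the additive constant. An easy induction then yields $\mu(G)\le 2\ell(G)+O(1)$, so the transformed CPS proof has syntactic length $O(t')$, not merely $O(t's')$. (Equivalently, your value-level quantity $G_P+G_N$ satisfies exactly the same multiplicative/additive recursion as the structural bound $2^{\ell(G)}$, hence $G_P+G_N\le 2^{\ell(G)}$ with no per-gate blow-up.) Either way the bound is polynomial, so your argument closes; the paper simply does not record this check.
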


\begin{remark}

The results above in \autoref{thm:ips_sim_cps}, \autoref{cor:capture-semi-algebraic-proofs} and \autoref{cor:IPS-cond-equiv-CPS} hold trivially also in the \emph{unit-cost model}, where we consider the size of coefficient in the ring or field to be $1$. More precisely, if we replace the term ``constant-free proof'' with the term ``proof'' the results still hold.
This is because we limit the syntactic length of the original CPS circuit, and the size of circuit families of polynomial  syntactic length in the unit-cost model is smaller or equal than their size in the constant-free model. And if a family of constant-free circuits (proofs) $C_n$ simulates a family of constant-free circuits \emph{with a polynomial syntactic length} $D_n$, then the corresponding circuit family $C_n'$ in the unit-cost model  also simulates the corresponding circuit family $D_n'$ in the unit-cost model (because $|D_n|\le \poly(|D_n'|)$).
\end{remark}

\subsection{Proof of \autoref{thm:ips_sim_cps}}

%The rest of this section is devoted to proving \autoref{thm:ips_sim_cps}.

We need to show that there is an \ipsz\ refutation of \assmp. We first translate the setting to the integers, since this will allow us to use the main binary value \autoref{lem:main-binary-value-lemma} which is stated for \Z, as follows: we take the \cpsq\ refutation, turn it into a  \cpsz\ refutation without increasing the size too much (the syntactic length stays the same by definition), and then simulate this refutation  in \ipsz, that is, construct an \ipsz\ proof from \assmp\ of a nonzero integer $M$. Dividing this \ipsz\ refutation by $M$ we get the desired \ipsq\ refutation of \assmp. We formalize this conversion in the following proposition:

\begin{proposition}[going from constant-free  \cpsq\ to constant-free \cpsz]\label{prop:Q-CPS-to-Z-CPS}
Let \assmp\ denote a system of polynomial equations over \Q\ written as  constant-free circuits $\{F_i(\vx)=0\}_{i\in I}$ and let $C(\vx,\assmp)=-1$
be a constant \cpsq\ refutation of \assmp,
where $C(\vx,\assmp)$ has size $s$ and syntactic length $t$.
Then, there exists a set of polynomial equations over \Z\ denoted  $\assmp^\star=\{F^\star_i(\vx)=0\}_{i\in I}$,
where $F^\star_i(\vx)=M_i\cd F_i(\vx)$ for some nonnegative $M_i\in\Z$, for all $i\in I$, and a constant-free
\cpsz\ proof $C^\star(\vx,\vy)$ from $\assmp^\star$ of $M\cd (-1)$,
for some nonzero $M\in\Z$, such that $C^\star(\vx,\assmp^\star)$ has both size and syntactic length $\poly(s,t)$.
%\mar{(CHECK! see below!! I think that is okay then. Need to check both constant-free and non constant-free cases ....)***}  %Moreover, in case the binary value principle has a polynomial-size constant-free \ipsq\ refutation, the $\poly(s,t)$-size \ipsq\ refutation of \assmp\ is also constant-free.
\end{proposition}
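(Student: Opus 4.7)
The plan is to adapt the constant-free $\mathbb{Q}$-to-$\mathbb{Z}$ transformation of \ref{prop:Q-circuit-to-Z-circuit} to the semi-algebraic setting. Three things must be handled beyond that proposition: we must convert the rational assumptions $F_i$ to integer assumptions $F_i^\star = M_i F_i$, preserve the $\vy^\star$-conic structure of the circuit, and guarantee that the final scaling $M$ is \emph{positive}, so that $M \cdot (-1) = -M$ is negative and $C^\star$ is a genuine CPS$_{\mathbb{Z}}$ refutation rather than a trivial proof of a positive constant.

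For each $i \in I$, apply \ref{prop:Q-circuit-to-Z-circuit} to the subcircuit of $F_i$ to obtain an integer multiplier $m_i$ with $\tau(m_i) \le 4|F_i| = \poly(s)$ and $m_i F_i \in \mathbb{Z}[\vx]$. Set $N_i := m_i^2 \in \mathbb{Z}_{>0}$, represent it by the squaring subcircuit $m_i \cdot m_i$, and put $M_i := N_i$ and $F_i^\star := N_i F_i$. The squaring both keeps $N_i$ positive and shields any $-1$ leaves inside the $m_i$-subcircuit. Preprocess $C$ into a constant-free $\mathbb{Q}$-circuit $\widetilde C(\vx, \vy^\star)$ by replacing each placeholder leaf $y_i$ with the division gate $y_i^\star / N_i$. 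Since no new unshielded $-1$ or $\vx$-variable leaves are introduced, $\widetilde C$ is $\vy^\star$-conic, has size $\poly(s)$, and satisfies $\widetilde C(\vx, \vF^\star) = C(\vx, \vF) = -1$.

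Next, apply the transformation of \ref{prop:Q-circuit-to-Z-circuit} to $\widetilde C$, but modify the division case to force positivity: for $g_i = g_j / g_l$ with $g_l$ a division-free variable-free subcircuit of nonzero integer value $c_l$, set $g_i^\star := g_{j a_j} \cdot g_l$ and $m_i := m_j \cdot g_l \cdot g_l$, yielding multiplier $M_i = M_j \cdot c_l^2 > 0$ (rather than the $M_j c_l$ of the original proposition). The $+, \cdot$ and base cases are inherited unchanged; by induction every $M_i$ is a positive integer, so $M := M_{\text{output}}$ is positive with $\tau(M) \le 4|\widetilde C| = \poly(s)$. The resulting circuit $C^\star$ is a constant-free $\mathbb{Z}$-circuit of size $O(|\widetilde C|) = \poly(s)$, computes the polynomial $M \cdot \widetilde C$, and hence $C^\star(\vx, \vF^\star) = -M$; its syntactic length is bounded by $\log M + t = \poly(s,t)$.

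The main thing to verify is that $C^\star$ is $\vy^\star$-conic. Squaring gates are preserved: an original product $g_j \cdot g_j$ becomes $g_j^\star \cdot g_j^\star$, and the newly introduced $g_l \cdot g_l$ inside each $m_i$ is a squaring that shields any $-1$ leaves of $g_l$ appearing within the multiplier subcircuits. The transformation introduces no new $-1$ or $\vx$-variable leaves, so any such leaf in $C^\star$ is inherited from $\widetilde C$; its paths to the output of $C^\star$ are precisely the paths in $\widetilde C$, augmented either by multiplications with positive-multiplier subcircuits $m_i$ (whose own $-1$ leaves are shielded internally) or, where a division gate was replaced, by the extra factor $g_l$ inside $g_i^\star$. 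Using the $\vy^\star$-conicity of $\widetilde C$, any path of the latter kind already contained a squaring gate either inside $g_l$ (preserved verbatim) or strictly above the original gate $g_i$; by the topological correspondence of gates in the transformation, that squaring persists above $g_i^\star$ in $C^\star$, so the path still passes through a squaring gate. The main obstacle is precisely this path-tracing step, i.e., showing that the unsquared use of $g_l$ inside $g_i^\star$ does not expose a previously shielded $-1$ leaf; it rests on the fact that every path through the original division gate already contained a squaring gate above it, which the transformation faithfully carries over.
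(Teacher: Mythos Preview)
Your proposal is correct and considerably more careful than the paper's own argument, which disposes of the proposition in three lines by saying ``apply \ref{prop:Q-circuit-to-Z-circuit} to both $\assmp$ and $C$'' and invoking the definition of syntactic length. The paper verifies neither that the transformed circuit remains $\vy$-conic, nor that the scalar $M$ comes out positive, nor how a separately transformed $C^\star$ and separately transformed $\assmp^\star$ fit together (naively $C^\star(\vx,\assmp^\star)=M_C\cdot C(\vx,M_1F_1,\dots)$ is not a constant at all). Your preprocessing step $y_i\mapsto y_i^\star/N_i$ cleanly resolves the last issue, and your squaring modification $M_i=M_j c_l^2$ forces positivity and internally shields every $-1$ leaf in the multiplier subcircuits, making the conicity verification direct. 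It is worth knowing that the paper's \emph{unmodified} transformation also happens to preserve conicity and yield $M>0$, essentially because the multiplier computation mirrors the gate structure (so $M_{g\cdot g}=M_g\cdot M_g$ is itself a squaring gate, and every negative $c_l$ ends up with even multiplicity); but establishing this requires exactly the kind of path-tracing you carry out, which the paper omits. One minor point: your syntactic-length estimate ``$\log M+t$'' is a bit loose for intermediate gates, but since $M_i^{\text{yours}}=(M_i^{\text{paper}})^2$ and the paper's transformed circuit has syntactic length $t$ by definition, your values are at most $2^{O(t)}$ and the $\poly(s,t)$ conclusion stands.
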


\begin{proof}
The proof is identical to the proof of \ref{prop:Q-circuit-to-Z-circuit} (cf.~\ref{cor:from-Q-IPS-to-Z-IPS}).
%\mar{\hirsch{\autoref{prop:Q-circuit-to-Z-circuit} is for constant-free circuits only;
%on the other hand, the statement is \emph{trivial} for circuits with unit-cost constants
%(we can multiply them by arbitrary numbers for free); note, however, that
%the resulting $M$ is not bounded by anything, will it bring trouble when we apply this Proposition?}\iddo{fixed}}
 Specifically, given a constant-free circuit $D$ over \Q\ the Induction Statement in the proof of  \ref{prop:Q-circuit-to-Z-circuit}  shows that there exists a size at most $4|D|$ constant-free circuit $D^\star$ over \Z\ that computes $M\cd \widehat D$ for some nonzero integer $M$.
Accordingly, we turn \assmp\ into $\assmp^\star$ and $C(\vz,\vy)$ into $C^\star(\vz,\vy)$ in this way. By definition of syntactic length for circuits over \Q\ the syntactic length of $C^\star(\vz,\vy)$ is $t$.
\end{proof}

By \ref{prop:Q-CPS-to-Z-CPS}, to prove \ref{thm:ips_sim_cps} we can assume without loss of generality
that  \assmp\ is a system of constant-free-circuit equations \emph{over \Z} and
that $C(\vx,\assmp)=-M$
is a constant-free \cpsz\ refutation, where $C(\vx,\assmp)$ is of size $s$ and syntactic length $t$. Thus, \emph{from now on we assume that all constant-free circuits and proofs are over \Z.}

%All the statements that follow in this section are valid for both constant-free \ipsz\ and non-constant-free \ipsz, hence we shall not mention explicitly if the circuits are constant-free or not.
%\iddo{Check if indeed we don't need to distinguish between constant-free and otherwise.}
%\hirsch{There is a problem! To arithmetize non-constant-free circuit, we need syntactic length of its constants, which is unbounded.}\iddo{I don't see the problem }
\bigskip

Given a multi-output circuit of size $s$, with $m$ output gates, each computing the circuit $H_i$ (for $i\in[m]$), we assume that an algebraic circuit for $\sum_{j=1}^m H_j^2$ is defined to be a sum of  $m$ summands, written as a binary tree of logarithmic in $m$ depth, in which each summand $H_j^2$ is defined as the circuit whose output  is a product gate with its two children connected to the output gate of $H_j$, and where different $H_j$'s can have common nodes (so that the size of the circuit computing  $\sum_{j=1}^m H_j^2$ is linear in $s$).
\begin{lemma}[sign bit of sum of squares is zero]
\label{lem:sign-bit-of-sos}
Consider the circuit $H=\sum_{j\in J} H_j^2$, and let $\biti{t}(H)$ be the sign bit of $\bitv(H)$.
Then $\biti{t}(H)=0$ has a polynomial-size IPS proof (using only the boolean axioms).
%\hirsch{I do not get why ``from $0=0$''. We still need the booleanness of the sign bit of a circuit,
%which is probably only true if it is constructed from boolean variables.}
\end{lemma}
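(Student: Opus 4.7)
The plan is to trace through the recursive construction of $\bitv(H)$ along the structure of $H = \sum_{j\in J} H_j^2$ (written as a balanced binary tree of additions) and show that at every intermediate node the sign bit is provably zero in IPS. Two ingredients suffice: (A) for each $j$, the sign bit of $\bitv(H_j^2) = \prdv(\bitv(H_j),\bitv(H_j))$ has a polynomial-size IPS proof of being zero; and (B) a \emph{sign-bit propagation lemma}: if $\vu,\vv$ are bit vectors of equal length $t$ whose top bits satisfy $u_{t-1}=0$ and $v_{t-1}=0$ with polynomial-size IPS proofs, then $\addi{t}(\vu,\vv)=0$ also admits a polynomial-size IPS proof. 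Combining (A) at the leaves of the sum-of-squares tree with an inductive application of (B) up the tree gives $\biti{t}(H)=0$.

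Step (B) is almost syntactic: from $u_{t-1}=v_{t-1}=0$ the padding convention of $\addv$ yields $u_t=u_{t-1}=0$ and $v_t=v_{t-1}=0$; substituting into the arithmetization of the carry recurrence $\cari{t}(\vu,\vv) = (u_{t-1}\land v_{t-1})\lor((u_{t-1}\lor v_{t-1})\land \cari{t-1}(\vu,\vv))$ and simplifying gives $\cari{t}(\vu,\vv)=0$, hence $\addi{t}(\vu,\vv)=u_t\oplus v_t\oplus \cari{t}=0$. All intermediate equalities are polynomial identities between circuits and so admit polynomial-size IPS proofs by \autoref{fact:zero-poly-ips-proof}.

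For step (A), the key observation is that when $\prdv$ is fed the same vector $\vy=\bitv(H_j)$ twice, the sign mask $s=y_{t-1}\oplus y_{t-1}$ is provably zero in IPS: its arithmetization equals $2y_{t-1}-2y_{t-1}^2$, which vanishes modulo the boolean axiom $y_{t-1}^2-y_{t-1}=0$. Consequently $\vm=\ve(s)=\vnz$, and $\prdv(\vy,\vy)$ reduces to $\addv(\prdvp(\iabsv(\vy),\iabsv(\vy)),0)$. By \autoref{cla:abspos} the sign bits of both copies of $\iabsv(\vy)$ are provably zero. It then suffices to show that $\prdvp(\va,\vb)$ has sign bit zero whenever $\va$ and $\vb$ do: inspecting \autoref{def:prodp}, each shifted partial product $\overline s_i$ has its top bit equal to $a_{t-1}\land b_i$, which is provably zero from $a_{t-1}=0$, so the iterated addition that defines $\prdvp$ is an $O(t)$-fold application of the propagation lemma (B). A final application of (B) to the outer $\addv$ with the single bit $0$ completes the proof of (A).

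The main obstacle I anticipate is bookkeeping rather than any genuinely novel argument. To obtain total size $\poly(|H|,t)$ one must treat the IPS proofs of ``sign bit is zero'' at intermediate nodes as shared sub-proofs (in the sense of \autoref{def:sub-proof}) so that they are not duplicated across the $O(|J|)$ leaves of the sum-of-squares tree and the $O(t)$ iterations inside each $\prdvp$. A minor subtlety is that the length of the intermediate bit vectors grows mildly up the addition tree because each $\addv$ appends one padding bit, but since the total syntactic length remains $\poly(|H|,t)$, this is absorbed into the size bound.
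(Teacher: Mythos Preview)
Your proposal is correct and follows essentially the same two-step structure as the paper: first a sign-propagation lemma for $\addv$ (your step (B)), then showing each square has sign bit zero (your step (A)), and finally chaining these up the addition tree. The paper's treatment of (B) is verbatim yours.

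The one place you deviate is in (A): to get $s=y_{t-1}\oplus y_{t-1}=0$, the paper invokes \autoref{prop:IPS-cases} and does a case split on the sign bit of $H_j$, noting that in both cases $s=0$; you instead compute the arithmetization $2y_{t-1}-2y_{t-1}^2$ directly and cancel it using the booleanity of $y_{t-1}$. This is a legitimate and slightly cleaner shortcut, but note that $y_{t-1}=\biti{t-1}(H_j)$ is not a literal variable, so the ``boolean axiom'' you invoke is really \autoref{cor:bitisbit}; you should cite it explicitly. After that, both arguments coincide: $\vm=\vnz$, the sign bit of $\prdvp$ is zero because each partial product $\overline s_i$ has top bit $a_{\text{top}}\land b_i$ with $a_{\text{top}}=0$ by \autoref{cla:abspos}, and one more application of (B) handles the outer $\addv$.
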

\begin{proof}
Informally, the idea is to prove the desired equation using only the structure  of sign bits of additions and squares appearing in  top layers only (the layers close to the output gate) of $H$, without looking at the individual structure of the circuits $H_j$'s.

First, we show that the sum of two nonnegative numbers is nonnegative, that is, if a pair of circuits have sign bits that are zero then the sign bit of their addition is also zero, and in symbols:
$$
\biti{t}(F)=0,~\biti{t}(G)=0\ipsprf{\poly(|F|,|G|)} \biti{t+1}(F+G)=0\,,
$$
where the sign bit of $F,G$ is bit $t$ and the sign bit of $F+G$ is  bit $t+1$.
% are provably zero, we can prove efficiently that the sign bit of the sum is zero as well.

Let $y:=\biti{t}(F)$ and $z:=\biti{t}(G)$, then by \autoref{def:carryadd} the sign bit of $F+G$ is computed as $y\oplus z\oplus\cari{t+1}(\bitv(F),\bitv(G))$, because we have padded $F$ and $G$ by their sign bits $y,z$, respectively, before the addition. Given that $y=0$ and $z=0$ by assumption, we need to prove that $\cari{t+1}(\bitv(F),\bitv(G))=0$. By \autoref{def:carryadd} $\cari{t+1}(\bitv(F),\bitv(G))= (y\land z) \lor ((y\lor z) \land \cdots))$. Since the arithmetic expressions (according to \autoref{def:arithmetization}) for $y\land z$ and $y\lor z$ can be easily proved to be zero (from $y=0$, $z=0$), and the same holds for $0\land \cdots$, we conclude that the sign bit of $F+G$ is zero.
\medskip

% By a similar argument the sum of two negative numbers can be proved to be negative.
% % Removed - not used here. --EH

To prove that  each of the squares $H_j^2$ are nonnegative, one needs to consider the two cases of the sign bit $x$ of $H_j$ and infer that the sign bit of the square is zero in both cases
using \autoref{prop:IPS-cases}.

Recall that
\[
\prdv(\vy,\vz):= \addv\left(\prdvp\left(\iabsv(\vy),\iabsv(\vz)\right)\oplus\vm,s\right),
\]
where $s=y_{t'}\oplus z_{t'}$ and $\vm=\ve(s)$, with $y_{t'},z_{t'}$ the sign bits of $\vy,\vz$ as bit vectors in the two's complement notation, respectively.

In both cases of the sign of $H_j$, we have $s = 0$ and $\vm=\vzero$ as $y$ and $z$ are equal in our case.
Everything else is identical in both cases:
the sign bit of $\prdvp$ is always zero,
because $\prdvp$ is a consecutive sum of nonnegative numbers
(the sign of each of those numbers $s_i$ from the definition of $\prdvp$
is obtained by $\land$-ing a single bit with the sign of $\iabsv$,
the latter being zero by \autoref{cla:abspos}),
and we have already proved that the sum of nonnegative numbers is nonnegative.
Applying the latter fact once again,
we conclude that the sign of $H_j^2$ is zero in both cases.
\end{proof}

We will need the following simple lemma:
\begin{lemma}\label{lem:F-squared-F-equals-zero}
Let $G$ be an algebraic circuit which is an arithmetization of a boolean circuit $g$ (\autoref{def:arithmetization}). Then, IPS has a polynomial-size in $|G|$ derivation of $G^2-G$ from the boolean axioms.
\end{lemma}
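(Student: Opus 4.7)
\medskip

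The plan is to proceed by induction on the structure of $g$ (equivalently, on the gates of $G$ as an arithmetization). For each gate $v$ of $G$ let $G_v$ denote the sub-circuit rooted at $v$; we will construct, gate by gate, an IPS proof $\pi_v$ of $G_v^2 - G_v = 0$ from the boolean axioms $\vx^2 - \vx$, with $|\pi_v| = \poly(|G_v|)$.

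For the base cases: if $v$ is an input labelled by a variable $x_i$, then $\pi_v$ is literally the placeholder substitution for the boolean axiom $x_i^2 - x_i$, which has constant size. If $v$ is a constant ($0$ or $1$ by the definition of $\arit\cdot$), then $G_v^2 - G_v$ is the identically zero polynomial, so by \autoref{fact:zero-poly-ips-proof} there is a trivial constant-size IPS proof. For the inductive step, suppose $v$ is a connective gate with children $u_1, u_2$ and let $A = G_{u_1}$, $B = G_{u_2}$; assume $\pi_{u_1}, \pi_{u_2}$ of polynomial sizes $s_1, s_2$ have been constructed proving $A^2 - A = 0$ and $B^2 - B = 0$. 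The three cases correspond to the three connectives in \autoref{def:arithmetization}, and in each case one writes $G_v^2 - G_v$ as an explicit polynomial combination of $A^2 - A$ and $B^2 - B$:
\begin{itemize}
\item \textbf{AND} ($G_v = AB$): use the formal identity
$(AB)^2 - AB = B^2 \cdot (A^2 - A) + A \cdot (B^2 - B).$
\item \textbf{OR} ($G_v = 1 - (1-A)(1-B)$): one verifies
$G_v^2 - G_v = (1-B)^2 \cdot (A^2 - A) + (1-A) \cdot (B^2 - B),$
by direct expansion (equivalently, apply the AND identity to $(1-A), (1-B)$ and use $((1-A))^2 - (1-A) = A^2 - A$).
\item \textbf{XOR} ($G_v = A + B - 2AB$): direct expansion yields
$G_v^2 - G_v = (1 - 2B + 2B^2)(A^2 - A) + (1 - 4AB + 2A)(B^2 - B)$
(or any analogous combination), which one checks by matching coefficients on both sides as a formal polynomial identity.
\end{itemize}
In each case $\pi_v$ is obtained by taking $\pi_{u_1}$ multiplied by a small circuit of size $O(|A| + |B|)$, $\pi_{u_2}$ multiplied by another such circuit, and summing; since the required polynomial identities linking $G_v^2 - G_v$ to the combination above hold formally, the equality of circuits follows from \autoref{fact:zero-poly-ips-proof} at the cost of a further $\poly(|G_v|)$ factor.

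Summing the recurrence $|\pi_v| \le |\pi_{u_1}| + |\pi_{u_2}| + \poly(|A|+|B|)$ over all gates of $G$ yields $|\pi_G| = \poly(|G|)$, and $\pi_G$ is the desired IPS derivation of $G^2 - G$ from the boolean axioms. The main (minor) obstacle is simply verifying the three polynomial identities above as formal identities in $\Z[A,B]$; once these are in hand, the bookkeeping of the induction is routine, and sharing of sub-circuits $A, B$ between $\pi_v$ and earlier sub-proofs keeps the growth polynomial.
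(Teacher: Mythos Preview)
Your proposal is correct and follows essentially the same approach as the paper, which simply says the lemma ``is proved by induction on $|G|$'' and cites \cite{GH03} for the analogous statement in polynomial calculus over formulas. One small arithmetic slip: your displayed XOR identity is not quite right (expanding it produces a spurious $B^3$ term); a clean correct version is
\[
(A+B-2AB)^2-(A+B-2AB)=\bigl(1+4(B^2-B)\bigr)(A^2-A)+(B^2-B),
\]
which you can verify directly. Since you already hedge with ``or any analogous combination'' and note that the identity should be checked by matching coefficients, this does not affect the validity of the argument.
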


\begin{proof}
This is proved by  induction on $|G|$; see for example \cite[Lemma 4]{GH03}, where this is proved for polynomial calculus over algebraic formulas denoted $\mathcal{F}{\text{-}} PC$.
%\iddo{Original proof seems redundant: The induction on the construction of $\bitv$. Note that the bits of $\addv$ and $\prdv$ are defined via other circuits computing a single bit each, and the corresponding ``boolean'' equation can be derived for these circuits by the induction hypothesis.}
\end{proof}

Since for any circuit $F$, $\biti{i}(F)$ is  the result of an arithmetization of a boolean circuit we have:
\begin{corollary}\label{cor:bitisbit}
Let $F$ be a circuit, then IPS has a polynomial-size derivation of $\biti{i}(F)^2-\biti{i}(F)$ from the boolean axioms.
\end{corollary}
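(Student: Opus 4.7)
The plan is essentially a one-line application of the preceding \autoref{lem:F-squared-F-equals-zero}, but it is worth spelling out why the hypothesis of that lemma is satisfied. First I would trace through \autoref{def:bit} to observe that, by induction on the structure of $F$, the circuit computing $\biti{i}(F)$ is built entirely out of: (i) the input boolean variables $x_j$ (which are trivially the arithmetization of themselves), (ii) constants in $\{0,1\}$ coming from the base cases for $F=\alpha$ or padding, and (iii) the addition and product circuits $\addv$ and $\prdv$, which in turn are built from $\cari{i}$, $\addi{i}$, $\iabsv$, $\prdvp$. The key point is that each of those sub-circuits (\autoref{def:carryadd}, \autoref{def:abs}, \autoref{def:prodp}, \autoref{def:prod}) is written purely in terms of the arithmetized boolean connectives $\land,\lor,\oplus$ from \autoref{def:arithmetization} applied to previously-built bits.

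Consequently $\biti{i}(F)$ is the arithmetization $\arit{g}$ of some boolean circuit $g$ in the variables $\vx$, where $|g|=O(|\biti{i}(F)|)$. By \autoref{lem:F-squared-F-equals-zero} applied to $G:=\biti{i}(F)$, IPS has a derivation of $G^2-G=0$ from the boolean axioms of size $\poly(|G|)=\poly(|\biti{i}(F)|)$, which is exactly the statement of the corollary.

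The only real step that requires a moment's thought is verifying that every gate introduced inside $\bitv(F)$ is genuinely the arithmetization of a boolean gate and not, say, an arithmetic sum of non-boolean integers; but this is immediate because $\addi{i}$, $\cari{i}$, the bits of $\iabsv$, and the bits of $\prdv$ are all syntactically defined via $\land,\lor,\oplus$ on previously-defined bits. The main obstacle — if one wanted a self-contained proof rather than a citation — would be carrying out the induction inside \autoref{lem:F-squared-F-equals-zero} itself (handling each of $\arit{A\land B}$, $\arit{A\lor B}$, $\arit{A\oplus B}$ in the inductive step), but since that lemma is already established (and also appears as Lemma~4 of \cite{GH03}), the corollary is immediate.
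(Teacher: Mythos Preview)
Your proposal is correct and is exactly the paper's approach: the paper simply notes, in the sentence immediately preceding the corollary, that ``for any circuit $F$, $\biti{i}(F)$ is the result of an arithmetization of a boolean circuit'' and then invokes \autoref{lem:F-squared-F-equals-zero}. Your write-up just spells out in more detail why that hypothesis holds, which is a fair elaboration of the same one-line argument.
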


% \begin{lemma}[sign bit of sum-of-squares is zero]\label{lem:SoS}
% Consider the circuit $H=\sum_{j\in J} H_j^2$, and let $\biti{t}(H)$ be the sign bit of $\bitv(H)$.
% Then $\biti{t}(H)=0$ has a polynomial-size IPS proof (using only the boolean axioms).
% %\hirsch{I do not get why ``from $0=0$''. We still need the booleanness of the sign bit of a circuit,
% %which is probably only true if it is constructed from boolean variables.}
% \end{lemma}

\begin{lemma}[sign bit of literals is zero]\label{lem:sign-bit-of-lit-is-zero} %old was lem:bit
Let $x_i$ be a variable and let $\biti{1}(x_i)$ and $\biti{1}(1-x_i)$ be the sign bits of of $\bitv(x_i)$ and $\bitv(1-x_i)$, respectively. Then $\bitv(x_i)=0$ and $\bitv(1-x_i)=0$ have constant-size IPS proofs (using only the boolean axioms). \mar{Make sure to define the position of the sign bit to starts at 0}
\end{lemma}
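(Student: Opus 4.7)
The plan is to treat the two claimed identities separately, since each rests on a different observation. First I would handle $\biti{1}(x_i)=0$: by Case~1 of \autoref{def:bit}, when $F=x_j$ the construction of $\bitv(F)$ defines $\biti{0}(F):=x_j$ and $\biti{1}(F):=0$ directly. So $\biti{1}(x_i)$ is \emph{literally the zero polynomial} as an algebraic expression, hence $\biti{1}(x_i)=0$ is a trivial polynomial identity and by \autoref{fact:zero-poly-ips-proof} admits a constant-size IPS proof (not even requiring the boolean axioms).

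For $\biti{1}(1-x_i)=0$ I would view $1-x_i$ as the constant-size circuit $1+(-1)\cdot x_i$ and trace it through \autoref{def:bit}: $\bitv(1)$ and $\bitv(-1)$ are the two's-complement encodings of the constants $\pm 1$ (each of constant syntactic length), $\bitv(x_i)=0\,x_i$, and then $\bitv((-1)\cdot x_i) = \prdv(\bitv(-1),\bitv(x_i))$ followed by $\bitv(1-x_i) = \addv(\bitv(1),\bitv((-1)\cdot x_i))$. Because every constituent sub-circuit has constant size and there are no variables other than the single boolean $x_i$, the entire multi-output circuit $\bitv(1-x_i)$, and in particular its sign bit $\biti{1}(1-x_i)$, is a polynomial in $x_i$ of size bounded by an absolute constant independent of anything in the problem.

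With that in hand, the identity $\biti{1}(1-x_i)=0$ is a fixed polynomial identity over the single boolean variable $x_i$. I would then invoke reasoning by boolean cases in IPS (\autoref{prop:IPS-cases}): when $x_i=0$ the circuit $1-x_i$ evaluates to $1$, and the sign bit of the two's-complement encoding of the nonnegative value $1$ computed by the constant-size circuit $\bitv(1-x_i)$ is easily checked by hand to be $0$; when $x_i=1$ the circuit evaluates to $0$ and the sign bit is again $0$. Since the case split involves only the single boolean axiom $x_i^2-x_i=0$ and a constant-size verification on each branch, the resulting IPS proof is of constant size.

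The only real obstacle is the book-keeping in the second part: making sure that, after tracing through the product and sum circuits of \autoref{def:prod} and \autoref{def:carryadd}, the constant-size polynomial expression one gets for $\biti{1}(1-x_i)$ does vanish identically in each of the two substitutions $x_i\in\{0,1\}$. But since all relevant sub-circuits have a priori bounded size and involve no free variables beyond $x_i$, this is a finite check that contributes only an absolute constant to the proof size, so the overall bound remains $O(1)$ as claimed.
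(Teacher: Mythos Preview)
Your proposal is correct and follows essentially the same approach as the paper: the first identity holds by definition of $\bitv$ on a variable, and the second is handled by a two-way boolean case split on $x_i$ via \autoref{prop:IPS-cases}. Your treatment is in fact a bit more careful than the paper's terse proof, since you explicitly trace $1-x_i$ through the $\prdv$/$\addv$ machinery rather than asserting outright that its syntactic length is $2$; but this extra bookkeeping does not change the argument or the bound.
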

\begin{proof}
Observe that indeed the syntactic length of $x_i$ and $1-x_i$ is 2. Now, $\biti{1}(x_i)=0$ holds by definition, since we define $\bitv(x_i)=0x_i$ (\autoref{def:bit}). For $\biti{1}(1-x_i)=0$, this follows by considering the two options $x_i=0$ and $x_i=1$ (where the size of the proofs is constant, since the statement itself is of constant size, namely, it involves only a single variable and a two-bit vector).\mar{maybe make more precise?}
\end{proof}

\begin{lemma}[sign bits of axioms are zero]\label{lem:sign-bits-of-axioms-is-zero}
Given there are polynomial-size IPS proofs of
$\biti{t}(f(\vx))=0$ from $f(\vx)=0$ and the boolean axioms, where $t+1$ is the syntactic length of $f(\vx)$.
\end{lemma}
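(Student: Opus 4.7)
The plan is to use the main binary value lemma (\autoref{lem:main-binary-value-lemma}) to translate $f(\vx) = 0$ into a statement about $\bitv(f)$, and then case-split on the sign bit $z_t := \biti{t}(f)$, where the nontrivial case reduces precisely to an instance of \bvpt\ and is discharged by the standing hypothesis of \autoref{thm:ips_sim_cps}.

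First, by \autoref{lem:main-binary-value-lemma}, IPS admits a $\poly(|f|, t)$-size derivation of $f(\vx) = \val(\bitv(f(\vx)))$. Combining this with the axiom $f(\vx) = 0$ and \autoref{fact:zero-poly-ips-proof} for trivial polynomial-identity rearrangements yields an IPS derivation of
$$\sum_{i=0}^{t-1} 2^i z_i \;-\; 2^t z_t \;=\; 0,$$
where $z_i := \biti{i}(f(\vx))$. By \autoref{cor:bitisbit}, IPS derives $z_i^2 - z_i = 0$ in polynomial size for every $i$, so each $z_i$ is provably boolean.

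Now reason by boolean cases on $z_t$ via \autoref{prop:IPS-cases}. The case $z_t = 0$ yields the goal immediately. In the case $z_t = 1$, adding $2^t(z_t - 1)$ to the displayed equation and then setting $w_i := 1 - z_i$ (which is also provably boolean), together with the identity $\sum_{i=0}^{t-1} 2^i = 2^t - 1$, elementary polynomial-identity rearrangements (\autoref{fact:zero-poly-ips-proof}) produce an IPS derivation of
$$\sum_{i=0}^{t-1} 2^i w_i + 1 \;=\; 0,$$
which is precisely \bvpt\ in the variables $w_0, \ldots, w_{t-1}$. By the standing hypothesis of \autoref{thm:ips_sim_cps}, \bvpt\ admits a polynomial-size IPS refutation; since IPS is closed under substitution of circuits for the placeholder variables, we thereby obtain an IPS derivation of a nonzero constant from our assumptions, from which multiplication by the scalar inverse and by $z_t$ yields $z_t = 0$. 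Merging the two cases via \autoref{prop:IPS-cases} gives the desired $\poly(|f|, t, r)$-size IPS proof of $\biti{t}(f(\vx)) = 0$, where $r$ bounds the assumed \bvpt\ refutation size.

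The delicate point is the reduction of the $z_t = 1$ case to a \bvpt\ instance: all steps there are polynomial-identity derivations combined with the closure of IPS under substitution into a refutation circuit, and these are routine once the correct change of variables $w_i := 1 - z_i$ is chosen. Everything else — the invocation of \autoref{lem:main-binary-value-lemma}, the case split, and the trivial discharge of the $z_t = 0$ case — is bookkeeping.
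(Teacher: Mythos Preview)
Your proof is correct and follows essentially the same strategy as the paper: apply \autoref{lem:main-binary-value-lemma} to obtain $\sum_{i=0}^{t-1}2^i z_i - 2^t z_t = 0$, case-split on the sign bit $z_t$, and in the nontrivial case $z_t=1$ perform the change of variables $w_i := 1-z_i$ to land on an instance of \bvpt, which is discharged by the standing hypothesis. The only structural difference is that the paper first phrases the inference abstractly as $\sum_{i=1}^{t}2^{i-1}x_i - 2^t x_{t+1}=0 \vdash x_{t+1}=0$ for fresh variables and then invokes \autoref{fac:IPS-closed-substitution-instance}, whereas you work directly with the circuits $z_i=\biti{i}(f)$ throughout; your route is arguably more direct and you are more explicit about invoking \autoref{prop:IPS-cases} for the case combination.
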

\begin{proof}
By  \autoref{lem:main-binary-value-lemma} we know that $\val(\bitv (f))=f$, and hence by assumption   $\val(\bitv (f))=0$.
%The binary value principle says that VAL(x)=1 is contradictory, but
We need to show that under   $\val(\bitv (f))=0$ we can infer $\biti{t}(f)=0$ with a short IPS proof. Note that this inference is a substitution instance of the following inference:
\begin{equation}\label{eq:infer-substitution-instance-of-BVP}
\sum_{i=1}^{t} 2^{i-1}x_i - 2^t x_{t+1}=0 ~\ipsprf{} ~x_{t+1}=0,
\end{equation}
where we substitute $\biti{i-1}(f)$ for $x_i$ ($i=1,\dots,t+1$). By \autoref{fac:IPS-closed-substitution-instance}, IPS proofs are closed under substitution instances (together with the fact that the corresponding substitution instances of the boolean axioms \ba\ are also provable in IPS by  \autoref{cor:bitisbit}) and so it remains to show that under the assumption that  BVP has polynomial-size IPS refutations, \ref{eq:infer-substitution-instance-of-BVP} holds.

To prove \ref{eq:infer-substitution-instance-of-BVP} it suffices to show that the assumptions $x_{t+1}=1$ and $ \sum_{i=1}^{t} 2^{i-1}x_i - 2^t x_{t+1}=0 $ can be refuted with a polynomial-size IPS refutation. \mar{check why it is enough---IPS reasoning proof by contradiction...}

Assuming $x_{t+1}=1$,  $ \sum_{i=1}^{t} 2^{i-1}x_i - 2^t x_{t+1}=0$ becomes  $ \sum_{i=1}^{t} 2^{i-1}x_i - 2^t =0$, and so it remains to show the following:
\begin{claim*}
Under the assumption that \bvpn\ has $\poly(n)$-size IPS refutations, there are polynomial-size IPS refutations of $\sum_{i=1}^{t} 2^{i-1}x_i - 2^t =0$. \end{claim*}

\begin{proofclaim}
Our assumption that there are polynomial-size IPS refutations of BVP$_{t+1}$ $\sum_{i=1}^{t+1} 2^{i-1}x_i + 1=0$, implies that there are short refutation also of its substitution instance $\sum_{i=1}^{t+1} 2^{i-1}(1-y_i) + 1=0$ (again, by \autoref{fac:IPS-closed-substitution-instance} and the fact that the substitution instance of the boolean axioms \ba, are easily provable when substituting $1-y_i$ for $x_i$'s; cf.~Lemma \autoref{lem:F-squared-F-equals-zero}). But $\sum_{i=1}^{t+1} 2^{i-1}(1-y_i) + 1=-(\sum_{i=1}^{t+1}2^{i-1}y_i-2^{t})=0$.
\end{proofclaim}
\end{proof}
% \begin{verbatim}
% Sketch. We have a proof for VAL(BIT(f))=f, that is, for VAL(BIT(f))=0.
% The binary value principle says that VAL(x)=1 is contradictory, but
% we now need "another" similarly looking principle: VAL(x)=0 => x_t = 0
% (and then we are apparently done).
%
% Consider the case x_t=1. Then VAL(x)=0 amounts to \sum x_i2^i - 2^t=0.
% Replacing x_i by 1-y_i we get the binary value principle for y_i's.
% Thus x_t=0, and we are done.
% \end{verbatim}

Up to now, we have shown that for each algebraic circuit in the ``base'' of the conic circuit $C(\vx,\vy)$ comprising a CPS refutation  (namely, the sub-circuits that substitute the placeholder variables \vy, as well as the \vx\ variables themselves), the sign bit can be proved  to be zero in IPS. The following lemma shows that under these assumptions IPS can prove that the conic circuit $C(\vx,\vy)$ itself has a zero sign bit (for simplicity we use only \vx\ variables in the circuit $C(\vx)$ below).

\begin{lemma}[conic circuits preserve zero sign bits]\label{lem:conic-circuit-preserves-zero-sign-bits}
Let $C(\vx)$ be a conic circuit over \Z\ in the variables $\vx=\{x_1,\dots,x_n\}$, let $\overline H:=\{H_i(\vx)\}_{i\ = 1}^n$ be $n$ circuits and suppose that $t$ is the syntactic length of $C(\overline H)$. Then, there is a polynomial-size in $|C(\overline H)|$ IPS proof that the sign bit of $C(\overline H)$ is $0$, that is, of $\biti{t}(C(\overline H))=0$, from the assumptions $\biti{t_i-1}(H_i(\vx))=0$, for all $i\in[n]$, where $t_i$ is the syntactic length of $H_i(\vx)$. \end{lemma}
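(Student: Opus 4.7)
The plan is to induct on the structure of $C$, exploiting the decomposition of $C$ induced by its squaring gates. Following the intuition behind the conic property, I introduce the invariant: \emph{for every node $v$ of $C$ such that the subcircuit $C_v$ rooted at $v$ is itself conic} (that is, every path from a negative-constant leaf of $C_v$ to $v$ contains a squaring gate internal to $C_v$), IPS has a $\poly(|C_v|,t)$-size proof that $\biti{t_v}(v(\overline H))=0$, where $t_v$ denotes the syntactic length of $C_v(\overline H)$ and the substitution treats each input leaf $x_i$ of $C$ as a placeholder to be replaced by $H_i$. The output gate of $C$ satisfies this invariant by the conic hypothesis on $C$, so proving the invariant establishes the lemma.

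For the base cases of the induction, if $v$ is a positive-constant leaf then $\biti{t_v}(v(\overline H))=0$ is a polynomial identity of $\poly(t_v)$-size and is IPS-provable by \autoref{fact:zero-poly-ips-proof}; if $v$ is an $x_i$ leaf then $v(\overline H)=H_i$ and the statement is one of the assumptions of the lemma; and if $v$ is a squaring gate $z^2$ then the part of \autoref{lem:sign-bit-of-sos} that handles a single square (reasoning by boolean cases on the sign bit of $z$, noting that $s=0$ and $\vm=\vzero$ in the definition of $\prdv$, and using non-negativity of $\iabsv$ from \autoref{cla:abspos}) supplies a $\poly(t_v)$-size IPS proof.

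For the inductive step, suppose $v=u+w$ is an addition gate whose subcircuit is conic. Then both $C_u$ and $C_w$ are conic subcircuits: since $v$ is not a squaring gate, any bad-leaf path to $v$ must already meet a squaring gate inside $C_u$ or $C_w$. By induction, IPS proves $\biti{t_u}(u(\overline H))=0$ and $\biti{t_w}(w(\overline H))=0$, and then the ``sum of nonnegative numbers is nonnegative'' argument from \autoref{lem:sign-bit-of-sos} yields $\biti{t_v}(v(\overline H))=0$ in $\poly(t)$ additional gates. If instead $v=u\cdot w$ is a non-squaring product, the same reasoning ensures $C_u,C_w$ are conic; I extend the squaring-gate case of \autoref{lem:sign-bit-of-sos} to this asymmetric product. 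By \autoref{def:prod},
\[
\prdv(\bitv(u(\overline H)),\bitv(w(\overline H)))=\addv\bigl(\prdvp(\iabsv(\bitv(u(\overline H))),\iabsv(\bitv(w(\overline H))))\oplus\vm,\,s\bigr),
\]
with $s=\biti{t_u}(u(\overline H))\oplus \biti{t_w}(w(\overline H))$ and $\vm=\ve(s)$. From $\biti{t_u}(u(\overline H))=0$ and $\biti{t_w}(w(\overline H))=0$ one derives in IPS that $s=0$ and $\vm=\vzero$, reducing the expression to $\addv(\prdvp(\ldots),0)$; then the iterated application of the ``sum of nonnegatives is nonnegative'' claim to the $\prdvp$-expansion (exactly as in the square case of \autoref{lem:sign-bit-of-sos}, relying on \autoref{cla:abspos} for non-negativity of the $\iabsv$ factors) shows that the sign bit is $0$ in $\poly(t)$ IPS steps.

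Summing over the at most $|C|$ conic-subcircuit nodes and noting that each inductive step adds only $\poly(t)$ gates to the running IPS derivation gives a total proof size of $\poly(|C|,t)=\poly(|C(\overline H)|)$. The main technical obstacle is the clean formulation of the inductive invariant on conic subcircuits and verifying that non-squaring gates do preserve conicity of children, together with making sure that the syntactic-length accounting at every intermediate node matches the bit-position of the sign bit being proved zero; both issues are mild and flow directly from the definitions of \ref{def:conic-circuit-for-vx} and \ref{def:bit}, respectively.
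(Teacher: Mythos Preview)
Your proof is correct and follows essentially the same approach as the paper: structural induction on the conic circuit with base cases for variables, non-negative constants, and squaring gates, and inductive cases for sums and non-squaring products of conic subcircuits, each handled via the ``nonnegative plus/times nonnegative is nonnegative'' arguments from \autoref{lem:sign-bit-of-sos}. Your explicit invariant on conic-subcircuit nodes and the verification that non-squaring gates preserve conicity of children make rigorous what the paper simply asserts (``any conic circuit is one of the following\ldots''), and your product case is spelled out in more detail than the paper's one-line reference.
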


\begin{proof}
The proof is by induction on the size of $C$. Note that any conic circuit $C$ is one of the following: (1)  a variable $x_i$, (2) a non-negative constant $\alpha$, (3) a square of some (not-necessarily conic) circuit, that is, $C=G^2$, or (4) an  addition $C=G+H$ or product $C=G\cd H$ of two conic circuits $G,H$. Therefore, the base cases of our induction will be the first three cases (1)-(3), and the  induction steps will be the latter case (4).

\Base

\case 1
 $C=x_i$. Then from the assumption that $\biti{t_j-1}(H_j(\vx))=0$
 for all $j\in[n]$, we have that $C(\overline H)=H_i(\vx)$, and so we are done.

\case 2
$C=\alpha$, for a non-negative constant $\alpha$. Then by \autoref{def:bit} $\bitv(\alpha)$  is the actual bits of $\alpha$ in two's complement. Since $\alpha$ is non-negative $\biti {t-1}(C(\overline H))=\biti t(\alpha)=0$, for $t$ the syntactic length of $\alpha$.

% This is true because
% ....hmmm... this is by definition we know that the
% bits are okay, by what can we prove in IPS about this?
% If we have a positive bit then ... Ah... we are claiming something about the
% provability of the sign bit in IPS. Hence, we already
% have "for granted" the BITs vector of things, and we try to show that we can
% prove something in IPS about these existing circuits of BITs. In the case of
% a non negative alpha we know that its bits are defined so that they are the
% correct one and we are done.
% Q: but then the problem is how to get from eg 4*x to VAL(bits(4*x))=4*x?
% A: This is done in IPS by the use of VAL!! We do not use the bits(4*x) as "sta
% nd alone" rather we use a statement that "interpret" these bits, namely, using
% the VAL function.
%
% So in conclusion: We know that the sign bit of nonnegative constants (!) is 0
% because this is *by definition !!!*. This settles the Base case.

\case 3
$C=G^2$ for some not-necessarily conic circuit $G$. This case follows from  \autoref{lem:sign-bit-of-sos}.

\induction

\case 1 $C=G+H$. This follows from the claim that the sign bit of the addition of non negative numbers is 0, as shown in the proof of  \autoref{lem:sign-bit-of-sos}.

\case 2 $C=G\cd H$. This follows from the claim that the sign bit of the product of two non-negative integers
is non-negative.
\iddo{TBC! Similar to  \autoref{lem:sign-bit-of-sos}}\hirsch{Should we rather claim and prove both
these cases before \autoref{lem:sign-bit-of-sos}?}\iddo{don't know.}
\end{proof}

We are now ready to conclude the main theorem  of this section.

\begin{proof}[Proof of \autoref{thm:ips_sim_cps}]
By assumption, $C(\vx,\vy)$ is a conic circuit constituting a CPS refutation of \assmp.
%\hirsch{Replaced $1$ by $M$ many times below, and added $\tau(M)$.}
We  assume that $\{f_i(\vx)\}_{i\in I}$ can be computed by a sequence of circuits $\{F_i(\vx)\}_{i\in I}$ such that $\sum_{i\in I}|F_i(\vx)| = u$. Hence, by the definition of CPS, we set \ineqassmp\ to be the set of circuits  that consists of $F_i(\vx)$ and $-F_i(\vx)$, for all $i\in I$, as well as the boolean axioms translation $x_i^2-x_i$ and $-x_i^2+x_i$, for all $i\in[n]$, and $x_i$ and $1-x_i$, for all $i\in [n]$.  We thus have $C(\vx,\ineqassmp)=-M$ as a polynomial identity.

Since $C$ is a conic circuit, and the sign bits of all variables $\vx$ and all circuits in \ineqassmp\ can be proved in polynomial size (in $u$) to be 0, by  \autoref{lem:sign-bit-of-lit-is-zero} and \autoref{lem:sign-bits-of-axioms-is-zero}, respectively, we know from \autoref{lem:conic-circuit-preserves-zero-sign-bits} that the sign bit of $C(\vx,\ineqassmp)$ is 0 as well. Since $C(\vx,\ineqassmp)=-M$ is a polynomial identity, by \autoref{fact:zero-poly-ips-proof} $C(\vx,\ineqassmp)+M$ has an IPS proof of size equal to the size of the circuit $C(\vx,\ineqassmp)+M$ itself.
We now proceed to use the short refutation of the BVP to get a short IPS refutation from the fact that the sign  bit of $C(\vx,\ineqassmp)$ is 0 and $C(\vx,\ineqassmp)+M=0$. The following claim suffices for this purpose:

\begin{claim}\label{cla:finally-IPS-refute-C-non-negative-and-C=-1}
Assume that  \bvpnm\ has $\poly(n,\tau(M))$-size IPS refutations
%\hirsch{Need a separate statement for unit-cost}.\iddo{We ignore this case now}
Let $F(\vx)$ be a circuit of syntactic length $t$ and size $s$, such that IPS has a $\poly(s,t)$-size proof of $\biti {t-1}(F(\vx))=0$ (where $\biti {t-1}(F(\vx))$ is the sign bit of $F(\vx)$). Then there is a $\poly(s,t,\tau(M))$ refutation of $F(\vx)+M=0$.
%\mar{Shall we parameterize BVP to BVP$_n$, like PHP$_n$? So that we can also talk about $\poly(n)$ size refutation of BVP$_n$?}
\end{claim}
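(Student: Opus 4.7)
The plan is to combine the main binary value lemma with the hypothesized IPS refutation of \bvpnm\ via the closure of IPS under substitution. Given the assumption $F(\vx)+M=0$, the goal is to translate it into an (instance of) the generalized binary value principle over the bit-polynomials of $F$, and then invoke the assumed short refutation.

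First, I would apply \autoref{lem:main-binary-value-lemma} to obtain a $\poly(s,t)$-size IPS proof of $F(\vx)=\val(\bitv(F))$, i.e., $F(\vx)=\sum_{i=0}^{t-2} 2^i\cdot \biti{i}(F)-2^{t-1}\cdot \biti{t-1}(F)$. Combined with the assumed $\poly(s,t)$-size IPS proof of $\biti{t-1}(F(\vx))=0$ (the sign bit vanishing hypothesis), I can derive, in $\poly(s,t)$ IPS size, the identity $F(\vx)=\sum_{i=0}^{t-2} 2^i\cdot \biti{i}(F(\vx))$. Substituting this into the hypothesis $F(\vx)+M=0$ yields, again with a $\poly(s,t)$-size derivation, the polynomial equation
\[
\sum_{i=0}^{t-2} 2^i\cdot \biti{i}(F(\vx))\;+\;M\;=\;0.
\]

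Next I would view this equation as a \emph{substitution instance} of \bvp_{t-1,M}: the BVP variables $x_1,\dots,x_{t-1}$ are replaced by the polynomials $\biti{0}(F(\vx)),\dots,\biti{t-2}(F(\vx))$, each of size $\poly(s,t)$. By hypothesis, \bvp_{t-1,M}\ admits a $\poly(t,\tau(M))$-size IPS refutation from the equation $\sum 2^{i-1}x_i+M=0$ together with the boolean axioms $x_i^2-x_i=0$. Invoking the standard fact that IPS proofs are closed under substitution (\autoref{fac:IPS-closed-substitution-instance}), it suffices to produce short IPS derivations of the substituted boolean axioms $\biti{i}(F(\vx))^2-\biti{i}(F(\vx))=0$, which is exactly \autoref{cor:bitisbit} (each has a $\poly(s,t)$-size proof from the boolean axioms on \vx).

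Composing all these pieces — the $\poly(s,t)$-size derivation of the transformed BVP-instance, the $\poly(s,t)$ derivations of the substituted boolean axioms, and the substitution-instance of the hypothesized $\poly(t,\tau(M))$-size BVP refutation — produces an IPS refutation of $F(\vx)+M=0$ of total size $\poly(s,t,\tau(M))$, as desired. The main conceptual obstacle I anticipate is just bookkeeping: making sure that every use of substitution into the assumed BVP refutation is legitimate (i.e., that we supply IPS derivations of each substituted axiom) and that the syntactic length $t$ correctly controls the number of bit-polynomials produced by $\bitv(F)$, so that the parameter of the invoked \bvp\ really is $t-1$ rather than something larger; modulo this, each step is a routine application of already-proven lemmas.
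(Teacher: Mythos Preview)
Your proposal is correct and follows essentially the same approach as the paper: apply the main binary value lemma to express $F$ via its bit polynomials, use the sign-bit hypothesis to drop the $-2^{t-1}\biti{t-1}(F)$ term, recognise the resulting equation as a substitution instance of $\bvp_{t-1,M}$, and conclude via closure under substitution together with \autoref{cor:bitisbit}. The only cosmetic difference is that the paper first applies the binary value lemma to $F(\vx)+M$ and then invokes the lemma's proof to reduce to $\val(\bitv(F))+M=0$, whereas you apply the lemma directly to $F$; your route is slightly more direct but not materially different.
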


\begin{proofclaim}
%
% \begin{equation}
% \sum_{i\in I} G_i F_i + \sum_{j\in J} H_j^2 = -1. \label{eq:th}
% \end{equation}
% Denote by $H$ the circuit $\sum_{j\in J} H^2_j$. By (\autoref{eq:th}),  $\sum_{i\in I} G_iF_i + H + 1=0$ is a true identity, and hence by \autoref{fact:zero-poly-ips-proof} there is an IPS proof of $\sum_{i\in I} G_iF_i + H+1=0$ with size $s$ (from no assumptions). Therefore, there is an IPS proof of
% $$
% H+1=0
% $$
% from the assumptions \assmp, of size at most $3s$ (we simply subtract $\sum_{i\in I} G_iF_i $ that we can prove from assumptions \assmp, from the proof of $\sum_{i\in I} G_iF_i + H+1=0$).
%
The size of the circuit $F(\vx)+M$ is $s+\tau(M)+1$. By ~\autoref{lem:main-binary-value-lemma}, $\val(\bitv(F(\vx)+M))=F(\vx)+M=0$
has a polynomial size in $s+\tau(M)+1$ IPS proof from the boolean axioms. By the proof of  \autoref{lem:main-binary-value-lemma} we also have a polynomial-size in $s$ and $\tau(M)$
IPS proof of $$
\val\left(\bitv(F(\vx))\right)+M=0,
$$
namely, a proof of
\[
M+\sum_{i=0}^{t-2} 2^i\cd w_i\,-2^{t-1}\cd w_{t-1}=0\,,\qquad
\textrm{where $w_i:=\biti{i}(F(\vx))$.}
\]
By assumption, $w_{t-1} =0$ has a polynomial-size IPS proof, where $w_{t-1}$ is the sign bit of $F(\vx)$. This leads to\begin{equation}\label{eq:final-substitution-of-BVP}
M+\sum_{i=0}^{t-1} 2^i\cd w_i=0.
\end{equation}

Notice that \ref{eq:final-substitution-of-BVP} is the binary value principle in which variables $x_i$ for $i=1,\dots,t$, are replaced by the circuits  $\biti{i-1}(F(\vx)),$ denoted $w_i$. We assumed that the binary value principle has polynomial-size (in $t$ and $\tau(M)$) refutations (using only the boolean axioms as assumptions). Since IPS proofs are closed under substitutions of variables by circuits (\autoref{fac:IPS-closed-substitution-instance}), there is a $\poly(t,|\bitv(F)|,\tau(M))$-size IPS refutation of \ref{eq:final-substitution-of-BVP} from the substitution instances of the boolean axioms $w_i^2- w_i=0$, for $i=0,\dots,t-1$. Since  for every $i=0,\dots,t-1$, $w_i^2-w_i=0$ has a short IPS proof by  \autoref{cor:bitisbit}, and since $|\bitv(F)|=\poly(t,|F|)$, we conclude that there exists a $\poly(s,t,\tau(M))$-size IPS refutation as desired.
\end{proofclaim}\mar{polish}
\end{proof}

%%%%%%%%%%%%%%%%%%%%%%%%%%%%%%
%%%%%%%%%%%%%%%%%%%%%%%%%%%%%%
\subsection{Effective Simulation of CPS Refutations with Inequalities}\label{sec:effective-simulation}

We now turn to conditional effective simulation of CPS as a refutation system for \emph{both equalities} \emph{and inequalities} by IPS. Effective simulation means that we are allowed to non-trivially translate the input equalities and inequalities before refuting them in IPS, as long as the translation procedure is polynomial-time and preserves unsatisfiability \cite{PitassiSanthanam10}. Similar to the case of conditional simulation, it is enough to consider only the case of CPS and IPS proofs over \Z\ to conclude it also for \Q. We show  here the case of non-constant-free boolean IPS and boolean CPS proofs over \Z. The case over \Q\ and the cases  of constant-free proofs over \Z\ and \Q\ are similar.\hirsch{Should we be more precise as $M$ would appear then?} \iddo{don't know} \hirsch{Can we just talk about IPS$^*$ and CPS$^*$ as before?}\iddo{Guess so, if you think it's necessary.}

Note that since the construction of the circuit $\biti{i}(\cd)$ (\autoref{sec:extracting-bits}) is mechanical and uniform, there is a straightforward deterministic (uniform) polynomial-time algorithm that receives a set of polynomial inequalities $\ineqassmp=\{H_j(\vx)\ge 0 \}_{j}$ over \Z\ written as algebraic circuits (with coefficients written in binary) and outputs the polynomial equations, written as algebraic circuits, expressing that the sign bit of each  $H_j(\vx)$ is 0 (hence, expressing the inequalities \ineqassmp). This translation of inequalities to polynomial equalities serves as our translation from \ineqassmp\ to the language of polynomial equations that is refutable in IPS. Given an inequality $H_j(\vx)\ge 0$ we denote by $\tr{H_j(\vx)\ge 0}$ this translation; accordingly, we let  $\tr{\ineqassmp} = \{\tr{H_j(\vx)\ge 0}\;:\;H_j(\vx)\in\ineqassmp\}$.

\begin{theorem}[conditional effective simulation of boolean CPS by boolean IPS]\label{thm:effective-sim:IPS-CPS}
Assume that the generalized binary value principle \bvptm\ has  $\poly(t,\tau(M))$-size boolean IPS refutations for every positive integer $M$. Let \assmp\ denote a system of polynomial equations and let \ineqassmp\ denote a system of polynomial inequalities written as circuits $\{H_j(\vx)\ge 0\}_{j\in J}$ (including all the equations in \assmp\ written as inequalities as described in \autoref{def:cps}). Let $C(\vx,\ineqassmp)=-1$ be a  CPS refutation of \assmp\ and \ineqassmp\ where $C(\vx,\ineqassmp)$ has size $s$ and syntactic length $t$. Then, there is a boolean IPS refutation of $\tr{\ineqassmp}$ with size $\poly(s,t)$.\footnote{Equivalently, we can also show that there is a size $\poly(s,t)$ IPS refutation of $\assmp$ and $\tr{\ineqassmp\setminus\assmp}$. But for simplicity we assume that the equalities \assmp\ are also translated via $\tr{\cd}$.}
\end{theorem}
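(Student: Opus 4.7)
The plan is to mirror the proof of \autoref{thm:ips_sim_cps} while exploiting the fact that the translation $\tr{\cdot}$ is tailor-made to provide exactly the hypotheses needed by \autoref{lem:conic-circuit-preserves-zero-sign-bits}. The key observation is that for each inequality $H_j(\vx)\ge 0 \in \ineqassmp$, the translated equation $\tr{H_j(\vx)\ge 0}$ asserts $\biti{t_j-1}(H_j(\vx))=0$, where $t_j$ is the syntactic length of $H_j$; that is, it asserts that the sign bit of $H_j$ equals $0$. Hence the assumptions $\tr{\ineqassmp}$ provide \emph{directly} (not merely as derived consequences, as in \autoref{lem:sign-bits-of-axioms-is-zero}) the zero-sign-bit statements for every base circuit substituted into the conic circuit $C$.

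First, I would verify that the translation step causes no blow-up: each equation in $\tr{\ineqassmp}$ is written as the algebraic circuit $\biti{t_j-1}(H_j(\vx))$ which, by the bound recorded after \autoref{def:bit}, has size $O(t_j^2\cdot |H_j|)$, hence polynomial in $s$ and $t$. Then, since $C(\vx,\vy)$ is a \vy-conic circuit whose leaves either are variables (whose sign bits are $0$ by \autoref{lem:sign-bit-of-lit-is-zero}) or placeholder variables $y_j$ for the axioms (whose sign bits are asserted to be $0$ by the corresponding equation in $\tr{\ineqassmp}$), \autoref{lem:conic-circuit-preserves-zero-sign-bits} yields an IPS derivation of size $\poly(s,t)$ of the equation
\[
\biti{t-1}\bigl(C(\vx,\ineqassmp)\bigr)=0
\]
from $\tr{\ineqassmp}$ together with the boolean axioms. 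Note that we do not need the ``soft'' version of this fact proved in \autoref{lem:sign-bits-of-axioms-is-zero} (which required deriving sign-bit-zero from the axioms via BVP); here the sign-bit-zero equations are precisely the input axioms.

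Next, since $C(\vx,\ineqassmp) = -1$ holds as a formal polynomial identity, \autoref{fact:zero-poly-ips-proof} supplies an IPS proof of $C(\vx,\ineqassmp)+1 = 0$ whose size is linear in that of $C(\vx,\ineqassmp)$, hence $\poly(s,t)$. At this point we are in exactly the setting of \autoref{cla:finally-IPS-refute-C-non-negative-and-C=-1} (in the proof of \autoref{thm:ips_sim_cps}) with $F:=C(\vx,\ineqassmp)$ and $M:=1$: we have short IPS proofs of both $\biti{t-1}(F)=0$ and $F+1=0$, and by the standing assumption \bvptm{} has $\poly(t,\tau(M))$-size IPS refutations. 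Invoking that claim produces a $\poly(s,t)$-size IPS refutation of $\tr{\ineqassmp}$, as required.

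The main obstacle is not conceptual — once $\tr{\ineqassmp}$ is understood as providing ready-made sign-bit-zero assumptions, all the heavy lifting is done by the previous section's lemmas. The only technical point requiring care is the polynomial-size bookkeeping of the syntactic length: in particular, ensuring that the syntactic length bound $t$ for $C(\vx,\ineqassmp)$ controls all the intermediate \bitv\ and \cari{} circuits invoked by \autoref{lem:conic-circuit-preserves-zero-sign-bits}, and that the BVP refutation we plug in is indexed by the correct bit-length $t$ (so the hypothesis ``\bvptm{} has $\poly(t,\tau(M))$-size refutations'' applies with $M=1$, hence $\tau(M)=O(1)$). Since the translation $\tr{\cdot}$ is polynomial-time and uniform, and the transformations above are mechanical, the end-to-end size bound is $\poly(s,t)$, which is exactly what the theorem claims.
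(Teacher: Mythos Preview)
Your proposal is correct and follows essentially the same approach as the paper's own proof, which is a one-line remark that the argument is identical to that of \autoref{thm:ips_sim_cps} except that the zero-sign-bit statements for the axioms in \ineqassmp\ need not be derived (via \autoref{lem:sign-bits-of-axioms-is-zero}) because $\tr{\ineqassmp}$ supplies them directly as input assumptions. Your write-up simply unpacks this observation in more detail, correctly identifying that \autoref{lem:conic-circuit-preserves-zero-sign-bits} and \autoref{cla:finally-IPS-refute-C-non-negative-and-C=-1} (with $M=1$, whence $\tau(M)=O(1)$ and the BVP refutation has size $\poly(t)$) finish the job with total size $\poly(s,t)$.
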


\begin{proof}
This is identical to the proof of \autoref{thm:ips_sim_cps}, only that we do not need to prove separately that the axioms in \ineqassmp\ have all bit-vector representation in which the sign bit is 0, since here this is given to us as an assumption.
\end{proof}

%their bit representation, namely for each $j$, the algorithm outputs $t$ circuits $\bitv(H_j(\vx))=\biti{t-1}(H_j(\vx))\dots \biti{0}(H_j(\vx))$, where the syntactic length of $H_j(\vx)$ is assumed to be $t$, and in addition outputs $\biti{t-1}(H_j(\vx))=0$

\appendix
\section*{Appendix}
\section{Basic Reasoning in IPS}\label{sec:basic-IPS-reas}
Here we develop basic efficient reasoning in IPS. This is helpful for \autoref{sec:extracting-bits}.

First we show that polynomial identities are proved for free in IPS:
\begin{fact}\label{fact:zero-poly-ips-proof}
If $F(\vx)$ is a circuit in the variables $\vx$ over the field \F\ that computes the zero polynomial, then there is an IPS proof of $F(\vx)=0$ of size $|F|$.
\end{fact}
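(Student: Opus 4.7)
The plan is to observe that the claimed bound is essentially definitional: the circuit $F$ itself, viewed as a circuit in the extended variable set $(\vx,\vy,\vz)$ that happens to ignore $\vy$ and $\vz$, is already an IPS proof of $F(\vx)=0$ of size $|F|$. So the proof is just a verification of the two conditions in \autoref{def:IPS}.

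Concretely, I would define $C(\vx,\vy,\vz) := F(\vx) \in \F[\vx,\vy,\vz]$, viewing the same syntactic circuit $F$ as a circuit over the larger variable set (this does not change the number of gates, so $|C|=|F|$). Then I would check:
\begin{enumerate}
\item[(i)] $C(\vx,\vnz,\vnz) = F(\vx)$, which is the zero polynomial by the hypothesis on $F$. Hence condition~1 of \autoref{def:IPS} holds.
\item[(ii)] $C\bigl(\vx, f_1(\vx),\ldots,f_m(\vx),\, x_1^2-x_1,\ldots,x_n^2-x_n\bigr) = F(\vx)$, which again is the zero polynomial, matching $p(\vx)=0$. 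Hence condition~2 of \autoref{def:IPS} holds.
\end{enumerate}

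Both checks use only the fact that substituting into a circuit that does not read the substituted variables returns the original polynomial, together with the assumption that $F$ computes $0$ formally.

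There is no real obstacle here, since the statement is a definitional/structural fact rather than a combinatorial one. The only point to flag is that the ``proof'' does not actually derive the identity from the axioms in any meaningful algebraic sense; it simply exploits the fact that the IPS formalism allows a proof circuit to ignore the placeholder variables, in which case condition~1 collapses to the hypothesis that $F\equiv 0$ as a polynomial. If one wanted a version that made nontrivial use of the placeholders (for instance, to fit a stricter definition where $C$ must literally lie in the ideal $\langle \vy,\vz\rangle$), one could instead take $C(\vx,\vy,\vz) := F(\vx) + 0\cdot y_1$ or multiply $F$ by $y_j-y_j$, at the cost of only $O(1)$ extra gates; but under the definition in \autoref{def:IPS} as stated (where condition~1 is the formal polynomial identity $C(\vx,\vnz,\vnz)=0$, which $F\equiv 0$ trivially satisfies), the bare circuit $F$ itself suffices.
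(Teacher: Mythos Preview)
Your proof is correct and is exactly the approach the paper takes: set $C(\vx,\vy,\vz):=F(\vx)$ and observe that both IPS conditions hold because $F$ computes the zero polynomial. Your additional remark about stricter ideal-membership definitions is a nice clarification but not needed under the paper's Definition~\ref{def:IPS}.
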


\begin{proof}[Proof of fact] The IPS proof of $F(\vx)=0$ is simply $C(\vx,\vz):=F(\vx)$ (note that we do not need to use the boolean axioms nor any other axioms in this case). Observe that both conditions 1 and 2 for IPS hold in this case (\autoref{def:IPS}).
\end{proof}

\begin{fact}
Let $F,G,H$ be circuits and $\mathcal F$ be a collection of polynomial equations such that $C:\mathcal F\ipsprf {s_0} F=G$ and $C':\mathcal F\ipsprf {s_1} G=H$. Then, $(C+C') :\mathcal F \ipsprf {s_0+s_1+1} F=H$.
%\mar{define notation IPS$\vdash^s$}
\end{fact}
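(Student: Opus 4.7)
The plan is to unpack the definition (from \autoref{sec:IPS-conventions}) that an IPS proof of $F=H$ from $\mathcal F$ is by convention an IPS proof of the polynomial $F-H$ (i.e., of $F-H=0$) from $\mathcal F$, and then verify the two defining conditions of \autoref{def:IPS} for the circuit $C+C'$ by a direct calculation using linearity. Concretely, write $C(\vx,\vy,\vz)$ and $C'(\vx,\vy,\vz)$ as the given IPS proofs, so by hypothesis
\[
C(\vx,\vnz,\vnz)=0,\qquad C(\vx,\mathcal F,\ba)=F-G,
\]
\[
C'(\vx,\vnz,\vnz)=0,\qquad C'(\vx,\mathcal F,\ba)=G-H.
\]

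Next, consider the circuit $D(\vx,\vy,\vz):=C(\vx,\vy,\vz)+C'(\vx,\vy,\vz)$, obtained by introducing a single new addition gate whose two incoming edges are the output nodes of $C$ and $C'$. First condition: $D(\vx,\vnz,\vnz)=C(\vx,\vnz,\vnz)+C'(\vx,\vnz,\vnz)=0+0=0$, so $D$ lies in the ideal generated by $\vy,\vz$. Second condition: $D(\vx,\mathcal F,\ba)=(F-G)+(G-H)=F-H$ as a formal polynomial identity. Thus $D$ is an IPS proof of $F=H$ from $\mathcal F$.

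Finally, for the size bound, observe that $|D|=|C|+|C'|+1$ because $D$ re-uses the two sub-circuits $C,C'$ verbatim and adds exactly one new $+$-gate at the top; combined with the hypotheses $|C|\le s_0$ and $|C'|\le s_1$ this yields the claimed bound $s_0+s_1+1$. There is no real obstacle here: the only point requiring care is the bookkeeping convention that ``$F=G$'' is a shorthand for the polynomial $F-G$, so that the telescoping $(F-G)+(G-H)=F-H$ happens automatically inside the circuit via the new addition gate rather than requiring any explicit manipulation of the placeholder variables $\vy,\vz$.
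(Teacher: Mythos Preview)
Your proof is correct and takes essentially the same approach as the paper. The paper's proof is a one-liner that only displays the key identity $C(\vx,\mathcal F,\ba)+C'(\vx,\mathcal F,\ba)=F-G+G-H$, leaving the verification of the first IPS condition and the size count implicit; you have simply made these steps explicit.
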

\begin{proof}[Proof of fact] $C(\vx,\overline {\mathcal F},\overline x ^2-\overline x)+C'(\vx,\overline {\mathcal F},\overline x ^2-\overline x)=F-G+G-H$. %\mar{define $\overline x ^2-\overline x$. Define the notation $\overline {\mathcal F}$.}
\end{proof}

\begin{fact}\label{fac:F-G+H-K}
Let $F,G$ be circuits and $\overline {\mathcal F}$ be a collection of polynomial equations such that $C:\overline {\mathcal F}\ipsprf {s_0} F=G$ and $C':\overline {\mathcal F}\ipsprf {s_1} H=K$. Then, $(C+C') :\overline {\mathcal F} \ipsprf {s_0+s_1+1} F+H=G+K$.
\end{fact}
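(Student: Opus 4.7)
The plan is to verify directly from the definition of an IPS proof (\autoref{def:IPS}) that the circuit $C+C'$ is an IPS proof of $F+H=G+K$ from $\overline{\mathcal F}$, where $C+C'$ denotes the circuit obtained by introducing a single new $+$-gate whose two children are the output gates of $C$ and $C'$ (sharing the input $\vx$-wires). The size count then follows immediately: $|C+C'| = |C|+|C'|+1 = s_0+s_1+1$.

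For the two IPS conditions, recall that by our convention (\autoref{sec:IPS-conventions}) an IPS proof of $F=G$ from $\overline{\mathcal F}$ is a circuit $D(\vx,\vy,\vz)$ satisfying $D(\vx,\vnz,\vnz)=0$ and $D(\vx,\overline{\mathcal F},\vx^2-\vx)=F-G$ as formal polynomial identities. The first condition for $C+C'$ is immediate:
\[
(C+C')(\vx,\vnz,\vnz) = C(\vx,\vnz,\vnz) + C'(\vx,\vnz,\vnz) = 0+0 = 0,
\]
using the hypothesis that $C$ and $C'$ are themselves IPS proofs. For the second condition, I will evaluate:
\[
(C+C')(\vx,\overline{\mathcal F},\vx^2-\vx) = C(\vx,\overline{\mathcal F},\vx^2-\vx) + C'(\vx,\overline{\mathcal F},\vx^2-\vx) = (F-G) + (H-K) = (F+H)-(G+K),
\]
which is exactly what is required for $C+C'$ to certify $F+H = G+K$.

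There is no substantive obstacle: the verification is entirely mechanical, since the placeholder variables $\vy,\vz$ in the two subproofs are the same (they stand for the same axioms of $\overline{\mathcal F}$ and $\vx^2-\vx$), so substitution distributes over addition at the top gate without any reindexing. The only minor bookkeeping point to state explicitly is that the ``size'' conventions match: we count the new output $+$-gate once, and the leaves for variables (input wires of $\vx,\vy,\vz$) are naturally shared between the two subcircuits in the combined circuit, so no double-counting arises beyond the single additive constant already accounted for in the $+1$ term.
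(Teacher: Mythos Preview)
Your proof is correct and takes essentially the same approach as the paper: the paper's proof is the single line $C(\vx,\overline{\mathcal F},\vx^2-\vx)+C'(\vx,\overline{\mathcal F},\vx^2-\vx)=F-G+H-K$, and you have simply expanded this by also explicitly verifying the first IPS condition and the size bound. One cosmetic remark: your comment about ``sharing the input wires'' is slightly at odds with the exact count $|C|+|C'|+1$; the cleanest reading is that $C$ and $C'$ are taken as disjoint subcircuits joined by a single new $+$-gate, which gives the stated size exactly.
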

\begin{proof}[Proof of fact] $C(\vx,\overline {\mathcal F},\overline x ^2-\overline x)+C'(\vx,\overline {\mathcal F},\overline x ^2-\overline x)=F-G+H-K$. %\mar{define $\overline x ^2-\overline x$. Define the notation $\overline {\mathcal F}$.}
\end{proof}

\begin{fact}\label{fac:FxH-GxK}
Let $F,G$ be circuits and $\overline {\mathcal F}$ be a collection of polynomial equations such that $C:\overline {\mathcal F}\ipsprf {s_0} F=G$ and $C':\overline {\mathcal F}\ipsprf {s_1} H=K$. Assume that there is a circuit with two output gates, of size $s$, with one output gate computing $H$ and the other output gate computing $G$. Then, $\overline {\mathcal F} \ipsprf {s_0+s_1+s+5} F\cd H=G\cd K$.
\end{fact}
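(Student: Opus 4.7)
The plan is to use the standard identity
\[
F\cdot H - G\cdot K \;=\; (F-G)\cdot H \;+\; G\cdot(H-K),
\]
which reduces proving $F\cdot H = G\cdot K$ to the two hypotheses $F=G$ and $H=K$ joined by multiplications with circuits we already have in hand. Concretely, I would define the IPS proof
\[
D(\vx,\vy,\vz) \;:=\; C(\vx,\vy,\vz)\cdot H(\vx) \;+\; C'(\vx,\vy,\vz)\cdot G(\vx),
\]
where $H$ and $G$ are taken from the shared multi-output circuit of size $s$. The circuit $D$ uses the $s_0$ gates of $C$, the $s_1$ gates of $C'$, the $s$ gates producing both $H$ and $G$, and three additional gates (two products $C\cdot H$ and $C'\cdot G$, plus one addition), totaling at most $s_0+s_1+s+3 \le s_0+s_1+s+5$.

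Next I would verify the two IPS conditions for $D$. For condition~(1), substituting $\vy = \vnz, \vz = \vnz$ gives $C(\vx,\vnz,\vnz)=0$ and $C'(\vx,\vnz,\vnz)=0$ by hypothesis, hence $D(\vx,\vnz,\vnz) = 0\cdot H + 0\cdot G = 0$. For condition~(2), substituting the axioms $\overline{\mathcal F}$ and $\vx^2-\vx$ into $D$ yields, by the hypotheses on $C$ and $C'$,
\[
(F-G)\cdot H + G\cdot(H-K) \;=\; F\cdot H - G\cdot K,
\]
which is exactly the desired conclusion in the form $F\cdot H - G\cdot K = 0$, i.e.\ an IPS proof of $F\cdot H = G\cdot K$.

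There is no real obstacle here; the only thing to be careful about is that the shared circuit of size $s$ lets $H$ and $G$ be re-used so that the sub-circuits of $D$ have a common substructure, which is what keeps the size additive rather than multiplicative, and allows the bound $s_0+s_1+s+5$ rather than something like $s_0\cdot s + s_1\cdot s$.
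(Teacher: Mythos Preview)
Your proposal is correct and is essentially the same argument as the paper's: the paper also takes $D(\vx,\vy,\vz)=C(\vx,\vy,\vz)\cdot H(\vx)+C'(\vx,\vy,\vz)\cdot G(\vx)$ and observes that after substituting the axioms it computes $(F-G)H+(H-K)G=FH-GK$, with the shared size-$s$ circuit for $H,G$ giving the stated size bound. If anything, you are slightly more careful than the paper in explicitly verifying condition~(1) of the IPS definition and in counting the three extra gates.
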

\begin{proof}[Proof of fact] Observe that $C(\vx,\overline {\mathcal F},\overline x ^2-\overline x)\cd H+C'(\vx,\overline {\mathcal F},\overline x ^2-\overline x)\cd G =F\cd H-G\cd H + H\cd G-K\cd G= F\cd H-G\cd K$. Hence, the desired proof is the \emph{circuit} $C(\vx,\vy,\vz)\cd H(\vx)+C'(\vx,\vy,\vz)\cd G(\vx)$, which by assumption that there is a circuit of size $s$ computing both $H,G$,  is  at most $s_0+s_1+s+5$ (here, $H,G$ can have common nodes).
\end{proof}

We now wish to show that basic reasoning by \emph{boolean} cases is efficiently attainable in IPS. Specifically, we are going to show that if for a given constant many variables (or even boolean valued polynomials) $V$, for every choice of a fixed (partial) boolean assignment to the variables $V$ a polynomial equation is derivable, then it is derivable regardless (namely, derivable from the boolean axioms alone) in polynomial-size.
%the cases we are going to consider are different partial boolean assignments to constant many variables:\footnote{The general case of simulating reasoning by cases in IPS may not be accomplished in polynomial-size due to the need to simulate the contraction rule using $A^2\ipsprf{*}A$, for general $A$, which is not known to be doable.}

\begin{proposition}[proof by boolean cases in IPS]\label{prop:IPS-cases}
Let \F\ be a field. Let $V=\{H_{i}(\vx)\}_{i\in I}$ be a set of circuits  with $|V|=r$, and $\overline {\mathcal F}$ be a collection of polynomial equations such that $\{H_i^2(\vx)-H_i(\vx)=0\}_{i\in I}\subseteq \overline {\mathcal F}$. Assume that for every fixed assignment $\overline \alpha\in\bits^r$ we have $\overline {\mathcal F},\{H_i(\vx)=\alpha_i\}_{i\in I}\ipsprf{s} f(\vx)=0$, then $\overline {\mathcal F}\ipsprf{c^r\cd s} f(\vx)=0$, for some constant $c$ independent of $r$.
\end{proposition}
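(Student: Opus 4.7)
The plan is to prove the statement by induction on $r$, with the base case $r=0$ being immediate (the hypothesis directly supplies the IPS proof from $\overline{\mathcal{F}}$ of size $s \le c^0 s$). The induction step will reduce case-splitting over $r$ polynomial-valued ``variables'' to case-splitting over $r-1$: for every partial boolean assignment $\vec{\alpha}' \in \{0,1\}^{r-1}$ to $H_1, \ldots, H_{r-1}$, I combine the two proofs obtained from the hypothesis by setting $H_r = 0$ and $H_r = 1$ (and $H_i = \alpha'_i$ for $i < r$) into a single IPS proof from $\overline{\mathcal{F}} \cup \{H_i = \alpha'_i\}_{i<r}$ with size blowup at most a constant factor $c$. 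Iterating this across all $2^{r-1}$ partial assignments, and then across all $r$ levels of the induction, yields the bound $c^r \cdot s$.

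The heart of the proof is therefore the following combining lemma: given IPS proofs $C_0(\vx, \vy, \vz, w)$ from $\overline{\mathcal{F}} \cup \{H = 0\}$ and $C_1(\vx, \vy, \vz, w')$ from $\overline{\mathcal{F}} \cup \{H = 1\}$ of $f(\vx) = 0$, where $w$ and $w'$ are the placeholders for the extra axioms $H$ and $H-1$ respectively, I will construct an IPS proof $D(\vx, \vy, \vz)$ from $\overline{\mathcal{F}}$ alone with $|D| = O(|C_0| + |C_1| + |H|)$. The construction is driven by the polynomial identity $f = (1-H)\,f + H\,f$ together with the key fact that $(1-H) \cdot H = -(H^2 - H)$, and the assumption $H^2 - H = 0 \in \overline{\mathcal{F}}$ with placeholder $y_H$. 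The natural candidate $D_0 := (1-H)\, C_0(\vx, \vy, \vz, H) + H\, C_1(\vx, \vy, \vz, H - 1)$ satisfies the second IPS condition $D_0(\vx, \overline{\mathcal{F}}, \vx^2-\vx) = (1-H)f + Hf = f$ immediately; the real work is adjusting $D_0$ so that the first IPS condition $D(\vx, \vec 0, \vec 0) = 0$ also holds as a formal polynomial identity.

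The hard part will be this adjustment, because the ``error'' in the first condition is a polynomial in $\vx$ of the form $(H - H^2)(P_0 - P_1)$, where $P_0, P_1$ are the coefficients of the extra placeholders in $C_0, C_1$; since any term of the form $y_H \cdot Q$ vanishes at $\vy = \vec 0$, one cannot directly cancel this error without an explicit circuit for $P_0 - P_1$, which would seem to require division by $H$ and $H-1$. I would circumvent this by exploiting the identity
\[
(H^2 - H)(P_0 - P_1) \;=\; (H-1)\cdot C_0(\vx, \vec 0, \vec 0, H) \;-\; H\cdot C_1(\vx, \vec 0, \vec 0, H - 1),
\]
whose right-hand side is a circuit of size $O(|C_0| + |C_1| + |H|)$ obtainable by pure substitution, and then building the correction as a circuit in the placeholders $\vy, \vz$ whose value at $\vy = \vec 0$ equals this right-hand side while its value at $\vy = \overline{\mathcal{F}}, \vz = \vx^2 - \vx$ vanishes, using that $y_H$ evaluates to $H^2 - H$ under the axiom substitution. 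This yields $D$ as a circuit of size linear in $|C_0| + |C_1| + |H|$, and hence at most $c \cdot s$ for some absolute constant $c$ depending only on the bound on $|H_i|$.

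With the combining lemma in hand, a straightforward induction on $r$ completes the proof: at each level one applies the combining lemma to pairs of proofs indexed by the last variable split, so after $r$ such contractions a single IPS proof from $\overline{\mathcal{F}}$ of $f(\vx)=0$ of size at most $c^r \cdot s$ remains, as required.
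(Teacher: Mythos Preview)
Your approach matches the paper's: induct on $r$ and, at the inductive step, combine the $H=0$ and $H=1$ case-proofs with weights $(1-H)$ and $H$, using that $H^2-H\in\overline{\mathcal F}$ so the cross term $H(1-H)$ is (minus) an axiom. The paper's combined circuit is literally $(1-H_1)\cdot C+H_1\cdot C'$; it arrives there by writing out the Nullstellensatz-style decomposition $\widehat C=\sum_i Q_iF_i+\sum_i L_i(x_i^2-x_i)+G\cdot H_1$, multiplying through by $(1-H_1)$, and observing that $G\cdot H_1(1-H_1)=-G\cdot(H_1^2-H_1)$ is already an $\overline{\mathcal F}$-multiple.

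You go further than the paper in spelling out why the first IPS condition is delicate for $D_0=(1-H)C_0(\cdot,H)+H\,C_1(\cdot,H-1)$, and you correctly compute the defect $-(H^2-H)(P_0-P_1)$. But the correction step is not actually carried out: you assert a $\mathrm{Corr}$ with the right boundary values ``using that $y_H$ evaluates to $H^2-H$'' without building it, and any direct construction seems to need a small circuit for the \emph{quotient} $P_0-P_1$, whereas your identity only supplies one for the product $(H^2-H)(P_0-P_1)$; there is no substitution-only way to strip the factor. The paper's write-up has the same soft spot at the circuit level (the polynomial identity is fine, but the offered circuit $(1-H_1)\cdot C$ still carries the extra placeholder). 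The standard remedy, implicit in both arguments, is to make the coefficient of $w$ explicit --- e.g.\ by first passing to linear IPS form, or by interpolating in $w$ to extract $G$ --- after which $(1-H)\,C_0\rvert_{w=0}-y_H\cdot G$ (and its $C_1$-analogue) satisfies both IPS conditions on the nose and your induction goes through.
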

\mar{Take out $\vx$ in H and f!!}
%\begin{fact}[Proof by cases in IPS]\label{fac:IPS-cases}
%Let \F\ be a field. Assume that there is a size $s$ IPS proof of $\prod_{i\in I} f_i=0$, where each $f_i$, $i\in I$, is an algebraic circuit over \F. Suppose that for every $i\in I$, $f_i=0\ipsprf {s_i} G_i=0$. Then, there is a size $\poly(\sum_{i\in I}s_i +s)$ IPS proof of $\prod_{i\in I}G_i = 0$, from $\prod_{i\in I} f_i=0$. \mar{what is the precise power of polynomial here? Not needed I think. Contraction might be a problem here!?}
%\end{fact}

\begin{proof}
We proceed by induction on $r$.
\Base $r=0$. In this case we assume that $\overline {\mathcal F}\ipsprf{s} f(\vx)=0$ and we wish to show that $\overline {\mathcal F}\ipsprf{c^r\cd s} f(\vx)=0$, for some  constant $c$, which is immediate since $r=0$.
\Induction $r>0$. We assume that  for every fixed assignment $\overline \alpha\in\bits^{r}$ we have $\overline {\mathcal F},\{H_i=\alpha_i\}_{i\in I}\ipsprf{s} f(\vx)=0$, and we wish to show that  $\overline {\mathcal F}\ipsprf{c^r\cd s} f(\vx)=0$, for some constant $c$ independent of $r$.

By our assumption above we know that for every fixed assignment $\overline \alpha\in\bits^{r-1}$ we have:
\begin{gather}
~~~~~~~~\overline {\mathcal F},H_1(\vx)=0, \{H_i(\vx)=\alpha_i\}_{i\in {I\setminus 1}}\ipsprf{s} f(\vx)=0, \text{~~~and}\label{eq:one}
\\
 \overline {\mathcal F},H_1(\vx)=1, \{H_i(\vx)=\alpha_i\}_{i\in {I\setminus 1}}\ipsprf{s} f(\vx)=0.\label{eq:two}
\end{gather}
From \ref{eq:one} and \ref{eq:two}, by induction hypothesis we have for some constant $c$ independent of $r$:
\begin{gather}
~~~~~~~~H_1(\vx)=0, \overline {\mathcal F}\ipsprf{c^{r-1}\cd s} f(\vx)=0,\text{~~~and}\label{eq:three}
\\
 H_1(\vx)=1, \overline {\mathcal F}\ipsprf{c^{r-1}\cd s} f(\vx)=0.\label{eq:four}
\end{gather}
It thus remains to prove the following claim:
\begin{claim}\label{cla:combine-x1=0-and-x1=1}
Under the above assumptions \ref{eq:three} and \ref{eq:four}, we have $\overline {\mathcal F}\ipsprf{c^{r}\cd s} f(\vx)=0$.
\end{claim}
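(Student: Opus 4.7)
The plan is to combine the two case-proofs $C_0, C_1$ (each of size at most $c^{r-1}s$) into a single IPS proof $D$ from $\overline{\mathcal F}$ alone, crucially exploiting the fact that $H_1^2-H_1\in\overline{\mathcal F}$. Denote by $C_0(\vx,\vy,z_0,\vz)$ the IPS proof of $f$ from $\overline{\mathcal F}\cup\{H_1=0\}$ (so $z_0$ is the placeholder for the axiom $H_1$), and by $C_1(\vx,\vy,z_0',\vz)$ the IPS proof from $\overline{\mathcal F}\cup\{H_1=1\}$ (so $z_0'$ is the placeholder for $H_1-1$). Let $w\in\vy$ denote the placeholder corresponding to the axiom $H_1^2-H_1\in\overline{\mathcal F}$.

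First I would establish a \emph{factoring lemma}: for any algebraic circuit $C(\vu,z)$ in a distinguished variable $z$, there is a circuit $\tilde C(\vu,z)$ of size $O(|C|)$ satisfying the polynomial identity $C(\vu,z)=z\cdot\tilde C(\vu,z)+C(\vu,0)$. This follows by structural recursion on $C$: set $\tilde C:=1$ when $C=z$, $\tilde C:=0$ for a constant or different variable, $\tilde C:=\tilde A+\tilde B$ when $C=A+B$, and $\tilde C:=A\cdot\tilde B+B(\vu,0)\cdot\tilde A$ when $C=A\cdot B$ (verified by $AB-A(0)B(0)=A(B-B(0))+B(0)(A-A(0))$). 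Applying this lemma to $z_0$ in $C_0$ and to $z_0'$ in $C_1$ yields circuits $\tilde A_0,\tilde A_1$ of sizes $O(|C_0|),O(|C_1|)$ with $C_j=z\tilde A_j+C_j|_{z\mapsto 0}$.

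The combined IPS proof is then
\[
D(\vx,\vy,\vz):=(1-H_1)\cdot C_0(\vx,\vy,0,\vz)+H_1\cdot C_1(\vx,\vy,0,\vz)+w\cdot\bigl(\tilde A_1(\vx,\vy,H_1-1,\vz)-\tilde A_0(\vx,\vy,H_1,\vz)\bigr).
\]
At the zero placeholder assignment, $w=0$ and $C_0(\vx,\vnz,0,\vnz)=C_1(\vx,\vnz,0,\vnz)=0$ by the IPS definition, so $D$ vanishes. At the real-axiom assignment, the factoring identities give $C_0(\vx,\overline{\mathcal F},0,\ba)=f-H_1\tilde A_0$ and $C_1(\vx,\overline{\mathcal F},0,\ba)=f-(H_1-1)\tilde A_1$; substituting these into $D$ and simplifying via $(1-H_1)H_1=-(H_1^2-H_1)$ and $H_1(H_1-1)=H_1^2-H_1$, all $\tilde A_j$-terms cancel, leaving $(1-H_1)f+H_1f=f$. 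The size bound then follows from $|D|\le K(|C_0|+|C_1|+|H_1|)$ for an absolute constant $K$, together with the induction hypothesis $|C_j|\le c^{r-1}s$ and $|H_1|\le s$ (WLOG by padding), so that $|D|\le c^r s$ for any sufficiently large constant $c$ independent of $r$.

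The main obstacle is to find a combination that simultaneously (i) vanishes at the zero placeholder assignment and (ii) evaluates to $f$ at the real-axiom assignment. The naive combination $(1-H_1)C_0(\vx,\vy,H_1,\vz)+H_1C_1(\vx,\vy,H_1-1,\vz)$ satisfies (ii) but leaves a nonzero residue $-(H_1^2-H_1)(\tilde A_1-\tilde A_0)$ at zero placeholders, because substituting $z_0\mapsto H_1$ (a polynomial in $\vx$, not a placeholder) bypasses the IPS vanishing condition. The factoring lemma isolates exactly this residue, and the introduction of the $w$-term -- where $w$ vanishes at zero placeholders but equals $H_1^2-H_1$ at real axioms -- absorbs it while preserving (ii).
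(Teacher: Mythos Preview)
Your proof is correct and follows essentially the same idea as the paper's: combine the two case-proofs via $(1-H_1)\cdot C_0+H_1\cdot C_1$ so that the contribution of the extra placeholder becomes a multiple of $H_1(1-H_1)=-(H_1^2-H_1)\in\overline{\mathcal F}$. The paper argues this more informally—it simply takes $(1-H_1)C+H_1C'$ as the new circuit and observes, at the level of the resulting polynomial identity, that the $H_1$- and $(1-H_1)$-placeholder terms are now multiples of the axiom $H_1^2-H_1$; your factoring lemma and the explicit $w$-correction term make this rerouting of the extra placeholder precise at the circuit level, which is arguably the more careful construction.
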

\begin{proofclaim}
By \ref{eq:three} and \ref{eq:four} we have two IPS proofs $C(\vx,\vy,\vz)$ and $C'(\vx,\vy,\vz)$ such that  $C(\vx,\overline {\mathcal F}, H_1(\vx),\ba)=f(\vx)$ and $C'(\vx,\overline {\mathcal F}, 1-H_1(\vx),\ba)=f(\vx)$ (note indeed that $\overline {\mathcal F}, H_1(\vx)$ and $\ba$ are the axioms in the former case, and similarly for the latter case,  where now $1-H_1(\vx)$ replaces the axiom $H_1(\vx)$) each of size $c^{r-1}\cd s$.

By the definition of IPS  $C(\vx,\vy,\vz), C'(\vx,\vy,\vz)$ both compute polynomials that are in the ideal generated by $\vy,\vz$. This means that there are some polynomials $Q_i,P_i,G,M,L_i,K_i$, such that:
%By a result of Forbes et al.~\cite{FSTW16}, we can assume that these IPS proofs are written as a linear Nullstellensatz form, denoted IPS$_{LIN}$, as follows %(where the sum is written as a sum of algebraic circuits):
\begin{gather*}
\widehat C(\vx,\overline {\mathcal F},H_1(\vx),\ba)=
\sum_i Q_i\cd F_i + \sum_i L_i\cd(x_i^2-x_i)+G\cd H_1(\vx) = f(\vx)\text{~~ and~~}\\ \widehat C'(\vx,\overline {\mathcal F},1-H_1(\vx),\ba)=
\sum_i P_i\cd F_i + \sum_i K_i\cd(x_i^2-x_i)+ M\cd (1-H_1(\vx)) = f(\vx)
\end{gather*}
(here, $\overline {\mathcal F},H_1(\vx)$ is substituted for $\vy$ in the first equation, and $\overline {\mathcal F},1-H_1(\vx)$ is substituted for $\vy$ in the second equation).

Hence, we can multiply these two true polynomial identities by $(1-H_1(\vx))$ and $H_1(\vx)$, respectively, to get the following polynomial identities:
\begin{multline*}
(1-H_1(\vx))\cd \widehat C(\vx,\overline {\mathcal F},H_1(\vx),\ba)=\\
(1-H_1(\vx))\cd\sum_i Q_i\cd F_i + (1-H_1(\vx))\cd\sum_i L_i\cd(x_i^2-x_i)+G\cd H_1(\vx)\cd(1-H_1(\vx)) = (1-H_1(\vx))\cd f(\vx)
\end{multline*}
and
\begin{multline*}
H_1(\vx)\cd \widehat C'(\vx,\overline {\mathcal F},H_1(\vx),\ba)=
H_1(\vx)\cd\sum_i P_i\cd F_i + H_1(\vx)\cd\sum_i K_i\cd(x_i^2-x_i)+ H\cd H_1(\vx)\cd(1-x_1) \\ = H_1(\vx)\cd f(\vx).
\end{multline*}
Each of these two polynomial identities is an     IPS proof from the  assumptions  $\mathcal F=\{F_i\}_i$, the boolean axioms, and the assumption  $H_1(\vx)\cd(1-H_1(\vx))\in\overline  {\mathcal F}$ (more formally, $ (1-H_1(\vx))\cd C$ and $H_1(\vx)\cd C'$ are the \emph{circuits }that constitute these pair of IPS proofs).
Adding these two IPS proofs (note that the addition of two IPS proofs from a set of assumptions is  still an IPS proof  from that set of assumptions) we obtain the desired IPS proof of $f(\vx)$, with size $2\cd c^{r-1}\cd s+c_0\le c^{r}\cd s$, for a large enough constant $c$ independent of $r$.
\end{proofclaim}
This concludes the proof of the proposition.
\end{proof}

\autoref{prop:IPS-cases} allows us to reason by cases in IPS. For example, assume that we know that either $H_i(\vx)=0$ or $H_i(\vx)=1$; namely that we have the assumption $H_i(\vx)\cd(H_i(\vx)-1)=0$. Then, we can reason by cases as follows: if we can prove from $H_i(\vx)=0$ that $A$, with a polynomial-size proof, and from $H_i(\vx)=1$ that $B$, with a polynomial-size proof, then using \autoref{prop:IPS-cases} we have a polynomial-size proof that $A\cd B=0$ from $H_i(\vx)\cd(H_i(\vx)-1)=0$.

As an immediate corollary of \autoref{prop:IPS-cases} we get the same proposition with $H_i(\vx)$'s substituted for variables:

\begin{corollary}\label{cor:IPS-cases}
Let \F\ be a field. Let $V=\{x_{i}\}_{i\in I}$ be a set of variables with $|V|=r$, and $\overline {\mathcal F}$ be a collection of polynomial equations. Assume that for every fixed assignment $\overline \alpha\in\bits^r$ to the variables in $V$ we have $\overline {\mathcal F},\{x_i=\alpha_i\}_{i\in I}\ipsprf{s} f(\vx)=0$, then~ $\overline {\mathcal F}\ipsprf{c^r\cd s} f(\vx)=0$, for some constant $c$ independent of $r$.
\end{corollary}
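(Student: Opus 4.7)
The plan is to deduce Corollary \ref{cor:IPS-cases} as a direct specialization of Proposition \ref{prop:IPS-cases}, taking the circuits $H_i(\vx)$ in the statement of the proposition to be the variables $x_i$ themselves. So the proof is essentially a one-line reduction, and there is no real obstacle.

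First, I would recall that in the setup of boolean IPS (\autoref{def:IPS}) the boolean axioms $\{x_i^2 - x_i = 0\}_{i=1}^n$ are \emph{automatically} part of the system of equations that every IPS refutation/proof may use (they enter through the placeholder variables $\vz$). In particular, for any subset $I \subseteq [n]$ the equations $\{x_i^2 - x_i = 0\}_{i \in I}$ are always available as assumptions, exactly as required by the hypothesis $\{H_i^2(\vx) - H_i(\vx) = 0\}_{i \in I} \subseteq \overline{\mathcal F}$ of \autoref{prop:IPS-cases} when we specialize $H_i(\vx) := x_i$.

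Second, I would observe that the hypothesis of the corollary is identical to the hypothesis of \autoref{prop:IPS-cases} under the substitution $H_i(\vx) := x_i$: for every $\vaa \in \{0,1\}^r$, we have $\overline{\mathcal F},\{x_i = \alpha_i\}_{i \in I} \ipsprf{s} f(\vx) = 0$, which is exactly $\overline{\mathcal F},\{H_i(\vx) = \alpha_i\}_{i \in I} \ipsprf{s} f(\vx) = 0$ in this case. The conclusion $\overline{\mathcal F} \ipsprf{c^r \cd s} f(\vx) = 0$ then follows immediately from the conclusion of \autoref{prop:IPS-cases}, with the same constant $c$.

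Thus the corollary requires no additional work beyond invoking the proposition; the only subtlety worth mentioning is to make explicit that the boolean axioms $x_i^2 - x_i = 0$ are silently included in $\overline{\mathcal F}$ through the standard IPS convention, so that the side condition of \autoref{prop:IPS-cases} holds automatically when the $H_i$'s are variables.
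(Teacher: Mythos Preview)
Your proposal is correct and matches the paper's own approach exactly: the paper simply states that this is ``an immediate corollary of \autoref{prop:IPS-cases}'' obtained by taking $H_i(\vx):=x_i$, and gives no further proof. Your remark that the boolean axioms $x_i^2-x_i=0$ are automatically present (so the side condition of \autoref{prop:IPS-cases} is satisfied) is the only point worth making explicit, and you have done so.
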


\begin{fact}[IPS proofs are closed under substitutions]\label{fac:IPS-closed-substitution-instance}
Let $C(\vx,\vy,\vz)$ be an IPS proof of $f(\vx)$ from the assumptions $\{F_i(\vx)\}_{i=1}^m$, and let $\overline H=\{H_i(\vx)\}_{i=1}^n$ be a set of algebraic circuits. Then, $C(\overline H/\vx,\vy,\vz)$ is an IPS proof of $f(\overline H/\vx)$ from $\{F_i(\overline H/\vx)\}_{i=1}^m$, where $\overline H/\vx$ stands for the substitution of $x_i$ by $H_i(\vx)$, for all $i\in[n]$.
\end{fact}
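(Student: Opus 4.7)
The plan is to verify that $C(\overline H/\vx,\vy,\vz)$ satisfies the two defining conditions of an IPS proof (\autoref{def:IPS}), relying only on the fact that polynomial identities are preserved under substitution of variables by polynomials (equivalently, by circuits, since a circuit is redirected at each $x_i$-labeled leaf to the root of $H_i$, producing a well-defined circuit computing the substituted polynomial).

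First, I would verify condition 1. By hypothesis, $C(\vx,\vnz,\vnz)=0$ holds as a formal polynomial identity in $\F[\vx]$. Since this identity holds in the polynomial ring, applying the ring homomorphism $\F[\vx]\to\F[\vx]$ that sends $x_i\mapsto H_i(\vx)$ gives $C(\overline H,\vnz,\vnz)=0$ as a formal polynomial identity, as required.

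Next, I would verify condition 2. From the assumed IPS proof we have the polynomial identity
\[
C(\vx,F_1(\vx),\ldots,F_m(\vx),x_1^2-x_1,\ldots,x_n^2-x_n)=f(\vx).
\]
Applying the same substitution $x_i\mapsto H_i(\vx)$ to both sides yields the polynomial identity
\[
C(\overline H,F_1(\overline H),\ldots,F_m(\overline H),H_1^2-H_1,\ldots,H_n^2-H_n)=f(\overline H).
\]
This is exactly the condition that $C(\overline H/\vx,\vy,\vz)$ is an IPS proof of $f(\overline H/\vx)$ from the assumptions $\{F_i(\overline H/\vx)\}_{i=1}^m$, except that in the boolean version the placeholder variables $\vz$ are instantiated to $H_i^2-H_i$ rather than the ``fresh'' boolean axioms $x_i^2-x_i$.

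The only mild subtlety is this last point about the boolean-axiom placeholders: in the boolean IPS version, one strictly speaking also needs short IPS derivations of $H_i^2-H_i=0$ from the standard boolean axioms $x_i^2-x_i=0$ in order to splice the substituted proof into the ordinary boolean IPS framework. This is automatic in every use of the fact in the paper, since the $H_i$ used in substitutions are arithmetizations of boolean circuits (and hence $H_i^2-H_i=0$ is polynomial-size provable by \autoref{lem:F-squared-F-equals-zero}), or the fact is applied to the algebraic version of IPS in which the $\vz$ placeholders are simply absent. I would note this explicitly but not belabor it, since it is the standard way substitution interacts with the boolean axioms in IPS.
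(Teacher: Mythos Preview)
Your proposal is correct and is precisely the argument the paper has in mind: the paper's entire proof reads ``The proof of \autoref{fac:IPS-closed-substitution-instance} is immediate,'' and you have spelled out that immediate verification of the two IPS conditions under substitution. Your remark about the boolean-axiom placeholders becoming $H_i^2-H_i$ rather than $x_i^2-x_i$ is exactly right and matches how the paper itself handles every invocation of this fact (always coupling it with a separate derivation of $H_i^2-H_i=0$, e.g.\ via \autoref{lem:F-squared-F-equals-zero} or \autoref{cor:bitisbit}).
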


The proof of \autoref{fac:IPS-closed-substitution-instance} is immediate.

%\mar{Do we need this Fact at all? Yes, we can use it to get Part 2 of main BV lemma!}

%\begin{fact}[Contraction in IPS]\label{fac:IPS-contraction}
%Let \F\ be a field. Let $G$ be an arithmetization of a boolean circuit (\autoref{def:arithmetic}). Then, $G\cd G = 0\ipsprf {\poly(|G|)} G=0$.
%\end{fact}

%\begin{proof}[Proof of fact]
%This stems immediately from Lemma \autoref{lem:F-squared-F-equals-zero}. %\end{proof}

\section*{Acknowledgement}
We wish to thank Michael Forbes, Dima Itsykson, Toni Pitassi and Dima Sokolov for useful discussions at various stages of this work.

%%%%%%%%%%%%%%%%%%%%%%%
%%%%%%%%%%%%%%%%%%%%%%%%
%%% Bibliography %%%%%%%
%%%%%%%%%%%%%%%%%%%%%%%%
%%%%%%%%%%%%%%%%%%%%%%%%

\small
\bibliographystyle{plain}
\bibliography{PrfCmplx-Bakoma}
\normalsize

%\appendix
%\newcommand{\FOCS}{}

%\newpage
%\begin{center}\small{--- Page left blank for ECCC %stamp ---}\end{center}

%\printindex

\end{document}